\newtheorem{theorem}{Theorem}[section]
\newtheorem{lemma}[theorem]{Lemma}
\newtheorem{corollary}[theorem]{Corollary}
\newtheorem{fact}[theorem]{Fact}
\newtheorem{defi}[theorem]{Definition}
\newtheorem{rema}[theorem]{Remark}
\newtheorem{exam}[theorem]{Example}
\newtheorem{obse}[theorem]{Observation}
\newtheorem{conv}[theorem]{Convention}
\newenvironment{definition}{\begin{defi}\rm}{\end{defi}}
\newenvironment{remark}{\begin{rema}\rm}{\end{rema}}
\newenvironment{example}{\begin{exam}\rm}{\end{exam}}
\newenvironment{convention}{\begin{conv}\rm}{\end{conv}}
\newtheorem{claim2}{Claim}
\newenvironment{claim}{\begin{claim2}\rm}{\end{claim2}\rm}
\newenvironment{claimfirst}{\setcounter{claim2}{0}
\begin{claim2}\rm}{\end{claim2}\rm}
\newenvironment{pfclaim}{\begin{trivlist}\item[]{\it Proof of
Claim}}{\hfill \end{trivlist}}
\newcommand{\class}[1]{\mathsf{#1}}
\newcommand{\Set}{\mathsf{Set}}
\newcommand{\Rel}{\mathsf{Rel}}
\newcommand{\T}{T}           %
\newcommand{\Tom}{T_{\om}}   %
\renewcommand{\P}{P}         %
\newcommand{\Q}{\breve{P}}         %
\newcommand{\Pom}{\P_{\om}}  %
\newcommand{\V}{V} %
\newcommand{\rl}[1]{\overline{#1}}
\newcommand{\Tb}{\rl{\T}}
\newcommand{\Pb}{\rl{\P}}
\newcommand{\nbsem}{\lambda^{\T}}
\DeclareMathOperator{\K}{K}
\newcommand{\id}{\mathit{id}}
\newcommand{\Id}{\Delta}
\DeclareMathOperator{\Graph}{\mathit{Gr}}
\newcommand{\single}{\eta}
\DeclareMathOperator{\Base}{\mathit{Base}}
\newcommand{\counit}{i} %
\newcommand{\struc}[1]{\langle #1 \rangle}
\newcommand{\wi}{\eqslantless}
\newcommand{\two}{\Omega}
\newcommand{\Twi}{\rel{\Tb {\wi}}}
\newcommand{\bbL}{\mathbb{L}}
\newcommand{\bbM}{\mathbb{M}}
\newcommand{\VT}{\V_{\T}}
\newcommand{\VP}{\V_{\Pom}}
\newcommand{\Tleq}{\rel{\Tb {\leq}}}
\newcommand{\Pleq}{\rel{\Pb {\leq}}}
\newcommand{\Tin}{\rel{\Tb {\in}}}
\newcommand{\Pin}{\rel{\Pb {\in}}}
\DeclareMathOperator{\Fr}{Fr}
\newcommand{\SupLat}{\mathop{\mathrm{SupLat}}}
\newcommand{\downarr}{{\textstyle \mathop{\downarrow}}}
\newcommand{\uparr}{{\textstyle \mathop{\uparrow}}}
\renewcommand{\lesssim}{\sqsubseteq}
\renewcommand{\phi}{\varphi} %
\newcommand{\smallvee}{{\textstyle \bigvee}}
\newcommand{\smallwedge}{{\textstyle \bigwedge}}
\newcommand{\cov}{\triangleleft}
\newcommand{\smallcup}{{\textstyle \bigcup}}
\newcommand{\smallcap}{{\textstyle \bigcap}}
\newcommand{\bw}{\bigwedge}
\newcommand{\bv}{\bigvee}
\newcommand{\bvdir}{\bigvee\nolimits^{\uparrow}}
\newcommand{\bwsmall}{\smallwedge}
\newcommand{\bvsmall}{\smallvee}
\newcommand{\nb}{\nabla}
\newcommand{\hs}{\heartsuit}
\newcommand{\SRD}{\mathit{SRD}}
\newcommand{\nada}{\varnothing}
\newcommand{\sse}{\subseteq}
\newcommand{\rst}[1]{{\upharpoonright_{#1}}}
\newcommand{\isdef}{\mathrel{::=}}
\newcommand{\Dom}{\mathsf{dom}}
\newcommand{\Ran}{\mathsf{rng}}
\newcommand{\cv}[1]{#1\breve{\hspace*{1mm}}}
\newcommand{\cor}{\mathbin{;}}
\newcommand{\cof}{\mathbin{\circ}}
\newcommand{\rel}[1]{\mathrel{#1}}
\newcommand{\al}{\alpha}
\newcommand{\be}{\beta}
\newcommand{\ga}{\gamma}
\newcommand{\de}{\delta}
\newcommand{\si}{\sigma}
\newcommand{\om}{\omega}
\newcommand{\Ga}{\Gamma}
\newcommand{\Fid}{\mathit{Id}}
\newcommand{\CF}{\mathcal{C}}
\newcommand{\cA}{\mathcal{A}}
\newcommand{\cB}{\mathcal{B}}
\newcommand{\cC}{\mathcal{C}}
\renewcommand{\cH}{\mathcal{H}}
\newcommand{\nta}{\rho}
\newcommand{\wh}[1]{\widehat{#1}}
\newcommand{\VTp}{\V_{\T'}}
\newcommand{\Tpleq}{\rel{\rl{\T'}{\leq}}}
\newcommand{\Tpin}{\rel{\rl{\T'}{\in}}}
\title[Generalized powerlocales via relation lifting]{Generalized
powerlocales via relation lifting}
\thanks{
The second author is grateful for financial support from the Netherlands Organization for Scientific Research
(NWO) enabling him to make two research visits to Amsterdam.
The work of the
third author was made possible by NWO VICI grant
639.073.501.}
\author{Yde Venema}
\address{Institute for Logic, Language and Computation, University of
  Amsterdam, Postbus 94242, 1090 GE Amsterdam, The Netherlands}
\email{y.venema@uva.nl}
\author{Steve Vickers}
\address{School of Computer Science, The University of Birmingham,
Birmingham, B15 2TT, United KIngdom}
\email{s.j.vickers@cs.bham.ac.uk}
\author{Jacob Vosmaer}
\address{Institute for Logic, Language and Computation, University of
  Amsterdam, Postbus 94242, 1090 GE Amsterdam, The Netherlands}
\email{contact@jacobvosmaer.nl}
\date{21 December 2010; Revised 6 February 2012}
\begin{document}
\maketitle

\date{\today}

\begin{abstract}
This paper introduces an endofunctor $\VT$ on the category of frames,
parametrized by an endofunctor $\T$ on the category $\Set$ that satisfies certain
constraints.
This generalizes Johnstone's construction of the Vietoris powerlocale, in the
sense that his construction is obtained by taking for $\T$ the finite
covariant power set functor.
Our construction of the $\T$-powerlocale $\VT \bbL$ out of a frame $\bbL$
is based on ideas from coalgebraic logic and makes explicit the connection
between the Vietoris construction and Moss's coalgebraic cover modality.

We show how to extend certain natural transformations between set functors
to natural transformations between $\T$-powerlocale functors.
Finally, we prove that the operation $\VT$ preserves some properties of frames,
such as regularity, zero-dimensionality, and the combination of
zero-dimensionality and compactness.

\smallskip
\textit{Keywords}
Locales, frames, Vietoris construction, coalgebra, modal logic, cover modality.
\end{abstract}
\section{Introduction}
\label{Section:Intro}

The aim of this paper\footnote{This paper has been accepted for publication in
  \emph{Mathematical Structures in Computer Science}.} is to show how coalgebraic modal logic can be used
to understand, study and generalize the point-free topological construction
of taking Vietoris powerlocales.

\subsection{Hyperspaces and powerlocales}
The \emph{Vietoris hyperspace construction} is a topological
construction on compact Hausdorff spaces, which was introduced by \citet{Vietoris1922} as a generalization of the
Hausdorff metric. Given a topological space $X$ one
defines a new topology $\tau_{X}$ on $\K X$, the
set of compact subsets of $X$. This new topology
$\tau_X$ has as its basis all sets of the form \[ \nabla \{
U_1, \dotsc, U_n \} := \{ F \in \K X\mid F \subseteq
\smallcup_{i=1}^n U_i \text{ and } \forall i \leq n,\, F \between
U_i\},\]
where $U_1,\dotsc,U_n \subseteq X$ is a finite collection of open sets
and $F \between U$ is notation to indicate that $F \cap U \neq \emptyset$.
Alternatively, one
can use a subbasis to generate $\tau_X$, consisting of
subbasic open sets of the shape
\begin{align*}
&\Box  U := \{F \in \K X\mid F \subseteq U\},
\intertext{and}
&\Diamond  U := \{ F \in \K X\mid F \between U\}.
\end{align*}
To generate the basic open sets $\nabla \{ U_1,\dotsc, U_n\}$ from
$\Box U$ and $\Diamond U$, one can use the following expression: \[
\nabla \{ U_1, \dotsc, U_n \} = \Box \left(\smallcup_{i=1}^n U_i
\right) \cap \smallcap_{i=1}^n \Diamond U_i.\]

In the field of \emph{point-free topology}, a considerable amount of
general topology has been recast in a way which makes it more
compatible with constructive mathematics and topos theory.  (Standard
references are \citet{Johnstone1982} and \citet{Vickers1989}).  The
main idea is to study the lattices of open sets of topological spaces,
rather than their associated sets of points.  In other words, it is an
approach to topology via algebra, where rather than categories of
topological spaces, one studies categories of \emph{locales}, or their
algebraic counterparts, \emph{frames}.  Frames are complete lattices
in which finite meets distribute over arbitrary joins, and can be seen
as the algebraic models of propositional geometric logic, a branch of
logic where finite conjunctions are studied in combination with
infinite disjunctions.  %
Substantial parts of this paper arose out of the direct application of
techniques from coalgebraic logic to frames/locales. This has led to
two consequences: firstly, most results are stated in terms of frames
rather than locales, since frames are closer to the Boolean algebras
predominantly used in coalgebraic logic. Secondly, we have given
little pause to issues of constructivity, in order to be able to
directly apply coalgebraic logic techniques. We will briefly revisit
these matters in \S \ref{s:fw}. Our bias towards frames
notwithstanding, we have favored the name  `powerlocale' over `powerframe' however.

\citet{Johnstone1982} defines a point-free, syntactic version of the
Vietoris powerlocale, using an extension of geometric logic with two
unary operators, $\Box$ and $\Diamond$.  However he soon also
introduces expressions of the shape
\[\Box (\smallvee A) \wedge \smallwedge_{b \in B} \Diamond b,\]
where $A$ and $B$ are finite sets, which should remind the reader of
the expression for $\nabla \{ U_1,\dotsc, U_n \}$ above.
Nevertheless, the description of the Vietoris powerlocale is usually
given with $\Box$ and $\Diamond$ as primitive, and not without good
reason: one may obtain the Vietoris powerlocale by first constructing
one-sided locales corresponding to the $\Box$-generators on the one
hand and the $\Diamond$-generators on the other, and then joining
these two one-sided powerlocales to obtain the Vietoris powerlocale
\citep{VT2004}.  The question remains however, if one can describe the
Vietoris powerlocale directly in terms of its basic opens,
corresponding to $\nabla\{ U_1, \dotsc, U_n\}$, rather than the
subbasic opens expressed in terms of $\Box$ and $\Diamond$.  One of
the main contributions of this paper is to show that this is indeed
possible.

\subsection{The cover modality and coalgebraic modal logic}
The reader may have noticed that the notation using $\Box$ and $\Diamond$
above is highly suggestive of modal logic.
This is no  coincidence: Johnstone's presentation of the Vietoris powerlocale
in terms of generators and relations extends the axioms of positive
(that is, negation-free) modal logic to the geometric setting.

In Boolean-based modal logic, one can define a $\nabla$-modality which is
applied to finite \emph{sets} of formulas.
This $\nabla$-modality then has the following semantics.
If $\mathfrak{M} = \langle W,R,V\rangle$ is a Kripke model and $\alpha$ is a
finite set of formulas, then for any state $w \in W$,
\begin{align*}
\mathfrak{M} ,w \Vdash \nabla \alpha \text{ iff }& \forall a \in
\alpha,\, \exists v \in R[w],\, \mathfrak{M}, v\Vdash a \text{ and
}\\
& \forall v \in R[w],\, \exists a \in \alpha,\, \mathfrak{M}, v\Vdash a.
\end{align*}
In classical modal logic, the $\nabla$-modality is equi-expressive
with the $\Box$- and $\Diamond$-modalities, using the following
translations:
\[ \nabla \alpha \equiv \Box \left( \smallvee \alpha \right) \wedge
\smallwedge_{a \in \alpha} \Diamond a,\]
and in the other direction, one can use
\[ \Box a \equiv \nabla \{ a\} \vee \nabla \emptyset,\text{ and }
\Diamond a \equiv \nabla \{ a, \top\}.\]
As a primitive modality, $\nabla$ was first introduced by
\citet{BM1996} in the study of circularity and by
\citet{JW1995} in the study of the modal $\mu$-calculus.
It was in Moss's work \citep{moss:coal99} however that the $\nabla$-modality
stepped into the spotlight as a modality suitable for generalization to the
abstraction level of \emph{coalgebras}.

The theory of Coalgebra aims to provide a general mathematical framework
for the study of state-based evolving systems.
Given a endofunctor $\T$ on the category $\Set$ of sets with functions, a
coalgebra of type $\T$, or briefly: a $\T$-coalgebra is simply a function
$\sigma \colon X \to \T X$, where $X$ is the underlying set of states of the
coalgebra, and a $\T$-coalgebra morphism between coalgebras $\sigma \colon X
\to \T X$ and $\sigma \colon X' \to \T X'$ is simply a function $f \colon X
\to X'$ such that $\T f \cof \sigma = \sigma' \cof f$.
\citet{Aczel1988} introduced $\T$-coalgebras as a means to study
transition systems.
A natural example of such transition systems is provided by the \emph{Kripke
frames} and \emph{Kripke models} used in the model theory of propositional
modal logic: the category of Kripke frames and bounded morphisms is isomorphic
to the category of $\P$-coalgebras, where $\P\colon \Set \to \Set$ is the
covariant powerset functor.
Universal coalgebra was later introduced by \citet{Rutten2000} as a
theoretical framework for modeling behavior of set-based transition systems,
parametric in their transition functor $\T \colon \Set \to \Set$.

Coalgebraic logics are designed and studied in order to reason formally about
coalgebras and their behavior; one of the main applications of this approach
is the design of specification and verification languages for coalgebras.
The most influential approach to coalgebraic logic, known as \emph{coalgebraic
modal logic} \citep{cirs:moda09},
is to try and generalize propositional modal logic from Kripke structures to
the setting of arbitrary set-based coalgebras.
Seminal in this approach was the observation of L.~Moss in the earlier
mentioned paper \citep{moss:coal99}, that the semantics of the cover modality
$\nabla$ can be described using the categorical technique of \emph{relation
lifting}.
This observation paved the way for generalizations to other functors that
admit a reasonable notion of relation lifting: Moss introduced a modality
$\nabla_{\T}$, parametric in the transition type functor $\T$, which can
be interpreted in $\T$-coalgebras via relation lifting.

While Moss's perspective was entirely semantic, his work naturally raised
the question whether good derivation systems could be developed for the
coalgebraic cover modality $\nabla_{\T}$, parametric in the coalgebra functor
$\T$.
Building on earlier work by B\'{\i}lkov\'a, Palmigiano \& Venema
\citep{PV2007,BPV2008} for the power set case, Kupke, Kurz \& Venema
\citeyearpar{KKV2008,kupk:comp10}
proved soundness and completeness of such a derivation system $\mathbf{M}_{\T}$.
The latter paper also introduces, on the category of Boolean algebras, an
associated functor $\bbM_{\T}$, which can be seen as the algebraic correspondent
of the topological Vietoris functor on the dual category of Stone spaces.

\subsection{Contribution}
In this paper we translate the coalgebraic modal derivation system
$\mathbf{M}_{\T}$ from its Boolean origins \citep{BPV2008,KKV2008} to the
setting of geometric logic.
Basically, this means we take some first steps towards developing a
\emph{geometric} coalgebraic modal logic, i.e.~a logic with finite
conjunctions, infinite disjunctions, and the coalgebraic cover modality
$\nabla_{\T}$.

The main conceptual contribution of this paper is the introduction of a
\emph{generalized powerlocale construction} $\VT$, parametric in a functor
$\T\colon  \Set \to \Set$ satisfying some categorical conditions.
Given a frame $\bbL$, we define its $\T$-powerlocale $\VT\bbL$ using a
presentation, which takes the set $\{ \nb_{\T} \al \mid \al \in \T L \}$ as
generators and the geometric version of the $\nb$-axioms as relations.

As we will see, the classical Vietoris powerlocale construction is an
instantiation of the $\T$-powerlocale, where we take $\T = \Pom$, the
covariant finite power set functor.
This reveals that the connection between the Vietoris construction and the
cover modality, which was implicit in \emph{semantic} form already in \citep{Vietoris1922}, can also be made explicit
\emph{syntactically} using coalgebraic modal logic.
Our approach shows how to describe the Vietoris constructions syntactically
using the $\nabla$-expressions as primitives, rather than as expressions
derived from $\Box$- and $\Diamond$-primitives, as it was introduced in
\citep{Johnstone1982}.

In addition, we prove some technical results concerning the $\T$-powerlocale
construction.
To start with, we discuss some \emph{functorial properties}; in particular,
we show that we are in fact dealing with a \emph{functor}
\[
\VT\colon  \Fr \to \Fr
\]
on the category of frames with algebraic frame homomorphisms.
Furthermore, we show how to extend certain natural transformations between
transition functors to natural transformations between $\T$-powerlocale
functors; this generalizes for instance the frame homomorphism from the
Vietoris locale onto the original frame.
We also give an alternative \emph{flat site presentation} of the
$\T$-powerlocale $\VT \bbL$, showing that each element of a $\T$-powerlocale
has a disjunctive normal form.
Finally, we prove some first \emph{preservation results}; in particular, we
show that the operation $\VT$ preserves some important properties of frames,
such as regularity, zero-dimensionality, and the combination of
zero-dimensionality and compactness.
\medskip

\paragraph{\textit{Overview}}
In \S \ref{Section:Prelim} we introduce preliminaries on category theory,
relation lifting, frame presentations and the classical point-free
presentation of the powerlocale.
In \S \ref{Section:VT} we introduce the $\T$-powerlocale construction $\VT$.
We then show that the $\Pom$-powerlocale is isomorphic to the classical
Vietoris powerlocale and we discuss some functorial properties of the
construction.
We conclude this section with providing the above-mentioned flat site
presentation of $\T$-powerlocales.
In \S \ref{Section:Preservation} we prove our preservation results,
and we provide a new, constructively valid proof of the preservation
of compactness for the ``classical'' Vietoris construction.
We finish in \S \ref{s:fw} with some possibilities for future work.

\paragraph{\textit{Acknowledgments}}
We would like to thank the anonymous referee for providing useful
suggestions for improving the presentation of the paper.

\section{Preliminaries}
\label{Section:Prelim}
\subsection{Basic mathematics}
\label{Subsect:BasicMath}

First we fix some mathematical notation and terminology.
Let $f\colon X \to X'$ be a function. Then the \emph{graph} of $f$ is the relation
\[
\Graph{f} \isdef \{ (x,f(x)) \in X \times X' \mid s \in X \}.
\]
Given a relation $R \sse X \times X'$, we denote the \emph{domain} and
\emph{range} of $R$ by $\Dom(R)$ and $\Ran(R)$, respectively.
Given subsets $Y \sse X$, $Y \sse X'$, the \emph{restriction} of $R$ to
$Y$ and $Y'$ is given as
\[
R\rst{Y \times Y'} \isdef R \cap (Y \times Y').
\]
The composition of two relations $R \sse X \times X'$ and $R' \sse X'
\times X''$ is denoted by $R\cor R'$, whereas the composition of two functions
$f\colon  X \to X'$ and $f'\colon  X' \to X''$ is denoted by $f'\circ f$.
Thus, we have $\Graph{(f' \circ f)} = \Graph{f}\cor\Graph{f'}$.

We will denote by $\P(X)$ and $\Pom(X)$ the \emph{power set} and \emph{finite
power set} of a given set $X$.
The \emph{diagonal on} $X$ is the relation $\Id_{X} = \{ (x,x) \mid x \in X\}$.
Given two sets $X,Y$ we say that $X$ \emph{meets} $Y$, notation: $X\between Y$,
if $X\cap Y$ is inhabited (that is, non-empty).

A \emph{pre-order} is a pair $(X,R)$ where $R$ is a reflexive and
transitive relation on $X$.
Given such a pre-order  we define the operations $\downarr_{(X,R)},
\uparr_{(X,R)}\colon
\P X \to \P X$ by $\downarr_{(X,R)}(Y) := \{ x \in X \mid x \rel{R} y \text{ for some }
y \in Y \}$ and $\uparr_{(X,R)}(Y) := \{ x \in X \mid y \rel{R} x \text{ for some }
y \in Y \}$.
If no confusion is likely, we will write $\downarr_{X}$ or $\downarr$
rather than $\downarr_{(X,R)}$.
\newcommand{\bbN}{\mathbb{N}}
\newcommand{\EKPF}{\mathit{EKPF}}
\newcommand{\B}{\mathit{B}}

\subsection{Category theory}

We will assume familiarity with the basic notions from category theory,
including those of categories, functors, natural transformations, and
(co-)monads.
As a reference text the reader may consult for instance
\citet{macl:cate98}.

We let $\Set$ denote the category with sets as objects and functions as
morphism; endofunctors on this category will simply be called \emph{set
functors}.
The most important set functor that we shall use is the covariant power set
functor $\P$, which is in fact (part) of a monad $(\P,\mu,\eta)$, with
$\eta_{X}\colon  X \to \P(X)$ denoting the singleton map $\eta_{X}\colon  x \mapsto \{
x\}$, and $\mu_{X}\colon  \P\P X \to \P X$ denoting union, $\mu_{X}(\cA) :=
\bigcup \cA$.
The contravariant power set functor will be denoted as $\Q$.

We will restrict our attention to set functors satisfying certain properties,
of which the first one is crucial.
In order to define it, we need to recall the notion of a (weak) pullback.
Given two functions $f_{0}\colon  X_{0} \to X$, $f_{1}\colon  X_{1} \to X$, a \emph{weak
pullback} is a set $P$, together with two functions $p_{i}\colon  P \to X_{i}$ such
that $f_{0}\circ p_{0} = f_{1} \circ p_{1}$, and in addition, for every
triple $(Q,q_{0},q_{1})$ also satisfying $f_{0}\circ q_{0} = f_{1}\circ
q_{1}$, there is an arrow $h\colon  Q  \to P$ such that $q_{0} = h\circ p_{0}$ and
$q_{1} = h\circ p_{0}$, in a diagram:\\
\centerline{
\xymatrix{ Q \ar@/_/[ddr]_{q_0}
\ar@/^/[drr]^{q_1} \ar@{-->}[dr]^h & & \\
& P \ar[r]^{p_1} \ar[d]_{p_0} & X_1 \ar[d]^{f_1}\\
& X_0 \ar[r]_{f_0} & X }}
For $(P,p_{0},p_{1})$ to be a \emph{pullback}, we require in addition the
arrow $h$ to be unique.

A functor $\T$ \emph{preserves weak pullbacks} if it transforms every weak
pullback $(P,p_{0},p_{1})$ for $f_{0}$ and $f_{1}$ into a weak pullback
$(\T P,\T p_{0}, \T p_{1})$ for $\T f_{0}$ and $\T f_{1}$.
An equivalent characterization is to require $\T$ to \emph{weakly preserve
pullbacks}, that is, to turn pullbacks into weak pullbacks.
In the next subsection we will see yet another, and motivating,
characterization of this property.

The second property that we will impose on our set functors is that of
standardness.
Given two sets $X$ and $X'$ such that $X \sse X'$, let $\iota_{X,X'}$ denote
the inclusion map from $X$ into $X'$.
A weak pullback-preserving set functor $\T$ is \emph{standard} if it
\emph{preserves inclusions}, that is: $\T\iota_{X,X'} = \iota_{\T X,\T X'}$
for every inclusion map $\iota_{X,X'}$.

\begin{remark}
Unfortunately the definition of standardness is not uniform throughout the
literature.
Our definition of standardness is taken from \citet{moss:coal99},
while for instance \citet{adam:auto90} have an
additional condition involving so-called distinguished points.
Fortunately, the two definitions are equivalent in case the functor preserves
weak pullbacks, see \citet[Lemma A.2.12]{kupk:fini06}.
\end{remark}

The restriction to standard functors is not essential, since every set
functor is `almost standard'~\citep[Theorem~III.4.5]{adam:auto90}: given
an arbitrary set functor $\T$, we may find a standard set functor $\T'$
such that the restriction of $\T$ and $\T'$ to all non-empty sets and
non-empty functions are naturally isomorphic.

Finally, we shall require that our functors are determined by their
behavior on finite sets.
Call a standard set functor $\T$ \emph{finitary} if $\T X = \bigcup \{ \T X'
\mid X' \sse_{\om} X \}$.
Our focus on finitary functors is not so much a restriction as a convenient
way to express the fact that we are interested in the \emph{finitary
version} of an arbitrary set functor, in the sense that $\Pom$ is the
finitary version of $\P$.
Generally, we may define, for a standard functor $\T$, the functor $\Tom$
that on objects $X$ is defined by $\Tom X = \bigcup \{ \T X' \mid X' \sse
X \}$, while on arrows $f$ we simply put $\Tom f := \T f$.

Since there are many set functor which are standard, finitary and weak
pullback-preserving, the results in this paper have a wide scope.

\begin{example}\label{WpbExamples}
The identity functor $\Fid$, the finitary power set functor $\Pom$, and,
for each set $Q$, the constant functor $C_{Q}$ (given by $C_{Q}X = Q$ and
$C_{Q}f = \id_{Q}$) are standard, finitary, and preserve weak pullbacks.

For a slightly more involved example, consider the finitary \emph{multiset}
functor $M_{\om}$.
This functor takes a set $X$ to the collection $M_{\om}X$ of maps $\mu\colon  X
\to \bbN$ of finite support (that is, for which the set $\mathit{Supp}(\mu)
:= \{ x \in X \mid \mu(x) > 0 \}$ is finite), while its action on arrows is
defined as follows.
Given an arrow $f\colon  X \to X'$ and a map $\mu \in M_{\om}X$, we define
$(M_{\om}f)(\mu)\colon  X' \to \bbN$ by putting
\[
(M_{\om}f)(\mu)(x') := \sum \{ \mu(x) \mid f(x) = x' \}.
\]
With this definition, the functor is not standard, but we may `standardize'
it by representing any map $\mu\colon  X \to \bbN$ of finite support by its
`support graph' $\{ (x,\mu x) \mid \mu x > 0 \}$.
As a variant of $M_{\om}$, consider the finitary probability functor
$D_{\om}$, where $D_{\om} X = \{ \de\colon  X \to [0,1] \mid
\mathit{Supp}(\de) \text{ is finite and } \sum_{x\in X}\de(x) = 1 \}$,
while the action of $D_{\om}$ on arrows is just like that of $M_{\om}$.

Perhaps more importantly, the class of finitary, standard functors that
preserve weak pullbacks, is closed under the following operations:
composition ($\cof$) , product ($\times$), co-product ($+$), and
exponentiation with respect to some set $D$ ($(\cdot)^{D})$.
As a corollary, inductively define the following class $\EKPF_{\om}$ of
\emph{extended finitary Kripke polynomial functors}:
\[
\T \isdef \Fid \mid \Pom \mid C_{Q} \mid M_{\om} \mid D_{\om} \mid
\T_{0} \cof \T_{1} \mid \T_{0} + \T_{1} \mid \T_{0} \times \T_{1} \mid
\T^{D}.
\]
Then each extended Kripke polynomial functor falls in the scope of the
work in this paper.

As running examples in this paper we will often take the \emph{binary
tree functor} $\B = \Fid \times \Fid$, and the finitary power set functor
$\Pom$.
\end{example}

An interesting result of standard functors is that they preserve finite
intersections~\citep[Theorem~III.4.6]{adam:auto90}: $\T(X \cap Y) = \T X \cap
\T Y$.
As a consequence, if $\T$ is finitary, for any object $\xi \in \T X$ we may
define
\[
\Base^{\T}_{X}(\xi) := \bigcap \{ X' \in \Pom(X) \mid \xi \in \T X' \},
\]
and show that $\Base^{\T}_{X}(\xi)$ is the \emph{smallest} set $X'$ such that
$\xi\in\T X'$~\citep{vene:auto06}.
In fact, the base maps provide a natural transformation $\Base^\T \colon  \T \to
\Pom$; for referencing we will mention this fact explicitly in the next
section.

To facilitate the reasoning in this paper, which will involve objects of
various different types, we use a variable naming convention.
\begin{convention}
Let $X$ be a set and let $\T \colon \Set \to \Set$ be a functor. We
use the following naming convention:
\begin{center}
\begin{tabular}{rl}
\hline
Set & Elements\\
\hline
$X$& $a,b,\dotsc, x,y, \dotsc$\\
$\T X$ & $\alpha,\beta, \dotsc$\\
$\P X$ & $A, B, \dotsc$\\
$\P \T X$& $\Gamma, \Delta, \dotsc$\\
$\T \P X$ & $\Phi, \Psi,\dotsc$
\end{tabular}
\end{center}
\end{convention}

\subsection{Relation lifting}
\label{Subsect:RelLift}

In \S \ref{Section:Intro}, we mentioned that coalgebraic modal logic
using the cover modality, as introduced by Moss, crucially uses relation lifting, both for its
syntax and semantics. Relation lifting is a technique which allows one
to extend a functor $\T \colon \Set \to \Set$ defined on the category
of sets to a functor $\Tb \colon \Rel \to \Rel$ on the category of
sets and relations in a natural way.
In this subsection we will introduce some of the
basic facts and definitions about relation lifting.

Let $\T$ be a set functor.
Given two sets $X$ and $X'$, and a binary relation $R$ between $X \times X'$,
we define the \emph{lifted relation} $\Tb (R) \sse \T X \times \T X'$
as follows:
\[
\Tb (R) := \{ ((\T\pi) (\rho), (\T\pi')(\rho)) \mid \rho \in \T R \},
\]
where $\pi \colon R \to X$ and $\pi'\colon  R \to X'$ are the projection functions given
by $\pi(x,x') = x$ and $\pi'(x,x') = x'$.
In a diagram:
\\ \centerline{\xymatrix{%
X  & R \ar[l]_{\pi} \ar[r]^{\pi'} & X'
\\
TX & TR \ar[l]_{\T \pi} \ar@{>>}[d]
\ar@/_5mm/[dd]_{\langle\T\pi,\T\pi'\rangle}
\ar[r]^{\T\pi'} & \T X'
\\
& \Tb R \ar@{^{(}->}[d]
\\
& \T X \times \T X' \ar[ruu] \ar[luu]
}}
\\In other words, we apply the functor $\T$ to the relation $R$, seen as
a \emph{span} \[\xymatrix{X  & R \ar[l]_{\pi} \ar[r]^{\pi'} & X'},\] and
define $\Tb R$ is the image of $\T R$ under the product map
$\langle\T\pi,\T\pi'\rangle$ obtained from the lifted projection maps
$\T\pi$ and $\T\pi'$.

Let us first see some concrete examples.

\begin{example} \label{RelLift:Examples}
Fix a relation $R \sse X \times X'$.
For the identity and constant functors, we find, respectively:
\begin{eqnarray*}
\rl{\Fid}  R &=& R
\\ \rl{C_{Q}} R &=& \Id_{Q}.
\end{eqnarray*}

The relation lifting associated with the power set functor $\P$ can be
defined concretely as follows:
\[
\Pb R = \{ (A,A') \in \P X \times \P X' \mid
\forall a \in A\, \exists a' \in A'. aRa'
\text{ and }
\forall a' \in A'\, \exists a \in A. aRa'
\}.
\]
This relation is known under many names, of which we mention that of the
\emph{Egli-Milner} lifting of $R$.
For any standard, weak pullback preserving functor $\T$ it can be
shown~\citep{kupk:comp10}
that the lifting of $\Tom$ agrees with that of $\T$, in the sense that
$\rl{\Tom} R = \Tb R \cap (\Tom X \times \Tom X')$.
From this it follows that
\[
\text{ for all } A \in \Tom X, A' \in \Tom X':
A \rel{\rl{\Pom} R} A' \text{ iff } A \rel{\rl{P}R} A',
\]
and for this reason, we shall write $\rl{\P} R$ rather than $\rl{\Pom} R$.

Relation lifting for the finitary multiset functor is slightly more involved:
given two maps $\mu\in M_{\om}X,
\mu'\in M_{\om}X'$, we put
\begin{align*}
\mu \rel{\rl{M_{\om}}R} \mu'  \text{ iff } &\text{there is some map } \rho\colon  R\to\bbN
\text{ such that } \\
& \forall x \in X.\, \textstyle{\sum} \{ \rho(x,x') \mid x' \in X' \} = 1,
 \text{ and } \\
& \forall x' \in X'.\, \textstyle{\sum} \{ \rho(x,x') \mid x \in X \} = 1.
\end{align*}
The definition of $\rl{D_{\om}}$ is similar.

Finally, relation lifting interacts well with various operations on
functors~\citep{herm:stru98}.
In particular, we have
\begin{eqnarray*}
\rl{T_{0}\cof T_{1}}R   &=& \rl{T}_{0}(\rl{T}_{1} R)
\\ \rl{T_{0}+T_{1}}R       &=& \rl{T}_{0}R \cup \rl{T}_{1}R
\\ \rl{T_{0}\times T_{1}}R &=&
\left\{ \left((\xi_{0},\xi_{1}),(\xi'_{0},\xi'_{1})\right) \mid
(\xi_{i},\xi'_{i}) \in \rl{T}_{i}, \text{ for } i \in \{0,1\}
\right\}.
\\ \rl{T^{D}}R &=& \{ (\phi,\phi') \mid (\phi(d),\phi'(d) \in \rl{T}R
\text{ for all } d \in D \}
\end{eqnarray*}
\end{example}

\begin{remark}\label{r:well-def}
Strictly speaking, the definition of the relation lifting of a given relation
$R$ depends on the type of the relation, i.e.\ given sets $X,X',Y, Y'$ such
that $R \sse X \times X'$ and $R \sse Y \times Y'$, it matters whether we
look at $R$ as a relation from $X$ to $X'$ or as a relation from $Y$ to
$Y'$.
We have avoided this potential source of ambiguity by requiring the functor
$\T$ to be \emph{standard}, see Fact~\ref{f:rl}(6).
\end{remark}

Relation lifting has a number of properties that we will use throughout the
paper.
It can be shown that relation lifting interacts well with the operation of
taking the graph of a function $f: X \to X'$, and with most operations on
binary relations.
Most of the properties below are easy to establish --- we refer
to~\citep{kupk:comp10} for proofs.

\begin{fact}
\label{f:rl}
Let $\T$ be a set functor.
Then the relation lifting $\Tb$ satisfies the following properties, for
all functions $f\colon  X \to X'$, all relations $R,S \sse X \times X'$, and all
subsets $Y\sse X$, $Y' \sse X'$:
\begin{enumerate}
\item $\Tb$ extends $\T$:
$\Tb (\Graph{f}) = \Graph{(\T f)}$;

\item $\Tb$ preserves the diagonal:
$\Tb(\Id_{X}) = \Id_{\T X}$;

\item $\Tb$ commutes with relation converse:
$\Tb (\cv{R}) = \cv{(\Tb{R})}$;

\item $\Tb$ is monotone:
if $R \subseteq S$ then $\Tb (R) \subseteq \Tb (S)$;

\item $\Tb$ distributes over composition:
$\Tb(R \cor S) = \Tb (R) \cor \Tb (S)$, if $\T$ preserves weak pullbacks.

\item $\Tb$ commutes with restriction:
$\Tb(R\rst{Y\times Y'}) = \Tb{R}\rst{\T Y \times \T Y'}$,
if $\T$ is standard and preserves weak pullbacks.
\end{enumerate}
\end{fact}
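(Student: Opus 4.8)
The plan is to handle items (1)--(4) by direct manipulation of spans, to establish (5) as the only substantial step, and then to derive (6) from (5) together with (1) and (3). Throughout I would use that a relation $R\subseteq X\times X'$, regarded as a span with projections $\pi\colon R\to X$, $\pi'\colon R\to X'$, has $\Tb R=\{(\T\pi(\rho),\T\pi'(\rho))\mid\rho\in\T R\}$, so that each claimed identity becomes a statement about how certain spans are related by span morphisms. For (1): $\pi\colon\Graph{f}\to X$ is a bijection with inverse $x\mapsto(x,f(x))$, and $\pi'$ composed with this inverse is $f$; applying $\T$ and using functoriality, $\T\pi$ is a bijection, and reparametrising $\rho\in\T\Graph{f}$ as $\T(\pi^{-1})(\xi)$ gives $\Tb(\Graph{f})=\{(\xi,\T f(\xi))\mid\xi\in\T X\}=\Graph{(\T f)}$. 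Item (2) is the instance $f=\id_X$, using $\T\id_X=\id_{\T X}$. For (3): the swap bijection $s\colon R\to\cv{R}$ satisfies $\pi_{\cv{R}}\circ s=\pi'$ and $\pi'_{\cv{R}}\circ s=\pi$, and reparametrising $\T\cv{R}$ along the bijection $\T s$ turns the definition of $\Tb(\cv{R})$ into that of $\cv{(\Tb R)}$. For (4): if $R\subseteq S$ the inclusion $j\colon R\hookrightarrow S$ satisfies $\pi_S\circ j=\pi_R$ and $\pi'_S\circ j=\pi'_R$, so a witness $\rho\in\T R$ for a pair in $\Tb R$ is carried by $\T j$ to a witness in $\T S$ for the same pair in $\Tb S$. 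None of (1)--(4) uses standardness or weak pullback preservation.

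The work is in (5). I would introduce the canonical diagram for relational composition: for $R\subseteq X\times X'$ and $S\subseteq X'\times X''$, let $P:=\{(r,s)\in R\times S\mid\pi'_R(r)=\pi_S(s)\}$ with projections $q_R\colon P\to R$, $q_S\colon P\to S$, so that $(P,q_R,q_S)$ is a pullback of $\pi'_R$ and $\pi_S$; and let $e\colon P\to R\cor S$ be the surjection $(r,s)\mapsto(\pi_R(r),\pi'_S(s))$, which satisfies $\pi_{R\cor S}\circ e=\pi_R\circ q_R$, $\pi'_{R\cor S}\circ e=\pi'_S\circ q_S$, and moreover $\pi_S\circ q_S=\pi'_R\circ q_R$. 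For the inclusion $\Tb(R\cor S)\subseteq\Tb R\cor\Tb S$: given a witness $\tau\in\T(R\cor S)$ for a pair $(\xi,\xi'')$, use that $\T$ preserves surjections (these are split epis in $\Set$) to lift $\tau$ to some $\omega\in\T P$ with $\T e(\omega)=\tau$; then $\T q_R(\omega)\in\T R$ and $\T q_S(\omega)\in\T S$ witness $(\xi,\eta)\in\Tb R$ and $(\eta,\xi'')\in\Tb S$, where $\eta:=\T\pi'_R(\T q_R(\omega))=\T\pi_S(\T q_S(\omega))$. For the converse inclusion --- and this is where weak pullback preservation is genuinely used --- from witnesses $\rho\in\T R$, $\sigma\in\T S$ with $\T\pi'_R(\rho)=\T\pi_S(\sigma)$, observe that $(\rho,\sigma)$ is a cone for the square obtained by applying $\T$ to the pullback square; since $\T$ preserves weak pullbacks, $(\T P,\T q_R,\T q_S)$ is a weak pullback of $\T\pi'_R$ and $\T\pi_S$, so there is $\omega\in\T P$ with $\T q_R(\omega)=\rho$ and $\T q_S(\omega)=\sigma$, and then $\T e(\omega)$ witnesses the required pair in $\Tb(R\cor S)$. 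This is the step I expect to be the main obstacle --- not any single line, but setting up the composition diagram correctly and keeping straight which square is a (weak) pullback and which map is a (split) epi; note that (5) is in fact one direction of the characterisation of weak pullback preservation alluded to in \S\ref{Subsect:RelLift}.

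Finally (6) reduces to the previous parts. A direct check on elements shows $R\rst{Y\times Y'}=\Graph{\iota_{Y,X}}\cor R\cor\cv{(\Graph{\iota_{Y',X'}})}$, where $\iota_{Y,X}\colon Y\hookrightarrow X$ and $\iota_{Y',X'}\colon Y'\hookrightarrow X'$ are the inclusions. Applying (5), then (1) and (3), gives $\Tb(R\rst{Y\times Y'})=\Graph{(\T\iota_{Y,X})}\cor\Tb R\cor\cv{(\Graph{(\T\iota_{Y',X'})})}$; now standardness yields $\T\iota_{Y,X}=\iota_{\T Y,\T X}$ and $\T\iota_{Y',X'}=\iota_{\T Y',\T X'}$, and composing $\Tb R$ on the two sides with the graphs of these inclusions is exactly intersecting with $\T Y\times\T Y'$, i.e.\ $\Tb R\rst{\T Y\times\T Y'}$.
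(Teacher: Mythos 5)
Your proof is correct. The paper itself does not prove Fact~\ref{f:rl} --- it defers to the cited literature (Kupke, Kurz \& Venema) --- so there is nothing internal to compare against, but your argument is exactly the standard one: (1)--(4) by span reparametrisation, the left-to-right inclusion in (5) via the surjection $e\colon P\to R\cor S$ (using that set functors preserve surjections, consistent with the paper's declared indifference to constructivity), the right-to-left inclusion via weak pullback preservation applied to the pullback $P=R\times_{X'}S$, and (6) by factoring the restriction through graphs of inclusions and invoking standardness. You also correctly isolate where weak pullback preservation is genuinely needed, matching the paper's remark that (5) is in fact equivalent to it.
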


Fact~\ref{f:rl}(5) plays a key role in our work.
In fact, distributivity of $\Tb$ over relation composition is
\emph{equivalent} to $\T$ preserving weak-pullbacks; the proof of this
equivalence goes back to \citet{trnk:rela77}.

Many proofs in this paper will be based on Fact~\ref{f:rl}, and we will not
always provide all technical details.
In the lemma below we have isolated some facts that will be used a number
of times; the proof may serve as a sample of an argument using properties
of relation lifting.

\begin{lemma}
\label{p:rl2}
Let $\T\colon \Set \to \Set$ be a standard, finitary, weak pullback-preserving functor. Let $X,Y$ be sets, let $f,g\colon X \to Y$ be two functions and let
$R \subseteq X\times X$ and $S \sse Y \times Y$ be relations.
\begin{enumerate}
\item If $(X,R)$ is a pre-order, then so is $(\T X,\Tb{R})$.
\item If $f(x) \rel{S} g(x)$ for all $x\in X$, then $\T f (\alpha)
\rel{ \Tb S} \T g (\alpha)$ for all $\alpha \in \T X$.
\item If $x \rel{R} y$ implies $f(x) \rel{S} g(y)$ for all $x,y \in X$,
then $\alpha \rel{ \Tb R} \beta$ implies $(\T f) \alpha \rel{ \Tb S}
(\T g) \beta$ for all $\alpha,\beta \in \T X$.
\end{enumerate}
\end{lemma}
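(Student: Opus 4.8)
The plan is to derive all three items from the basic properties of relation lifting collected in Fact~\ref{f:rl}, exploiting in particular its compatibility with graphs, composition, converse, and monotonicity. The strategy throughout is to phrase the hypotheses about $f,g,R,S$ as inclusions between composite relations built from $\Graph f$, $\Graph g$, $R$ and $S$, apply $\Tb$ to those inclusions, and then read off the conclusion using Fact~\ref{f:rl}(1),(4),(5). Items (2) and (3) are really the load-bearing ones, and (1) can be proved directly; I would order the argument (1), then (3), then (2) (deriving (2) as a special case of (3)).

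For item (1), suppose $(X,R)$ is a pre-order, i.e.\ $\Id_X \sse R$ and $R \cor R \sse R$; since $R$ is also contained in $R$ we in fact have $R\cor R = R$ and $\Id_X \cup R = R$. Applying $\Tb$ and using monotonicity (Fact~\ref{f:rl}(4)) together with preservation of the diagonal (Fact~\ref{f:rl}(2)) gives $\Id_{\T X} = \Tb(\Id_X) \sse \Tb(R)$, so $\Tb R$ is reflexive. For transitivity, from $R\cor R \sse R$ and Fact~\ref{f:rl}(4) we get $\Tb(R\cor R)\sse \Tb R$, and since $\T$ preserves weak pullbacks, Fact~\ref{f:rl}(5) yields $\Tb(R\cor R) = \Tb R \cor \Tb R$; hence $\Tb R \cor \Tb R \sse \Tb R$, so $\Tb R$ is transitive. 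Thus $(\T X,\Tb R)$ is a pre-order.

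For item (3), the hypothesis ``$x \rel{R} y$ implies $f(x)\rel{S} g(y)$'' says exactly that the relational composite $\cv{\Graph f}\cor R \cor \Graph g$ is contained in $S$: indeed $(u,v)\in \cv{\Graph f}\cor R\cor \Graph g$ iff there are $x,y$ with $u=f(x)$, $x\rel R y$, $v=g(y)$. Now apply $\Tb$: by monotonicity, $\Tb(\cv{\Graph f}\cor R\cor \Graph g)\sse \Tb S$; by distributivity over composition (Fact~\ref{f:rl}(5), using weak-pullback preservation), the left side equals $\Tb(\cv{\Graph f})\cor \Tb R\cor \Tb(\Graph g)$; by commutation with converse (Fact~\ref{f:rl}(3)) and with graphs (Fact~\ref{f:rl}(1)), $\Tb(\cv{\Graph f}) = \cv{\Tb(\Graph f)} = \cv{\Graph{\T f}}$ and $\Tb(\Graph g)=\Graph{\T g}$. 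So $\cv{\Graph{\T f}}\cor \Tb R\cor \Graph{\T g}\sse \Tb S$, which unwinds to: $\alpha \rel{\Tb R}\beta$ implies $(\T f)\alpha \rel{\Tb S}(\T g)\beta$, as required. Item (2) is then immediate by taking $R = \Id_X$ in (3): the hypothesis ``$f(x)\rel S g(x)$ for all $x$'' is precisely the instance of the hypothesis of (3) with $R=\Id_X$, and $\Tb(\Id_X)=\Id_{\T X}$ (Fact~\ref{f:rl}(2)), so $\alpha \rel{\Id_{\T X}}\alpha$ gives $(\T f)\alpha \rel{\Tb S}(\T g)\alpha$ for every $\alpha\in\T X$.

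The only genuinely delicate point is making sure the ``type'' of each relation is handled consistently when applying $\Tb$ — the composites above live between various sets, and $\Tb$ of a relation depends a priori on the ambient sets (Remark~\ref{r:well-def}). This is exactly where standardness is used, and it is already folded into Fact~\ref{f:rl}; so no extra work is needed, but I would remark on it to reassure the reader. Everything else is routine relational calculus, and I would not spell out the elementary unwinding of composite relations into existential statements in full.
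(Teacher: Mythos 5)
Your proposal is correct and follows essentially the same route as the paper: encode the hypotheses as relational inclusions built from $\Graph f$, $\Graph g$, $R$, $S$, apply $\Tb$, and unwind using Fact~\ref{f:rl}(1),(3),(4),(5). The only (harmless) difference is that you derive item~(2) from item~(3) by setting $R=\Id_X$, whereas the paper proves (2) directly and then notes that (3) is completely analogous.
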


\begin{proof}
For part~1, observe that $(X,R)$ is a pre-order iff $\Id_{X} \sse R$ and
$R\cor R \sse R$.
Hence, if $(X,R)$ is a pre-order, it follows from Fact~\ref{f:rl}(2,4) that
$\Id_{\T X} = \Tb\Id_{X} \sse \Tb R$, and from Fact~\ref{f:rl}(5,4) that
${\Tb R}\cor{\Tb R} = \Tb (R\cor R) \sse \Tb R$, implying that $(\T X,\Tb{R})$
is a pre-order as well.

For part~2, observe that the antecedent can be succinctly expressed as
\[
\cv{(\Graph{f})}\cor \Graph{g} \sse S.
\]
Then it follows by the properties of relation lifting that
\begin{align*}
\cv{(\Graph{\T f})}\cor \Graph{\T g}
& =
\cv{(\Tb(\Graph{f}))} \cor{\Tb(\Graph{g})}
&& \text{(Fact~\ref{f:rl}(1))}
\\ & =
\Tb(\cv{(\Graph{f})}) \cor{\Tb(\Graph{g})}
&& \text{(Fact~\ref{f:rl}(3))}
\\ & =
\Tb(\cv{(\Graph{f})}\cor \Graph{g})
&& \text{(Fact~\ref{f:rl}(5))}
\\ &\sse
\Tb S
&& \text{(Fact~\ref{f:rl}(4))}
\end{align*}
But the inclusion $\cv{(\Graph{\T f})}\cor \Graph{\T g} \sse \Tb S$ is
just another way of stating the conclusion of part~2.

For part~3, we reformulate the statement of its antecedent as
\[
\cv{(\Graph{f})}\cor R\cor \Graph{g} \sse S.
\]
On the basis of this we may reason, via a completely analogous argument
to the one just given, that
\[
\cv{(\Graph{\T f})}\cor \Tb R\cor \Graph{\T g} \sse \T S,
\]
which is equivalent way of phrasing the conclusion of part~3.
\end{proof}

Relation lifting interacts with the map $\Base^\T $ as follows \citep[see][]{kupk:comp10}:

\begin{fact}
\label{f:base}
Let $\T$ be a standard, finitary, weak pullback-preserving functor.
\begin{enumerate}
\item
$\Base^\T $ is a natural transformation $\Base^\T \colon  \T \to \Pom$.
That is, given a map $f\colon  X \to X'$
the following diagram commutes:
\\ \centerline{\xymatrix{%
\T X \ar[r]^{\Base^\T _{X}} \ar[d]_{\T f}
& \Pom X \ar[d]^{\P f}
\\
\T X' \ar[r]^{\Base^\T _{X'}} & \Pom X'
}}
\item
Given a relation $R \sse X \times X'$ and elements $\al \in \T X$,
$\be \in \T Y$, it follows from $\al \rel{\Tb R} \be$ that $\Base^\T (\al)
\rel{\Pb R} \Base^\T (\be)$.
\end{enumerate}
\end{fact}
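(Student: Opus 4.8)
The plan is to prove the two parts separately, using only the two standing hypotheses on $\T$ that matter here: standardness (so $\T$ preserves inclusions, and hence $\T Y \sse \T X$ whenever $Y \sse X$, with $\T\iota_{Y,X}$ the inclusion) and weak-pullback preservation. Throughout I will freely use the already-recorded facts that $\Base^\T_X(\xi)$ is the $\sse$-least set $A$ with $\xi \in \T A$. For part~1, fix $f\colon X \to X'$ and $\xi \in \T X$; I must show $\P f\bigl(\Base^\T_X(\xi)\bigr) = \Base^\T_{X'}\bigl(\T f(\xi)\bigr)$, and I will do this by establishing the two inclusions. Write $Z := \Base^\T_X(\xi)$, so $\xi \in \T Z$. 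For the inclusion ``$\supseteq$ read as $\Base^\T_{X'}(\T f(\xi)) \sse \P f(Z)$'', factor the composite $Z \hookrightarrow X \xrightarrow{f} X'$ as $Z \xrightarrow{g} f[Z] \hookrightarrow X'$, where $g$ is the corestriction of $f$; applying $\T$ and using standardness to identify $\T\iota_{Z,X}$ and $\T\iota_{f[Z],X'}$ with inclusions yields $\T f(\xi) = \T g(\xi) \in \T(f[Z])$, so by minimality of $\Base$ we get $\Base^\T_{X'}(\T f(\xi)) \sse f[Z] = \P f(Z)$.

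For the reverse inclusion $\P f(Z) \sse \Base^\T_{X'}(\T f(\xi))$, put $W := \Base^\T_{X'}(\T f(\xi))$, so $\T f(\xi) \in \T W$. It is enough to prove $\xi \in \T\bigl(f^{-1}[W]\bigr)$, because minimality of $\Base$ then gives $Z \sse f^{-1}[W]$ and hence $f[Z] \sse W$. Now the square with vertices $f^{-1}[W]$, $X$, $W$, $X'$, whose maps are the inclusion $f^{-1}[W]\hookrightarrow X$, the restriction $f^{-1}[W]\to W$ of $f$, the inclusion $W\hookrightarrow X'$, and $f\colon X\to X'$, is a pullback in $\Set$. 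Since $\T$ weakly preserves pullbacks, $\T(f^{-1}[W])$ with its two legs is a weak pullback of $\T f$ and $\T\iota_{W,X'}$. The pair $(\xi,\xi_W)$, where $\xi_W \in \T W$ is the copy of $\T f(\xi)$, is a competing cone: indeed $\T\iota_{W,X'}(\xi_W) = \xi_W = \T f(\xi)$ in $\T X'$ by standardness. Hence there is $\zeta \in \T(f^{-1}[W])$ with $\T\iota_{f^{-1}[W],X}(\zeta) = \xi$, and standardness forces $\zeta = \xi$, i.e. $\xi \in \T(f^{-1}[W])$, as needed. This is the single step that genuinely uses weak-pullback preservation, and I expect the correct bookkeeping of this cone-and-factorization argument (keeping track of which copy of an element lives in which $\T$-set) to be the main obstacle; everything else is formal.

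For part~2, unfold $\alpha \rel{\Tb R} \beta$: there is $\rho \in \T R$ with $\T\pi(\rho) = \alpha$ and $\T\pi'(\rho) = \beta$, where $\pi\colon R \to X$ and $\pi'\colon R \to X'$ are the projections. Set $R_{0} := \Base^\T_{R}(\rho) \in \Pom R$, a \emph{finite} subrelation of $R$. Applying the naturality square of part~1 to $\pi$ and evaluating at $\rho$ gives $\pi[R_{0}] = \P\pi\bigl(\Base^\T_{R}(\rho)\bigr) = \Base^\T_{X}\bigl(\T\pi(\rho)\bigr) = \Base^\T_{X}(\alpha)$, that is $\Dom(R_{0}) = \Base^\T_{X}(\alpha)$; symmetrically $\Ran(R_{0}) = \Base^\T_{X'}(\beta)$. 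Finally, for any relation one reads off from the Egli--Milner description that $\bigl(\Dom(R_{0}),\Ran(R_{0})\bigr) \in \Pb R_{0}$ (both sets being finite, there is no ambiguity between $\rl{\Pom}$ and $\rl{\P}$), so monotonicity of relation lifting (Fact~\ref{f:rl}(4)) together with $R_{0} \sse R$ yields $\Base^\T(\alpha) \rel{\Pb R} \Base^\T(\beta)$, which is exactly the assertion.
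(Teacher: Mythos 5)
Your argument is correct. Note first that the paper itself does not prove this statement: it is recorded as a Fact with a citation to the literature (Kupke, Kurz \& Venema), so there is no in-paper proof to compare against; your write-up supplies a self-contained argument. Both halves check out. In part~1 the easy inclusion via the corestriction $Z \to f[Z]$ and standardness is fine, and the harder inclusion is exactly where weak-pullback preservation must enter: the square on $f^{-1}[W]$, $X$, $W$, $X'$ is indeed a pullback in $\Set$, the pair $(\xi,\T f(\xi))$ is a cone over the lifted cospan (the compatibility condition $\T\iota_{W,X'}(\xi_W)=\T f(\xi)$ holding by standardness), and standardness of the leg $\T\iota_{f^{-1}[W],X}$ forces the mediating element to be $\xi$ itself, giving $\xi \in \T(f^{-1}[W])$ and hence $Z \sse f^{-1}[W]$ by minimality of $\Base$. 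In part~2, reducing to the finite subrelation $R_0 = \Base^\T_R(\rho)$, reading off $\Dom(R_0)=\Base^\T(\al)$ and $\Ran(R_0)=\Base^\T(\be)$ from the naturality square applied to the projections, observing $(\Dom(R_0),\Ran(R_0)) \in \Pb R_0$ from the Egli--Milner description, and concluding by monotonicity of $\Pb$ is clean and correct; the only delicate point, the dependence of $\Pb R_0$ on the ambient sets, is exactly what the paper's Remark~\ref{r:well-def} and Fact~\ref{f:rl}(6) dispose of via standardness, as you note.
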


An interesting relation to which we shall apply relation lifting is the
\emph{membership} relation $\in$.
If needed, we will denote the membership relation restricted to a given set
$X$ as the relation ${\in_{X}} \sse X \times \P X$.
Given a set $X$ and $\Phi \in \T \P X$, we define
\[
\nbsem_{X}(\Phi) = \{ \alpha \in \T X \mid \alpha \rel{\Tb {\in_{X}}} \Phi \}.
\]
Elements of $\nbsem(\Phi)$ will be called \emph{lifted members of
$\Phi$}.
Properties of  $\nbsem$ are intimately related to those of $\Tb$ \citep[see][]{kupk:comp10}:

\begin{fact}
\label{f:db}
Let $\T\colon \Set \to \Set$ be a standard, finitary, weak pullback-preserving functor.
Then the collection of maps $\nbsem_{X}$ forms a distributive law with respect to
both the co- and the contravariant power set functor.
That is, $\nbsem$ provides two natural transformations,
$\nbsem \colon\T \P \to \P\T$, and $\nbsem\colon  \T\Q \to \Q\T$.
\end{fact}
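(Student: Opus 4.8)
The two asserted natural transformations have the same underlying family of functions, namely the maps $\nbsem_X$, so the plan is to verify two naturality squares; in each case I would translate the required equation between functions into an equation between \emph{relations}, and then discharge the latter by the calculus of Fact~\ref{f:rl} together with a single elementary observation about membership. Throughout, ${\in_X}\sse X\times\P X$ denotes the membership relation on $X$, and relation composition $R\cor S$ is read left-to-right.

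Consider first the square for $\nbsem\colon\T\P\to\P\T$. Unwinding the definitions of direct image and of $\nbsem$, one checks that for every $\Phi\in\T\P X$ and $\be\in\T X'$,
\[
\be\in(\P\T f)\bigl(\nbsem_X(\Phi)\bigr)\iff\be\rel{\cv{(\Graph{(\T f)})}\cor\Tb({\in_X})}\Phi,
\]
\[
\be\in\nbsem_{X'}\bigl((\T\P f)(\Phi)\bigr)\iff\be\rel{\Tb({\in_{X'}})\cor\cv{(\Graph{(\T\P f)})}}\Phi.
\]
Hence naturality amounts precisely to the identity of relations $\cv{(\Graph{(\T f)})}\cor\Tb({\in_X})=\Tb({\in_{X'}})\cor\cv{(\Graph{(\T\P f)})}$ inside $\T X'\times\T\P X$. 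Using Fact~\ref{f:rl}(1,3) one has $\cv{(\Graph{(\T f)})}=\Tb(\cv{(\Graph{f})})$ and $\cv{(\Graph{(\T\P f)})}=\Tb(\cv{(\Graph{\P f})})$, whence by Fact~\ref{f:rl}(5) --- the point at which weak pullback preservation is used --- the two sides rewrite as $\Tb\bigl(\cv{(\Graph{f})}\cor{\in_X}\bigr)$ and $\Tb\bigl({\in_{X'}}\cor\cv{(\Graph{\P f})}\bigr)$ respectively. It therefore suffices to establish the set-level identity $\cv{(\Graph{f})}\cor{\in_X}={\in_{X'}}\cor\cv{(\Graph{\P f})}$ inside $X'\times\P X$, and this is immediate, since both relations connect $x'$ to $B$ exactly when $x'\in f[B]$.

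For the contravariant functor the argument runs entirely in parallel. Here $\Q f$ is the inverse-image map $B'\mapsto f^{-1}[B']$, and the same reformulation reduces the relevant naturality square to the relational identity $\Graph{(\T f)}\cor\Tb({\in_{X'}})=\Tb({\in_X})\cor\cv{(\Graph{(\T\Q f)})}$ inside $\T X\times\T\P X'$; by Fact~\ref{f:rl}(1,3,5) this reduces in turn to the elementary identity $\Graph{f}\cor{\in_{X'}}={\in_X}\cor\cv{(\Graph{\Q f})}$ inside $X\times\P X'$, both sides of which connect $x$ to $B'$ precisely when $f(x)\in B'$.

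I do not anticipate a genuine obstacle: the real work is the bookkeeping of the relational reformulation --- correctly tracking the source and target of each relation and the placement of the converses --- together with the routine appeals to weak pullback preservation (Fact~\ref{f:rl}(5)) and standardness (the latter ensuring that $\Tb({\in_X})$ is unambiguous, see Remark~\ref{r:well-def}). Finally, if one wants $\nbsem\colon\T\P\to\P\T$ to be a distributive law of $\T$ over the powerset \emph{monad} in the strict sense, compatibility with $\single$ and $\mu$ is obtained by the same technique, starting now from the inclusion $\Graph{(\single_X)}\sse{\in_X}$ together with the identity ${\in_X}\cor\cv{(\Graph{(\single_X)})}=\Id_X$, and the analogous relation linking $\mu_X$ with $\in$.
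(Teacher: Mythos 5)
Your argument is correct: reducing each naturality square to an identity of relations, peeling off the graphs with Fact~\ref{f:rl}(1,3), collapsing the composites with Fact~\ref{f:rl}(5), and checking the resulting ground-level identities about $\in$ is exactly the standard route, and your bookkeeping of sources, targets and converses is accurate. The paper itself states this as a Fact and defers the proof to the cited literature, so there is no in-paper argument to compare against; your write-up supplies essentially the proof that reference gives, with weak pullback preservation entering precisely where you say it does and standardness covering the well-definedness of $\Tb(\in_X)$.
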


\begin{remark}
\label{r:rl}
One can strengthen Fact \ref{f:db}:
$\nbsem$ is actually a distributive law over the \emph{monad} $(\P,\mu,\eta)$, in the
sense of being also compatible with the unit $\eta$ and the multiplication
$\mu$ of $\P$, as given by the following two diagrams:
\\\centerline{
\xymatrix{
\T X \ar[rd]_{\single_{\T X}}
\ar[r]^{\T \single_X} &
\T \P X
\ar[d]^{\nbsem_X} \\
& \P \T X }
\hspace*{20mm}
\xymatrix{\T \P \P X \ar[d]^{\T \mu_X}
\ar[r]^{\nbsem_{\P X}} &  \P \T \P X
\ar[r]^{\P \nbsem_X}&
\P \P \T X \ar[d]^{\mu_X} \\
\T \P X \ar[rr]_{\nbsem_X}& & \P \T X
}
}

In the terminology of \citet{Street:Monads}, $(T,\nbsem)$ is a
\emph{monad opfunctor} from the monad $\P$ to itself,
and there is a one-one correspondence between the monad opfunctors and the
functors $T$ equipped with extensions to endofunctors on the \emph{Kleisli
category} $\class{Kl}(\P)$ associated with $\P$.
(The explicit results in \citep{Street:Monads}, using the 2-functor $\mathbf{Alg_{C}}$,
are in terms of monad functors and extensions to the category of Eilenberg-Moore algebras.
The results for monad opfunctors and the Kleisli category are dual.)
Note that the Kleisli category of the power set monad is (isomorphic to)
the category $\class{Rel}$ with sets as objects, and binary relations as
arrows.
The correspondence mentioned then links the natural transformation
$\nbsem$ to the notion of relation lifting $\Tb$.

\end{remark}

\begin{lemma}
\label{f:rlnt}
Let $\T$ be a standard, finitary, weak pullback-preserving functor.
Let $X$ be some set and let $\Phi \in \T\P X$.
\begin{enumerate}
\item If $\nada \in \Base^\T (\Phi)$ then $\nbsem(\Phi) = \nada$.
\item If $\Base^\T (\Phi)$ consists of singletons only, then
$\nbsem(\Phi)$ is a singleton.
\item If $\T$ maps finite sets to finite sets, then for all $\Phi \in
\T \Pom X$, $|\nbsem(\Phi)| < \omega$.
\end{enumerate}
\end{lemma}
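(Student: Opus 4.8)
The three claims in Lemma~\ref{f:rlnt} concern the set of lifted members $\nbsem(\Phi) = \{\alpha \in \T X \mid \alpha \rel{\Tb{\in_X}} \Phi\}$ of some $\Phi \in \T\P X$. The common thread in my plan is to use the naturality of $\Base^\T$ (Fact~\ref{f:base}(1)), the interaction of $\Base^\T$ with relation lifting (Fact~\ref{f:base}(2)), and the fact that $\Base^\T_X(\xi)$ is the \emph{smallest} set $X'$ with $\xi \in \T X'$. The key reduction I would make throughout is this: if $\alpha \rel{\Tb{\in_X}}\Phi$, then by Fact~\ref{f:base}(2) we have $\Base^\T(\alpha) \rel{\Pb{\in_X}} \Base^\T(\Phi)$, i.e.\ the Egli-Milner relationship between $\Base^\T(\alpha) \sse X$ and $\Base^\T(\Phi) \sse \P X$. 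Unwinding the Egli-Milner lifting of $\in_X$: every $x \in \Base^\T(\alpha)$ belongs to some $A \in \Base^\T(\Phi)$, and every $A \in \Base^\T(\Phi)$ contains some $x \in \Base^\T(\alpha)$. This is the engine behind all three parts.

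For part~1, suppose $\nada \in \Base^\T(\Phi)$ but, for contradiction, that there is some $\alpha \in \nbsem(\Phi)$. By the reduction above, $\Base^\T(\alpha) \rel{\Pb{\in_X}}\Base^\T(\Phi)$; taking $A = \nada \in \Base^\T(\Phi)$, the second Egli-Milner condition demands some $x \in \Base^\T(\alpha)$ with $x \in \nada$, which is impossible. Hence $\nbsem(\Phi) = \nada$.

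For part~2, assume every member of $\Base^\T(\Phi)$ is a singleton. I would first show there is a function $s\colon \Base^\T(\Phi) \to X$ picking the element of each singleton, and observe that $\Graph{s} \sse X \times \P X$ restricted appropriately is essentially the relation $\cv{(\in_X)}$ cut down to $\Base^\T(\Phi)$; more precisely, $\in_X$ restricted to $X \times \Base^\T(\Phi)$ coincides with $\cv{\Graph{s}}$ (reading $s$ as $\Base^\T(\Phi) \to X$). Since $\Phi \in \T\Base^\T(\Phi)$, relation lifting applied to the graph of $s$ (Fact~\ref{f:rl}(1)) plus the restriction property (Fact~\ref{f:rl}(6)) gives that $\alpha \rel{\Tb{\in_X}}\Phi$ forces $\alpha = (\T s)(\Phi)$, with $\alpha \in \T(\Ran s) \sse \T X$. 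This pins $\nbsem(\Phi)$ down to the single element $(\T s)(\Phi)$; to see it is nonempty one checks $(\T s)(\Phi) \rel{\Tb{\in_X}}\Phi$ directly, again via Fact~\ref{f:rl}(1) and monotonicity. The one technical point to handle carefully is that $s$ need only be defined on $\Base^\T(\Phi)$, not all of $\P X$, which is exactly why we work with the restricted membership relation and invoke standardness (Fact~\ref{f:rl}(6)).

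For part~3, suppose $\T$ sends finite sets to finite sets and $\Phi \in \T\Pom X$. Let $B := \Base^\T(\Phi) \in \Pom(\Pom X)$; this is a finite set of finite subsets of $X$, so $Y := \smallcup B$ is a finite subset of $X$. By the first Egli-Milner condition, any $\alpha \in \nbsem(\Phi)$ has $\Base^\T(\alpha) \sse Y$, hence $\alpha \in \T Y$, so $\nbsem(\Phi) \sse \T Y$, which is finite by hypothesis; therefore $|\nbsem(\Phi)| < \om$. The main obstacle in the whole lemma is the careful bookkeeping in part~2 about which relation restriction is being lifted and ensuring that the partially-defined selector $s$ genuinely yields $\alpha = (\T s)(\Phi)$ on the nose (rather than merely up to the lifted relation); once the restriction-of-membership-to-$\Base^\T(\Phi)$ identity with $\cv{\Graph{s}}$ is set up correctly, the rest follows mechanically from Fact~\ref{f:rl}.
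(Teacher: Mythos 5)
Your proposal is correct and follows essentially the same route as the paper: parts~1 and~3 via Fact~\ref{f:base}(2) and the finiteness of $\Base^\T(\Phi)$, and part~2 via the identity between the restricted membership relation and the converse graph of a singleton-selector, lifted through $\T$ using Fact~\ref{f:rl}. The only cosmetic difference is that the paper defines its selector $\theta_X$ on the set of \emph{all} singletons $S_X$ rather than just on $\Base^\T(\Phi)$, which sidesteps some of the restriction bookkeeping you flag.
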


\begin{proof}
For part~1, suppose that $\al$ is a lifted member of $\Phi$; then we may
derive by Fact~\ref{f:base} that $\Base^\T (\al) \Pin \Base^\T (\Phi)$.
But from this it would follow, if $\nada \in \Base^\T (\Phi)$, that $\Base^\T (\al)$
contains a member of $\nada$, which is clearly impossible.
Consequently, then $\nbsem(\Phi)$ is empty.

For part~2, observe that another way of saying that $\Base^\T (\Phi)$ consists of
singletons only, is that $\Phi\in \T S_{X}$, with $S_{X} := \{ \{ x \} \mid
x \in X \}$.
Let $\theta_{X}\colon  S_{X} \to X$ be the inverse of $\eta_{X}$, that is,
$\theta_{X}$ is the bijection mapping a singleton $\{ x \}$ to its unique
member $x$.
Clearly then, we have
$\cv{(\Graph{\theta_{X}})} = {\in}\rst{X \times S_{X}}$, from which
it follows by Fact~\ref{f:rl} that $\cv{(\Graph{\T\theta_{X}})} =
{\Tin}\rst{\T X \times \T S_{X}}$.
From this it is immediate that if $\Phi \in \T S_{X}$, then $(\T\theta_{X})
(\Phi)$ is the unique lifted member of $\Phi$.

Finally, we consider part~3.
Since $\T$ is finitary, $\Phi \in \T \Pom X$ implies that $\Phi \in \T\Pom
Y$ for some finite set $Y$, and from this it follows that $\Base^\T (\Phi)
\sse \Pom Y$.
If $\al$ is a lifted member of $\Phi$, then by Fact~\ref{f:base} we obtain
$\Base^\T (\al) \Pin \Base^\T (\Phi)$, and so in particular we find $\Base^\T (\al)
\sse \bigcup \Base^\T (\Phi) \sse Y$.
From this it follows that $\nbsem(\Phi) \sse \T Y$, and so $\nbsem(\Phi)$
must be finite by the assumption on $\T$.
\end{proof}

\subsection{Frames and their presentations}
\label{Subsect:FramePres}

A \emph{frame} is a complete lattice in which finite meets distribute over
arbitrary joins.
The signature of frames consists of arbitrary joins and finite meets,
and it will be convenient for us to include the top and bottom as well.
Thus a frame will usually be given as $\mathbb{L} =\struc{ L,
\smallvee, \wedge,0,1}$, while we will often consider join and meet as
functions $\smallvee_\mathbb{L} \colon \P L \to L$ and
$\smallwedge_\mathbb{L} \colon \Pom L \to L$.
This enables us for instance to define a frame homomorphism $f\colon  \bbL
\to \bbM$ as a map from $L$ to $M$ satisfying $f \cof {\bw} = {\bw}\cof
(\Pom f)$ and $f \cof {\bv} = {\bv}\cof (\P f)$.
By $\class{Fr}$ we denote the category of frames and frame homomorphisms.
The initial frame (the lattice of truth values) will be denoted as $\two$,
and for a given frame $\bbL$
we will let $!_{\bbL}$ denote the unique frame homomorphism from $\two$ to
$\bbL$, omitting the subscript if $\bbL$ is clear from context.

The order relation $\leq_{\bbL}$ of a frame $\bbL$ is given by $a \leq_{\bbL}
b$ if $a\land b = a$ (or, equivalently, $a \lor b = b$).
We can adjoin an implication operation to a frame $\mathbb{L}$ by defining
$a\to b := \smallvee \{c \mid a\wedge c \leq b\}$; this operation turns
$\bbL$ into a Heyting algebra.
As a special case of implication we can  consider the \emph{negation}: $\neg a
:= \smallvee \{ c \mid a \wedge c = 0\}$.
Generally, neither of these two operations is preserved by frame homomorphisms.
A subset $S$ of $\bbL$ is \emph{directed} if for every $s_{0},s_{1} \in S$
there is an element $s \in S$ such that $s_{0},s_{1} \leq s$.
The join of a directed set $S$ is often denoted as $\bvdir S$.

A \emph{frame presentation} is a tuple $\struc{ G \mid R}$ where $G$
is a set of generators and $R\subseteq \P \Pom G
\times \P \Pom G $ is a set of relations. A
presentation $\struc{ G\mid R}$ \emph{presents} a frame $\mathbb{L}$ if
there exists a function $f\colon G \to L$ which is \emph{compatible with
$R$}, i.e.~such that
\[
\text{for all } (t_1,t_2) \in R, \, \bigvee_{A\in t_1} \smallwedge
(\Pom f) A = \bigvee_{B\in t_2} \smallwedge
(\Pom f) B,
\]
and for all frames $\mathbb{M}$ and functions $g\colon G \to M$ compatible
with $R$, there is a
unique frame homomorphism $g'\colon \mathbb{L}\to \mathbb{M}$ such
that $g'f=g$.
We call $f$ the \emph{insertion of generators} (of $G$ in $\bbL$).

\begin{fact} \label{Prelim:FramesPresent}
Every frame presentation presents a frame.
\end{fact}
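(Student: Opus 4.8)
The plan is to obtain the presented frame as a quotient of a free frame. First I would invoke the standard fact that the forgetful functor $\class{Fr}\to\Set$ has a left adjoint: for every set $G$ there is a \emph{free frame} $\Fr(G)$ together with a function $e_{G}\colon G\to\Fr(G)$ such that every function $h\colon G\to M$ into the carrier of a frame $\bbM$ extends uniquely to a frame homomorphism $\wh{h}\colon \Fr(G)\to\bbM$ with $\wh{h}\circ e_{G}=h$. (One can build $\Fr(G)$ explicitly, for instance as the frame of downsets of $(\Pom G,\supseteq)$, or obtain it via the adjoint functor theorem.) Then, given a presentation $\struc{G\mid R}$, I would associate to each $(t_{1},t_{2})\in R$ the pair of elements
\[
u_{(t_1,t_2)} := \bv_{A\in t_1}\bw(\Pom e_G)A,
\qquad
v_{(t_1,t_2)} := \bv_{B\in t_2}\bw(\Pom e_G)B
\]
of $\Fr(G)$.

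Next I would use that the frame congruences on $\Fr(G)$ are closed under arbitrary intersections --- if every $\theta_{i}$ respects finite meets and arbitrary joins, then so does $\bigcap_{i}\theta_{i}$, and the total relation $\Fr(G)\times\Fr(G)$ is itself a (degenerate) frame congruence, so the family is nonempty --- and hence form a complete lattice. Let $\theta$ be the least frame congruence containing $\{(u_{(t_1,t_2)},v_{(t_1,t_2)})\mid(t_1,t_2)\in R\}$, set $\bbL:=\Fr(G)/\theta$, and let $q\colon \Fr(G)\to\bbL$ be the quotient map, which is again a frame since quotients of frames by frame congruences are frames. Put $f:=q\circ e_{G}$. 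Because $q$ preserves finite meets and arbitrary joins, for each $(t_1,t_2)\in R$ we get $\bv_{A\in t_1}\bw(\Pom f)A=q(u_{(t_1,t_2)})=q(v_{(t_1,t_2)})=\bv_{B\in t_2}\bw(\Pom f)B$, using $(u_{(t_1,t_2)},v_{(t_1,t_2)})\in\theta$; so $f$ is compatible with $R$.

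For the universal property I would argue as follows. Let $\bbM$ be a frame and $g\colon G\to M$ compatible with $R$. By freeness there is a unique frame homomorphism $\wh{g}\colon \Fr(G)\to\bbM$ with $\wh{g}\circ e_{G}=g$; applying the homomorphism $\wh{g}$ to the displayed elements gives $\wh{g}(u_{(t_1,t_2)})=\bv_{A\in t_1}\bw(\Pom g)A$ and $\wh{g}(v_{(t_1,t_2)})=\bv_{B\in t_2}\bw(\Pom g)B$, which coincide precisely by compatibility of $g$ with $R$. Hence the kernel of $\wh{g}$ is a frame congruence containing all the generating pairs, so $\theta\sse\ker\wh{g}$, and therefore $\wh{g}$ factors uniquely as $\wh{g}=g'\circ q$ for some frame homomorphism $g'\colon \bbL\to\bbM$; then $g'\circ f=\wh{g}\circ e_{G}=g$. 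Uniqueness of $g'$ is forced by surjectivity of $q$: any $h\colon \bbL\to\bbM$ with $h\circ f=g$ satisfies $(h\circ q)\circ e_{G}=g$, hence $h\circ q=\wh{g}$ by freeness, whence $h=g'$.

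The manipulations with $R$ and the universal property are just bookkeeping with the frame operations. The real content is imported rather than proved, and this is where I expect the main difficulty to lie if one insists on a fully self-contained argument: constructing the free frame $\Fr(G)$ (via a solution-set condition, or via the explicit downset description) and verifying that an intersection of frame congruences --- equivalently, of nuclei --- is again one, so that the least congruence containing a set of pairs, and the corresponding quotient, exist. Both are classical (see e.g.\ Johnstone's \emph{Stone Spaces} or Vickers' \emph{Topology via Logic}), which is presumably why the statement is recorded here merely as a Fact.
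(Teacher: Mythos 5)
Your proof is correct. The paper itself records this statement as a Fact without proof, deferring to Vickers (\emph{Topology via Logic}, \S 4.4), and the construction there is more concrete than yours: it builds $\Fr\struc{G\mid R}$ directly as the collection of subsets of the free meet-semilattice $\Pom G$ that are fixed under a closure operator generated by down-closure together with the covers induced by $R$ (a coverage-style, ``C-ideal'' construction), rather than as a quotient of a free frame by the least congruence. Your route is the generic argument that works for any variety of infinitary algebraic theories, and it is sound provided one supplies the two imported ingredients you correctly flag: the existence of free frames (which genuinely requires an argument --- recall that free \emph{complete Boolean algebras} on countably many generators do not exist, so this is not a formality; the downset description of $\Fr(G)$ you mention settles it) and the fact that frame congruences are closed under intersection, so that least congruences and their quotients exist. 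The concrete construction has one practical advantage for this paper: it makes immediately visible the remark following the Fact, that every element of $\Fr\struc{G\mid R}$ is a join of finite meets of generators. Your construction yields this too --- every element of the downset frame is such a join, and the surjective quotient map $q$ preserves joins and finite meets --- but it takes one extra observation to extract.
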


The details of the proof of the above fact \citep[found in][\S 4.4]{Vickers1989} tell us how to construct a
unique frame given a presentation $\struc{ G \mid R}$. Omitting these
details of the construction, we denote this unique frame by $\Fr
\struc{ G \mid R}$. We will usually write $\bigvee_{i\in I}
\smallwedge A_i = \bigvee_{j\in J} \smallwedge B_j$ instead of $(\{
A_i \mid i\in I\}, \{ B_j\mid j\in J\})$ when specifying relations. In
light of the fact that $a \leq b$ iff $a \vee b = b$, we will also
allow ourselves the liberty to specify inequalities of the shape
$\bigvee_{i\in I} \smallwedge A_i \leq \bigvee_{j\in J} \smallwedge
B_j$ as relations.
It follows from the proof of Fact \ref{Prelim:FramesPresent} that if
$f\colon G \to \Fr \struc{ G \mid R}$ is the insertion of generators, then
every element of $\Fr \struc{G\mid R}$ can be written as
$\bigvee_{i\in I} \smallwedge \Pom f A$ for some $\{A_i \mid i\in I\} \in \P
\Pom G$; in other words every element of $\Fr \struc{G \mid R}$ can be
written as an infinite disjunction of finite conjunctions of
generators.

We will now introduce flat site presentations for frames,
which have as one of their main advantages that they allow us to assume that
an arbitrary element of the frame being presented is an infinite join of
generators.
A \emph{flat site} is a triple $\struc{ X, \lesssim, \cov_0 }$,
where $\struc{ X, \lesssim }$ is a pre-order and $\cov_0
\subseteq X \times \P X$ is a binary relation such that for all $b\lesssim a \cov_0
A$, there exists $B\subseteq {\downarrow} A \cap {\downarrow} b$ such
that $b \cov_0 B$. A flat site $\struc{ X, \lesssim, \cov_0 }$ presents a frame $\mathbb{L}$ if there
exists a function $f\colon X \to L$ such that
\begin{itemize}
\item $f$ is order-preserving,
\item $1 \leq \smallvee (\P f) X$,
\item for all $a,b \in X$,
$f(a)\wedge f(b)\leq \smallvee (\P f) ({\downarrow} a \cap {\downarrow} b)$, and
\item for all $a \cov_0 A$,
$f(a) \leq \smallvee (\P f) A$
\end{itemize}
and for all frames
$\mathbb{M}$ and all $g\colon X \to M$ satisfying the above two
properties, there exists a unique frame homomorphism $g'\colon
\mathbb{L} \to \mathbb{M}$ such that $g' \cof f =g$. Specifically, for all $a
\in \bbL$, \[ g'(a) = \smallvee \{ g(x) \mid f(x) \leq a \}.\]
To put it another way, the frame presented by a flat site is
\[\begin{array}{rl}
\Fr \struc{ X, \lesssim,
\cov_0 } \simeq \Fr \struc{ X \mid& a\leq b \quad (a \lesssim b),\\
& a \leq \smallvee A \quad (a\cov_0 A),\\
& 1 = \smallvee X \\
&
a\wedge b = \smallvee \{ c \mid c\lesssim
a, c\lesssim b \}
}.
\end{array} \]
A \emph{suplattice} is a complete $\smallvee$-semilattice; accordingly, a
suplattice homomorphism is a map which preserves $\smallvee$. A
\emph{suplattice presentation} is a triple $\struc{ X, \lesssim,
\cov_0}$ where $\struc{X, \lesssim}$ is a pre-order and $\cov_0
\subseteq X \times \P X$.
A suplattice presentation $\struc{ X, \lesssim,
\cov_0}$ presents a suplattice $\bbL$ if there exists a function
$f\colon X \to L$ such that
\begin{itemize}
\item $f$ is order-preserving;
\item for all $a \cov_0 A$, $f(a) \leq \smallvee \P f (A)$;
\end{itemize}
and for all suplattices $\bbM$ and all functions $g\colon X \to M$
respecting the above two conditions, there exists a unique suplattice
homomorphism $g' \colon \bbL \to \bbM$ such that $g' \cof f =
g$. Every suplattice presentation presents a suplattice
\citep[Prop.~2.5]{JMV2008}. Now observe that every flat site can also
be seen as a suplattice presentation with an additional stability condition. Consequently, given a flat site
$\struc{ X, \lesssim, \cov_0}$, we can generate two different objects with it: a frame
$\Fr \struc{ X, \lesssim, \cov_0}$ and a suplattice
$\operatorname{SupLat} \struc{ X, \lesssim, \cov_0}$. The
Flat site Coverage Theorem \citep[Theorem 5]{Vickers2006} tells us that these two objects are in fact
order isomorphic.
\begin{fact} \label{Prelim:FlatSiteCoverage}
Let $\struc{ X, \lesssim, \cov_0}$ be a flat site. Then $\Fr \struc{
X, \lesssim, \cov_0} \simeq \SupLat \struc{ X, \lesssim, \cov_0}$.
\end{fact}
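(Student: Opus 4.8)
The statement asserts that for a flat site $\struc{X,\lesssim,\cov_0}$, the frame $\Fr\struc{X,\lesssim,\cov_0}$ and the suplattice $\SupLat\struc{X,\lesssim,\cov_0}$ are order isomorphic. The plan is to identify a universal property common to both objects and appeal to uniqueness. First I would note that the frame presentation displayed just above the statement exhibits $\Fr\struc{X,\lesssim,\cov_0}$ as $\Fr\struc{X\mid a\leq b\ (a\lesssim b),\ a\leq\smallvee A\ (a\cov_0 A),\ 1=\smallvee X,\ a\wedge b=\smallvee\{c\mid c\lesssim a,c\lesssim b\}}$, so that the insertion of generators $f\colon X\to\Fr\struc{X,\lesssim,\cov_0}$ is order-preserving, satisfies $f(a)\leq\smallvee(\P f)A$ whenever $a\cov_0 A$, and hence makes $\Fr\struc{X,\lesssim,\cov_0}$ into a model of the suplattice presentation. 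The heart of the argument is then to show that this model is in fact the \emph{universal} one, i.e.\ that $\Fr\struc{X,\lesssim,\cov_0}$ together with $f$ also solves the universal problem defining $\SupLat\struc{X,\lesssim,\cov_0}$; by uniqueness of objects satisfying a universal property (and since both universal properties are formulated with respect to order-preserving comparison maps), this yields the desired order isomorphism.

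So the key step is: given any suplattice $\bbM$ and any order-preserving $g\colon X\to M$ with $g(a)\leq\smallvee(\P g)A$ for all $a\cov_0 A$, produce a unique suplattice homomorphism $g'\colon\Fr\struc{X,\lesssim,\cov_0}\to\bbM$ with $g'\cof f=g$. The natural candidate, forced by the formula $g'(a)=\smallvee\{g(x)\mid f(x)\leq a\}$ already recorded in the flat-site discussion, is the left Kan extension of $g$ along $f$. The obstacle — and this is where the \emph{stability} (meet-stability) condition in the definition of a flat site is essential — is verifying that this $g'$ actually preserves \emph{all} joins, not merely those forced by the frame structure. One shows $g'$ is monotone and preserves arbitrary joins by a routine calculation using that every element of $\Fr\struc{X,\lesssim,\cov_0}$ is a join of generators (a consequence of the flat-site/coverage machinery: the covering relation is \emph{flat}, so no finite meets of generators are needed to express a general element). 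Concretely, for $a=\smallvee_i f(x_i)$ one checks $g'(a)=\smallvee_i g(x_i)$ by comparing the defining sets $\{x\mid f(x)\leq a\}$ with $\bigcup_i\{x\mid f(x)\leq f(x_i)\}$ up to covers, invoking the flatness condition ``for all $b\lesssim a\cov_0 A$ there is $B\sse\downarr A\cap\downarr b$ with $b\cov_0 B$'' to move covers past the order.

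The remaining verifications are bookkeeping: $g'\cof f=g$ holds because $x\in\{x'\mid f(x')\leq f(x)\}$ and $g$ is order-preserving, and uniqueness of $g'$ is immediate since any suplattice homomorphism agreeing with $g$ on generators must send a join of generators to the corresponding join in $\bbM$, and every element is such a join. Having established that $(\Fr\struc{X,\lesssim,\cov_0},f)$ satisfies the universal property of $\SupLat\struc{X,\lesssim,\cov_0}$, the mediating maps between $\Fr\struc{X,\lesssim,\cov_0}$ and $\SupLat\struc{X,\lesssim,\cov_0}$ are mutually inverse order-preserving bijections, which gives the order isomorphism. I would expect the main obstacle to be precisely the join-preservation of the Kan extension $g'$, since that is the one point where the flat site's stability/flatness hypothesis does real work and cannot be shortcut by generalities about presented frames.
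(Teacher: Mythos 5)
The paper gives no proof of this Fact: it is imported verbatim as Theorem 5 of Vickers (2006), so the only comparison available is with the standard proof of that coverage theorem. Measured against that, your plan has the right global shape (exhibit one object as satisfying both universal properties) but it runs the argument in the direction that hides the real difficulty, and the two steps you dismiss as routine are precisely where the argument is currently incomplete. First, your justification of $g'\cof f=g$ only gives one inequality: $x\in\{x'\mid f(x')\leq f(x)\}$ yields $g'(f(x))\geq g(x)$, but the reverse inequality requires that $f(x')\leq f(x)$ in $\Fr\struc{X,\lesssim,\cov_0}$ implies $g(x')\leq g(x)$ for \emph{every} suplattice model $g$ — i.e.\ that no inequality between generators is forced by the frame relations beyond what the suplattice relations force. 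That is not bookkeeping; it is essentially the content of the theorem restricted to generators. Second, to verify that the Kan extension $g'$ preserves joins you must decide, for a general $A\in\P X$, when $f(x)\leq\smallvee(\P f)(A)$ holds in the presented frame. The frame is only given as a quotient of a free frame by a generated congruence, so this cannot be ``read off'' by comparing defining sets up to covers; one needs a concrete model in which exactly the derivable inequalities hold.

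The standard proof supplies exactly that model and goes in the opposite direction: one realizes $\SupLat\struc{X,\lesssim,\cov_0}$ concretely as the lattice of $C$-ideals (lower sets closed under $\cov_0$), and then uses the flat-site stability condition to show that this suplattice is a frame in which $\langle\downarr a\rangle\cap\langle\downarr b\rangle=\langle\downarr a\cap\downarr b\rangle$ and the remaining frame relations of the presentation hold; the frame universal property then transfers and the two presented objects coincide. (Your observation that every element of the presented frame is a join of generators is fine and can be derived directly from the relation $a\wedge b=\smallvee\{c\mid c\lesssim a,\ c\lesssim b\}$ by induction and distributivity, as is the uniqueness of $g'$.) To repair your write-up you would either have to switch to the $C$-ideal construction, or prove by some independent means the characterization of $\leq$ on generators and of covers in $\Fr\struc{X,\lesssim,\cov_0}$ — which amounts to building that same model in disguise.
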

We record the following consequences of the above fact. Suppose that
$\struc{ X, \lesssim, \cov_0}$ is a flat site which presents a frame
$\bbL$ via $f\colon X \to L$. Then
\begin{itemize}
\item every element of $\bbL$ is of the shape $\smallvee \P f (A)$ for
some $A \in \P X$;
\item we can use $\struc{ X, \lesssim, \cov_0}$ both to define
suplattice homomorphisms and frame homomorphisms.
\end{itemize}

\subsection{Powerlocales via $\Box$ and $\Diamond$}
\label{Nabla:BoxDiamond}

We will now introduce the Vietoris powerlocale. In line with our generally
algebraic approach we shall define it directly as a functor on the category of
frames rather than its opposite, the category of locales. In its full
generality it originates (as the ``Vietoris construction'') in
\citep{VietLoc}, with some earlier, more restricted references in
\citep{Johnstone1982}. For locales it is a localic analogue of hyperspace (with
Vietoris topology). The points are (in bijection with) certain sublocales of
the original locale. For a full constructive description see \citep{PowerPt}.

Given a frame $\mathbb{L}$, we first define $L_{\Box}:=L$ and
$L_{\Diamond}:=L$, and then
\[ \begin{array}{rll}
\V \mathbb{L} := \Fr \langle L_\Box \oplus L_\Diamond \mid & \Box 1 = 1\\
& \Box( a\wedge b) = \Box a \wedge \Box b\\
&\Box (\smallvee^{\uparrow} A) = \smallvee^{\uparrow}_{a\in A} \Box a
& (A\in \P L \text{ directed})\\
&\Diamond (\smallvee A) = \smallvee_{a \in A} \Diamond a   & (A \in \P L)\\
& \Box a \wedge \Diamond b \leq \Diamond(a\wedge b)\\
&\Box (a\vee b) \leq \Box a \vee \Diamond b\\
&\rangle
\end{array}
\]
\begin{remark}
We are abusing notation when specifying the relations in the definition above.
Strictly speaking, we have two maps, $\Box\colon L_{\Box}\rightarrow\V
\mathbb{L}$ for the left copy of $\mathbb{L}$ and $\Diamond\colon L_{\Diamond
}\rightarrow\V\mathbb{L}$ for the right copy of $\mathbb{L}$, so that the
insertion of generators is the map $\Box\oplus\Diamond\colon L_{\Box}\oplus
L_{\Diamond}\rightarrow\V
\mathbb{L}$.
\end{remark}

\citet{VietLoc} shows that $\V
$ gives a monad on the category of locales, i.e. a comonad on the category of
frames. We shall not need the full strength of this here, but some of the
ingredients of the comonad structure are easy to check.

\begin{itemize}
\item $\V
$ is functorial. If $f\colon\mathbb{L}\rightarrow\mathbb{M}$ is a frame
homomorphism, then the function $(\Box f)\oplus(\Diamond f)\colon L_{\Box
}\oplus L_{\Diamond}\rightarrow\V M$ is compatible with the relations in the
presentation of $\V\mathbb{L}$, so that there is a frame homomorphism $\V
f\colon\V
\mathbb{L}\rightarrow\V
\mathbb{M}$ extending this map. It is also easy to show functoriality.

\item  The counit $\counit_{\mathbb{L}}\colon\V\mathbb{L}\rightarrow\mathbb{L}$
is given by $\Box a\mapsto a$ and $\Diamond a\mapsto a$. The comultiplication
$\mu_{\mathbb{L}}\colon\V\mathbb{L}\rightarrow\V\V\mathbb{L}$ is given by
$\Box a\mapsto\Box\Box a$ and $\Diamond a\mapsto\Diamond\Diamond a$.
\end{itemize}

\section{The $T$-powerlocale construction}
\label{Section:VT}
In this section we arrive at the main conceptual contribution of this
paper. Given a weak pullback-preserving, standard, finitary functor
$\T \colon \Set \to \Set$, we define its associated $\T$-powerlocale
functor $\VT \colon \class{Fr} \to \class{Fr}$ on the category of
frames, using the Carioca axioms for coalgebraic modal logic.  This
construction truly generalizes the Vietoris powerlocale construction,
because we will see that the $\Pom$-powerlocale is isomorphic to the
Vietoris powerlocale.  The other two major results in this section are
the fact that one can lift a natural transformation between transition
functors $\nta \colon \T' \to \T$ to a natural transformation
$\widehat \rho\colon \VT \to \VTp$ going in the other direction, and
the fact that $\T$-powerlocales are join-generated by their generators
of the shape $\nabla \alpha$. We will establish the latter fact via
the stronger result by showing that $\VT \bbL$ admits a flat site
presentation. The fact that $\VT \bbL$ is join-generated by its
generators is not entirely surprising, since the Carioca axioms were
designed with the desirability of conjunction-free disjunctive normal
forms in mind \citep{BPV2008}; however the precise mathematical
formulation of this property, using flat sites and suplattices, is an
improvement over what was previously known.

This section is organized as follows. In \S \ref{Subsect:IntroVT} we
introduce the $\T$-powerlocale construction on frames. In \S
\ref{Subsect:BasicObs} we make technical observations about
$\T$-powerlocales. In \S \ref{Subsect:VP}, we consider two
instantiations of the $\T$-powerlocale construction, the most notable
of which is the $\Pom$-powerlocale which is isomorphic to the
classical Vietoris powerlocale. In \S \ref{Section:VT:functorial} we
extend the $\T$-powerlocale construction to a functor $\VT$ on the
category of frames, and we show how one can lift natural
transformations between set functors $\T$, $\T'$ to natural
transformations between powerlocale functors $\VT$, $\VTp$. We
conclude this section with \S \ref{Subsect:VTFlatsite}, in which we
show that the $\T$-powerlocale construction admits a flat site
presentation, a corollary of which is that each element of $\VT \bbL$
has a disjunctive normal form.

\subsection{Introducing the $\T$-powerlocale}\label{Subsect:IntroVT}
In this subsection, we will use the Carioca axioms for coalgebraic
modal logic \citep{BPV2008} to define the $\T$-powerlocale $\VT \bbL$ of a given
frame $\bbL$ using a frame presentation, i.e.~using generators and
relations. The generators of $\VT \bbL$ will be given by the set $\T
L$; in order to specify the relations we will use \emph{relation
lifting} (\S \ref{Subsect:RelLift}) and \emph{slim redistributions},
which we will introduce below. In addition, we will provide
an alternative presentation of $\VT \bbL$, which does not use slim
redistributions. From a conceptual viewpoint, it is not immediately
obvious which presentation of $\VT \bbL$ should be taken as the
primary definition. Our choice to use slim redistributions in the primary
definition is motivated by the extant literature \citep{BPV2008,KKV2008,kupk:comp10}.

\begin{definition}
Let $\T\colon \Set \to \Set$ be a standard, finitary, weak pullback-preserving functor,
let $X$ be a set and let $\Gamma \in \Pom\T X$. The set of all
\emph{slim redistributions} of $\Gamma$ is defined as follows:
\[\SRD(\Gamma)= \left\{ \Psi \in \T \Pom \left (
\smallcup_{\gamma\in \Gamma} \Base^\T (
\gamma) \right) \mid \forall \gamma\in\Gamma, \,
\gamma \rel{\rl{\T}{\in}} \Psi \right\} \]
\end{definition}

Intuitively, $\Psi \in \T \Pom X$ is a slim redistribution of $\Ga \in
\Pom\T X$ if (i) $\Psi$ is `obtained from the material of $\Ga$', that
is: \[\Psi \in \T \Pom \left (\smallcup_{\gamma\in \Gamma} \Base^\T (\gamma)
\right),\] and (ii) every element of $\Ga$ is a lifted member of
$\Psi$, or equivalently, $\Ga \sse \nbsem(\Psi)$. We illustrate this
with the motivating example of slim redistributions, namely slim
redistribution for the finite powerset functor.

\begin{example} \label{SRD:Pom}
Recall from Example \ref{RelLift:Examples} that if $R\subseteq X\times Y$ is a relation then
$\rl{\Pom}R\subseteq\Pom X\times\Pom Y$ can be characterized as follows:%
\[
\alpha\rl{\Pom}R\beta\text{ iff }\forall x\in\alpha,\,\exists y\in
\beta,\,xRy\text{ and }\forall y\in\beta,\,\exists x\in\alpha,\,xRy\text{.}%
\]
In particular, for ${\in} \,\subseteq X\times\P X$ we get $\alpha
\rel{\rl{\Pom}\in}\Gamma$ iff $\alpha\subseteq\smallcup\Gamma$ and
$\forall\gamma\in\Gamma,\,\gamma\between\alpha$.
(Recall that $\gamma\between\alpha$ means that $\gamma\cap\alpha$ is
inhabited.)
For an order $\leq$, let us define the \emph{upper, lower
}and \emph{convex} pre-orders on finite sets:%
\begin{align*}
\alpha &  \leq_{L}\beta\text{ if } \alpha \subseteq \downarr \beta,
\text{ i.e. } \forall x\in\alpha,\,\exists y\in
\beta,\,x\leq y\\
\alpha &  \leq_{U}\beta\text{ if } \uparr \alpha \supseteq \beta,
\text{ i.e. }\forall y\in\beta,\,\exists x\in
\alpha,\,x\leq y\\
\alpha &  \leq_{C}\beta\text{ if }\alpha\leq_{L}\beta\text{ and }\alpha
\leq_{U}\beta\text{.}%
\end{align*}
Thus $\rl{\Pom}\leq$ is $\leq_{C}$.

Next, if $\alpha\in\Pom S$ then%
\[
Base(\alpha)=\smallcap\{S^{\prime}\in\Pom(S)\mid \alpha\subseteq S^{\prime
}\}= \alpha\text{.}%
\]
From this, if $\Gamma\in\Pom\Pom X$ then%
\begin{align*}
\SRD(\Gamma)  &  =\{\Psi\in\Pom\Pom\left(  \smallcup\Gamma\right)  \mid
\forall\gamma\in\Gamma,\,\left(  \gamma\subseteq\smallcup\Psi\text{ and }%
\forall\alpha\in\Psi,\,\alpha\between\gamma\right) \\
&  =\{\Psi\in\Pom\Pom\left(  X\right)  \mid\smallcup\Psi=\smallcup\Gamma\text{ and
}\forall\gamma\in\Gamma,\,\forall\alpha\in\Psi,\,\alpha\between\gamma
\}\text{.}%
\end{align*}

\end{example}

\begin{definition}
Let $\T$ be a standard, finitary, weak pullback-preserving
functor. Let $\mathbb{L}$ be a frame. We define the
\emph{$T$-powerlocale of $\mathbb{L}$}
\[ \V_\T \mathbb{L} := \Fr \struc{ \T
L \mid (\nabla 1), (\nabla 2), (\nabla 3) },\] where the relations
are the \emph{Carioca axioms} \citep{BPV2008}:
\[ \begin{array}{lll}
(\nabla 1) & \nabla \alpha \leq \nabla \beta, &(\alpha \rel{ \Tb {\leq
} }
\beta)\\
(\nabla 2) & \smallwedge_{\alpha\in \Gamma} \nabla \alpha \leq \smallvee
\{\nabla (\T \smallwedge ) \Psi \mid \Psi \in \SRD(\Gamma) \},&(\Gamma
\in \Pom\T L)\\
(\nabla 3) & \nabla (\T \smallvee) \Phi \leq \smallvee \{ \nabla \beta
\mid \beta \rel{ \Tb {\in}} \Phi \},& (\Phi \in \T \P L)
\end{array} \]
\end{definition}
\begin{remark}
To be precise, we assume that $\nabla \colon \T L \to \V_\T L$ is
the insertion of generators, so when specifying the relations we
should write e.g.~$\alpha \leq \beta$ instead of $\nabla \alpha \leq
\nabla \beta$. The way we have specified the relations above is more
consistent with \citep{BPV2008}.
\end{remark}
We will discuss the instantiation of these axioms for $\T = \Pom$ in
some more detail in \S \ref{Subsect:VP}.

We will now present a very useful equivalent definition of $\VT \bbL$.
The crucial observation behind the alternative definition of $\VT\bbL$
is the following technical lemma, which  characterizes the slim
redistributions of a given finite subset $\Ga$ of $\struc{ \T L, \Tb {\leq} }$
as the maximal lower bounds of $\Ga$. Observe that the lemma also
holds in case $\Gamma = \emptyset$.

\begin{lemma}\label{Flatsite:LowerBounds}
Let $\T\colon \Set \to \Set$ be a standard, finitary, weak
pullback-preserving functor, let $\bbL$ be a meet-semilattice (e.g., a frame) and let $\Gamma \in \Pom \T L$.
Then for any $\al\in \T L$, the following are equivalent:
\begin{enumerate}
\item[(a)]
$\al \in \T L$ is a lower bound of $\Gamma$, that is, $\al \rel{ \Tb {\leq}}
\ga$ for all $\ga\in\Ga$;
\item[(b)]
$\al \rel{ \Tb {\leq}} (\T\bwsmall)\Phi$ for some $\Phi\in\SRD(\Ga)$.
\end{enumerate}
In particular, if $\Phi\in\SRD(\Ga)$ then $(\T\bwsmall)\Phi \rel{ \Tb {\leq}}
\ga$ for all $\ga\in\Ga$.
\end{lemma}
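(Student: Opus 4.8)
The plan is to prove the two implications of the equivalence separately, extracting the final ``in particular'' assertion along the way, since that is precisely what makes the direction $(b)\Rightarrow(a)$ work. Throughout I would use, without further comment, that $\Tb{\leq}$ is a preorder on $\T L$ --- this is Lemma~\ref{p:rl2}(1) applied to the preorder $(L,\leq)$.

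The core observation is the ``in particular'' clause: for $\Phi\in\SRD(\Ga)$ one has $(\T\bwsmall)\Phi\rel{\Tb{\leq}}\ga$ for every $\ga\in\Ga$. To see this, note that at the level of $\bbL$, $x\in A$ implies $\bwsmall A\leq x$ for finite $A\sse L$, i.e. ${\ni}\,\sse\,\Graph{\bwsmall}\cor{\leq}$; applying $\Tb$ and using Fact~\ref{f:rl}(1),(4),(5) gives $\Tb{\ni}\,\sse\,\Graph{(\T\bwsmall)}\cor\Tb{\leq}$, which unwinds to: $\Phi\rel{\Tb{\ni}}\ga$ implies $(\T\bwsmall)\Phi\rel{\Tb{\leq}}\ga$. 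Since $\Tb{\ni}$ is the converse of $\Tb{\in}$ (Fact~\ref{f:rl}(3)), and standardness lets me take all these liftings over whatever ambient sets are convenient (Fact~\ref{f:rl}(6)), this is exactly: $\ga\rel{\Tb{\in}}\Phi$ implies $(\T\bwsmall)\Phi\rel{\Tb{\leq}}\ga$. The hypothesis $\ga\rel{\Tb{\in}}\Phi$ holds for all $\ga\in\Ga$ by the definition of $\SRD(\Ga)$, which proves the claim. The direction $(b)\Rightarrow(a)$ is then immediate by transitivity: if $\al\rel{\Tb{\leq}}(\T\bwsmall)\Phi$ for some $\Phi\in\SRD(\Ga)$, then $\al\rel{\Tb{\leq}}(\T\bwsmall)\Phi\rel{\Tb{\leq}}\ga$ for each $\ga\in\Ga$.

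The substance is the direction $(a)\Rightarrow(b)$. Write $\Ga=\{\ga_{1},\dots,\ga_{n}\}$ and $Y:=\smallcup_{i}\Base^{\T}(\ga_{i})$. From $\al\rel{\Tb{\leq}}\ga_{i}$ I would first pick, for each $i$, a witness $\rho_{i}\in\T({\leq})$ with $(\T\pi)\rho_{i}=\al$ and $(\T\pi')\rho_{i}=\ga_{i}$, where $\pi,\pi'\colon{\leq}\to L$ are the projections of ${\leq}\sse L\times L$. The heart of the proof is to amalgamate the $\rho_{i}$ into a single object. Consider
\[
P:=\{(x,y_{1},\dots,y_{n})\in L^{n+1}\mid x\leq y_{i}\text{ for all }i\},
\]
with $q(x,\vec{y}):=x$ and $r_{i}(x,\vec{y}):=y_{i}$. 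The maps $\langle q,r_{i}\rangle\colon P\to{\leq}$ exhibit $P$ as a wide pullback of $n$ copies of $\pi$, and $(\rho_{i})_{i}$ is a compatible cone since all $(\T\pi)\rho_{i}$ equal $\al$; because a weak pullback-preserving functor weakly preserves finite wide pullbacks (an easy induction on $n$, assembling the wide pullback from binary ones), there is $\rho\in\T P$ with $(\T q)\rho=\al$ and $(\T r_{i})\rho=\ga_{i}$ for all $i$. Now set $s(x,\vec{y}):=\{y_{1},\dots,y_{n}\}$ and $\Phi:=(\T s)\rho$. Three routine checks finish the argument. First, $\Phi\in\T\Pom Y$: by naturality of $\Base^{\T}$ (Fact~\ref{f:base}(1)) one has $\Base^{\T}(\Phi)=s[\Base^{\T}(\rho)]\sse\Pom Y$, since each coordinate of a $p\in\Base^{\T}(\rho)$ lies in some $\Base^{\T}(\ga_{i})\sse Y$; hence $\Phi\in\T\Pom Y$ by standardness. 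Second, $\ga_{i}\rel{\Tb{\in}}\Phi$ for all $i$: since $r_{i}(p)\in s(p)$ for every $p\in P$, Lemma~\ref{p:rl2}(2) gives $(\T r_{i})\rho\rel{\Tb{\in}}(\T s)\rho$. These two checks say $\Phi\in\SRD(\Ga)$. Third, $\al\rel{\Tb{\leq}}(\T\bwsmall)\Phi$: since $q(p)=x\leq y_{1}\wedge\dots\wedge y_{n}=(\bwsmall\cof s)(p)$ for every $p\in P$, Lemma~\ref{p:rl2}(2) gives $(\T q)\rho\rel{\Tb{\leq}}(\T(\bwsmall\cof s))\rho=(\T\bwsmall)\Phi$. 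This is exactly statement $(b)$. The degenerate case $\Ga=\nada$ is the instance $n=0$: then $P=L$, one takes $\rho:=\al$, and $\Phi=(\T s)\al\in\T\{\nada\}=\T\Pom\nada$ works as above.

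I expect the one genuinely delicate point to be this amalgamation step --- choosing the right set $P$ and verifying that $\T$ weakly preserves the resulting finite wide pullback --- together with the bookkeeping of which ambient sets the various lifted relations live in when invoking Fact~\ref{f:rl}(3),(6) and the naturality of $\Base^{\T}$. Everything else is a mechanical application of Lemma~\ref{p:rl2} and the preorder laws for $\Tb{\leq}$.
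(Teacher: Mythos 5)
Your proof is correct, and the $(b)\Rightarrow(a)$ direction together with the ``in particular'' clause is essentially identical to the paper's (lift ${\ni}\sse\Graph{\bwsmall}\cor{\leq}$ and use that every $\ga\in\Ga$ is a lifted member of $\Phi$). For $(a)\Rightarrow(b)$, however, you take a genuinely different route. The paper never unwinds $\Tb{\leq}$ into span witnesses: it builds the slim redistribution directly as $\Psi:=(\T f)(\al)$, the image of $\al$ itself under the single map $f\colon\Base^\T(\al)\to\Pom C$, $f(a)={\uparrow_L}a\cap C$ with $C=\smallcup_{\ga\in\Ga}\Base^\T(\ga)$, and then checks $\Ga\sse\nbsem(\Psi)$ and $\al\Tleq(\T\bwsmall)\Psi$ using only the packaged facts (Lemma~\ref{p:rl2} and Fact~\ref{f:base}), with no induction on $|\Ga|$ and no new categorical input. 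You instead pick explicit witnesses $\rho_i\in\T({\leq})$ for each inequality $\al\Tleq\ga_i$ and amalgamate them over a finite wide pullback $P$; this buys a very concrete picture of where the redistribution comes from (each element of $\Base^\T(\Phi)$ is a literal $n$-fold selection $\{y_1,\dots,y_n\}$), but at the cost of an auxiliary lemma --- that a weak-pullback-preserving functor weakly preserves \emph{finite wide} pullbacks --- which is true (your iterated-binary-pullback induction works, since each stage is a genuine pullback that $\T$ sends to a weak one) but is not available off the shelf in the paper and has to be proved. Your remaining verifications (naturality of $\Base^\T$ for membership in $\T\Pom Y$, Lemma~\ref{p:rl2}(2) for $\ga_i\Tin\Phi$ and for $\al\Tleq(\T\bwsmall)\Phi$, and the empty case via $\SRD(\nada)=\T\{\nada\}$) are all sound. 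If you want to match the paper's economy, note that your entire amalgamation step can be replaced by observing that the single function $a\mapsto{\uparrow}a\cap C$ already encodes, uniformly in $a$, all the selections your wide pullback produces pointwise.
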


\begin{proof} Recall that
\[ \SRD(\Gamma) := \left\{ \Psi \in \T \P \left(\smallcup_{\gamma \in \Gamma}
\Base^\T (\gamma) \right) \mid \Gamma \subseteq \nbsem(\Psi) \right\}.\]

For the implication from (b) to (a), observe that for any $a\in L$ and $A
\in \Pom L$,  we have that $a \in A$ implies that $\smallwedge A \leq a$.
By Fact \ref{f:rl} it follows that for all $\ga \in \T L$ and $\Psi \in
\T\Pom L$, if $\ga \rel{ \Tb{\in}} \Psi$ then $\T \smallwedge (\Psi) \rel{\Tb
{\leq}} \ga$.
Now suppose that $\Psi$ is a slim redistribution of $\Ga$.
Then $\Gamma \subseteq \nbsem(\Psi)$, and so $(\T\bw)\Psi$ is a
$\Tb{\leq}$-lower bound of $\Ga$.
From this the implication (b) $\Rightarrow$ (a) is immediate.

For the opposite implication, take $\alpha \in \T L$ such that $\forall
\gamma \in \Gamma$, $\alpha \rel{ \Tb {\leq}} \gamma$.
Then by Fact~\ref{f:base}, we obtain $\Base^\T (\alpha) \rel{\Pb {\leq}}
\Base^\T (\gamma)$ for all $\gamma \in \Gamma$.
Abbreviate $C:= \smallcup_{\gamma \in \Gamma} \Base^\T (\gamma)$, and define $f\colon
\Base^\T (\alpha) \to \P C$ as follows:
\[
f\colon a \mapsto \uparr_L a \cap C,
\]
that is: $f(a) = \{ c \in C \mid a \leq c \}$.
Then $\T f$ is a function
\[
\T f\colon \T \Base^\T (\alpha) \to \T \P C.
\]
We claim that $\Psi:= \T f (\alpha)$ is an element of $\SRD (\Gamma) $ and
that $\alpha \rel{\Tb {\leq}} \T \smallwedge (\Psi)$. For the first claim,
since $\Psi \in \T\P C$,
all we need to show is that $\Gamma \subseteq \nbsem (\Psi)$,
i.e.~that for all $\gamma \in \Gamma$, $\gamma \rel{\Tb {\in}}
\Psi$. So suppose that $\gamma \in \Gamma$; then by assumption,
$\alpha \rel{\Tb {\leq}} \gamma$, so $\Base^\T  (\alpha) \rel{\Pb {\leq}}
\Base^\T (\gamma)$. It follows from the definition of $f$ that for all
$b\in \Base^\T (\gamma)$, and all $a \in \Base^\T (\alpha)$, if $a\leq b$ then
$b \in f(a)$. It follows
by Fact \ref{f:rl} that \[ \forall\delta \in \T \Base^\T (\alpha),\,
\forall \beta \in \T \Base^\T (\gamma),\, \delta \rel{\Tb {\leq}} \beta
\Rightarrow \beta \rel{\Tb {\in}} \T f (\delta).\] So in particular, since
$\alpha \in \T \Base^\T (\alpha)$, $\gamma \in \T \Base^\T (\gamma)$ and
$\alpha \rel{\Tb {\leq}} \gamma$, we see that $\gamma \rel{\Tb {\in}}
\T f (\alpha)=\Psi$. Since $\gamma\in \Gamma$ was arbitrary, it
follows that $\Gamma \subseteq \nbsem (\Psi)$. Consequently,
$\Psi\in \SRD(\Gamma)$, as we wanted to show.

For the second
claim, i.e.~that $\alpha \rel{\Tb {\leq}} \T \smallwedge (\Psi)$, it
suffices to observe that $a \leq \smallwedge f(a)$ for all $a\in \Base^\T  (\alpha)$,
so by Fact \ref{f:rl}, \[ \forall\delta \in \T \Base^\T (\alpha), \,
\delta \rel{\Tb {\leq}} \T \smallwedge \cof \T f (\delta).\] Since
$\alpha \in \T \Base^\T (\alpha)$ and
$\Psi = \T f (\alpha)$, we get that $\alpha \rel{\Tb {\leq}} \T
\smallwedge \cof \T f (\alpha) = \T \smallwedge (\Psi)$.
\end{proof}

\begin{corollary}\label{VT:Nabla2prime}
Let $\T\colon \Set \to \Set$ be a standard, finitary, weak
pullback-preserving functor and let $\bbL$ be a frame. Then \[ \V_\T \mathbb{L} \simeq \Fr \struc{ \T
L \mid (\nabla 1), (\nabla 2'), (\nabla 3) },\] where the relations
are as follows:
\[ \begin{array}{lll}
(\nabla 1) & \nabla \alpha \leq \nabla \beta, &(\alpha \rel{ \Tb {\leq
} }
\beta)\\
(\nabla 2') & \smallwedge_{\gamma\in \Gamma} \nabla \gamma \leq \smallvee
\{\nabla \alpha \mid \forall \gamma \in \Gamma,\, \alpha \Tleq \gamma \},&(\Gamma
\in \Pom\T L)\\
(\nabla 3) & \nabla (\T \smallvee) \Phi \leq \smallvee \{ \nabla \beta
\mid \beta \rel{ \Tb {\in}} \Phi \},& (\Phi \in \T \P L)
\end{array} \]
\end{corollary}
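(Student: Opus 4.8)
The plan is to prove that the two presentations $\struc{\T L \mid (\nabla 1), (\nabla 2), (\nabla 3)}$ and $\struc{\T L \mid (\nabla 1), (\nabla 2'), (\nabla 3)}$ have exactly the same compatible functions into every frame, and then to read off the isomorphism from the universal property of frame presentations. Since the two presentations share the generating set $\T L$ and have literally the same relations $(\nabla 1)$ and $(\nabla 3)$, everything reduces to a single claim: for any frame $\bbM$ and any function $g\colon \T L \to M$ \emph{that is compatible with $(\nabla 1)$}, $g$ is compatible with $(\nabla 2)$ if and only if $g$ is compatible with $(\nabla 2')$.

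The heart of the argument is the identity
\[
\bvsmall\{\, g((\T\bwsmall)\Psi) \mid \Psi \in \SRD(\Gamma)\,\}
\;=\;
\bvsmall\{\, g(\alpha) \mid \alpha \in \T L,\ \alpha \Tleq \gamma \text{ for all } \gamma \in \Gamma\,\},
\]
valid in $\bbM$ for every $\Gamma \in \Pom\T L$ once $g$ respects $(\nabla 1)$. For the inequality $\leq$, the final clause of Lemma~\ref{Flatsite:LowerBounds} states that $(\T\bwsmall)\Psi$ is a $\Tb{\leq}$-lower bound of $\Gamma$ for each $\Psi \in \SRD(\Gamma)$, so every joinand on the left already occurs among those on the right. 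For the inequality $\geq$, take any $\alpha$ with $\alpha \Tleq \gamma$ for all $\gamma \in \Gamma$; this is precisely condition~(a) of Lemma~\ref{Flatsite:LowerBounds}, so by the implication (a)$\Rightarrow$(b) there is some $\Psi \in \SRD(\Gamma)$ with $\alpha \Tleq (\T\bwsmall)\Psi$, whence compatibility of $g$ with $(\nabla 1)$ gives $g(\alpha) \leq g((\T\bwsmall)\Psi)$, which lies below the left-hand join. Now the left-hand sides of $(\nabla 2)$ and $(\nabla 2')$ are both the finite meet $\bwsmall_{\gamma \in \Gamma} g(\gamma)$, and by the displayed identity their right-hand sides coincide as well; hence being compatible with $(\nabla 2)$ and being compatible with $(\nabla 2')$ are the very same condition on such a $g$, proving the claim.

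To obtain the isomorphism I would then argue as follows. Write $i \colon \T L \to \VT\bbL$ and $j \colon \T L \to \Fr\struc{\T L \mid (\nabla 1), (\nabla 2'), (\nabla 3)}$ for the two insertions of generators. By construction $i$ is compatible with $(\nabla 1), (\nabla 2), (\nabla 3)$, hence by the claim it is compatible with $(\nabla 1), (\nabla 2'), (\nabla 3)$, so it factors uniquely as $\phi \cof j$ for a frame homomorphism $\phi$; symmetrically, $j$ is compatible with $(\nabla 1), (\nabla 2), (\nabla 3)$, so it factors uniquely as $\psi \cof i$. Then $\phi \cof \psi$ is a frame endomorphism of $\VT\bbL$ with $(\phi \cof \psi) \cof i = i$, so by the uniqueness part of the universal property it is the identity; likewise $\psi \cof \phi$ is the identity. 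Thus $\VT\bbL \simeq \Fr\struc{\T L \mid (\nabla 1), (\nabla 2'), (\nabla 3)}$, as required.

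The one point requiring a moment's care is the degenerate case $\Gamma = \emptyset$: there $\bwsmall_{\gamma \in \emptyset} g(\gamma) = 1$ and ``$\alpha \Tleq \gamma$ for all $\gamma \in \Gamma$'' holds vacuously for every $\alpha \in \T L$. Since Lemma~\ref{Flatsite:LowerBounds} is explicitly stated to cover $\Gamma = \emptyset$, the displayed identity — and hence the whole argument — goes through uniformly, with no separate case analysis needed. Beyond correctly invoking Lemma~\ref{Flatsite:LowerBounds} in both directions, I do not expect any genuine obstacle here; this corollary is essentially just a repackaging of that lemma.
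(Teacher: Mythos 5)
Your proof is correct and follows essentially the same route as the paper: both directions of the identity between the right-hand sides of $(\nabla 2)$ and $(\nabla 2')$ come from the two implications of Lemma~\ref{Flatsite:LowerBounds} together with compatibility with $(\nabla 1)$, exactly as in the paper's argument. You merely spell out in more detail the universal-property bookkeeping (working with arbitrary compatible maps $g$ and then composing the induced homomorphisms) that the paper leaves implicit.
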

\begin{proof}
Observe that the only difference between $\Fr \struc{ \T
L \mid (\nabla 1), (\nabla 2'), (\nabla 3) }$ and the original
definition of $\VT \bbL$ is that we replaced $(\nabla 2)$,
\[ \begin{array}{lll}
(\nabla 2) & \smallwedge_{\alpha\in \Gamma} \nabla \alpha \leq \smallvee
\{\nabla (\T \smallwedge ) \Psi \mid \Psi \in \SRD(\Gamma) \},&(\Gamma
\in \Pom\T L)
\end{array} \]
with
$(\nabla 2')$. The equivalence of these two relations is an immediate
corollary of Lemma \ref{Flatsite:LowerBounds}: take any $\Gamma \in \T
\Pom L$, then
\begin{align*}
&\smallvee \{\nabla \T \smallwedge  (\Psi) \mid \Psi \in \SRD(\Gamma) \}\\
&= \smallvee \{ \nabla \alpha \mid \exists  \Psi \in \SRD(\Gamma),\,
\alpha \Tleq \nabla \T \smallwedge  (\Psi) \}&&\text{by order theory and
$(\nabla 1)$,}\\
&=  \smallvee
\{\nabla \alpha \mid \forall \gamma \in \Gamma,\, \alpha \Tleq \gamma
\} && \text{by Lemma \ref{Flatsite:LowerBounds}.}
\end{align*}
It follows that $\V_\T \mathbb{L} \simeq \Fr \struc{ \T
L \mid (\nabla 1), (\nabla 2'), (\nabla 3) }$.
\end{proof}

\begin{remark}
We will see later that both axioms $(\nabla 2)$ and $(\nabla 2')$ are
equally useful. It seems that $(\nabla 2')$ has not been studied
before in the literature on coalgebraic modal logic via the $\nabla$-modality \citep{PV2007,BPV2008,kiss:comp09,kupk:comp10}.
\end{remark}

\subsection{Basic properties of the $\T$-powerlocale}
\label{Subsect:BasicObs}

In this subsection we make some technical observations about slim
redistributions and about the structure of the $\T$-powerlocale.
We start with two facts on slim redistributions.

\begin{lemma}
\label{f:srd1}
Let $\T\colon \Set \to \Set$ be a standard, finitary, weak
pullback-preserving functor. Then
$\SRD(\nada) = \T\{\nada\}$.
\end{lemma}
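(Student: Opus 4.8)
The claim is that $\SRD(\nada) = \T\{\nada\}$, where $\nada$ on the left is the empty element of $\Pom\T L$, i.e.\ $\Gamma = \emptyset$. My plan is to unfold the definition of $\SRD$ in the case $\Gamma = \emptyset$ and check that the two conditions defining a slim redistribution collapse to ``$\Psi \in \T\Pom(\nada)$'', together with a vacuous universal quantifier. Concretely, $\SRD(\emptyset) = \{\Psi \in \T\Pom(\smallcup_{\gamma\in\emptyset}\Base^\T(\gamma)) \mid \forall\gamma\in\emptyset,\ \gamma\rel{\Tb\in}\Psi\}$. The indexed union $\smallcup_{\gamma\in\emptyset}\Base^\T(\gamma)$ is the empty set $\nada$, and the quantifier ``$\forall\gamma\in\emptyset$'' holds vacuously, so $\SRD(\emptyset) = \T\Pom(\nada)$.

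It then remains to observe that $\Pom(\nada) = \{\nada\}$: the only finite subset of the empty set is the empty set itself. Hence $\T\Pom(\nada) = \T\{\nada\}$, which is exactly the right-hand side. One small point worth spelling out: since $\T$ is standard it preserves inclusions, and since $\{\nada\}$ is literally the set $\Pom(\nada)$ there is no ambiguity about how $\T$ is applied; so $\T\{\nada\}$ is a well-defined single set (a subset of $\T\Pom L$ via standardness), and the equality $\SRD(\nada) = \T\{\nada\}$ holds on the nose rather than merely up to isomorphism.

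I do not anticipate any genuine obstacle here; the entire content is the careful handling of the empty case. The one place to be slightly careful is the semantics of ``$\smallcup_{\gamma\in\Gamma}\Base^\T(\gamma)$'' and of the universal quantifier when $\Gamma = \emptyset$ — both are the ``empty'' instances (empty union, vacuously true quantification) — and, as noted, the identification $\Pom(\emptyset) = \{\emptyset\}$. So the proof is a one- or two-line unfolding, and I would present it as such.
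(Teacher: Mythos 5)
Your proof is correct and follows essentially the same route as the paper's: unfold the definition at $\Gamma=\emptyset$, note the vacuous quantifier and empty union, and identify $\Pom(\nada)=\{\nada\}$ so that $\T\Pom(\nada)=\T\{\nada\}$. The paper phrases this as two inclusions, but the content is identical; your remark on standardness guaranteeing that the equality holds on the nose is a reasonable extra precaution.
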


\begin{proof}
If $\Phi$ is a slim redistribution of the empty set, then by definition
$\Phi \in \T\Pom(\nada) = \T \{ \nada \}$.
Conversely, any $\Phi \in \T \{ \nada \}$ satisfies the condition that
$\nada \sse \nbsem(\Phi)$, and so $\Phi \in \SRD(\nada)$.
\end{proof}

The following Lemma plays an essential role when defining $\VT$ on
frame homomorphisms, rather than just on frames. It is of crucial use
when showing that if $f\colon \bbL \to \bbM$ is a frame homomorphism,
then $\VT f\colon \VT \bbL \to \VT \bbM$ preserves conjunctions, as we
will see in \S \ref{Section:VT:functorial}.
\begin{lemma}
\label{l:bl1}
Let $\T\colon \Set \to \Set$ be a standard, finitary, weak pullback-preserving functor,
let $X, Y$ be sets and let $f\colon X \to Y$ be a
function; let $\Gamma \in \Pom \T X$. Then the restriction of $\T\Pom f
\colon \T\Pom X \to \T\Pom Y$ to $\SRD(\Gamma)$ is a surjection onto
$\SRD(\Pom\T f \Gamma)$.
\end{lemma}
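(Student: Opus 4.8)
The plan is to prove the two halves of the surjectivity assertion: first that $\T\Pom f$ maps $\SRD(\Ga)$ \emph{into} $\SRD(\Pom\T f\,\Ga)$, and second that every element of $\SRD(\Pom\T f\,\Ga)$ arises as $\T\Pom f(\Psi)$ for some $\Psi\in\SRD(\Ga)$. Throughout write $C := \smallcup_{\ga\in\Ga}\Base^\T(\ga)$ and $\Ga' := \Pom\T f(\Ga)$. Since $\Ga$ is finite and each $\Base^\T(\ga)$ is finite (as $\T$ is finitary), $C$ is a \emph{finite} set; and by naturality of $\Base^\T$ (Fact~\ref{f:base}(1)), $\Base^\T(\T f(\ga)) = f[\Base^\T(\ga)]$, so $\smallcup_{\ga'\in\Ga'}\Base^\T(\ga') = f[C] =: C'$ and $f$ restricts to a \emph{surjection} $f_0\colon C \to C'$. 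Using standardness and Fact~\ref{f:rl}(6), the whole argument can be carried out inside the finite sets $C$ and $C'$, with the relevant membership relations restricted accordingly.

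For the first half, let $\Psi\in\SRD(\Ga)$, so $\Psi\in\T\Pom C$ and $\Ga\sse\nbsem(\Psi)$. Then $\T\Pom f(\Psi)\in\T\Pom C'$, and naturality of $\nbsem$ (Fact~\ref{f:db}) applied to $f$ gives
\[
\nbsem\bigl(\T\Pom f(\Psi)\bigr) = \P\T f\bigl(\nbsem(\Psi)\bigr) \supseteq \P\T f(\Ga) = \Ga',
\]
where the last equality uses that $\Ga$ is finite. Hence $\T\Pom f(\Psi)\in\SRD(\Ga')$.

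The second half is the crux. Given $\Psi'\in\SRD(\Ga')$, define $h\colon\Pom C'\to\Pom C$ by taking \emph{full preimages}, $h(B') := f_0^{-1}(B') = \{c\in C\mid f(c)\in B'\}$; this lands in the \emph{finite} powerset precisely because $C$ is finite. Put $\Psi := \T h(\Psi')\in\T\Pom C$. Since $f_0$ is surjective, $\Pom f_0\cof h = \id_{\Pom C'}$, hence $\T\Pom f_0\cof\T h = \id_{\T\Pom C'}$ and therefore $\T\Pom f(\Psi) = \Psi'$. It remains to verify the membership clause $\ga\rel{\Tb{\in}}\Psi$ for each $\ga\in\Ga$. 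The key elementary fact is that, for $c\in C$ and $B'\in\Pom C'$, one has $c\in h(B')$ iff $f_0(c)\in B'$; relationally, $\Graph{f_0}\cor{\in_{C'}}\cor\Graph{h}\sse{\in_{C}}$. Lifting this inclusion via Fact~\ref{f:rl}(1,4,5) yields $\Graph{\T f_0}\cor\Tb{\in_{C'}}\cor\Graph{\T h}\sse\Tb{\in_{C}}$; since $\Psi'\in\SRD(\Ga')$ gives $\T f(\ga) = \T f_0(\ga)\rel{\Tb{\in_{C'}}}\Psi'$ and $\Psi = \T h(\Psi')$, it follows that $\ga\rel{\Tb{\in_{C}}}\Psi$. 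Hence $\Psi\in\SRD(\Ga)$, completing the proof.

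I expect the membership clause in the second half to be the only genuine obstacle: one is tempted to lift $\Psi'$ along a \emph{section} of $f_0$ (choosing one preimage per point), but the resulting $\Psi$ need not have all of $\Ga$ among its lifted members; taking the full preimage $h(B') = f_0^{-1}(B')$ is exactly what makes $\Graph{f_0}\cor{\in}\cor\Graph{h}\sse{\in}$ hold, while finitariness of $\T$ is what keeps $h$ valued in $\Pom$. The remaining steps are routine bookkeeping with standard functors and relation lifting.
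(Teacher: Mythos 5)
Your proof is correct and follows essentially the same route as the paper: the surjectivity half is obtained by restricting $f$ to a surjection $C\to C'$ of the base-unions and pulling $\Psi'$ back along the full-preimage map (which is exactly the paper's $\Q f^{-}$, the contravariant power set functor applied to the restriction), with the membership clause verified by lifting the relational inclusion $\Graph{f_0}\cor{\in}\cor\Graph{h}\sse{\in}$ — the same computation the paper packages as the distributive law $\nbsem\colon\T\Q\to\Q\T$ of Fact~\ref{f:db}. The only cosmetic divergence is in the easy half, where you invoke covariant naturality of $\nbsem$ instead of the paper's element-level argument via Lemma~\ref{p:rl2}; both are fine.
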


\begin{proof}
Let $X,Y,f$ and $\Ga$ be as in the statement of the Lemma, and abbreviate
$\Ga' := (\Pom\T f) \Ga$, $C:=\bigcup_{\ga\in\Ga}\Base^\T (\ga)$ and
$C':= \bigcup_{\ga'\in\Ga'}\Base^\T (\ga')$.
Then an easy calculation shows that
\begin{align*}
C' &= \bigcup_{\ga\in\Ga} \Base^\T (\T f)(\ga)
& \text{(definition of $\Ga'$)}
\\ &= \bigcup_{\ga\in\Ga} (\P f)\Base^\T (\ga)
& \text{($\Base^\T $ is natural transformation)}
\\ &= (\P f)(C)
& \text{(elementary set theory)}
\end{align*}

We will first show that $\T\Pom f$ maps slim redistributions of $\Ga$ to
slim redistributions of $\Ga'$.
For that purpose, take an arbitrary element $\Phi \in \SRD(\Ga)$, and
write $\Phi' := (\T\Pom f)\Phi$.
We claim that $\Phi' \in \SRD(\Ga')$, and first show that
\begin{equation}
\label{eq:srd1a}
\Phi' \in \T\Pom C',
\end{equation}
or equivalently, that $\Base^\T \Phi' \sse \Pom C'$.
To prove this inclusion, take an arbitrary set $A' \in \Base^\T (\Phi')$.
Since by Fact~\ref{f:base}, $\Base^\T (\Phi') = (\Pom\Pom f)(\Base^\T (\Phi)$,
this means that $A'$ must be of the form $(\Pom f)(A)$ for some $A \in
\Base^\T (\Phi)$.
In particular, $A'$ must be a subset of $(\Pom f)(\bigcup\Base^\T (\Phi))$.
Also, because $\Phi$ is a slim redistribution of $\Ga$, by definition we
have $\Base^\T (\Phi) \sse \Pom C$, and so $\bigcup \Base^\T (\Phi) \sse \bigcup C$.
From this it follows that $A' \sse (\P f)(\bigcup \Base^\T (\Phi)) \sse
(\P f)(\bigcup C) = C'$, as required.

Second, we claim that
\begin{equation}
\label{eq:srd1b}
\Ga' \sse \nbsem(\Phi').
\end{equation}
To prove this, take an arbitrary element of $\Ga'$, say, $(\T f)\ga$ for
some $\ga \in \Ga$.
We have $\ga
\rel{\Tb{\in}} \Phi$ by the assumption that $\Phi\in\SRD(\Ga)$.
But then, since $a \in A$ implies $fa \in (\Pom f)A$ for any $a \in C$ and
$A \sse C$, it follows by Lemma~\ref{p:rl2} that $\ga' = (\T f)\ga
\rel{\Tb {\in}} (\T\Pom f)(\Phi) = \Phi'$.
This means that $\ga'$ is a lifted member of $\Phi'$, as required.

Clearly, the claims \eqref{eq:srd1a} and \eqref{eq:srd1b} above suffice to
prove that $\Phi' \in \SRD(\Ga')$, which means that indeed, $\T\Pom f$ maps
slim redistributions of $\Ga$ to slim redistributions of $\Ga'$.

Thus it is left to prove that every slim redistribution of $\Ga'$ is of
the form $(\T\Pom f)\Phi$ for some slim redistribution $\Phi$ of $\Ga$.
Take an arbitrary $\Phi' \in \SRD(\Ga')$, and recall that $\Q$ denotes the
contravariant power set functor.
Restrict $f$ to the map $f^{-}\colon  C \to C'$, which means that $\Q f^{-}\colon
\Pom C' \to \Pom C$.
It follows that $\T\Q f^{-}\colon  \T\Pom C' \to \T\Pom C$, so that we may
define $\Phi := (\T\Q f^{-}) \Phi'$, and obtain $\Phi \in \T\Pom C$.
Hence, in order to prove that
\begin{equation}
\label{eq:srd2}
\Phi \in \SRD(\Ga),
\end{equation}
it suffices to show that $\Ga\sse\nbsem(\Phi)$.
But this is an immediate consequence of the fact that $\nbsem$ is a
distributive law of $\T$ over $\Q$ (Fact~\ref{f:db}), since for an
arbitrary $\ga \in \Ga$ we may reason as follows.
From $\ga \in \Ga$ it follows by definition of $\Ga'$ that $(\T f^{-})(\ga)
= (\T f)(\ga)$ belongs to $\Ga'$.
Since $\Ga' \sse\nbsem_{Y}(\Phi)$ by assumption, by definition of $\Q$ we
find that $\ga \in (\Q\T f)\nbsem_{Y}(\Psi)$.
But by $\nbsem\colon  \T\Q \to \Q\T$ we know that $(\Q\T f)\nbsem_{Y}(\Psi) =
\nbsem_{X}(\T\Q f)(\Psi) = \nbsem_{X}(\Phi)$.
Thus we find $\ga \in \nbsem(\Phi)$, as required.

Finally, observe that $f^{-}\colon  C \to C'$ is surjective, so that it follows
by properties of the co- and contravariant power set functors that
$\Pom f^{-} \cof \Q f^{-} = \id_{\Pom C'}$.
From this it is immediate by functoriality of $\T$ that
\[
\Phi' = (\T\Pom f^{-} \cof \T\Q f^{-}) \Phi' = (\T\Pom f^{-}) \Phi
= (\T\Pom f) \Phi.
\]
This finishes the proof of the Lemma.
\end{proof}

In the following lemma we gather some basic observations on the
frame structure of the $\T$-powerlocale.
These facts generalize results from~\citep{kupk:comp10} to our geometrical
setting.

\begin{lemma}
\label{p:basicV}
Let $\T$ be a standard, finitary, weak pullback-preserving functor and let
$\bbL$ be a frame.
\begin{enumerate}
\item  If $\al \in \T L$ is such that $0_{\bbL} \in \Base^\T (\al)$, then
$\nb\al = 0_{\VT\bbL}$.
\item   If $A \sse L$ is such that $a \land b = 0_{\bbL}$ for all $a \neq b$
in $A$, then $\nb\al \land \nb\be = 0_{\VT\bbL}$ for all $\al\neq\be$ in
$\T A$.
\item  If there is no relation $R$ such that $\al \rel{\Tb R} \be$, then
$\nb\al\land\nb\be = 0_{\VT\bbL}$.
\item  $1_{\VT\bbL} = \bv \{ \nb\ga \mid \ga \in \T\{ 1_{\bbL} \} \}$.
\item  For any $A \sse L$ such that $1_{\bbL} = \bv A$, we have $1_{\VT\bbL}
= \bv \{ \nb \al \mid \al \in \T A \}$.
\end{enumerate}
\end{lemma}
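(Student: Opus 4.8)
The plan is to establish part~(1) first and to derive the other four parts from it, in each case rewriting a generator $\nb\gamma$ as $\nb(\T\smallvee)\Phi$ or $\nb(\T\smallwedge)\Psi$ so that one of the Carioca axioms becomes applicable, and then controlling the resulting index set via Fact~\ref{f:base} (naturality of $\Base^\T$), Lemma~\ref{f:rlnt} and Lemma~\ref{f:srd1}. For part~(1) I would use $(\nabla 3)$: define $s\colon L \to \P L$ by $s(0_\bbL) := \nada$ and $s(a) := \{a\}$ for $a \neq 0_\bbL$, so that $\smallvee \cof s = \id_L$ and hence $(\T\smallvee)\Phi = \al$ for $\Phi := (\T s)(\al)$. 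Then $\Base^\T(\Phi) = (\Pom s)(\Base^\T(\al)) \ni s(0_\bbL) = \nada$, so $\nbsem(\Phi) = \nada$ by Lemma~\ref{f:rlnt}(1); thus no $\be$ satisfies $\be \Tin \Phi$, and $(\nabla 3)$ gives $\nb\al = \nb(\T\smallvee)\Phi \leq \bv\nada = 0_{\VT\bbL}$, the reverse inequality being trivial.

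Parts~(2) and~(3) both instantiate $(\nabla 2)$/$(\nabla 2')$ at $\Gamma := \{\al,\be\}$, making the left-hand side equal to $\nb\al \wedge \nb\be$. For~(3) I would use $(\nabla 2')$: if $\gamma \Tleq \al$ and $\gamma \Tleq \be$ held for some $\gamma$, then composing the relevant relation liftings via Fact~\ref{f:rl}(3,5) produces a relation $R = \cv{\leq}\cor\leq$ with $\al \rel{\Tb R} \be$, against the hypothesis; so the join on the right is empty and $\nb\al \wedge \nb\be \leq \bv\nada = 0_{\VT\bbL}$. For~(2) I would use $(\nabla 2)$: for any $\Psi \in \SRD(\{\al,\be\})$ we have $\al,\be \in \nbsem(\Psi)$, so Lemma~\ref{f:rlnt}(1) rules out $\nada \in \Base^\T(\Psi)$ and, since $\al \neq \be$, Lemma~\ref{f:rlnt}(2) rules out that $\Base^\T(\Psi)$ consists of singletons only; hence some $B \in \Base^\T(\Psi)$ has at least two elements, and as $B \sse \Base^\T(\al) \cup \Base^\T(\be) \sse A$ (because $\al,\be \in \T A$) two distinct elements of $B$ meet to $0_\bbL$, so $\smallwedge B = 0_\bbL$ and therefore $0_\bbL \in \Base^\T((\T\smallwedge)\Psi)$. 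By part~(1) each such $\nb(\T\smallwedge)\Psi$ equals $0_{\VT\bbL}$, so $(\nabla 2)$ gives $\nb\al \wedge \nb\be = 0_{\VT\bbL}$.

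For~(4) the inequality ``$\bv\{\nb\gamma \mid \gamma \in \T\{1_\bbL\}\} \leq 1_{\VT\bbL}$'' is immediate; for the converse, apply $(\nabla 2)$ at $\Gamma = \nada$, whose left-hand side is the empty meet $1_{\VT\bbL}$, while $\SRD(\nada) = \T\{\nada\}$ by Lemma~\ref{f:srd1} and, since $\smallwedge$ maps $\{\nada\}$ bijectively onto $\{1_\bbL\}$, applying $\T$ shows $\{(\T\smallwedge)\Psi \mid \Psi \in \SRD(\nada)\} = \T\{1_\bbL\}$. Part~(5) then follows by bounding each generator $\nb\gamma$ with $\gamma \in \T\{1_\bbL\}$: defining $d\colon \{1_\bbL\} \to \P L$ by $d(1_\bbL) := A$ we get that $\smallvee \cof d$ is the inclusion of $\{1_\bbL\}$ into $L$ (since $\bv A = 1_\bbL$), so $\Phi := (\T d)(\gamma)$ satisfies $(\T\smallvee)\Phi = \gamma$ and $\Base^\T(\Phi) \sse \{A\}$; hence any $\be$ with $\be \Tin \Phi$ has $\Base^\T(\be) \sse A$ (Fact~\ref{f:base}(2)), i.e.\ $\be \in \T A$, and $(\nabla 3)$ yields $\nb\gamma \leq \bv\{\nb\be \mid \be \in \T A\}$, which combined with part~(4) gives the claim.

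No single step is deep here; the proof is mostly careful bookkeeping. The point that needs most care is part~(2), where one must invoke the correct clause of Lemma~\ref{f:rlnt} (the hypothesis ``$\Base^\T(\Psi)$ consists of singletons only'') and check that the two-element set $B$ really lies inside $A$ so that the disjointness hypothesis bites; more generally, throughout the argument one must make sure that the auxiliary functions $s$, $d$ and the bijection $\{\nada\} \to \{1_\bbL\}$ are genuine set maps whose $\T$-images re-express the relevant generator \emph{exactly}, so that naturality of $\Base^\T$ can then be used to pin down the index set of the disjunction produced by the chosen Carioca axiom.
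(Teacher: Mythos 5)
Your proof is correct, and parts (1), (4) and (5) follow essentially the same route as the paper (the same maps $a\mapsto\nada$ or $\{a\}$, the lifted bijections $\T\{\nada\}\to\T\{1_{\bbL}\}$ and $\T\{A\}\to\T\{1_{\bbL}\}$, and Fact~\ref{f:base}(2) to confine $\nbsem(\Phi)$ to $\T A$). Where you diverge is in (2) and (3). For (2) the paper, after extracting a non-singleton $A_0\in\Base^\T(\Phi)$ exactly as you do, routes through an auxiliary map $d$ sending large sets to $\nada$, singletons to themselves and $\nada$ to $\{1_{\bbL}\}$, so as to bound $\nb(\T\bw)\Phi$ by $\nb(\T\bv)(\T d)(\Phi)$ via $(\nb1)$ and only then invoke part~(1); you instead observe directly that $\bw B=0_{\bbL}$ for the offending $B$, hence $0_{\bbL}\in(\P\bw)\Base^\T(\Psi)=\Base^\T((\T\bw)\Psi)$ by naturality, and apply part~(1) immediately --- a genuine shortcut that is perfectly sound (the inclusion $B\sse\bigcup_{\ga}\Base^\T(\ga)\sse A$ you check is exactly what makes the disjointness hypothesis applicable). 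For (3) the paper collapses $\al,\be$ along $f\colon L\to\{1\}$ and reduces to part~(2) with $A=\{1_{\bbL}\}$, whereas you apply $(\nb2')$ to $\Ga=\{\al,\be\}$ and note that a common $\Tb{\leq}$-lower bound $\ga$ would witness $\al\rel{\Tb({\geq}\cor{\leq})}\be$ by Fact~\ref{f:rl}(3,5), so the join on the right is empty; this is a different and somewhat cleaner argument, at the modest cost of relying on Corollary~\ref{VT:Nabla2prime} (which is available at this point in the paper). Both of your variants buy brevity without losing anything; the paper's detours in (2) and (3) keep everything phrased through $(\nb1)$ and part~(2) respectively, but are not logically necessary.
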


\begin{proof}
For part~1, let $\al \in \T L$ be such that $0_{\bbL} \in \Base^\T (\al)$.
Consider the map $f\colon  L \to \P L$ given by
\[
f(a) :=
\left\{\begin{array}{ll}
\emptyset & \text{if } a = 0_{\bbL},
\\ \{ a\} & \text{if } a > 0_{\bbL}.
\end{array}\right.
\]
Then $\id_{L} = \bv \circ f$, so that $\id_{\T L} = (\T\bv)
\circ (\T f)$ by functoriality of $\T$.
In particular, we obtain that $\al = (\T\bv)(\T f)(\al)$, so that we may
calculate
\begin{align*}
\nb\al
&=  \bv \Big\{ \nb\be \mid \be \rel{\Tb{\in}} (\T f)(\al) \Big\}
&& \text{(axiom $\nb2$)}
\\&\leq \bv \Big\{ \nb\be \mid
\Base^\T (\be) \rel{\Pb{\in}} \Base^\T ((\T f)(\al)) \Big\}
&& \text{(Fact~\ref{f:base}(2))}
\\&= \bv \emptyset
&& (\dag)
\\&= 0_{\VT\bbL}
\end{align*}

In order to justify the remaining step (\dag) in this calculation, observe
that it follows from the naturality of $\Base^\T $ (Fact~\ref{f:base}(1)) that
\[
\Base^\T ((\T f)(\al)) = (\P f)(\Base^\T (\al)),
\]
and so by the assumption that $0_{\bbL} \in \Base^\T (\al)$ we obtain $\emptyset \in
\Base^\T ((\T f)(\al))$.
Now suppose for contradiction that there is some $B \sse L$ such that
$B \rel{\Pb{\in}} \Base^\T ((\T f)(\al))$.
Then by definition of $\Pb$ there is a $b \in B$ such that $b \in \emptyset$,
which provides the desired contradiction.
This proves (\dag), and finishes the proof of part~1.
\medskip

For part~2, let $A \sse L$ be such that $a \land b = 0_{\bbL}$ for all
$a \neq b$ in $A$, and take two distinct elements $\al,\be \in \T A$.
In order to prove that $\nb\al \land \nb\be = 0_{\VT\bbL}$, it suffices
by axiom ($\nb2$) to show that
\begin{equation}
\label{eq:bo1}
\nb (\T\bwsmall)(\Phi) = 0_{\VT\bbL}, \mbox{ for all } \Phi \in
\SRD\{\al,\be\}.
\end{equation}
Take an arbitrary slim redistribution $\Phi$ of $\{ \al, \be\}$, then by
Fact~\ref{f:rlnt}, $\Base^\T (\Phi)$ contains a set $A_{0} \sse_{\om} A$ of size
$> 1$.
Define the map $d\colon  \Base^\T (\Phi) \to \Pom(A) \cup \left\{\left\{ 1_{\bbL}
\right\}\right\}$ by putting:
\[
d(B) := \left\{\begin{array}{ll}
\emptyset & \mbox{ if } |B| > 1,
\\ B         & \mbox{ if } |B| = 1,
\\ \{ 1_{\bbL} \} & \mbox{ if } |B| = 0.
\end{array}\right.
\]
It is straightforward to verify from the assumptions on $A$ and the
definition of $d$, that $\bw B \leq \bv d(B)$, for each $B \in \Base^\T (\Phi)$.
Hence it follows by Fact~\ref{f:rl} that $(\T\bw)(\Phi) \rel{\Tb{\leq}}
(\T\bv)(\T d)(\Phi)$, so that by axiom ($\nb1$) we may conclude that
\begin{equation}
\label{eq:bo2}
\nb(\T\bwsmall)(\Phi) \leq \nb(\T\bvsmall)(\T d)(\Phi)
\end{equation}

Finally, it follows from the naturality of $\Base^\T $ (Fact~\ref{f:base}(1))
that $\Base^\T (\T d)(\Phi) = (\P d)(\Base^\T (\Phi))$.
Consequently, for the set $A_{0} \in \Base^\T (\Phi)$ satisfying $|A_{0}| > 1$,
we find $\emptyset = d(A_{0}) \in \Base^\T (\T d)(\Phi)$, and then
$0_{\bbL} = \bv\nada \in (\P \bv) \Base^\T (\T d)(\Phi) = \Base^\T (\T\bv)(\T d)(\Phi)$.
Thus by part~(1) of this lemma it follows that
\begin{equation}
\label{eq:bo3}
\nb(\T\bvsmall)(\T d)(\Phi) = 0_{\VT\bbL}.
\end{equation}
This finishes the proof of part~2, since \eqref{eq:bo1} is immediate on
the basis of~\eqref{eq:bo2} and~\eqref{eq:bo3}.
\medskip

In order to prove part~3, suppose that $\al,\be \in \T L$ are not linked
by any lifted relation.
Consider the (unique) map
\[
f\colon  L \to \{ 1 \},
\]
and define $\al' := (\T f)\al$, $\be' := (\T f)(\be)$.
Suppose for contradiction that $\al'=\be'$.
Then we would find $\al \rel{\Tb (\cv{(\Graph{f})} \cor \Graph{f})} \be$,
contradicting the assumption on $\al$ and $\be$.
It follows that $\al'$ and $\be$ are distinct, and so by part (2) of this
lemma (with $A = \{ 1_{\bbL} \}$), we may infer that $\nb\al' \land
\nb\be' = 0_{\VT\bbL}$.
This means that we are done, since it follows from $\Graph{f} \sse {\leq}$
and the definitions of $\al',\be'$, that $\al \rel{\Tb{\leq}} \al'$ and
$\be \rel{\Tb{\leq}}\be'$, and from this we obtain by ($\nb$1) that
\[
\nb\al \land \nb\be \leq
\nb\al' \land \nb\be'\leq
0_{\VT\bbL}.
\]

For part~4, we reason as follows:
\begin{align*}
1_{\VT\bbL} =&
\bv \{ \nb (\T\bwsmall)(\Phi) \mid \Phi \in \SRD(\emptyset) \},
&& \text{(axiom ($\nb$2) with $A =\emptyset$)}
\\ =& \bv \{ \nb (\T\bwsmall)(\Phi) \mid \Phi \in \T\{\emptyset\} \}
&& \text{(Fact~\ref{f:srd1})}
\\ =& \bv \{ \nb\ga \mid \ga \in \T\{ 1_{\bbL}\} \}
&& \text{(\ddag)}
\end{align*}
where the last step (\ddag) is justified by the observation that, since
the map $\bw\colon  \Pom L \to L$ restricts to a bijection $\bw\colon  \{ \emptyset\}
\to \{ 1_{\bbL}\}$, its lifting restricts to a bijection $\T\bw\colon
\T\{\emptyset\} \to \T \{1_{\bbL}\}$.
\medskip

Finally, we turn to the proof of part~5.
Let $A \sse L$ be such that $1_{\bbL} = \bv A$, and consider an arbitrary
element $\Phi \in \T \{ A \}$.
We claim that
\begin{equation}
\label{eq:bo3a}
\nbsem(\Phi) \sse \T A.
\end{equation}
To see this, take an arbitrary lifted element $\al$ of $\Phi$.
It follows from $\al \rel{\Tb{\in}} \Phi$ that $\Base^\T (\al) \rel{\Pb{\in}}
\Base^\T (\Phi)$.
In particular, each $a \in \Base^\T (\al)$ must belong to some $B \in \Base^\T (\Phi)
\sse \{ A \}$.
In other words, $\Base^\T (\al) \sse A$, which is equivalent to saying that
$\al \in \T A$.
This proves \eqref{eq:bo3a}.

By \eqref{eq:bo3a} and axiom ($\nb3$) we obtain
\begin{equation}
\label{eq:bo4}
\nb(\T\bvsmall)(\Phi) \leq \bv \{ \nb\al \mid \al \in \T A \}.
\end{equation}
Now we reason as follows:
\begin{align*}
1_{\VT\bbL}
=   & \bv \{ \nb\al \mid \al \in \T \{ 1_{\bbL} \} \}
&& \text{(part 4)}
\\ =   & \bv \{ \nb(\T\bvsmall)(\Phi) \mid \Phi \in \T \{ A \} \}
&& \text{($\ast$)}
\\ \leq& \bv \{ \nb\al \mid \al \in \T A\},
&& \text{\eqref{eq:bo4}}
\end{align*}
To justify the second step ($\ast$), observe that if we restrict the map
$\bv\colon  \P L \to L$ to the bijection $\bv\colon  \{ A \} \to \{ 1_{\bbL}\}$, as its
lifting we obtain a bijection $\T\bv\colon  \T\{ A \} \to \T \{1_{\bbL}\}$.
\end{proof}

\subsection{Two examples of the $\T$-powerlocale construction}
\label{Subsect:VP}
\newcommand{\IdSet}{\operatorname{Id}}

In this subsection we will discuss two examples of
$T$-powerlocales. First, we discuss the somewhat trivial example of
the $\IdSet$-powerlocale. After that, we will discuss the defining
example of $\T$-powerlocales, namely the $\Pom$-powerlocale, which is
isomorphic to the classical Vietoris powerlocale.

\begin{example}
Let $\IdSet \colon \Set \to \Set$ be the identity functor on the
category of sets. Then for all frames $\bbL$, $\V_{\IdSet} \bbL \simeq
\bbL$.

First recall from Example \ref{RelLift:Examples} that for any relation
$R \subseteq X\times Y$, $\rl{\IdSet} R = R$. Moreover, if $A \in
\IdSet \Pom L= \Pom L$, then it is  straightforward to verify that
\begin{align*}
\SRD (A) &= \{ \Psi \in \Pom (\smallcup_{c \in A} \{ c\})
\mid \forall c \in A, c \in \Psi\}\\
&= \{ A\}.
\end{align*}
Consequently, the
$\nabla$-relations reduce to the following in case $\T = \IdSet$:
\[ \begin{array}{lll}
(\nabla 1) & \nabla a \leq \nabla b, &(a  \leq b)\\
(\nabla 2) & \smallwedge_{a \in A} \nabla a \leq
\nabla \smallwedge A,&(A
\in \Pom L)\\
(\nabla 3) & \nabla \smallvee A \leq \smallvee \{ \nabla b
\mid b \in A \}.& (A \in \P L)
\end{array} \]
The identity $\id_L \colon L \to L$ obviously satisfies $(\nabla 1)$,
$(\nabla 2)$ and $(\nabla 3)$. Moreover if we have a frame $\bbM$ and
a function $f\colon
L \to M$ which is compatible with $(\nabla 1)$,
$(\nabla 2)$ and $(\nabla 3)$, then it is easy to see that $f$ is in
fact a frame homomorphism $\bbL \to \bbM$. By the universal property
of frame presentations, it follows that
$\V_{\IdSet} \bbL \simeq \bbL$.
\end{example}

We now turn to the $\Pom$-powerlocale.
Recall from Example \ref{WpbExamples} that $\Pom\colon\Set\to\Set$,
the covariant finite power set functor, is indeed standard,
weak pullback-preserving and finitary.
We will now show that the $\Pom$-powerlocale is the  Vietoris
powerlocale. The equivalence of the $\nabla$ axioms and the $\Box$,
$\Diamond$ axioms on distributive lattices is already known from the
work of \citet{PV2007}; what is different here is that we
consider infinite joins rather than only finite joins.

We will use the presentation using $(\nabla 1)$, $(\nabla 2')$ and
$(\nabla 3)$ as our point of departure.
Recall that for all $\alpha,\beta \in \Pom L$,
\begin{align*}
\alpha \leq_L \beta &\text{ if } \alpha \subseteq \downarr \beta,\\
\alpha \leq_U \beta &\text{ if } \uparr \alpha \supseteq \beta,\\
\alpha \leq_C \beta &\text{ if } \alpha \leq_L \beta \text{ and }
\alpha \leq_U \beta.
\end{align*}
By Example \ref{SRD:Pom},
two of the relations presenting $\V_{\Pom}\mathbb{L}$ thus become%
\[%
\begin{array}
[c]{ll}%
(\nb2') & \bigwedge_{\gamma\in\Gamma}\nb\gamma
\leq\bigvee\left\{  \nb \alpha\mid  \forall \gamma\in \Gamma,\, \alpha
\leq_C \gamma\right\}  \\
(\nb3) & \nb\left\{  \bigvee\alpha\mid\alpha\in
\Phi\right\}  \leq\bigvee\left\{  \nb\beta\mid\beta\in\Pom\left(
\bigcup\Phi\right)  \text{ and }\forall\alpha\in\Phi,\,\alpha\between
\beta\right\}
\end{array}
\]

\begin{lemma}
We consider the presentation of $\V_{\Pom}\mathbb{L}$.

\begin{enumerate}

\item  In the presence of $(\nb1)$, the relation
$(\nb2^{\prime})$ can be replaced by%
\[%
\begin{array}
[c]{ll}%
(\nb2.0) & 1\leq\bigvee\left\{  \nb\beta\mid\beta
\in\Pom L\right\}  \\
(\nb2.2) & \nb\gamma_{1}\wedge\nb
\gamma_{2}\leq\bigvee\left\{  \nb\beta\mid\beta\leq_{C}\gamma
_{1},\beta\leq_{C}\gamma_{2}\right\}
\end{array}
\]

\item  In the presence of $(\nb1)$ and $(\nb2)$ (or
its equivalent formulations), the relation $(\nb3)$ can be
replaced by%
\[%
\begin{array}
[c]{lll}%
(\nb3.{\uparrow}) & \nb\left(  \gamma\cup
\{\bigvee\nolimits^{\uparrow}S\}\right)  \leq\bigvee\nolimits^{\uparrow
}\left\{  \nb\left(  \gamma\cup\{a\}\right)  \mid a\in S\right\}
\quad \text{(}S\text{ directed)}\\
(\nb3.0) & \nb\left(  \gamma\cup\{0\}\right)
\leq0 & \\
(\nb3.2) & \nb\left(  \gamma\cup\{a_{1}\vee
a_{2}\}\right)  \leq\nb\left(  \gamma\cup\{a_{1}\}\right)
\vee\nb\left(  \gamma\cup\{a_{2}\}\right)  \vee\nb
\left(  \gamma\cup\{a_{1},a_{2}\}\right)   &
\end{array}
\]
\end{enumerate}
\end{lemma}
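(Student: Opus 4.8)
For each part, the claim amounts to the following: replacing the indicated relation by the new ones yields a presentation of an isomorphic frame; and for this it suffices to check that, over the common generating set $\Pom L$, every relation of each presentation is a consequence of the relations of the other (cf.\ Fact~\ref{Prelim:FramesPresent} and the proof of Corollary~\ref{VT:Nabla2prime}). Throughout I use the concrete descriptions of $\rl{\Pom}{\leq}=\leq_C$ and of $\rl{\Pom}{\in}$ from Example~\ref{SRD:Pom}, and the fact that $(\Pom L,\leq_C)$ is a pre-order (Lemma~\ref{p:rl2}(1)). For part~1 one direction is immediate, since $(\nb2.0)$ and $(\nb2.2)$ are exactly the instances of $(\nb2')$ for $\Ga=\nada$ and $\Ga=\{\ga_1,\ga_2\}$. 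For the converse I would derive $(\nb2')$ for an arbitrary $\Ga\in\Pom\Pom L$ by induction on $|\Ga|$: the cases $|\Ga|\le 1$ follow from $(\nb2.0)$ and from reflexivity of $\leq_C$, and for $|\Ga|\ge 2$, writing $\Ga=\Ga'\cup\{\ga\}$, I would apply the induction hypothesis to $\bwsmall_{\ga'\in\Ga'}\nb\ga'$, distribute $\nb\ga$ over the resulting join by frame distributivity, bound each summand $\nb\be\wedge\nb\ga$ via $(\nb2.2)$, and use transitivity of $\leq_C$ to promote a witness $\al$ with $\al\leq_C\be$ and $\al\leq_C\ga$ (where $\be$ satisfies $\be\leq_C\ga'$ for all $\ga'\in\Ga'$) to one with $\al\leq_C\de$ for every $\de\in\Ga$.

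For the forward direction of part~2, I would derive $(\nb3.0)$, $(\nb3.2)$ and $(\nb3.{\uparrow})$ from $(\nb1)$ and $(\nb3)$ by instantiating $(\nb3)$ at $\Phi=\{\{c\}\mid c\in\ga\}\cup\{\Xi\}$, with $\Xi$ equal to $\nada$, to $\{a_1,a_2\}$, and to $S$ respectively. In each case $(\Pom\bvsmall)\Phi=\ga\cup\{\bvsmall\Xi\}$, while $\be\rel{\rl{\Pom}{\in}}\Phi$ forces $\ga\sse\be\sse\ga\cup\Xi$ and $\Xi\between\be$. Hence for $\Xi=\nada$ there is no such $\be$ and the right-hand side of $(\nb3)$ collapses to $0$; for $\Xi=\{a_1,a_2\}$ the only possibilities are $\be\in\{\ga\cup\{a_1\},\ga\cup\{a_2\},\ga\cup\{a_1,a_2\}\}$; and for $\Xi=S$ the possibilities are the sets $\ga\cup F$ with $\nada\neq F\sse_{\om}S$, each satisfying $\ga\cup F\leq_C\ga\cup\{a\}$ for some $a\in S$ above all members of $F$ (by directedness), so that $(\nb1)$ gives $\nb(\ga\cup F)\le\nb(\ga\cup\{a\})$ while directedness of $S$ makes $\{\nb(\ga\cup\{a\})\mid a\in S\}$ a directed family.

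The converse direction of part~2 is the main obstacle, and I would split it in two stages. \emph{Stage~A} establishes all instances of $(\nb3)$ with $\Phi\in\Pom\Pom L$ from $(\nb3.0)$ and $(\nb3.2)$, by induction on $\sum_{A\in\Phi}(|A|-1)$ over those $\Phi$ whose members are all nonempty (if $\nada\in\Phi$ then $0\in(\Pom\bvsmall)\Phi$ and $(\nb3.0)$ finishes). In the base case all members of $\Phi$ are singletons, the unique $\be$ with $\be\rel{\rl{\Pom}{\in}}\Phi$ is $\smallcup\Phi=(\Pom\bvsmall)\Phi$, and there is nothing to prove. In the inductive step, pick $A_1\in\Phi$ with $|A_1|\ge 2$ and some $a_1\in A_1$, write $\bvsmall A_1=a_1\vee\bvsmall(A_1\setminus\{a_1\})$ and apply $(\nb3.2)$; this bounds $\nb(\Pom\bvsmall)\Phi$ by the join of $\nb(\Pom\bvsmall)\Phi^{(k)}$ for the three configurations $\Phi^{(k)}$ obtained from $\Phi$ by replacing $A_1$ by $\{a_1\}$, by $A_1\setminus\{a_1\}$, or by both of these at once. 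Each $\Phi^{(k)}$ has strictly smaller measure (using $a_1\notin A_1\setminus\{a_1\}$), so the induction hypothesis applies; and $\be\rel{\rl{\Pom}{\in}}\Phi^{(k)}$ implies $\be\rel{\rl{\Pom}{\in}}\Phi$, since $\be$ then meets a subset of $A_1$ and is contained in $\smallcup\Phi$.

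\emph{Stage~B} then derives $(\nb3)$ for an arbitrary $\Phi=\{A_1,\dots,A_n\}\in\Pom\P L$ from Stage~A and $(\nb3.{\uparrow})$: since $\bvsmall A_i=\bvdir\{\bvsmall F\mid F\sse_{\om}A_i\}$, applying $(\nb3.{\uparrow})$ once for each index $i$ — each time with the remaining coordinates already resolved to single elements, so that the relation is applicable — bounds $\nb\{\bvsmall A_1,\dots,\bvsmall A_n\}$ by $\bvsmall\{\nb\{\bvsmall F_1,\dots,\bvsmall F_n\}\mid F_i\sse_{\om}A_i\}$; Stage~A then bounds each such term by $\bvsmall\{\nb\be\mid\be\rel{\rl{\Pom}{\in}}\{F_1,\dots,F_n\}\}$, and, since $F_i\sse A_i$, every such $\be$ also satisfies $\be\rel{\rl{\Pom}{\in}}\Phi$. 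The points requiring care are all of a bookkeeping nature: that the induction measure in Stage~A strictly decreases for each of the three sub-configurations, that the membership-under-lifting condition is preserved on passing to a sub-configuration, and that the iterated use of $(\nb3.{\uparrow})$ in Stage~B genuinely stays within the syntactic shape of that relation.
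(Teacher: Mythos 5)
Your proposal is correct and follows essentially the same route as the paper: the replacement relations are recognized as the instances of the originals in which all but one member of the data is a singleton, $(\nb2')$ and the finite-join case of $(\nb3)$ are recovered by induction using $(\nb2.2)$ resp.\ $(\nb3.0)$--$(\nb3.2)$, and $(\nb3.{\uparrow})$ together with directedness handles arbitrary joins. The only divergence is bookkeeping: the paper first proves the single-set finite case, relaxes it to infinite $S$ via $(\nb3.{\uparrow})$, and then inducts on the number of sets, whereas you first reduce every coordinate to finite sets and then run a single measure induction with $(\nb3.2)$ --- both orderings work.
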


\begin{proof}

(1) $(\nb2.0)$ and $(\nb2.2)$ are special cases of
$(\nb2^{\prime})$, when $\Gamma$ is empty or a doubleton. To show
that they imply $(\nb2^{\prime})$ is an induction on the number
of elements needed to enumerate the finite set $\Gamma$.%

(2) Each of the replacement relations is a special case of $(\nb
3)$ in which all except one of the elements of $\Phi$ are singletons. We now
show that they are sufficient to imply $(\nb3)$. First, we show
for any \emph{finite} $S$ that%
\[
\nb\left(  \gamma\cup\{\bigvee S\}\right)  \leq\bigvee\left\{
\nb\left(  \gamma\cup\alpha\right)  \mid\emptyset\neq\alpha
\in\Pom S\right\}  \text{.}%
\]
We use induction on the length of a finite enumeration of $S$. The base case,
$S$ empty, is $(\nb3.0)$. Now suppose $S=\{a\}\cup S^{\prime}$.
Then%
\begin{align*}
&\nb\left(  \gamma\cup\{\bigvee S\}\right)\\
&  =\nb \left(  \gamma\cup\{a\vee\bigvee S^{\prime}\}\right)  \\
&  \leq\nb\left(  \gamma\cup\{a\}\right)
\vee\nb \left(  \gamma\cup\{\bigvee S^{\prime}\}\right)
\vee\nb\left(\left(  \gamma\cup\{a\}\right)
\cup\{\bigvee S^{\prime}\}\right)
&& \text{(by }(\nb3.2)\text{)}\\
&  \leq\nb\left(  \gamma\cup\{a\}\right)  \vee\bigvee\left\{
\nb\left(  \gamma\cup\alpha^{\prime}\right)  \mid\emptyset
\neq\alpha^{\prime}\in\Pom S^{\prime}\right\}  \\
&  \qquad\vee\bigvee\left\{  \nb\left(  \gamma\cup\{a\}\cup
\alpha^{\prime}\right)  \mid\emptyset\neq\alpha^{\prime}\in\Pom S^{\prime}\right\}
&& \text{(by induction)}\\
&  =\bigvee\left\{  \nb\left(  \gamma\cup\alpha\right)
\mid\emptyset\neq\alpha\in\Pom S\right\}  \text{.}%
\end{align*}
Now we can use $(\nb3.{\uparrow})$ to relax the finiteness
condition on $S$, since for an arbitrary $S$ we have%
\begin{align*}
\nb\left(  \gamma\cup\{\bigvee S\}\right)   &  =\nb
\left(  \gamma\cup\left\{  \bigvee\nolimits^{\uparrow}\left\{  \bigvee
S_{0}\mid S_{0}\in\Pom S\right\}  \right\}  \right)  \\
&  \leq\bigvee\nolimits^{\uparrow}\left\{  \nb\left(  \gamma
\cup\left\{  \bigvee S_{0}\right\}  \right)  \mid S_{0}\in\Pom S\right\}
\text{.}%
\end{align*}

Finally, we can use induction on the length of a finite enumeration of $\Phi$
to deduce $(\nb3)$. More precisely, one shows by induction on $n$
that%
\begin{gather*}
\nb\left(  \gamma\cup\{\bigvee S_{1},\ldots,\bigvee
S_{n}\}\right) \\ \leq\bigvee\left\{  \nb\left(  \gamma\cup
\alpha\right)  \mid\emptyset\neq\alpha\in\Pom\left(  \bigcup_{i=1}^{n}%
S_{i}\right)  \text{ and }\forall i,\,\alpha\between S_{i}\right\}
\text{.} %
\end{gather*}
\end{proof}

\begin{remark}
Relation $(\nb 2.0)$ can be weakened even further, to%
\[
1\leq\nb\emptyset\vee\nb\{1\}\text{.}%
\]
For if $\beta$ is non-empty %
then $\beta\leq_{C}\{1\}$.
From $(\nb 2.2)$ we can also deduce that
$\nb\emptyset \wedge \nb\{1\} = 0$,
giving that $\nb\emptyset$ and $\nb\{1\}$
are clopen complements.
\end{remark}

\begin{lemma}
\label{VBoxLemma}In $\V\mathbb{L}$ we have, for any $S\subseteq\mathbb{L}$,%
\[
\Box\left(  \bigvee S\right)  =\bigvee\left\{  \Box\left(  \bigvee
\alpha\right)  \wedge\bigwedge_{a\in\alpha}\Diamond a\mid\alpha\in\Pom
S\right\}  \text{.}%
\]
\end{lemma}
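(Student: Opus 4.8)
The plan is to prove the two inequalities separately; the inequality $\bigvee\{\ldots\}\leq\Box(\bigvee S)$ is immediate, while the reverse one needs a reduction to the finite case and then an induction.

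For the easy direction, note first that $\Box$ is monotone: from $\Box(a\wedge b)=\Box a\wedge\Box b$ we get $\Box a\leq\Box b$ whenever $a\leq b$. Hence for each $\alpha\in\Pom S$ we have $\Box(\bigvee\alpha)\wedge\bigwedge_{a\in\alpha}\Diamond a\leq\Box(\bigvee\alpha)\leq\Box(\bigvee S)$, since $\bigvee\alpha\leq\bigvee S$; joining over $\alpha$ gives the inequality. For the reverse inequality I would first reduce to finite $S$. The family $\{\bigvee S_0\mid S_0\in\Pom S\}$ is directed, because $\bigvee S_0\vee\bigvee S_1=\bigvee(S_0\cup S_1)$, and its join is $\bigvee S$, so the directed-join axiom for $\Box$ gives $\Box(\bigvee S)=\bvdir_{S_0\in\Pom S}\Box(\bigvee S_0)$. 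As $\Pom S_0\subseteq\Pom S$ whenever $S_0\subseteq S$, it then suffices to prove $\Box(\bigvee S)\leq\bigvee\left\{\Box(\bigvee\alpha)\wedge\bigwedge_{a\in\alpha}\Diamond a\mid\alpha\in\Pom S\right\}$ for \emph{finite} $S$.

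The finite case I would obtain from the following strengthening, proved by induction on $|S|$: for every finite $S\subseteq L$ and every $d\in L$,
\[
\Box\left(d\vee\bigvee S\right)\leq\bigvee\left\{\Box\left(d\vee\bigvee\alpha\right)\wedge\bigwedge_{a\in\alpha}\Diamond a\mid\alpha\in\Pom S\right\},
\]
the desired statement being the instance $d=0$. The base case $S=\emptyset$ is immediate, both sides equalling $\Box d$. For the inductive step, write $S=\{c\}\cup S'$ with $c\notin S'$ and put $b':=\bigvee S'$. Applying the axiom $\Box(x\vee y)\leq\Box x\vee\Diamond y$ with $x=d\vee b'$ and $y=c$, meeting the result with $\Box(d\vee\bigvee S)$, distributing, and simplifying $\Box(d\vee\bigvee S)\wedge\Box(d\vee b')=\Box(d\vee b')$ via $\Box(u\wedge v)=\Box u\wedge\Box v$, one obtains
\[
\Box\left(d\vee\bigvee S\right)=\Box\left(d\vee b'\right)\vee\left(\Box\left(d\vee\bigvee S\right)\wedge\Diamond c\right).
\]
The first disjunct is bounded using the induction hypothesis for $S'$ with parameter $d$, whose terms are indexed by $\alpha'\in\Pom S'\subseteq\Pom S$. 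For the second disjunct, note that $d\vee\bigvee S=(d\vee c)\vee b'$, apply the induction hypothesis for $S'$ with parameter $d\vee c$, meet the resulting inequality with $\Diamond c$, and use $(d\vee c)\vee\bigvee\alpha'=d\vee\bigvee(\{c\}\cup\alpha')$ together with $\bigwedge_{a\in\alpha'}\Diamond a\wedge\Diamond c=\bigwedge_{a\in\{c\}\cup\alpha'}\Diamond a$; this exhibits its bounding terms as exactly those indexed by the $\alpha\in\Pom S$ containing $c$. Since $\Pom S=\Pom S'\cup\{\{c\}\cup\alpha'\mid\alpha'\in\Pom S'\}$, the two bounds together yield the claim.

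The main obstacle is choosing the right induction loading. A naive induction on $\Box(\bigvee S)$ alone fails, because splitting off $c$ leaves $\Box(c\vee b')$ in the branch where $c$ ``is met'', and $c$ must then be carried along inside the $\Box$-argument while recursing on $S'$. The auxiliary parameter $d$ is precisely what absorbs this accumulating box-argument, making the statement strong enough to feed back into itself; everything else is routine frame distributivity together with the axioms $\Box(a\wedge b)=\Box a\wedge\Box b$, $\Box(\bvdir A)=\bvdir_{a\in A}\Box a$, and $\Box(a\vee b)\leq\Box a\vee\Diamond b$.
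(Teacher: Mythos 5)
Your proof is correct and follows essentially the same route as the paper's: the easy inequality, a reduction to finite $S$ via the directed-join axiom for $\Box$, and then an induction that peels off one element at a time using $\Box(a\vee b)\leq\Box a\vee\Diamond b$ together with $\Box$ preserving binary meets and frame distributivity. The only cosmetic difference is the induction loading: the paper carries a finite accumulator set $\alpha$ together with $\bigwedge_{a\in\alpha}\Diamond a$ on the left-hand side (proving $\Box(\bigvee\alpha\vee\bigvee\beta)\wedge\bigwedge_{a\in\alpha}\Diamond a\leq\mathrm{RHS}$ by induction on $\beta$), whereas you carry a single frame element $d$ inside the $\Box$ and account for the emitted diamonds by re-indexing the right-hand side; the two bookkeeping schemes are interchangeable.
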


\begin{proof}
$\geq$ is immediate. For $\leq$, first note that since $\bigvee S$ is a
directed join $\bigvee_{\alpha\in\Pom S}^{\uparrow}\bigvee\alpha$, we have
$\Box\left(  \bigvee S\right)  \leq\bigvee_{\alpha\in\Pom S}^{\uparrow
}\Box\left(  \bigvee\alpha\right)  $ and thus we reduce to the case where
$S$ is finite. We show that for every $\alpha,\beta\in\Pom S$ we have%
\[
\Box\left(  \bigvee \alpha\vee\bigvee \beta\right)  \wedge\bigwedge_{a\in \alpha}\Diamond
a\leq\text{ RHS in statement,}%
\]
after which the result follows by taking $\beta=S$ and $\alpha=\emptyset$. We use
$\Pom$-induction on $\beta$, effectively an induction on the length of an
enumeration of its elements. The base case, $\beta=\emptyset$, is trivial. For the
induction step, suppose $\beta=\beta^{\prime}\cup\{b\}$. Then%
\begin{align*}
&\Box\left(  \bigvee \alpha\vee\bigvee \beta\right)  \wedge\bigwedge_{a\in \alpha}\Diamond
a\\  &  =\Box\left(  \bigvee \alpha\vee b\vee\bigvee \beta^{\prime}\right)
\wedge\bigwedge_{a\in \alpha}\Diamond a\\
&  =\Box\left(  \bigvee \alpha\vee b\vee\bigvee \beta^{\prime}\right)
\wedge\bigwedge_{a\in \alpha}\Diamond a\wedge\left(  \Box\left(  \bigvee
\alpha\vee\bigvee \beta^{\prime}\right)  \vee\Diamond b\right) \\
&  =\left(  \Box\left(  \bigvee \alpha\vee\bigvee \beta^{\prime}\right)
\wedge\bigwedge_{a\in \alpha}\Diamond a\right)  \vee\left(  \Box\left(
\bigvee\left(  \alpha\cup\{b\}\right)  \vee\bigvee \beta^{\prime}\right)
\wedge\bigwedge_{a\in \alpha\cup\{b\}}\Diamond a\right) \\
&  \leq\text{ RHS, by induction.}%
\end{align*}
\end{proof}

\begin{theorem}
Let $\mathbb{L}$ be a frame. Then $\V\mathbb{L}\cong\V_{\Pom}\mathbb{L}$.
\end{theorem}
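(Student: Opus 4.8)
The plan is to exhibit two mutually inverse frame homomorphisms $\phi\colon \VP\bbL \to \V\bbL$ and $\psi\colon \V\bbL \to \VP\bbL$ implementing the classical translations between the $\nabla$-presentation and the $\Box,\Diamond$-presentation, and then to verify they are mutually inverse by evaluating on generators. Concretely, I will set $\phi(\nabla\alpha) = \Box(\smallvee\alpha)\wedge\bigwedge_{a\in\alpha}\Diamond a$ and $\psi(\Box a) = \nabla\{a\}\vee\nabla\emptyset$, $\psi(\Diamond a) = \nabla\{a,1\}$, appealing to the universal property of the relevant presentation once the defining relations are checked. Throughout I will use the reformulated presentation of $\VP\bbL$ by the axioms $(\nb1),(\nb2.0),(\nb2.2),(\nb3.{\uparrow}),(\nb3.0),(\nb3.2)$ established above, together with the explicit descriptions of $\leq_{L},\leq_{U},\leq_{C}$ and of $\rl{\Pom}{\in}$ from Example~\ref{SRD:Pom}.

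\emph{Well-definedness of $\phi$.} Here I must check that $\alpha\mapsto \Box(\smallvee\alpha)\wedge\bigwedge_{a\in\alpha}\Diamond a$ is compatible with those six relations. The relations $(\nb1)$, $(\nb2.0)$, $(\nb3.0)$ and $(\nb3.{\uparrow})$ are routine: they reduce to monotonicity of $\Box,\Diamond$, to $\Diamond 0 = 0$, to $\Box(0\vee 1)\leq\Box 0\vee\Diamond 1$, and to the facts that $\Box$ preserves directed joins while $\Diamond$ preserves all joins. The substance is in $(\nb2.2)$ and $(\nb3.2)$. For $(\nb2.2)$ I begin with $\Box(\smallvee\gamma_1)\wedge\Box(\smallvee\gamma_2) = \Box(\smallvee\gamma_1\wedge\smallvee\gamma_2)$, then use the interaction axiom $\Box c\wedge\Diamond a\leq\Diamond(c\wedge a)$ to replace, underneath a retained copy of $\Box(\smallvee\gamma_1\wedge\smallvee\gamma_2)$, each factor $\Diamond a$ $(a\in\gamma_1)$ by $\smallvee_{b\in\gamma_2}\Diamond(a\wedge b)$ and symmetrically each $\Diamond b$ $(b\in\gamma_2)$ by $\smallvee_{a\in\gamma_1}\Diamond(a\wedge b)$; distributing the finite meets over these joins gives a join indexed by pairs of choice functions $f\colon\gamma_1\to\gamma_2$, $g\colon\gamma_2\to\gamma_1$, with generic summand $\Box(\smallvee\gamma_1\wedge\smallvee\gamma_2)\wedge\bigwedge_{c\in\delta_{f,g}}\Diamond c$, where $\delta_{f,g} := \{a\wedge f(a)\mid a\in\gamma_1\}\cup\{g(b)\wedge b\mid b\in\gamma_2\}$. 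I then rewrite $\Box(\smallvee\gamma_1\wedge\smallvee\gamma_2) = \Box(\smallvee\{a\wedge b\mid a\in\gamma_1,\,b\in\gamma_2\})$ and apply Lemma~\ref{VBoxLemma} to it, producing a further join over finite $\eta\subseteq\{a\wedge b\}$; each resulting term lies below $\phi(\nabla(\eta\cup\delta_{f,g}))$, and one checks directly that $\eta\cup\delta_{f,g}\leq_{C}\gamma_1$ and $\eta\cup\delta_{f,g}\leq_{C}\gamma_2$, so the whole expression is below $\phi(\smallvee\{\nabla\delta\mid \delta\leq_{C}\gamma_1,\ \delta\leq_{C}\gamma_2\})$, as required. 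The case $(\nb3.2)$ is analogous but lighter: one uses $\Diamond(a_1\vee a_2) = \Diamond a_1\vee\Diamond a_2$ and then the axiom $\Box(x\vee y)\leq\Box x\vee\Diamond y$ to split $\Box(\smallvee\gamma\vee a_1\vee a_2)$, again keeping spare copies of the $\Box$-term, and recognises the resulting summands as $\phi(\nabla(\gamma\cup\{a_1\}))$, $\phi(\nabla(\gamma\cup\{a_2\}))$ and $\phi(\nabla(\gamma\cup\{a_1,a_2\}))$.

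\emph{Well-definedness of $\psi$.} Here I must check that $\Box a\mapsto\nabla\{a\}\vee\nabla\emptyset$, $\Diamond a\mapsto\nabla\{a,1\}$ respects the six Vietoris relations. The identity $\Box 1 = 1$ follows from Lemma~\ref{p:basicV}(4), which gives $1_{\VP\bbL} = \nabla\emptyset\vee\nabla\{1\}$; the empty instance $\Diamond 0 = 0$ of $\Diamond(\smallvee A) = \smallvee_{a}\Diamond a$ follows since $0\in\Base^{\T}(\{0,1\})$, whence $\nabla\{0,1\} = 0$ by Lemma~\ref{p:basicV}(1). Preservation of directed joins by $\Box$ and of arbitrary joins by $\Diamond$ reduce, via $(\nb3.{\uparrow})$ (writing an arbitrary join as a directed join of finite ones) and $(\nb3.2)$ together with $(\nb1)$, to the base relations. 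For the remaining three relations I first extract, from $(\nb1)$ and $(\nb2.2)$ plus the description of $\leq_{C}$, the auxiliary facts $\nabla\{a\}\wedge\nabla\{b\} = \nabla\{a\wedge b\}$, $\nabla\emptyset\wedge\nabla\gamma = 0_{\VP\bbL}$ for $\gamma\neq\emptyset$, $\nabla\{a,b\}\leq\nabla\{b,1\}$, and $\nabla\beta\wedge\nabla\{a,1\}\leq\nabla(\beta\cup\{a\})$ for all $a,b\in L$, $\beta\in\Pom L$; granting these, $\Box(a\wedge b) = \Box a\wedge\Box b$, $\Box a\wedge\Diamond b\leq\Diamond(a\wedge b)$ and $\Box(a\vee b)\leq\Box a\vee\Diamond b$ become short computations (the last one using the instance of $(\nb3.2)$ expanding $\nabla\{a\vee b\}$).

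\emph{The composites.} Finally I check $\phi\circ\psi = \id$ and $\psi\circ\phi = \id$ on generators. The map $\phi\circ\psi$ fixes each $\Box a$, since $\phi\psi(\Box a) = (\Box a\wedge\Diamond a)\vee\Box 0 = \Box a$ by the instance $\Box a\leq\Box 0\vee\Diamond a$ of $\Box(x\vee y)\leq\Box x\vee\Diamond y$, and fixes each $\Diamond a$, since $\phi\psi(\Diamond a) = \Box 1\wedge\Diamond a\wedge\Diamond 1 = \Diamond a$ because $\Diamond a\leq\Diamond 1$; hence $\phi\circ\psi = \id_{\V\bbL}$. For $\psi\circ\phi$ one computes $\psi\phi(\nabla\alpha) = (\nabla\{\smallvee\alpha\}\vee\nabla\emptyset)\wedge\bigwedge_{a\in\alpha}\nabla\{a,1\}$; when $\alpha\neq\emptyset$ the $\nabla\emptyset$-contribution vanishes by $\nabla\emptyset\wedge\nabla\{a,1\} = 0$, while for $\alpha = \emptyset$ it is absorbed since $\nabla\{0\} = 0$. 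Using $(\nb3)$ and the description of $\rl{\Pom}{\in}$ one has $\nabla\{\smallvee\alpha\} = \smallvee\{\nabla\beta\mid \emptyset\neq\beta\subseteq\alpha\}$, and then distributing and applying $\nabla\beta\wedge\nabla\{a,1\}\leq\nabla(\beta\cup\{a\})$ repeatedly over $a\in\alpha\setminus\beta$ (the factors with $a\in\beta$ being absorbed) shows each summand is $\leq\nabla\alpha$; the reverse inequality $\nabla\alpha\leq\psi\phi(\nabla\alpha)$ is immediate from $(\nb1)$. Since the elements $\nabla\alpha$ generate $\VP\bbL$, this yields $\psi\circ\phi = \id_{\VP\bbL}$, and the theorem follows. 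The main obstacle will be the verification of $(\nb2.2)$ for $\phi$: fusing Lemma~\ref{VBoxLemma}, the $\Box$--$\Diamond$ interaction axiom and frame distributivity into a single inequality, while keeping the $\leq_{L}/\leq_{U}/\leq_{C}$ bookkeeping straight, is where essentially all the real content of the theorem sits.
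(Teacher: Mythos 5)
Your proposal is correct and follows essentially the same route as the paper: the same homomorphism $\phi(\nb\alpha)=\Box(\smallvee\alpha)\wedge\smallwedge_{a\in\alpha}\Diamond a$, the same $\psi$ (the paper defines $\psi(\Diamond a)$ as $\smallvee\{\nb(\beta\cup\{a\})\mid\beta\in\Pom L\}$ but immediately notes this simplifies to your $\nb\{1,a\}$), the same reduction to the reformulated axioms $(\nb1),(\nb2.0),(\nb2.2),(\nb3.{\uparrow}),(\nb3.0),(\nb3.2)$, the same use of Lemma~\ref{VBoxLemma} in the $(\nb2.2)$ verification, and the same computations for the two composites. The only differences are cosmetic bookkeeping (e.g.\ your pairs of choice functions versus the paper's relations $R\in\Pom(\gamma_1\times\gamma_2)$ with $\gamma_1\,\rl{\Pom}R\,\gamma_2$).
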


\begin{proof}
First we define a frame homomorphism $\phi\colon \V_{\Pom}\mathbb{L}\rightarrow
\V\mathbb{L}$ by $\phi(\nb\alpha)=\Box\left(  \smallvee
\alpha\right)  \wedge\smallwedge_{a\in\alpha}\Diamond a$. We must check that
this respects the relations. For $(\nb1)$, suppose $\alpha
\leq_{C}\beta$. From $\alpha\leq_{U}\beta$ and $\alpha\leq_{L}\beta$ we get
$\smallwedge_{a\in\alpha}\Diamond a\leq\smallwedge_{b\in\beta}\Diamond b$ and
$\smallvee\alpha\leq\smallvee\beta$, giving $\phi(\nb\alpha)\leq
\phi(\nb\beta)$.

For $(\nb2.0)$, we have $1=\Box(0\vee1)=\Box0\vee\left(
\Box1\wedge\Diamond1\right)  =\phi\left(  \nb\emptyset\right)
\vee\phi\left(  \nb\left\{  1\right\}  \right)  $.

For $(\nb2.2)$, $\phi\left(  \nb\gamma_{1}\right)
\wedge\phi\left(  \nb\gamma_{2}\right)  $ is%
\begin{align*}
&  \Box\left(  \smallvee\gamma_{1}\right)  \wedge\smallwedge_{c\in\gamma_{1}%
}\Diamond c\wedge\Box\left(  \smallvee\gamma_{2}\right)
\wedge\smallwedge_{c^{\prime}\in\gamma_{2}}\Diamond c^{\prime}\\
&  =\Box\left(  \smallvee\gamma_{1}\wedge\smallvee\gamma_{2}\right)
\wedge\smallwedge_{c\in\gamma_{1}}\Diamond c\wedge\smallwedge_{c^{\prime}\in
\gamma_{2}}\Diamond c^{\prime}\\
&  =\Box\left(  \smallvee\gamma_{1}\wedge\smallvee\gamma_{2}\right)
\wedge\smallwedge_{c\in\gamma_{1}}\Diamond\left(  c\wedge\smallvee\gamma_{1}%
\wedge\smallvee\gamma_{2}\right)  \wedge\smallwedge_{c^{\prime}\in\gamma_{2}%
}\Diamond\left(  c^{\prime}\wedge\smallvee\gamma_{1}\wedge\smallvee\gamma
_{2}\right)  \\
&  =\Box\left(  \smallvee\gamma_{1}\wedge\smallvee\gamma_{2}\right)
\wedge\smallwedge_{c\in\gamma_{1}}\Diamond\left(  c\wedge\smallvee\gamma
_{2}\right)  \wedge\smallwedge_{c^{\prime}\in\gamma_{2}}\Diamond\left(
c^{\prime}\wedge\smallvee\gamma_{1}\right)  \\
&  =\Box\left(  \smallvee_{c\in\gamma_{1}}\smallvee_{c^{\prime}\in\gamma_{2}%
}c\wedge c^{\prime}\right)  \wedge\smallwedge_{c\in\gamma_{1}}
\smallvee_{c^{\prime}\in\gamma_{2}}\Diamond\left(  c\wedge c^{\prime}\right)
\wedge\smallwedge_{c^{\prime}\in\gamma_{2}}\smallvee_{c\in\gamma_{1}}%
\Diamond\left(  c\wedge c^{\prime}\right)
\end{align*}
Redistributing the disjunctions of the $\Diamond$s, we find that each
resulting disjunct is of the form%
\[
\Box\left(  \smallvee_{c\in\gamma_{1}}\smallvee_{c^{\prime}\in\gamma_{2}}c\wedge
c^{\prime}\right)  \wedge\smallwedge_{cRc^{\prime}}\Diamond\left(  c\wedge
c^{\prime}\right)
\]
for some $R\in\Pom\left(  \gamma_{1}\times\gamma_{2}\right)  $ such that
$\gamma_{1}\,\overline{\Pom}R\,\gamma_{2}$. Note that for any such $R$ if we
define $\beta_{R}=\{c\wedge c^{\prime}\mid cRc^{\prime}\}$ then we have
$\beta_{R}\leq_{C}\gamma_{i}$ ($i=1,2$). Now by Lemma~\ref{VBoxLemma} we see%
\begin{align*}
&  \Box\left(  \smallvee_{c\in\gamma_{1}}\smallvee_{c^{\prime}\in\gamma_{2}%
}c\wedge c^{\prime}\right)  \wedge\smallwedge_{cRc^{\prime}}\Diamond\left(
c\wedge c^{\prime}\right)  \\
&  \leq\smallvee\left\{  \Box\left(  \smallvee_{cR^{\prime}c^{\prime}}c\wedge
c^{\prime}\right)  \wedge\smallwedge_{c(R\cup R^{\prime})c^{\prime}}%
\Diamond\left(  c\wedge c^{\prime}\right)  \mid R^{\prime}\in\Pom\left(
\gamma_{1}\times\gamma_{2}\right)  \right\}  \\
&  \leq\smallvee\left\{  \Box\left(  \smallvee_{cR^{\prime}c^{\prime}}c\wedge
c^{\prime}\right)  \wedge\smallwedge_{cR^{\prime}c^{\prime}}\Diamond\left(
c\wedge c^{\prime}\right)  \mid R\subseteq R^{\prime}\in\Pom\left(  \gamma
_{1}\times\gamma_{2}\right)  \right\}  \\
&  =\smallvee\left\{  \phi\left(  \nb\beta_{R^{\prime}}\right)
\mid R\subseteq R^{\prime}\in\Pom\left(  \gamma_{1}\times\gamma_{2}\right)
\right\}
\end{align*}
and the result follows.

For $(\nb3.{\uparrow})$: the LHS is%
\begin{align*}
&  \Box\left(  \smallvee\gamma\vee\smallvee^{\uparrow}S\right)
\wedge\smallwedge_{c\in\gamma}\Diamond c\wedge
\Diamond\left(  \smallvee^{\uparrow}S\right) \\
&  =\smallvee^{\uparrow}\left\{  \Box\left(  \smallvee\gamma\vee
a\right)  \mid a\in S\right\}  \wedge\smallvee^{\uparrow}\left\{
\smallwedge_{c\in\gamma}\Diamond c\wedge\Diamond a\mid a\in S\right\} \\
&  =\smallvee^{\uparrow}\left\{  \Box\left(  \smallvee\gamma\vee
a\right)  \wedge\smallwedge_{c\in\gamma}\Diamond c\wedge\Diamond a\mid a\in
S\right\}
\end{align*}
which is the RHS.

For $(\nb3.0)$: the LHS is%
\[
\Box\left(  \smallvee\gamma\vee0\right)  \wedge\smallwedge_{c\in\gamma}\Diamond
c\wedge\Diamond0=0\text{.}%
\]

For $(\nb3.2)$: the LHS is%
\begin{align*}
&  \Box\left(  \smallvee\gamma\vee a_{1}\vee a_{2}\right)
\wedge\smallwedge_{c\in\gamma}\Diamond c\wedge\Diamond\left(  a_{1}\vee a_{2}\right) \\
&  =\bigvee_{i=1}^{2}\smallvee\left\{  \Box\left(  \smallvee\beta\right)
\wedge\smallwedge_{c\in\beta\cup\gamma\cup\{a_{i}\}}\Diamond c\mid\beta\in
\Pom\left(  \gamma\cup\{a_{1},a_{2}\}\right)  \right\} \\
&  \leq\bigvee_{i=1}^{2}\smallvee\left\{  \phi\left(  \nb\left(
\beta\cup\gamma\cup\{a_{i}\}\right)  \right)  \mid\beta\in\Pom\left(
\gamma\cup\{a_{1},a_{2}\}\right)  \right\} \\
&  =\bigvee_{i=1}^{2}\smallvee\left\{  \phi\left(  \nb\beta\right)
\mid\gamma\cup\{a_{i}\}\subseteq\beta\in\Pom\left(  \gamma\cup\{a_{1}%
,a_{2}\}\right)  \right\} \\
&  =\phi\left(  \nb\left(  \gamma\cup\{a_{1}\}\right)  \right)
\vee\phi\left(  \nb\left(  \gamma\cup\{a_{2}\}\right)  \right)
\vee\phi\left(  \nb\left(  \gamma\cup\{a_{1},a_{2}\}\right)
\right)  \text{.}%
\end{align*}

Next, we define the frame homomorphism $\psi\colon \V\mathbb{L}\rightarrow\V_{\Pom
}\mathbb{L}$ by%
\begin{align*}
\psi\left(  \Box a\right)   &  =\smallvee\left\{  \nb\alpha
\mid\alpha\leq_{L}\{a\}\right\}  =\nb\emptyset\vee
\nb\{a\}\\
\psi\left(  \Diamond a\right)   &  =\smallvee\left\{  \nb\alpha
\mid\alpha\leq_{U}\{a\}\right\}  =\smallvee\left\{  \nb\left(
\beta\cup\{a\}\right)  \mid\beta\in\Pom L\right\}  \text{.}%
\end{align*}
(Observe that the expression for $\psi\left(  \Diamond a\right)$ could
be simplified even further to $\nabla\{ 1,a\}$.)
We check the relations. First, it is clear that $\psi$ respects monotonicity
of $\Box$ and $\Diamond$.

$\Box$ preserves directed joins:%
\[
\psi\left(  \Box\left(  \smallvee_{i}^{\uparrow}a_{i}\right)  \right)
=\nb\emptyset\vee\nb\{\smallvee_{i}%
^{\uparrow}a_{i}\}=\smallvee_{i}^{\uparrow}\psi\left(  \Box
a_{i}\right)  \text{.}%
\]

$\Box$ preserves top immediately from $(\nb2.0)$.

$\Box$ preserves binary meets:%
\begin{align*}
\psi\left(  \Box a_{1}\right)  \wedge\psi\left(  \Box a_{2}\right)   &
=\nb\emptyset\vee\left(  \nb\{a_{1}\}\wedge
\nb\{a_{2}\}\right)  \\
&  =\nb\emptyset\vee\smallvee\left\{  \nb\beta
\mid\beta\leq_{C} \{a_{1}\}, \,\beta \leq_C \{a_2\}\right\}  \\
&  =\nb\emptyset\vee\nb\{a_{1}\wedge a_{2}%
\}=\psi\left(  \Box\left(  a_{1}\wedge a_{2}\right)  \right)  \text{.}%
\end{align*}

$\Diamond$ preserves joins:%
\begin{align*}
\psi\left(  \Diamond\left(  \smallvee A\right)  \right)    & =\smallvee\left\{
\nb\left(  \beta\cup\{\smallvee A\}\right)  \mid\beta\in
\Pom L\right\}  \\
& =\smallvee\left\{  \nb\left(  \beta\cup\alpha\right)  \mid
\beta\in\Pom L,\emptyset\neq\alpha\in\Pom A\right\}  \\
& =\smallvee_{a\in A}\smallvee\left\{  \nb\left(  \beta
\cup\{a\}\right)  \mid\beta\in\Pom L\right\}  =\smallvee_{a\in A}%
\psi(\Diamond a)\text{.}%
\end{align*}

For the first mixed relation, and noting that $\nb\emptyset
\wedge\nb\left(  \beta\cup\{b\}\right)  \leq\nb
\emptyset\wedge\nb\{1\}=0$, we have:%
\begin{align*}
\psi\left(  \Box a\right)  \wedge\psi\left(  \Diamond b\right)   &
=\smallvee_{\beta\in\Pom L}\left(  \nb\emptyset
\vee\nb\{a\}\right)  \wedge\nb\left(  \beta
\cup\{b\}\right)  \\
&  =\smallvee_{\beta\in\Pom L}\nb\{a\}\wedge
\nb\left(  \beta\cup\{b\}\right)  \\
&  =\smallvee\left\{  \nb\gamma\mid\exists\beta,\,\gamma\leq
_{C}\{a\},\gamma\leq_{C}\beta\cup\{b\}\right\}  \\
&  \leq\smallvee_{\beta\in\Pom L}\nb\left(  \beta
\cup\{a\wedge b\}\right)  =\psi\left(  \Diamond\left(  a\wedge b\right)
\right)  \text{.}%
\end{align*}

For the second:%
\begin{align*}
\psi\left(  \Box\left(  a\vee b\right)  \right)   &  =\nb
\emptyset\vee\nb\{a\vee b\}\\
&  =\nb\emptyset\vee\nb\{a\}\vee\nb
\{b\}\vee\nb\{a,b\}\\
&  \leq\psi\left(  \Box a\right)  \vee\psi\left(  \Diamond b\right)
\end{align*}
since $\nb\emptyset\vee\nb\{a\}=\psi\left(  \Box
a\right)  $ and $\nb\{b\}\vee\nb\{a,b\}\leq
\psi\left(  \Diamond b\right)  $.

It remains to show that $\phi$ and $\psi$ are mutually inverse.%
\[
\phi\left(  \psi\left(  \Box a\right)  \right)  =\phi\left(  \nb
\emptyset\vee\nb\{a\}\right)  =\Box0\vee\left(  \Box
a\wedge\Diamond a\right)  =\Box a
\]
since $\Box0\wedge\Diamond a\leq\Diamond\left(  0\wedge a\right)  =0$.

Next, to show $\phi\left(  \psi\left(  \Diamond a\right)  \right)  =\Diamond
a$, we have
\begin{align*}
\phi\left(  \psi\left(  \Diamond a\right)  \right)    & =\smallvee_{\beta\in
\Pom L}\left(  \Box\left(  \smallvee\beta\vee a\right)  \wedge
\smallwedge_{b\in\beta}\Diamond b\wedge\Diamond a\right)  \\
& \leq\Diamond a\\
& =\Box\left(  1\vee a\right)  \wedge\Diamond1\wedge\Diamond a=\phi\left(
\nb\{1,a\}\right)  \leq\phi\left(  \psi\left(  \Diamond a\right)
\right)  \text{.}%
\end{align*}

Finally, to show $\psi\left(  \phi\left(  \nb\alpha\right)
\right)  =\nb\alpha$, we have
\begin{align*}
\psi\left(  \phi\left(  \nb\alpha\right)  \right)   &
=\psi\left(  \Box\left(  \smallvee\alpha\right)
\wedge\smallwedge_{a\in\alpha}\Diamond a\right)  \\
&  =\left(  \nb\emptyset\vee\nb\{\smallvee
\alpha\}\right)  \wedge\smallwedge_{a\in\alpha}\smallvee_{\beta_{a}\in
\Pom L}\nb\left(  \beta\cup\{a\}\right)  \text{.}%
\end{align*}
Now,%
\begin{align*}
\smallwedge_{a\in\alpha}\smallvee_{\beta_{a}\in\Pom L}\nb
\left(  \beta\cup\{a\}\right)    & =\smallvee\left\{  \nb\gamma
\mid\forall a\in\alpha,\,\exists\beta_{a}\in\Pom L,\,\gamma\leq
_{C}\beta_{a}\cup\{a\}\right\}  \\
& =\smallvee\left\{  \nb\gamma\mid\gamma\leq_{U}\alpha\right\}
\text{.}%
\end{align*}
Also%
\begin{align*}
\nb\emptyset\wedge\smallvee\left\{  \nb\gamma
\mid\gamma\leq_{U}\alpha\right\}    & =\smallvee\left\{  \nb
\delta\mid\delta\leq_{C}\emptyset,\delta\leq_{U}\alpha\right\}  \\
& =\left\{
\begin{array}
[c]{ll}%
\nb\alpha & \text{if }\alpha=\emptyset\\
0 & \text{if }\alpha\neq\emptyset
\end{array}
\right.  \\
\nb\left\{  \smallvee\alpha\right\}  \wedge\smallvee\left\{
\nb\gamma\mid\gamma\leq_{U}\alpha\right\}    & =\smallvee\left\{
\nb\delta\mid\delta\leq_{C}\left\{  \smallvee\alpha\right\}
,\delta\leq_{U}\alpha\right\}  \\
& =\nb\left(  \alpha\cup\left\{  \smallvee\alpha\right\}  \right)
\\
& =\smallvee\left\{  \nb\left(  \alpha\cup\alpha^{\prime}\right)
\mid\emptyset\neq\alpha^{\prime}\in\Pom\alpha\right\}  \\
& =\left\{
\begin{array}
[c]{ll}%
0 & \text{if }\alpha=\emptyset\\
\nb\alpha & \text{if }\alpha\neq\emptyset
\end{array}
\right.
\end{align*}
It follows that, whether $\alpha$ is empty or not, $\psi\left(  \phi\left(
\nb\alpha\right)  \right)  =\nb\alpha$.
\end{proof}

\subsection{Categorical properties of the $T$-powerlocale}
\label{Section:VT:functorial}

In this section we discuss two categorical properties of the $T$-powerlocale
construction.
First we show how to extend the frame construction $\VT$ to an endofunctor on
the category $\class{Fr}$ of frames.
As a second topic we will see how the natural transformation $\counit\colon  \VP \to
\V_{\Fid}$ (discussed in \S \ref{Nabla:BoxDiamond} as $\counit\colon  \V \to \Fid$) can be generalized to
a natural transformation
\[
\wh{\nta}\colon  \VT \to \V_{\T'},
\]
for any natural transformation $\nta\colon  \T'\to \T$ satisfying some mild
conditions (where $\T$ and $\T'$ are two finitary, weak pullback preserving
set functors).

\subsubsection{$\VT$ is a functor}
We start with introducing a natural way to transform a frame
homomorphism $f\colon \mathbb{L}\to \mathbb{M}$ into a frame
homomorphism from $\VT\bbL$ to $\VT\bbM$.
For that purpose we first prove the following technical lemma.

\begin{lemma}
\label{p:srd2}
Let $\T\colon \Set \to \Set$ be a standard, finitary, weak pullback-preserving functor,
let $\mathbb{L},\mathbb{M}$ be frames and let $f\colon \mathbb{L}\to
\mathbb{M}$ be a frame homomorphism. Then the map $\nabla \cof \T f
\colon \T L \to \V_\T M$, i.e.~$\alpha \mapsto \nabla (\T f)
(\alpha)$, is compatible with the relations $(\nabla 1)$, $(\nabla 2)$
and $(\nabla 3)$.
\end{lemma}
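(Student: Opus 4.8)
The plan is to check, one relation at a time, that the map $g := \nb \cof \T f\colon \T L \to \VT\bbM$ --- i.e.~$\al \mapsto \nb(\T f)(\al)$ --- is compatible with $(\nb 1)$, $(\nb 2)$ and $(\nb 3)$, in each case transporting the instance across $f$ and invoking the corresponding relation that already holds in $\VT\bbM$. The only properties of $f$ I will use are that it is monotone and that it commutes with joins and finite meets, $f\cof\smallvee_{\bbL} = \smallvee_{\bbM}\cof\P f$ and $f\cof\smallwedge_{\bbL} = \smallwedge_{\bbM}\cof\Pom f$; applying the functor $\T$ to the latter two equations gives $\T f\cof\T\smallvee_{\bbL} = \T\smallvee_{\bbM}\cof\T\P f$ and $\T f\cof\T\smallwedge_{\bbL} = \T\smallwedge_{\bbM}\cof\T\Pom f$, which I will use freely.

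The relations $(\nb 1)$ and $(\nb 3)$ are straightforward. For $(\nb 1)$, if $\al \Tleq \be$ then since $a\leq b$ implies $f(a)\leq f(b)$, Lemma~\ref{p:rl2}(3) (with both functions equal to $f$) gives $(\T f)\al \Tleq (\T f)\be$ in $\T M$, and $(\nb 1)$ in $\VT\bbM$ yields $g(\al) \leq g(\be)$. For $(\nb 3)$, fix $\Phi\in\T\P L$ and put $\Phi' := (\T\P f)\Phi$; then $g\big((\T\smallvee_{\bbL})\Phi\big) = \nb\big((\T\smallvee_{\bbM})\Phi'\big)$, which by $(\nb 3)$ in $\VT\bbM$ is $\leq \smallvee\{\nb\be' \mid \be'\in\nbsem_{M}(\Phi')\}$. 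By naturality of the distributive law $\nbsem\colon\T\P\to\P\T$ (Fact~\ref{f:db}), $\nbsem_{M}(\Phi') = (\P\T f)\big(\nbsem_{L}(\Phi)\big) = \{(\T f)\be \mid \be\Tin\Phi\}$, so this bound equals $\smallvee\{g(\be)\mid\be\Tin\Phi\}$, as compatibility with $(\nb 3)$ requires.

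The relation $(\nb 2)$ is the heart of the matter. Fix $\Gamma\in\Pom\T L$ and set $\Gamma' := (\Pom\T f)\Gamma$, so that $\smallwedge_{\al\in\Gamma}g(\al) = \smallwedge_{\al'\in\Gamma'}\nb\al'$; by $(\nb 2)$ in $\VT\bbM$ this is $\leq \smallvee\{\nb\big((\T\smallwedge_{\bbM})\Psi'\big) \mid \Psi'\in\SRD(\Gamma')\}$. Now Lemma~\ref{l:bl1} applies: because $\Gamma' = (\Pom\T f)\Gamma$, the map $\T\Pom f$ restricts to a surjection of $\SRD(\Gamma)$ onto $\SRD(\Gamma')$, so each $\Psi'\in\SRD(\Gamma')$ is $(\T\Pom f)\Psi$ for some $\Psi\in\SRD(\Gamma)$, whence $\nb\big((\T\smallwedge_{\bbM})\Psi'\big) = \nb\big((\T f)(\T\smallwedge_{\bbL})\Psi\big) = g\big((\T\smallwedge_{\bbL})\Psi\big)$. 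Thus the displayed bound is $\leq\smallvee\{g\big((\T\smallwedge_{\bbL})\Psi\big) \mid \Psi\in\SRD(\Gamma)\}$, which is exactly compatibility with $(\nb 2)$; the case $\Gamma=\emptyset$ (where $\SRD(\emptyset)=\T\{\emptyset\}$ by Lemma~\ref{f:srd1}) is subsumed by the same argument.

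The two genuinely substantial inputs are therefore Lemma~\ref{l:bl1} for $(\nb 2)$ and naturality of $\nbsem$ for $(\nb 3)$, both already proved; everything else is bookkeeping with the frame-homomorphism equations pushed through $\T$. The main thing I expect to have to be careful about --- and hence the nearest thing to an obstacle --- is keeping the various induced maps $\T\P f$, $\T\Pom f$, $\Pom\T f$ and $\P\T f$ cleanly apart, since they play different roles in the three cases.
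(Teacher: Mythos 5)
Your proposal is correct and follows essentially the same route as the paper: transport each relation across $f$ via the lifted frame-homomorphism equations, use Lemma~\ref{l:bl1} (surjectivity of $\T\Pom f$ onto $\SRD(\Gamma')$) for $(\nb2)$, and identify the lifted members of $(\T\P f)\Phi$ with the $f$-images of those of $\Phi$ for $(\nb3)$. The only cosmetic difference is that you justify this last identification by the naturality of $\nbsem$ (Fact~\ref{f:db}) where the paper derives it directly from Fact~\ref{f:rl}; both are valid and already available.
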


\begin{proof}
We abbreviate $\hs := \nb\circ \T f$, that is, for $\al \in \T L$, we
define $\hs\al := \nb(\T f)(\al)$.

In order to prove that $\hs$ is compatible with $(\nb1)$, we need to show
that
\begin{equation}
\label{eq:comp1}
\text{for all } \al,\be \in \T L\colon  \;\al \rel{\Tb {\leq_{\bbL}}} \be
\text{ implies } \hs\al \leq_{\VT\bbM} \hs\be.
\end{equation}
To see this, assume that $\al,\be \in \T L$ are such that $\al
\rel{\Tb {\leq_{\bbL}}} \be$.
From this it follows by Lemma~\ref{p:rl2} and the assumption that $f$
is a frame homomorphism, that $(\T f) (\al) \rel{\Tb {\leq_{\bbM}}} (\T f) (\be)$.
Then by $(\nb1)_{\bbM}$ we obtain that $\hs\al \leq_{\VT\bbM} \hs\be$, as
required.

Proving compatibility with $(\nb2)$ boils down to showing
\begin{equation}
\label{eq:comp2}
\text{for all } \Ga \in \Pom\T L\colon
\bw_{\al\in\Ga} \hs\al \leq
\bv \{ \hs(\T\bwsmall) (\Psi) \mid \Psi \in \SRD(\Ga) \}.
\end{equation}
For this purpose, given $\Ga \in \Pom\T L$, let $\Ga' \in \Pom\T M$ denote
the set $\Ga' := (\Pom\T f) (\Ga) = \{ (\T f) (\al) \mid \al \in \Ga \}$.
Then we may observe
\begin{align*}
\bw_{\al\in\Ga} \hs\al
&= \bv \{ \nb(\T\bwsmall) (\Psi) \mid \Psi \in \SRD(\Ga') \}
&& \text{$(\nb1)$}
\\&\leq \bv \{ \nb(\T\bwsmall)(\T\Pom f) (\Phi)  \mid \Phi \in \SRD(\Ga) \}
&& \text{(Lemma~\ref{l:bl1})}
\\&= \bv \{ \nb(\T f)(\T\bwsmall) (\Phi)  \mid \Phi \in \SRD(\Ga) \}
&& \text{(\dag)}
\\&= \bv \{ \hs(\T\bwsmall) (\Phi)  \mid \Phi \in \SRD(\Ga) \}
&& \text{(definition of $\hs$)}
\end{align*}
Here the identity marked (\dag) is easily justified by $f$ being a
homomorphism: it follows from $f \circ {\bw} = \bw \circ (\Pom f)$ and
functoriality of $\T$ that $(\T f) \circ (\T\bw) = (\T\bw)\circ(\T\Pom f)$.

Finally, for compatibility with $(\nb3)$ we need to verify that
\begin{equation}
\label{eq:comp3}
\text{for all } \Phi \in \T\P L\colon
\hs (\T\bvsmall) (\Phi) \leq
\bv \{ \hs\be \mid \be \rel{\Tb{\in}} \Phi \}.
\end{equation}
To prove this, we calculate for a given $\Phi\in\T\P L$:
\begin{align*}
\hs(\T\bvsmall) (\Phi)
&= \nb(\T f)(\T\bvsmall) (\Phi)
&& \text{(definition of $\hs$)}
\\&= \nb(\T \bvsmall)(\T\P f) (\Phi)
&& \text{($f$ a frame homomorphism)}
\\&\leq \bv \{ \nb\be \mid \be \rel{\Tb{\in}} (\T\P f) (\Phi) \}
&& \text{$(\nb3)_{\bbM}$}
\\ & = \bv \{ \nb(\T f) (\ga) \mid \ga \rel{\Tb{\in}}\Phi \}
&& \text{(\ddag)}
\\ & = \bv \{ \hs\ga \mid \ga \rel{\Tb{\in}}\Phi \}
&& \text{(definition of $\hs$)}
\end{align*}
Here the identity (\ddag) follows from the observation that for all $\be
\in \T M$ and $\Phi \in \T\P L$, we have $\be \rel{\Tb{\in}} (\T\P f) (\Phi)$
iff $\be$ is of the form $\be= (\T f) (\ga)$ for some $\ga \in \T L$.
Using Fact~\ref{f:rl}, this is easily derived from the observation that for
$b \in M$ and $A \in \P L$, we have $b \in (\P f)A$ iff $b = f(c)$ for some
$c \in A$.
\end{proof}

\noindent
Lemma~\ref{p:srd2} justifies the following definition.

\begin{definition}
Let $\T\colon \Set \to \Set$ be a standard, finitary, weak
pullback-preserving functor and let  $f\colon \mathbb{L}\to \mathbb{M}$ be a frame homomorphism.  We define
$\V_\T f \colon \V_\T \mathbb{L} \to \V_\T \mathbb{M}$ to be the unique
frame homomorphism extending \[\nabla \cof \T f \colon \T L \to \V_\T M.\]
\end{definition}

\begin{theorem}
Let $\T$ be a standard, finitary, weak pullback-preserving functor.
Then the operation $\V_\T$ defined above is an endofunctor on the
category $\class{Fr}$.
\end{theorem}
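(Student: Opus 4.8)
The plan is to verify the two functor axioms: that $\VT$ preserves identities and that it preserves composition. Both will follow from the universal property of frame presentations together with the explicit description of $\VT f$ as the unique frame homomorphism extending $\nabla \cof \T f$.

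First I would check preservation of identities. Given a frame $\bbL$, the map $\V_\T(\id_{\bbL})$ is by definition the unique frame homomorphism $\VT\bbL \to \VT\bbL$ extending $\nabla \cof \T(\id_L) \colon \T L \to \VT L$. By functoriality of $\T$ we have $\T(\id_L) = \id_{\T L}$, so this map is simply $\nabla \colon \T L \to \VT L$, the insertion of generators. But the identity frame homomorphism $\id_{\VT\bbL}$ also extends $\nabla$, and by the uniqueness clause in the universal property of $\Fr\struc{\T L \mid (\nabla 1),(\nabla 2),(\nabla 3)}$, there is only one such homomorphism. Hence $\V_\T(\id_{\bbL}) = \id_{\VT\bbL}$.

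Next I would check preservation of composition. Let $f\colon \bbL \to \bbM$ and $g\colon \bbM \to \bbN$ be frame homomorphisms. Both $\V_\T(g\cof f)$ and $\V_\T g \cof \V_\T f$ are frame homomorphisms from $\VT\bbL$ to $\VT\bbN$. By the uniqueness part of the universal property it suffices to show they agree on the generators $\nabla\alpha$ for $\alpha \in \T L$. On the one hand, $\V_\T(g\cof f)$ extends $\nabla \cof \T(g\cof f) = \nabla \cof (\T g \cof \T f)$, using functoriality of $\T$. On the other hand, for $\alpha \in \T L$ we compute $(\V_\T g \cof \V_\T f)(\nabla\alpha) = \V_\T g(\nabla(\T f)(\alpha)) = \nabla(\T g)(\T f)(\alpha) = \nabla(\T g \cof \T f)(\alpha)$, where the middle equality uses that $\V_\T f$ sends $\nabla\beta$ to $\nabla(\T f)(\beta)$ and $\V_\T g$ sends $\nabla\beta$ to $\nabla(\T g)(\beta)$. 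Thus both homomorphisms extend the same map on generators, so they are equal. Finally I would note that $\V_\T f$ really is a frame homomorphism: this is exactly the content of the preceding definition, which is justified by Lemma~\ref{p:srd2} (compatibility with the Carioca axioms) together with Fact~\ref{Prelim:FramesPresent}.

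There is no serious obstacle here; the work has already been done in Lemma~\ref{p:srd2}, which establishes that $\nabla \cof \T f$ respects the relations and hence that $\V_\T f$ is well defined as a frame homomorphism. The only mild point to be careful about is bookkeeping with functoriality of $\T$ (the identities $\T(\id) = \id$ and $\T(g\cof f) = \T g \cof \T f$) and with the fact that the uniqueness clause of the frame presentation forces agreement on generators to imply agreement everywhere; once these are in place the verification is routine.
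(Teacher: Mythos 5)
Your proposal is correct and follows essentially the same route as the paper: well-definedness of $\V_\T f$ is delegated to Lemma~\ref{p:srd2}, and both functor axioms are verified by checking agreement on generators and invoking the uniqueness clause of the presentation's universal property. The paper in fact only writes out the composition case explicitly, so your explicit treatment of preservation of identities is a small but welcome addition.
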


\begin{proof}
Since for an arbitrary $f\colon  \bbL \to \bbM$ we have ensured by definition
that $\VT F$ is a frame homomorphism from $\VT\bbL$ to $\VT\bbM$, it
is left to show that $\VT$ maps the identity map of a frame to the
identity map of its T-powerlocale, and distributes over function composition.
We confine our attention to the second property.

Let $f\colon  \mathbb{K} \to \bbL$ and $g\colon  \bbL \to \bbM$ be two frame
homomorphisms.
In order to show that $\VT(g\circ f) = \VT g \circ \VT f$, first recall that
$\VT (g \circ f)$ is by definition the unique frame homomorphism extending
the map $\nb_{\bbM}\circ \T(g\circ f)\colon \T K \to \VT\bbM$.
Hence, it suffices to prove that the map $\VT g \circ \VT f$, which is
obviously a frame homomorphism, extends $\nb_{\bbM}\circ \T(g\circ f)$.
But it is easy to see why this is the case: given an arbitrary element
$\al \in \T K$, a straightforward unraveling of definitions shows that
\[
(\VT g \circ \VT f) (\al) =
\VT g (\nb_{\bbL}(\T f) (\al)) =
\nb_{\bbM}(\T g)(\T f) (\al) =
\nb_{\bbM}T(g\circ f) (\al),
\]
as required.
\end{proof}

\subsubsection{Natural transformations between $\VT$ and $\VTp$}
Now that we have seen how each finitary, weak pullback preserving set functor
$\T$ induces a functor $\VT$ on the category of frames, we investigate the
relation \emph{between} two such functors $\VT,\VTp$.
In fact, we have already seen an example of this: recall that in \S \ref{Nabla:BoxDiamond} we
mentioned Johnstone's result~\citep{VietLoc} that the standard Vietoris
functor $\V$ is in fact a \emph{comonad} on the category of frames.
In our nabla-based presentation of this functor as $\V = \VP$, thinking
of the identity functor on the category $\class{Fr}$ as the Vietoris functor
$\V_{\Fid}$, we can see the \emph{counit} of this comonad as a natural
transformation
\[
\counit\colon  \VP \to V_{\Fid},
\]
given by $\counit_{\bbL}\colon  \nb A \mapsto \bw A$.
More precisely, we can show that the map $\hs\colon  \Pom L \to L$ given by
$\hs A := \bw A$ is compatible with the $\nb$-axioms, and hence can be
uniquely extended to the homomorphism $\counit_{\bbL}$; subsequently we
can show that this $\counit$ is natural in $\bbL$.
Recall that in the case of a concrete topological space $(X,\tau)$, this
counit corresponds on the dual side to the singleton map $\si_{X}\colon  s \mapsto
\{ s \}$ which provides an embedding of a compact Hausdorff topology into
its Vietoris space.

We will now see how to generalize this picture, of the natural transformation
$\counit\colon  \VP \to \V_{\Fid}$ being induced by the singleton natural
transformation $\si\colon  \Fid \to \Pom$, to a more general setting.
First consider the following definition.

\begin{definition}
\label{d:nt1}
Let $\T$ and $\T'$ be standard, finitary, weak pullback-preserving functors.
A natural transformation $\nta\colon  \T'\to\T$ is said to \emph{respect relation
lifting} if for any relation $R \sse X \times Y$ we have, for all
$\al'\in \T' X$ and $\be' \in \T' Y$:
\begin{equation}
\label{eq:ntrl}
\text{if } \al' \rel{\rl{T'} R} \be' \text{ then } \nta_{X}(\al') \rel{\rl{T}R}
\nta_{Y}(\be').
\end{equation}
We call $\nta$ \emph{base-invariant} if it commutes with $\Base$, that is,
\begin{equation}
\Base^{\T'} = \Base^{\T}\cof\nta.
\end{equation}
for any set $X$.
\end{definition}
\begin{example} We record three examples of base-invariant natural
transformations which respect relation lifting.
\begin{enumerate}
\item
The base transformation
$\Base^{\T}\colon  \T \to \Pom$;
\item The  singleton natural transformation
$\si \colon \operatorname{Id} \to \Pom$, which is in fact a special
case of (1);
\item
The diagonal map $\de$ (given by $\de_{X}\colon
x \mapsto (x,x)$); it is straightforward to check that as a natural
transformation, $\de\colon  \Fid \to \Fid \times \Fid$ also satisfies
both properties of Definition~\ref{d:nt1}.
\end{enumerate}
\end{example}

As we will see now, every base-invariant natural transformation $\nta\colon  \T'
\to \T$ that respects relation lifting, induces a natural transformation
$\wh{\nta}\colon  \VT \to \VTp$.
In particular, the natural transformation $\counit\colon  \V\to\Fid$ can be seen as
$\counit = \wh{\si}$, where $\si\colon  \Fid \to \Pom$ is the singleton transformation
discussed above.

\begin{theorem}
\label{th:nat-trans}
Let $\T$ and $\T'$ be standard, finitary, weak pullback-preserving functors,
assume that $\nta\colon  \T'\to\T$ is a base-invariant natural transformation that
respects relation lifting, and let $\mathbb{L}$ be a frame.
Then the map from $\T L$ to $\VTp L$ given by
\[
\al \mapsto \bv \{ \nb \al' \mid \al' \in \T' L, \nta(\al') \Tleq \al \}
\]
specifies a frame homomorphism
\[
\wh{\nta}_{\bbL} \colon \VT\bbL \to \VTp\bbL
\]
which is natural in $\bbL$.
\end{theorem}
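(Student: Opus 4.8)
The plan is to invoke the universal property of the presentation $\VT\bbL=\Fr\struc{\T L\mid(\nb1),(\nb2),(\nb3)}$: it suffices to check that the assignment
\[
g\colon\al\longmapsto\bv\{\nb\al'\mid\al'\in\T'L,\ \nta(\al')\Tleq\al\}
\]
is compatible with the three Carioca relations, for then (using Corollary~\ref{VT:Nabla2prime} to swap in the more convenient $(\nb2')$) $g$ extends uniquely to a frame homomorphism $\wh\nta_{\bbL}\colon\VT\bbL\to\VTp\bbL$; afterwards we verify naturality in $\bbL$. The device that recurs in almost every step is: whenever two tuples are linked by a lifted relation $\Tb R$, extract a witness in the span $\T R$, restrict it to its $\Base$ (finite, by finitariness), and build the auxiliary map needed to proceed by taking finite meets or joins on the fibres of that restricted witness. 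The two standing hypotheses on $\nta$ are exactly what make this work: ``$\nta$ respects relation lifting'' transports $\T'$-level links to $\T$-level links, and base-invariance keeps the fibres over the correct sets.

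Compatibility with $(\nb1)$ is immediate: if $\nta(\al')\Tleq\al$ and $\al\Tleq\be$ then $\nta(\al')\Tleq\be$ since $(\T L,\Tleq)$ is a preorder (Lemma~\ref{p:rl2}(1)), so $g(\al)\leq g(\be)$. For $(\nb2')$, fix $\Ga\in\Pom\T L$ and distribute the finite meet $\bw_{\ga\in\Ga}g(\ga)$ over the joins that define $g$; a typical term is $\bw_{\ga\in\Ga}\nb h(\ga)$ for a choice function $h$ with $\nta(h(\ga))\Tleq\ga$ for every $\ga$. Applying $(\nb2')$ inside $\VTp\bbL$ to the finite set $\{h(\ga)\mid\ga\in\Ga\}$, each generator $\nb\al'$ occurring on the right has $\al'\Tpleq h(\ga)$ for all $\ga$; as $\nta$ respects relation lifting and $\Tleq$ is transitive, $\nta(\al')\Tleq\ga$ for all $\ga$, so $\nta(\al')$ is itself a witness in $(\nb2')$ for $\Ga$ and $\nb\al'\leq g(\nta(\al'))$ by reflexivity. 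Hence $\bw_{\ga\in\Ga}g(\ga)\leq\bv\{g(\al)\mid\al\Tleq\ga\text{ for all }\ga\in\Ga\}$. (Alternatively $(\nb2)$ itself can be checked directly using Lemma~\ref{Flatsite:LowerBounds}.)

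The hard case is $(\nb3)$, where we must show $\nb\al'\leq\bv\{g(\be)\mid\be\Tin\Phi\}$ whenever $\nta(\al')\Tleq(\T\bvsmall)\Phi$ (for $\Phi\in\T\P L$). Form the pullback of $\pi'\colon({\leq})\to L$ along $\bvsmall\colon\P L\to L$, identified with $P=\{(a,A)\in L\times\P L\mid a\leq\bv A\}$ with projections $q_L(a,A)=a$ and $q_{\P L}(a,A)=A$. Weak-pullback preservation turns the witness of $\nta(\al')\Tleq(\T\bvsmall)\Phi$, together with $\Phi$, into some $\tau\in\T P$ with $(\T q_L)\tau=\nta(\al')$ and $(\T q_{\P L})\tau=\Phi$. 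Put $P_0:=\Base^\T(\tau)$, a finite relation with $\tau\in\T P_0$; by naturality of $\Base$ and base-invariance of $\nta$, the first coordinates occurring in $P_0$ all lie in $C':=\Base^{\T'}(\al')$. Define $e\colon C'\to\P L$ by $e(a):=\{b\in L\mid\forall A,\ (a,A)\in P_0\Rightarrow\exists c\in A.\,b\leq c\}$; since each such $(a,A)$ has $a\leq\bv A$ and only finitely many $A$ are paired with a given $a$, distributivity of finite meets over joins gives $a\leq\bv e(a)$. With $\Psi:=(\T'e)(\al')\in\T'\P L$, Lemma~\ref{p:rl2}(2) yields $\al'\Tpleq(\T'\bvsmall)\Psi$, so $(\nb1)$ and $(\nb3)$ in $\VTp\bbL$ give $\nb\al'\leq\bv\{\nb\ga'\mid\ga'\Tpin\Psi\}$. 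Finally, for $\ga'\Tpin\Psi$, unravelling the lifted membership via Fact~\ref{f:rl} and using that $\nta$ respects relation lifting, one composes $\cv{\Graph{q_L}}$ and $\Graph{q_{\P L}}$ onto the defining relation of $e$ and checks that this composite sits inside ${\leq}\cor{\in}$, concluding $\nta(\ga')\Tleq\be$ for some $\be\Tin\Phi$; hence $\nb\ga'$ is a disjunct of $g(\be)$. Chaining the displayed inequalities closes $(\nb3)$.

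For naturality, given a frame homomorphism $f\colon\bbL\to\bbM$ we check that $\VTp f\circ\wh\nta_{\bbL}$ and $\wh\nta_{\bbM}\circ\VT f$ agree on generators $\nb\al$ ($\al\in\T L$), i.e.\ that $\bv\{\nb(\T'f)\al'\mid\nta(\al')\Tleq\al\}$ equals $\bv\{\nb\be'\mid\be'\in\T'M,\ \nta(\be')\rel{\rl{\T}{\leq_{\bbM}}}(\T f)\al\}$. The inclusion from left to right is routine, using naturality of $\nta$ (so $\nta_M(\T'f\al')=(\T f)\nta_L(\al')$) and monotonicity of $\T f$ (Lemma~\ref{p:rl2}(3)). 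For the converse inclusion one reuses the extraction device, now for $R':=\{(c,a)\mid c\leq_{\bbM}f(a)\}$, passing to a finite $\Base^\T$ and defining $\phi$ that sends each $c\in\Base^{\T'}(\be')$ to the finite meet in $\bbL$ of all $a$ paired with it; since $f$ preserves finite meets, $\al':=(\T'\phi)(\be')$ satisfies $\nta_L(\al')\Tleq\al$ and $\be'\rel{\rl{\T'}{\leq_{\bbM}}}(\T'f)\al'$, so $\nb\be'\leq\nb(\T'f)\al'$ by $(\nb1)$ in $\VTp\bbM$. I expect $(\nb3)$ to be the real crux: the other cases reduce to transitivity and distributivity, whereas $(\nb3)$ forces one to genuinely rebuild a $\T'$-indexed family over $\P L$ out of the $\T$-level data about $\Phi$, and it is here that weak-pullback preservation and base-invariance are both indispensable.
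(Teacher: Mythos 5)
Your proposal is correct and follows essentially the same route as the paper's proof: check compatibility of the generator-level map with $(\nb1)$, $(\nb2')$ and $(\nb3)$, with the hard $(\nb3)$ case handled by rebuilding an element of $\T'\P L$ over $\Base^{\T'}(\al')$ from the data of which joins cover which base elements, then applying $(\nb1)$ and $(\nb3)$ in $\VTp\bbL$ and transporting lifted members back via the two hypotheses on $\nta$, and finally verifying naturality on generators through the same base-level meet construction. The only cosmetic divergence is in $(\nb3)$, where you extract an explicit witness $\tau$ in the lifted span and use the down-sets $e(a)$ built from $\Base^{\T}(\tau)$, whereas the paper works with the map $\cH$ on $\Base^{\T'}$ together with choice-function meets; both devices yield the same chain of inequalities.
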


\begin{proof}
We let $\hs\colon  \T L \to L$ denote the map given in the statement of the Theorem,
that is, $\hs \al := \bv \{ \nb \al' \mid \al' \in \T' L, \nta(\al') \Tleq \al
\}$.
We will first prove that this map is compatible with, respectively, ($\nb1$),
($\nb2$) and ($\nb3$), and then turn to the naturality of the induced frame
homomorphism.

\begin{claimfirst}
\label{cl:nt:1}
The map $\hs$ is compatible with ($\nb1$).
\end{claimfirst}

\begin{pfclaim}
To show that $\hs$ is compatible with ($\nb1$), take two elements $\al,\be \in
\T L$ such that $\al \Tleq \be$.
Then for any $\al'\in \T' L$ such that $\nta(\al') \Tleq \al$, by transitivity
of $\Tleq$ (Fact~\ref{f:rl}(5)), we obtain that $\nta(\al')  \Tleq \be$.
From this it is immediate that $\hs \al \leq \hs \be$, as required.
\end{pfclaim}

\begin{claim}
\label{cl:nt:2}
The map $\hs$ is compatible with ($\nb2$).
\end{claim}

\begin{pfclaim}
For compatibility with ($\nb2$), it suffices to show compatibility with
($\nb2'$).
That is, for $\Ga \in \Pom\T L$, we will verify that
\begin{equation}
\label{eq:nt2:a}
\bw \{ \hs \ga \mid \ga \in \Ga \} \leq
\bv \{ \hs \be \mid \be \Tleq \ga, \text{ for all } \ga \in \Ga \}.
\end{equation}
We start with rewriting the left hand side of \eqref{eq:nt2:a} into
\begin{align*}
\bw \{ \hs \ga \mid \ga \in \Ga \}
&= \bw \Big\{
\bvsmall \left\{ \nb \ga' \mid \nta(\ga') \Tleq \ga \right\}
\mid \ga\in\Ga \Big\}
&& \text{(definition of $\hs$)}
\\ &= \bv \Big\{
\bwsmall \left\{ \phi_{\ga} \mid \ga \in \Ga \right\}
\mid \phi \in \cC_{\Ga} \Big\}
&& \text{(frame distributivity)}
\end{align*}
where we define $\cC_{\Ga} := \{ \phi\colon  \Ga \to \T' L \mid
\nta(\phi_{\ga}) \Tleq \ga, \text{ for all } \ga \in \Ga \}$.

For any map $\phi \in \cC_{\Ga}$ we may calculate
\begin{align*}
&\bw \left\{ \phi_{\ga} \mid \ga \in \Ga \right\}\\
&= \bv \{ \nb\ga' \mid \ga' \Tpleq \phi_{\ga}, \forall \ga \in \Ga \}
&& \text{($\nb2'$)}
\\&\leq \bv \{ \nb\ga' \mid \nta(\ga') \Tleq \nta(\phi_{\ga}),
\forall \ga \in \Ga\}
&& \text{($\nta$ respects relation lifting)}
\\&\leq \bv \{ \nb\ga' \mid \nta(\ga') \Tleq \ga,  \forall \ga \in \Ga \}
&& \text{($\phi\in\cC_{\Ga}$, transitivity of $\Tleq$)}
\\&= \bv \Big\{
\bvsmall \{ \nb\ga' \mid \nta(\ga') \Tleq \be \}
\mid \be \Tleq \ga,  \forall\ga \in \Ga \Big\}
&& \text{(associativity of $\bvsmall$)}
\\&= \bv \{ \hs \be \mid \be \Tleq \ga,  \forall \ga \in \Ga \}
&& \text{(definition of $\hs$)}
\end{align*}

From the above calculations, \eqref{eq:nt2:a} is immediate.
\end{pfclaim}

\begin{claim}
\label{cl:nt:3}
The map $\hs$ is compatible with ($\nb3$).
\end{claim}

\begin{pfclaim}
We need to show, for an arbitrary but fixed set $\Phi \in \T\P L$, that
\begin{equation}
\label{eq:nt2:1}
\hs(\T \bvsmall) (\Phi) = \bv \{ \hs \al \mid \al \Tin \Phi \}.
\end{equation}
By definition, on the left hand side of \eqref{eq:nt2:1} we find
\[
\hs(\T \bvsmall) (\Phi) =
\bv \{ \nb\be' \mid \nta(\be') \Tleq (\T\bvsmall) (\Phi) \},
\]
while on the right hand side we obtain, by definition of $\hs$,
\begin{align*}
\bv \{ \hs \al \mid \al \Tin \Phi \}
&= \bv \Big\{ \bvsmall \{ \nb \al' \mid \nta(\al') \Tleq \al \}
\mid \al \Tin \Phi
\Big\}
\\ &= \bv \Big\{ \nb \al' \mid \nta(\al') \rel{\Tb{({\leq}\cor{\in})}} \Phi
\Big\}
\end{align*}
where the latter equality is by associativity of $\bv$, and the
compositionality of relation lifting (Fact~\ref{f:rl}(5)).

As a consequence, in order to establish the compatibility of $\hs$ with
($\nb3$), it suffices to show that
\begin{equation}
\label{eq:nt2:2}
\nb\be' \leq
\bv \Big\{ \nb \al' \mid \nta(\al') \rel{\Tb{({\leq}\cor{\in})}} \Phi \Big\},
\text{ for any } \be' \text{ with }
\nta(\be') \Tleq (\T\bvsmall) (\Phi).
\end{equation}
Let $\be'$ be an arbitrary element of $\T L$ such that $\nta(\be') \Tleq
(\T\bvsmall) (\Phi)$.
Our goal will be to find a set $\Phi' \in \T'\P L$ satisfying \eqref{eq:nt2:6},
\eqref{eq:nt2:7} and \eqref{eq:nt2:8} below: clearly this will satisfy to
prove \eqref{eq:nt2:2}.

By Fact~\ref{f:base} we obtain that
\[
\Base^{\T}(\nta\be') \Pleq
\Base^{\T}((\T\bvsmall)(\Phi)) = (\P\bvsmall)\Base^{\T}(\Phi),
\]
and since $\nta$
is base-invariant, we have $\Base^{T'}(\be') = \Base^{\T}(\nta\be')$.
Combining these facts we see that $\Base^{T'}(\be') \Pleq (\P\bv)\Base^{\T}
(\Phi)$.
This motivates the definition of the following map $\cH\colon  \Base^{\T'}(\be') \to
\Pom\P L$:
\[
\cH(b) := \{ B \in \Base^{\T}(\Phi) \mid b \leq \bvsmall B \}.
\]
From the definitions it is immediate that
\begin{equation}
\label{eq:nt2:3}
\text{for all } b\in\Base^{\T'}(\be'):\; b \leq
\bw \big\{ \bvsmall B \mid B \in \cH(b) \big\}.
\end{equation}
Also, given a set $\cB \in \Pom\P L$, let $\CF_{B}$ be the collection of choice
functions on $\cB$, that is:
\[
\CF_{\cB} := \{ f\colon  \cB \to L \mid f(B) \in B \text{ for all } B \in \cB \}.
\]
Then it follows by frame distributivity that
\begin{equation}
\label{eq:nt2:4}
\bw \big\{ \bvsmall B \mid B \in \cB \big\} =
\bv \big\{ \bwsmall (\P f)(\cB) \mid f \in \CF_{\cB} \big\}.
\end{equation}
Define the map $K\colon  \Pom\P L \to \P L$ by
\[
K(\cB) := \big\{\bwsmall (\P f)(\cB) \mid f \in \CF_{\cB} \big\},
\]
then it follows from \eqref{eq:nt2:3}, \eqref{eq:nt2:4} and the definitions
that
\begin{equation}
\label{eq:nt2:5}
\text{for all } b\in\Base^{\T'}(\be'):\; b \leq \bv K(\cH(b)).
\end{equation}
As a corollary, if we define
\[
\Phi' := (\T' K)(\T' \cH)(\be'),
\]
then it follows from \eqref{eq:nt2:5}, by the properties of relation lifting,
that $\be' \Tpleq (\T'\bv)(\Phi')$, so that an application of ($\nb1$)
yields
\begin{equation}
\label{eq:nt2:6}
\nb\be' \leq \nb(\T'\bvsmall)(\Phi').
\end{equation}
Also, on the basis of an application of ($\nb3$) we may conclude that
\begin{equation}
\label{eq:nt2:7}
\nb(\T'\bvsmall)(\Phi') \leq \bv \{ \nb\ga' \mid \ga' \Tpin \Phi' \}.
\end{equation}
This means that we are done with the proof of \eqref{eq:nt2:2} if we can
show that
\begin{equation}
\label{eq:nt2:8}
\text{ for any } \ga' \in \T' L, \text{ if } \ga' \Tpin \Phi'
\text{ then } \nta(\ga') \rel{\Tb{({\leq}\cor{\in})}} \Phi.
\end{equation}

For a proof of \eqref{eq:nt2:8}, let $\ga'$ be an arbitrary $\T'$-lifted
member of $\Phi'$ and recall that $\Phi' = (\T K)(\T \cH)(\be')$.
Then it follows by the assumption that $\nta$ respects relation lifting,
that
$\nta(\ga') \Tin \nta(\Phi') = (\T K)(\T \cH)(\nta(\be'))$.
Given our assumption on $\be'$, this means that the relation between
$\nta(\ga')$ and $\Phi$ can be summarized as
\begin{equation}
\label{eq:nt:9}
\nta(\ga') \Tin (\T K)(\T \cH)(\be) \text{ and } \be \Tleq (\T\bvsmall) (\Phi)
\text{ for some } \be \in \T\Base^{\T'}(\be'),
\end{equation}
where for $\be$ we may take $\nta(\be')$.

Returning to the ground level, observe that for any $c \in L$, $A \in
\Base^{\T}(\Phi)$,
we have
\begin{equation}
\label{eq:nt:10}
\text{if } c \in K\cH (b) \text{ and } b \leq \bvsmall A,
\text{ for some } b \in \Base^{\T'}(\be'),
\text{ then } c \rel{({\leq}\cor{\in})} A.
\end{equation}
To see why this is the case, assume that $c \in K\cH(b)$ and $b \leq
\bvsmall A$, for some $b \in \Base^{\T'}(\be')$.
Then by definition of $\cH$ we find $A \in \cH(b)$, while $c \in K\cH(b)$
simply means that $c = \bw \{ f(B) \mid B \in \cH(b)\}$, for some $f \in
\CF_{\cH(b)}$.
But then it is immediate that $c \leq f(A)$, while $f(A) \in A$ by
definition of $\CF_{\cH(b)}$.
Thus $f(A)$ is the required element witnessing that $c \rel{({\leq}\cor{\in})}
A$.

But by the properties of relation lifting, we may derive from \eqref{eq:nt:10}
that
\begin{align}
&\text{if } \ga \Tin (\T K)(\T \cH)(\be) \text{ and } \be \Tleq
(\T\bvsmall) (\Phi)
\text{ for some } \be \in \T\Base^{\T'}(\be'), \notag\\
&\text{ then } \ga \rel{\Tb({\leq}\cor{\in})} \Phi, \label{eq:nt:11}
\end{align}
so that it is immediate by \eqref{eq:nt:9} that
$\nta(\ga') \rel{\Tb{({\leq}\cor{\in})}} \Phi$.
This proves \eqref{eq:nt2:8}.

As mentioned already, the compatibility of $\hs$ with ($\nb3$) is immediate
by \eqref{eq:nt2:6}, \eqref{eq:nt2:7} and \eqref{eq:nt2:8}, and so this
finishes the proof of Claim~\ref{cl:nt:3}.
\end{pfclaim}

As a corollary of the Claims~\ref{cl:nt:1}--\ref{cl:nt:3}, we may uniquely
extend $\hs$ to a homomorphism $\wh{\nta}_{\bbL}\colon  \VT\bbL \to \VTp\bbL$.
Clearly then, in order to prove the theorem it suffices to prove the
following claim.

\begin{claim}
\label{cl:nt4}
The family of homomorphisms $\wh{\nta}_{\bbL}$ constitutes a natural
transformation $\wh{\nta}\colon  \VT \to \VTp$.
\end{claim}

\begin{pfclaim}
Given two frames $\bbL$ and $\bbM$ and a frame homomorphism $f\colon  \bbL \to \bbM$,
we need to show that the following diagram commutes:
\\ \centerline{\xymatrix{%
\VT\bbL \ar[r]^{\wh{\nta}_{\bbL}} \ar[d]_{\VT f}
& \VTp\bbL \ar[d]^{\VTp f}
\\
\VT\bbM \ar[r]^{\wh{\nta}_{\bbM}} & \VTp\bbM
}}
\\[2mm]
To show this, take an arbitrary element $\al \in \T L$, and consider the
following calculation:

\begin{align*}
&(\VTp f)(\wh{\nta}_{\bbL}(\nb\al))\\
&= (\VTp f)(\hs \al)
&& \text{(definition of $\wh{\nta}_{\bbL}$))}
\\&= (\VTp f)\left(\bv \big\{ \nb\be' \mid
\nta_{L}(\be') \Tleq \al \big\} \right)
&& \text{(definition of $\hs$))}
\\&= \bv \big\{ (\VT' f)(\nb\be') \mid \nta_{L}(\be') \Tleq \al \big\}
&& \text{($\VTp f$ is a frame homomorphism)}
\\&= \bv \big\{ \nb(\T' f)(\be') \mid \nta_{L}(\be') \Tleq \al \big\}
&& \text{(definition of $\VTp f$))}
\\&= \bv \big\{ \nb\de' \mid \nta_{M}(\de') \Tleq (\T f) (\al)\big\}
&& \text{(\dag)}
\\&= \hs(\T f)(\al)
&& \text{(definition of $\hs$))}
\\&=\wh{\nta}_{\bbM}(\nb(\T f)(\al))
&& \text{(definition of $\wh{\nta}_{\bbM}$))}
\\&= \wh{\nta}_{\bbM}((\VT f)(\nb \al))
&& \text{(definition of $\VT f$))}
\end{align*}

Here the crucial step, marked (\dag), is proved by establishing the two
respective inequalities, as follows.
For the inequality $\leq$, it is straightforward to show that the set of
joinands on the left hand side is included in that on the right hand side,
and this follows from
\begin{equation}
\label{eq:nt:12}
\nta_{L}(\be') \Tleq \al \text{ implies }
\nta_{M}((\T'f)(\be')) \Tleq (\T f)(\al).
\end{equation}
To prove \eqref{eq:nt:12}, suppose that $\nta_{L}(\be') \Tleq \al$; then it
follows by the fact that $f$ is a homomorphism, and hence, monotone, that
$(\T f)(\nta_{L}(\be')) \Tleq (\T f)(\al)$.
But since $\nta$ is a natural transformation, we also have $(\T
f)(\nta_{L}(\be')) = \nta_{M}(\T' f)(\be')$, and from this \eqref{eq:nt:12}
is immediate.

In order to prove the opposite inequality
\begin{equation}
\label{eq:nt2:10}
\bv \big\{ \nb\de' \mid \nta_{M}(\de') \Tleq (\T f) (\al)\big\}
\leq \bv \big\{ \nb(\T' f)(\be') \mid \nta_{L}(\be') \Tleq \al \big\},
\end{equation}
fix an arbitrary element $\de' \in \T L$ such that $\nta_{M}(\de') \Tleq
(\T f) (\al)$.

Define the map $h\colon  \Base^{\T'}(\de') \to L$ by putting
\[
h(d) := \bw\{ a \in \Base^{\T}(\al) \mid d \leq f(a) \}.
\]
Then for all $d \in \Base^{\T'}(\de')$ and all $a \in \Base^{\T}(\al)$, we
find that $d \leq fa$ implies $hd \leq a$; this can be expressed by the
relational inclusion
\[ \Graph{f}\cor{\geq}\cor\Graph{h} \;\sse\; {\geq}
\]
so that by the properties of relation lifting we may conclude that
$\Graph{(\T f)} \cor {\Tb{\geq}} \cor \Graph{(\T h)} \;\sse\; {\Tb{\geq}}$,
which is just another way of saying that, for all $\de \in \T
\Base^{\T'}(\de')$, we have
\begin{equation}
\label{eq:nt:16}
\de \Tleq (\T f)(\al) \text{ only if } (\T h)(\de) \Tleq \al.
\end{equation}
Now define
\[
\be' := (\T' h)(\de'),
\]
then we may conclude from the fact that $\nta$ respects relation lifting that
$\nta_{L}(\be') = (\T h)\nta_{M}(\de')$, and so by the assumption that
$\nta_{M}(\de') \Tleq (\T f) (\al)$, we obtain by \eqref{eq:nt:16} that
\begin{equation}
\label{eq:nt2:11}
\nta_{L}(\be') \Tleq \al.
\end{equation}

Similarly, from the fact that $d \leq fhd$, for each $d \in \Base^{\T'}(\de')$,
we may derive that $\de' \Tpleq (\T' f)(\be')$, and so by ($\nb1$) we may
conclude that
\begin{equation}
\label{eq:nt:15}
\nb\de' \leq \nb(\T' f)(\be').
\end{equation}
Finally, \eqref{eq:nt:12} is immediate by \eqref{eq:nt:11} and
\eqref{eq:nt:15}.

This finishes the proof of Claim~\ref{cl:nt4}.
\end{pfclaim}
\end{proof}

\begin{remark}\label{Remark:RhoAdjoint}
The definition of the $\widehat \rho_\bbL \colon \VT \bbL \to \VTp
\bbL$, using the assignment
\[
\al \mapsto \bv \{ \nb \al' \mid \al' \in \T' L, \nta(\al') \Tleq \al \},
\]
is very similar to that of a right adjoint. If it were the case that
$\widehat \rho_\bbL$ preserved \emph{all} meets, then the adjoint
functor theorem would allow us to define its left adjoint.  However,
we only have a proof that $\widehat \rho_\bbL \colon \VT \bbL \to \VTp
\bbL$ preserves \emph{finite} conjunctions, so it is not at all
obvious at this point if there even is a left adjoint to $\widehat
\rho_\bbL$. This is an interesting question for future work.
\end{remark}

\subsection{$\T$-powerlocales via flat sites}
\label{Subsect:VTFlatsite}

In this subsection, we will show that $\VT \bbL$, the $\T$-powerlocale
of a given frame $\bbL$, has a flat site presentation as $\VT \bbL
\simeq \Fr \struc{\T L, \Tleq, \cov_0^\bbL}$. It then follows by the
Flat site Coverage Theorem that every element of $\VT \bbL$ has a
disjunctive normal form, and that the suplattice reduct of $\VT \bbL$
has a presentation defined only in terms of the order $\Tleq$ and the
lifted join function $\T \smallvee \colon \T\P L \to \T L$.

Recall that $\struc{X,\lesssim, \cov_0}$ is a flat site if $\struc{ X,
\lesssim}$ is a pre-order and $\cov_0$ is a basic cover relation
compatible with $\lesssim$. In that case, we know that
$\struc{X,\lesssim, \cov_0}$ presents a frame $\Fr \struc{X,\lesssim,
\cov_0}$, and that if we denote the insertion of generators by $\hs
\colon X \to \Fr \struc{X,\lesssim, \cov_0}$, then
\[\begin{array}{rl}
\Fr \struc{ X, \lesssim,
\cov_0 } \simeq \Fr \struc{ X \mid  &\hs a \leq
\hs b \quad (a \lesssim b),\\
& 1 = \smallvee \{ \hs a \mid a \in X\} \\
&
\hs a\wedge \hs b = \smallvee \{ \hs c \mid c\lesssim
a, c\lesssim b \}\\
& \hs a \leq \smallvee \{ \hs b \mid b \in A\} \quad (a\cov_0 A)
}.
\end{array} \]
Observe that this is very similar to our presentation of $\VT \bbL$
from Corollary \ref{VT:Nabla2prime} using $(\nabla 1)$, $(\nabla 2')$
and $(\nabla 3)$, namely
\[\begin{array}{rl}
\VT \bbL \simeq \Fr \struc{ \T L \mid
&\nabla \alpha \leq
\nabla \beta\quad ( \alpha \Tleq \beta),\\
&
\smallwedge_\Gamma \nabla \gamma = \smallvee \{ \nabla \delta\mid
\forall \gamma \in \Gamma,\, \delta\Tleq \gamma\} \quad (\Gamma \in
\T\Pom L)\\
&\nabla \T \smallvee (\Phi) \leq \smallvee \{  \nabla \beta \mid \beta
\in \nbsem (\Phi)\} \quad (\Phi \in \T \P L)
}.
\end{array} \]
We will see below that if we define a cover relation $\cov_0^\bbL$
which is inspired by $(\nabla 3)$, then we obtain a flat site $\struc{
\T L , \Tleq, \cov_0^\bbL}$, and this flat site presents $\VT \bbL$.

So how do we go about defining a basic cover relation $\cov_0^\bbL
\subseteq \T L \times \P \T L$ so we can give a presentation of $\VT
\bbL$? Intuitively, we would like to take the $\T$-lifting of the relation
$ \{(a,A) \in L \times \P L \mid a\leq \smallvee A\} = {\leq} \cor
\cv{\left( \Graph{ {\smallvee} } \right)} $. However, the $\T$-lifting
of this relation is of type $\T L \times \T\P L$, while a basic cover
relation on $\struc{ \T L, \Tb {\leq} }$  should be of type $\T L
\times \P \T L$.
We solve this by involving the natural transformation $\nbsem\colon
\T\P \to \P\T$, given by $\nbsem (\Phi) := \{ \beta \in \T L \mid
\beta \rel{\Tb{\in}} \Phi\}$, assigning to each $\Phi \in \T\P L$ the
set of its lifted members.
That is, we define
\[
\cov_0^\bbL \;:=\;
\{ (\al,\nbsem(\Phi)) \in L \times \P\T L
\mid \alpha \rel{ \Tb {\leq}} \T \smallvee (\Phi )\}.
\]
In other words: we put $\al\cov_{0}^{\bbL}\Ga$ iff $\Ga$ is of the form
$\nbsem(\Phi)$ for some $\Phi \in \T\P L$ such that $\al
\rel{ \Tb {\leq}} (\T \bvsmall) \Phi$.
Two tasks lie ahead of us:
first, we must show that $\struc{
\T L , \Tleq, \cov_0^\bbL}$ is a flat site, meaning that $\cov_0$ is
compatible with $\Tleq$.
Second, we must show that $ \struc{ \T L , \Tleq, \cov_0^\bbL}$ presents
$\VT \bbL$.
The following technical observation
about the relation $\alpha \Tleq \T \smallvee (\Phi)$ is the main
reason why $\VT \bbL$ admits a flat site presentation.
The reason for introducing a $\wedge$-semilattice $\bbM$ below will become
apparent in \S \ref{Subsect:Compactness}.

\begin{lemma}\label{Flatsite:Stability}
Let $\T\colon \Set \to \Set$ be a standard, finitary, weak pullback-preserving functor,
let $\bbL$ be a frame and let $\bbM$ be a $\wedge$-subsemilattice of
$\bbL$. Then for all $\alpha \in \T M$ and $\Phi \in \T \P M$ such
that $\alpha \rel{\Tb {\leq}} \T \smallvee (\Phi)$, there exists
$\Phi' \in \T\P M$ such that \begin{enumerate}
\item $\alpha \rel{\Tb {\leq}} \T \smallvee (\Phi')$;
\item $\Phi' \rel{\Tb {\subseteq}} \T \downarr_L \cof \T \single
(\alpha)$;
\item $\Phi' \rel{\Tb {\subseteq}} \T \downarr_L (\Phi)$
\end{enumerate}
\end{lemma}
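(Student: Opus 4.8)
The plan is to perform the construction explicitly at the level of elements of $\bbL$ and then lift it through $\T$ by the methods of Lemma~\ref{p:rl2}. First I would rephrase the hypothesis. Put $R := {\leq} \cor \cv{(\Graph{\smallvee})} \sse M \times \P M$, so that $a \rel{R} A$ iff $a \leq \smallvee A$. By Fact~\ref{f:rl}(1,3,5) we have $\Tb R = \Tb({\leq}) \cor \cv{(\Graph{\T\smallvee})}$, and hence the assumption $\alpha \rel{\Tb {\leq}} \T\smallvee(\Phi)$ is exactly the statement $\alpha \rel{\Tb R} \Phi$. Unfolding the definition of relation lifting, I would fix a witness $\rho \in \T R$ with $(\T\pi)(\rho) = \alpha$ and $(\T\pi')(\rho) = \Phi$, where $\pi,\pi'$ are the two projections of $R$. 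Working with a single \emph{common} witness $\rho$ is the one genuinely essential point: it is what makes it possible to control the $\Phi'$ we are about to build against $\alpha$ and against $\Phi$ simultaneously.

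Next, the element-level construction. Define $e \colon R \to \P M$ by $e(a,A) := \{ a \wedge x \mid x \in A \}$; this lands in $\P M$ precisely because $\bbM$ is a $\wedge$-subsemilattice of $\bbL$. Set $\Phi' := (\T e)(\rho) \in \T \P M$. Then for every $(a,A) \in R$, that is, whenever $a \leq \smallvee A$, the following elementary facts hold: (i) $\smallvee e(a,A) = \bigvee_{x \in A}(a \wedge x) = a \wedge \smallvee A = a$, using frame distributivity, so in particular $a \leq \smallvee e(a,A)$; (ii) every element of $e(a,A)$ lies below $a$; and (iii) every element of $e(a,A)$ lies below some element of $A$.

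It remains to transport (i)--(iii) through $\T$. Item (i) says $\pi(a,A) \rel{R} e(a,A)$ for all $(a,A) \in R$; running the relation-lifting computation of Lemma~\ref{p:rl2}(2) --- the fact that the codomains of $\pi$ and $e$ differ is harmless, since only Fact~\ref{f:rl}(1,3,4,5) is invoked --- yields $\alpha = (\T\pi)(\rho) \rel{\Tb R} (\T e)(\rho) = \Phi'$, which by the reformulation above is $\alpha \rel{\Tb {\leq}} \T\smallvee(\Phi')$, i.e.~condition~(1). Item (ii) says $e(a,A) \rel{S_1} \pi(a,A)$ where $S_1 := {\subseteq} \cor \cv{(\Graph{\downarr_L \cof \single})}$, that is, $A' \rel{S_1} a$ iff $A' \sse {\downarr} a$; the same argument gives $\Phi' \rel{\Tb {S_1}} \alpha$, which by Fact~\ref{f:rl}(1,3,5) and functoriality of $\T$ is exactly $\Phi' \rel{\Tb {\subseteq}} \T\downarr_L \cof \T\single(\alpha)$, i.e.~condition~(2). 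Item (iii) says $e(a,A) \rel{S_2} \pi'(a,A)$ where $S_2 := {\subseteq} \cor \cv{(\Graph{\downarr_L})}$, that is, $A' \rel{S_2} A$ iff $A' \sse {\downarr_L} A$; lifting gives $\Phi' \rel{\Tb {\subseteq}} \T\downarr_L(\Phi)$, i.e.~condition~(3).

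I expect the only real work to be bookkeeping: writing the composite relations $R$, $S_1$, $S_2$ in exactly the right form so that their $\T$-liftings coincide with the combinations $\T\smallvee$, $\T\downarr_L \cof \T\single$ and $\T\downarr_L$ occurring in the statement, and checking that the small codomain mismatches do not interfere with the Fact~\ref{f:rl} manipulations. Note that neither finiteness of $\bbL$ nor the existence of joins in $\bbM$ is used anywhere; the reason $\bbM$ may be merely $\wedge$-closed is that the map $e$ only ever forms binary meets.
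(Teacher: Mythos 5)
Your proposal is correct and follows essentially the same route as the paper: both reformulate the hypothesis as $\alpha \rel{\Tb R} \Phi$ for $R = {\leq}\cor\cv{(\Graph{\smallvee})}$ restricted to $M\times\P M$, fix a single witness in $\T R$, and define $\Phi'$ by lifting the map $(a,A)\mapsto a\wedge A$ (your $e$ is exactly the paper's $p_2\cof f$), then transport the three element-level (in)equations through $\T$ via Lemma~\ref{p:rl2} and Fact~\ref{f:rl}. The only difference is cosmetic packaging of the map as $e\colon R\to\P M$ rather than an endomap $f\colon R\to R$ followed by a projection.
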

\begin{proof}

First, we define the following relation on $M \times \P M$:
\[
R:= \{ (a,A) \in M \times \P M \mid
a \leq \smallvee A\} = \left ({\leq} \cor \cv{(\Graph\smallvee)}
\right) \rst{M \times \P M}.
\]
Consider the span $M \xleftarrow{p_1} R \xrightarrow{p_2} \P M$.
We define the following function $f\colon R \to R$:
\[f\colon (a,A) \mapsto (a,a \wedge A),
\]
where $a\wedge A := \{ a\wedge b \mid b \in A\}$.
To see why this function is well-defined, first observe that $a \wedge A \in
\P M$ because $\bbM$ is a $\wedge$-subsemilattice of $\bbL$.
Moreover, by frame distributivity, we see that if $(a,A)
\in R$, i.e.~if $a\leq
\smallvee A$, then also $a \leq \smallvee ( a\wedge A)$, so that $(a,
a\wedge A) \in R$.
Now observe that $f\colon R \to R$  satisfies an equation and two inequations: for all $(a,A)
\in R$,
\begin{align*}
&p_1 \cof f
(a,A) = a = p_1(a,A), &&\text{by def.~of $f$,}
\\ &p_2 \cof f (a,A) = a\wedge A \subseteq_L
\downarr_L \{ a\} = \downarr_L \cof \single_L
\cof p_1 (a,A),&&\text{since $\forall b \in A$, $a \wedge b \leq a$,}\\
&p_2 \cof f (a,A) = a \wedge A \subseteq_L \downarr_L A =
\downarr_L \cof p_2 (a,A)&&\text{since $\forall b \in A$, $a
\wedge b \leq b \in A$.}
\end{align*}
Now consider the lifted diagram
\[ \xymatrix{ \T M & \T R \ar[l]_{\T p_1} \ar[r]^{\T p2} & \T
\P M. }\]
It follows from Lemma~\ref{p:rl2} and the
equation/inequations above that for each $\delta \in \T R$, we have
\begin{align}
\T p_1 \cof \T f (\delta)&= \T p_1
(\delta), \label{Flatsite:Well-defined:Eq1}\\
\T p_2 \cof \T f (\delta) & \rel{\Tb {\subseteq_L} } \T \downarr_L \cof
\T \single_L \cof \T p_1 (\delta) ,\label{Flatsite:Well-defined:Eq2}\\
\T p_2 \cof \T f (\delta) & \rel{\Tb {\subseteq_L} } \T \downarr_L
\cof \T p_2 (\delta) \label{Flatsite:Well-defined:Eq3}
\end{align}
Now recall that by Fact \ref{f:rl}, \[ \Tb {\leq} \cor
\cv{\Graph(\T \smallvee)} = \Tb ( {\leq} \cor
\cv{(\Graph\smallvee)}) = \Tb R,\]
so we see that $\alpha \rel{\Tb \leq} \T \smallvee (\Phi)$ iff
$\alpha \rel{\Tb R} \Phi$. So let $\alpha \in \T M$ and $\Phi \in \T \P M$ such that $\alpha \Tleq
\T \smallvee (\Phi)$, i.e.~such that   $\alpha \rel{\Tb R}
\Phi$; we will show that there is a $\Phi' \in \T \P M$ satisfying
properties (1)--(3).
First, observe that by definition of relation lifting, there must exist
some $\delta \in \T R$ such that
\[
\T p_1 (\delta) = \alpha \text{ and }\T p_2 (\delta) = \Phi.
\]
We claim that $\Phi' := \T p_2 \cof \T f
(\delta)$ satisfies properties (1)--(3).
We know by definition of relation lifting that $\left( \T p_1 \cof\T
f(\delta) \right) \rel{\Tb
R} \left( \T p_2 \cof\T f (\delta) \right)$. Since \begin{align*}\T p_1 \cof\T
f(\delta) &= \T p_1 (\delta) &&\text{by
\eqref{Flatsite:Well-defined:Eq1},}\\
&= \alpha &&\text{by assumption,} \end{align*} it follows that $\alpha
\rel{ \Tb R} \Phi'$, i.e.~$\alpha \rel{\Tb {\leq}} \T \smallvee
(\Phi')$; we conclude that (1) holds. Moreover, it follows immediately from \eqref{Flatsite:Well-defined:Eq2}
that (2) holds.
Similarly, it follows immediately from \eqref{Flatsite:Well-defined:Eq3} that
(3) holds.
\end{proof}

In the lemma above, we use the lifted inclusion relation $\T
{\subseteq}$ and the lifted downset function $\T \downarr$. In the
lemma below we record some elementary observations about the
interaction between $\T
{\subseteq}$, $\T \downarr$ and the natural transformation $\nbsem
\colon \T \P \to \P \T$.

\begin{lemma} \label{RelationLifting:LiftedDownset} \label{RelationLifting:LiftedInclusion}
Let $\T\colon \Set \to \Set$ be a standard, finitary, weak pullback-preserving functor, let $\struc{ X, \lesssim}$ be a pre-order, let $\alpha\in \T
X$  and let $\Phi, \Phi' \in \T\P
X$. Then
\begin{enumerate}
\item ${\downarrow}_{\T X} \nbsem (\Phi) = \nbsem
\left( \T {\downarrow}_X  (\Phi) \right)$;
\item ${\downarrow}_{\T X} \{ \alpha\} = \nbsem \left( \T
{\downarrow}_X \cof \T \single_X (\alpha) \right)$;
\item If $\Phi'
\rel{\Tb \subseteq_X } \Phi$, then also $\nbsem (\Phi')
\subseteq \nbsem (\Phi)$.
\end{enumerate}
\end{lemma}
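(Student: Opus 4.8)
The plan is to reduce each of the three items to an identity or inclusion between the ground-level relations $\lesssim$, ${\in_X}\sse X\times\P X$ and ${\subseteq_X}\sse\P X\times\P X$, and then transport it along relation lifting by means of Fact~\ref{f:rl}, principally its distributivity over composition (part~5) and its compatibility with graphs and converses (parts~1 and~3). Two unwindings of definitions will be used repeatedly: by Lemma~\ref{p:rl2}(1) the pair $(\T X,\Tb{\lesssim})$ is a pre-order, so that ${\downarrow}_{\T X}$ makes sense, and $\beta\in{\downarrow}_{\T X}(Y)$ means $\beta\rel{\Tb{\lesssim}}\gamma$ for some $\gamma\in Y$; and $\beta\in\nbsem(\Phi)$ means $\beta\rel{\Tb{\in_X}}\Phi$.

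For item~(1) I would first note the ground-level identity ${\lesssim}\cor{\in_X}={\in_X}\cor\cv{\Graph({\downarrow}_X)}$ of relations from $X$ to $\P X$, both sides relating $x$ to $A$ exactly when $x\in{\downarrow}_X A$. Applying $\Tb$ and using Fact~\ref{f:rl}(1,3,5), this lifts to $(\Tb{\lesssim})\cor(\Tb{\in_X})=(\Tb{\in_X})\cor\cv{\Graph(\T{\downarrow}_X)}$. Now for $\beta\in\T X$ and $\Phi\in\T\P X$, the statement $\beta\rel{(\Tb{\lesssim})\cor(\Tb{\in_X})}\Phi$ unwinds to: there is $\gamma\in\nbsem(\Phi)$ with $\beta\rel{\Tb{\lesssim}}\gamma$, i.e.\ $\beta\in{\downarrow}_{\T X}\nbsem(\Phi)$; while $\beta\rel{(\Tb{\in_X})\cor\cv{\Graph(\T{\downarrow}_X)}}\Phi$ unwinds to $\beta\rel{\Tb{\in_X}}(\T{\downarrow}_X)(\Phi)$, i.e.\ $\beta\in\nbsem\big((\T{\downarrow}_X)(\Phi)\big)$. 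This proves item~(1). Item~(2) is then immediate: taking $\Phi:=(\T\single_X)(\alpha)$, the unit law $\nbsem_X\cof\T\single_X=\single_{\T X}$ (the first diagram of Remark~\ref{r:rl}; alternatively Lemma~\ref{f:rlnt}(2)) gives $\nbsem(\Phi)=\{\alpha\}$, so item~(1) together with functoriality of $\T$ yields ${\downarrow}_{\T X}\{\alpha\}=\nbsem\big((\T{\downarrow}_X)(\T\single_X)(\alpha)\big)$.

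For item~(3) I would use the trivial inclusion ${\in_X}\cor{\subseteq_X}\sse{\in_X}$ (in fact equality holds); by Fact~\ref{f:rl}(4,5) this lifts to $(\Tb{\in_X})\cor(\Tb{\subseteq_X})\sse\Tb{\in_X}$. Hence if $\Phi'\rel{\Tb{\subseteq_X}}\Phi$ and $\beta\in\nbsem(\Phi')$, i.e.\ $\beta\rel{\Tb{\in_X}}\Phi'$, then $\beta\rel{(\Tb{\in_X})\cor(\Tb{\subseteq_X})}\Phi$ and therefore $\beta\rel{\Tb{\in_X}}\Phi$, i.e.\ $\beta\in\nbsem(\Phi)$. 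The only point that needs care is the bookkeeping in item~(1): one must check that precomposing the converse graph $\cv{\Graph(\T{\downarrow}_X)}$ after $\Tb{\in_X}$ and then restricting the second argument to $\Phi$ really selects $(\T{\downarrow}_X)(\Phi)$ --- this is the elementary fact that $R\cor\cv{\Graph(f)}$ relates $x$ to $y$ iff $x\rel{R}f(y)$ --- and, dually, that the left-hand composite unwinds to membership in a lifted downset; everything else is formal manipulation with Fact~\ref{f:rl}.
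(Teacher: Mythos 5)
Your proof is correct and follows essentially the same strategy as the paper's: each item is reduced to a ground-level relational identity or inclusion and then transported along $\Tb$ using Fact~\ref{f:rl}. The only (harmless) variation is in item~(2), where you derive the claim from item~(1) together with the unit law $\nbsem_X\cof\T\single_X=\single_{\T X}$, whereas the paper lifts the ground-level equivalence $b\lesssim a$ iff $b\in\downarr_X\single_X(a)$ directly and repeats the argument of item~(1).
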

\begin{proof}
(1). For all $a \in X$ and all $A \in \P X$, we have $a \rel{{\leq}
\cor {\in}} A $ iff $a \in \downarr_X A$. Consequently,
\[ \forall \alpha \in \T L, \forall \Phi \in \T \P L,\, \alpha \Tleq
\cor \Tin \Phi \text{ iff } \alpha \Tin \T \downarr_X (\Phi).\]
Now we see that
\begin{align*}
\alpha \in \downarr_{\T X} \nbsem (\Phi) &\Leftrightarrow \alpha \Tleq
\cor \Tin \Phi &&\text{by def.~of $\downarr$ and $\nbsem$,}\\
&\Leftrightarrow \alpha \Tin \T \downarr_X (\Phi)&&\text{by the
above,}\\
&\Leftrightarrow \alpha \in \nbsem
\left( \T {\downarrow}_X  (\Phi) \right)&&\text{by def.~of $\nbsem$.}
\end{align*}

(2). For all $a,b \in X$, we have $b \leq a$ iff $b \in \downarr_X \{
a\}$. It follows by relation lifting that
\[ \forall \alpha,\beta \in \T X,\, \beta \Tleq \alpha \text{ iff }
\beta \Tin \T \downarr_X \cof \T \eta_X (\alpha). \]
It now follows by an argument analogous to that for (1) above that (2)
holds.

(3). Observe that for all $A,A' \in \P X$ and all $a \in X$, we have
that $a \in A' \subseteq A$ implies that $a\in A$. The statement
follows by relation lifting.
\end{proof}

We are now ready to prove that $\struc{ \T L, \Tb {\leq},
{\cov_0^\bbL} }$ is indeed a flat site.

\begin{lemma}\label{Flatsite:Well-defined}
Let $\T\colon \Set \to \Set$ be a standard, finitary, weak pullback-preserving functor.
If $\bbL$ is a frame then $\struc{ \T L, \Tb {\leq},
{\cov_0^\bbL} }$ is a flat site.
\end{lemma}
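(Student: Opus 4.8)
The plan is to verify the two clauses in the definition of a flat site for the triple $\struc{\T L, \Tb{\leq}, \cov_0^{\bbL}}$. That $\struc{\T L, \Tb{\leq}}$ is a pre-order is immediate from Lemma~\ref{p:rl2}(1), since $\struc{L,\leq}$ is a pre-order; in particular $\Tleq$ is reflexive and transitive. The substantive part is the stability condition: given $\be \Tleq \al$ and $\al \cov_0^{\bbL}\Ga$, we must produce some $\De \sse {\downarr}_{\T L}\Ga \cap {\downarr}_{\T L}\be$ with $\be \cov_0^{\bbL}\De$.

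So I would start from such $\be,\al,\Ga$. Unfolding the definition of $\cov_0^{\bbL}$, the assumption $\al \cov_0^{\bbL}\Ga$ supplies some $\Phi \in \T\P L$ with $\Ga = \nbsem(\Phi)$ and $\al \Tleq \T\smallvee(\Phi)$; combining this with $\be \Tleq \al$ and transitivity of $\Tleq$ yields $\be \Tleq \T\smallvee(\Phi)$. Now I would invoke Lemma~\ref{Flatsite:Stability} with $\bbM := \bbL$, which hands us a $\Phi' \in \T\P L$ satisfying (i) $\be \Tleq \T\smallvee(\Phi')$, (ii) $\Phi' \rel{\Tb{\subseteq}} \T{\downarr_L}\cof\T\single(\be)$, and (iii) $\Phi' \rel{\Tb{\subseteq}} \T{\downarr_L}(\Phi)$. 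The natural candidate is then $\De := \nbsem(\Phi')$.

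It remains to translate (i)--(iii) through $\nbsem$ using Lemma~\ref{RelationLifting:LiftedDownset}. Clause (i) is exactly the statement $\be \cov_0^{\bbL}\De$. From (ii), Lemma~\ref{RelationLifting:LiftedDownset}(3) gives $\De = \nbsem(\Phi') \sse \nbsem(\T{\downarr_L}\cof\T\single(\be))$, which equals ${\downarr}_{\T L}\{\be\}$ by Lemma~\ref{RelationLifting:LiftedDownset}(2). From (iii), Lemma~\ref{RelationLifting:LiftedDownset}(3) gives $\De \sse \nbsem(\T{\downarr_L}(\Phi))$, which equals ${\downarr}_{\T L}\nbsem(\Phi) = {\downarr}_{\T L}\Ga$ by Lemma~\ref{RelationLifting:LiftedDownset}(1). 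Hence $\De \sse {\downarr}_{\T L}\Ga \cap {\downarr}_{\T L}\be$, completing the verification.

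Once Lemma~\ref{Flatsite:Stability} is in hand, the argument is pure bookkeeping and I do not expect a genuine obstacle. The only point demanding care is matching types: $\cov_0^{\bbL}$ lives at the level $\T L \times \P\T L$, whereas Lemma~\ref{Flatsite:Stability} operates one level down with objects in $\T\P L$, so the passage back and forth is mediated entirely by the distributive law $\nbsem$ and the identities of Lemma~\ref{RelationLifting:LiftedDownset}; applying those in the right direction is where an error would most easily creep in.
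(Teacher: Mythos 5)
Your proof is correct and follows essentially the same route as the paper's: reduce to the stability condition, pass through transitivity of $\Tleq$, apply Lemma~\ref{Flatsite:Stability} (with $\bbM=\bbL$) to obtain $\Phi'$, take $\De:=\nbsem(\Phi')$, and translate the three conclusions through Lemma~\ref{RelationLifting:LiftedDownset}. (Your statement $\be\Tleq\T\smallvee(\Phi')$ is in fact the correct reading of the stability lemma's output here; the paper's text has a small slip writing $\al$ in that spot.)
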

\begin{proof}
We have already know from Lemma \ref{p:rl2} that $\struc{ \T L, \Tb {\leq}}$ is a
pre-order, so what remains to be shown is that the relation
${\cov_0^\bbL}$ is compatible with the pre-order. Fix $\alpha \in \T
L$ and $\Phi \in \T \P L$ such that $\alpha \Tleq \T \smallvee
(\Phi)$, so that $\alpha \cov_0^\bbL \nbsem(\Phi)$.
We need to show that
\begin{equation}\label{Flatsite:Well-defined:C1}
\forall \beta \in \T L, \text{ if } \beta \Tleq \alpha \text{ then } \exists
\Gamma \in \T \P L \text{ with } \Gamma \subseteq \downarr_{\T L} \{ \beta\} \cap \downarr_{\T L}
\nbsem(\Phi) \text{ and } \beta \cov_0^\bbL \Gamma.
\end{equation}
But this is easy to see: if $\beta \Tleq \alpha $  then since $\alpha
\Tleq \T \smallvee (\Phi)$, it follows by transitivity of $\Tleq$ that
$\beta \Tleq \T \smallvee(\Phi)$.
Now by Lemma
\ref{Flatsite:Stability} there exists $\Phi' \in \T\P L$ such that
$\alpha \rel{\Tb {\leq}} \T \smallvee (\Phi')$, $\Phi' \rel{\Tb {\subseteq}} \T \downarr_L \cof \T \single
(\beta)$ and $\Phi' \rel{\Tb {\subseteq}} \T \downarr_L
\Phi$.
Define $\Gamma := \nbsem(\Phi')$, then we have  $\beta \cov_0 \Gamma$ by
definition of $\cov_0^\bbL$ that; moreover, it now follows from Lemma
\ref{RelationLifting:LiftedDownset}  that $\Gamma \subseteq \downarr_{\T L} \{ \beta\} \cap \downarr_{\T L}
\nbsem(\Phi)$. We conclude that \eqref{Flatsite:Well-defined:C1}
holds.
Since $\alpha \in \T L$ and $\Phi \in \T \P L$ were arbitrary, we have shown
that $\cov_0^\bbL$ is compatible with the order $\Tleq$, so that
$\struc{ \T L, \Tb {\leq}, {\cov_0^\bbL} }$ is a flat site.
\end{proof}

Having established that $\struc{
\T L , \Tleq, \cov_0^\bbL}$ is a flat site, we will now prove that
it presents $\VT \bbL$, i.e.~that $\VT \bbL \simeq \Fr \struc{
\T L , \Tleq, \cov_0^\bbL}$.

\begin{theorem}\label{Flatsite:VTFlatsite}
Let $\bbL$ be a frame and let $\T$ be a standard, finitary, weak
pullback-preserving functor.
Then $\V_\T \bbL$ admits the following flat site presentation:
\[
\V_\T \bbL \simeq \Fr \struc{   \T L, \Tb {\leq}, \cov_0^\bbL},
\]
where $\cov_0^\bbL =
\{ (\al,\nbsem(\Phi)) \in L \times \P\T L
\mid \alpha \rel{ \Tb {\leq}} \T \smallvee (\Phi) \}$,
and in each direction, the isomorphism is the unique frame homomorphism
extending the identity map $\id_{\T L}$ on the set of generators of
$\V_\T \bbL$ and $\Fr \struc{   \T L, \Tb {\leq}, \cov_0^\bbL}$,
respectively.
\end{theorem}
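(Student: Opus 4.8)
The plan is to exploit the two presentations of $\VT\bbL$. By Corollary~\ref{VT:Nabla2prime} we have $\VT\bbL \simeq \Fr\struc{\T L \mid (\nb1),(\nb2'),(\nb3)}$ with insertion of generators $\nb\colon \T L \to \VT\bbL$; and by Lemma~\ref{Flatsite:Well-defined} the triple $\struc{\T L,\Tleq,\cov_0^\bbL}$ is a flat site, so it presents a frame $\Fr\struc{\T L,\Tleq,\cov_0^\bbL}$ which, upon unfolding the flat-site presentation, equals $\Fr\struc{\T L \mid \textrm{(o)},\textrm{(t)},\textrm{(m)},\textrm{(c)}}$, where, writing $j$ for its insertion of generators, (o) is $j\al \leq j\be$ for $\al\Tleq\be$, (t) is $1 = \bvsmall\{j\al \mid \al\in\T L\}$, (m) is $j\al\wedge j\be = \bvsmall\{j\ga\mid \ga\Tleq\al,\ \ga\Tleq\be\}$, and (c) is $j\al \leq \bvsmall\{j\be\mid \be\in\nbsem(\Phi)\}$ whenever $\al\Tleq\T\bvsmall(\Phi)$. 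It suffices to produce frame homomorphisms in both directions, each extending $\id_{\T L}$ on generators: their two composites will then be frame endomorphisms fixing every generator, hence identities by the universal property of the respective presentation, so the two homomorphisms are mutually inverse isomorphisms of the required form.

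For the homomorphism $\Fr\struc{\T L,\Tleq,\cov_0^\bbL}\to\VT\bbL$ I check that $\nb$ is compatible with (o), (t), (m), (c). Compatibility with (o) is exactly $(\nb1)$. For (t), each $\nb\al\leq 1_{\VT\bbL}$ gives $\bvsmall\{\nb\al\mid\al\in\T L\}\leq 1_{\VT\bbL}$, while the reverse inequality is Lemma~\ref{p:basicV}(4) together with $\T\{1_\bbL\}\sse\T L$. For (m), instantiating $(\nb2')$ at $\Gamma=\{\al,\be\}$ gives $\nb\al\wedge\nb\be\leq\bvsmall\{\nb\ga\mid\ga\Tleq\al,\ \ga\Tleq\be\}$, and the reverse inequality follows by applying $(\nb1)$ to each joinand. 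For (c), if $\al\Tleq\T\bvsmall(\Phi)$ then $(\nb1)$ gives $\nb\al\leq\nb(\T\bvsmall)(\Phi)$ and $(\nb3)$ gives $\nb(\T\bvsmall)(\Phi)\leq\bvsmall\{\nb\be\mid\be\Tin\Phi\}=\bvsmall\{\nb\be\mid\be\in\nbsem(\Phi)\}$.

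For the homomorphism $\VT\bbL\to\Fr\struc{\T L,\Tleq,\cov_0^\bbL}$ I check that $j$ is compatible with $(\nb1),(\nb2'),(\nb3)$. Compatibility with $(\nb1)$ is relation (o). For $(\nb3)$: reflexivity of $\Tleq$ (Lemma~\ref{p:rl2}(1)) gives $\T\bvsmall(\Phi)\cov_0^\bbL\nbsem(\Phi)$, so (c) yields $j(\T\bvsmall)(\Phi)\leq\bvsmall\{j\be\mid\be\in\nbsem(\Phi)\}=\bvsmall\{j\be\mid\be\Tin\Phi\}$. The substantive case is $(\nb2')$, which I obtain from the claim that $\bwsmall_{\ga\in\Gamma}j\ga=\bvsmall\{j\al\mid\al\Tleq\ga\text{ for all }\ga\in\Gamma\}$ in $\Fr\struc{\T L,\Tleq,\cov_0^\bbL}$, proved by induction on $|\Gamma|$. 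The empty case is relation (t) (every $\al$ is vacuously a lower bound of $\emptyset$), the singleton case is reflexivity of $\Tleq$, and the two-element case is relation (m). For the inductive step, write $\Gamma=\{\ga_0\}\cup\Gamma'$ with $\ga_0\notin\Gamma'$; applying the induction hypothesis to $\Gamma'$, then frame distributivity, then relation (m) and associativity of $\bvsmall$ rewrites $\bwsmall_{\ga\in\Gamma}j\ga$ as a join $\bvsmall\{j\be \mid \be\Tleq\ga_0 \text{ and } \be\Tleq\al'\}$ taken over all pairs $(\al',\be)$ with $\al'$ a lower bound of $\Gamma'$ and $\be$ a lower bound of $\{\ga_0,\al'\}$, and one checks that this has the same join as $\bvsmall\{j\al\mid\al\Tleq\ga\text{ for all }\ga\in\Gamma\}$ --- the inclusion $\leq$ using transitivity of $\Tleq$, and the inclusion $\geq$ by noting that any common lower bound $\al$ of $\Gamma$ witnesses itself via $\al\Tleq\al$.

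Finally, the composite $\Fr\struc{\T L,\Tleq,\cov_0^\bbL}\to\VT\bbL\to\Fr\struc{\T L,\Tleq,\cov_0^\bbL}$ fixes each $j\al$, and the composite $\VT\bbL\to\Fr\struc{\T L,\Tleq,\cov_0^\bbL}\to\VT\bbL$ fixes each $\nb\al$; since the insertion of generators of a presented frame admits at most one extension to a frame homomorphism taking prescribed values on generators, both composites are the identity. Hence the two homomorphisms are mutually inverse isomorphisms, each extending $\id_{\T L}$ on generators, as claimed. The only non-routine point is the inductive verification of $(\nb2')$ for $j$; all other compatibilities reduce immediately to $(\nb1)$, $(\nb3)$, the flat-site relations, and Lemma~\ref{p:basicV}(4).
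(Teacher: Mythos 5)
Your proof is correct and follows essentially the same route as the paper's: both directions are obtained by checking the insertion of generators of each presentation against the relations of the other (with $(\nabla 2')$ handled by induction on $|\Gamma|$ using the top and binary-meet relations of the flat site plus frame distributivity), and the composites are identified with the identity because they fix generators. The only cosmetic difference is that you justify the top relation via Lemma~\ref{p:basicV}(4) where the paper reads it off from $(\nabla 2')$ with $\Gamma=\emptyset$; your write-up of the inductive step for $(\nabla 2')$ is more detailed than the paper's, which simply asserts it.
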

\begin{proof}
For this proof,  we denote the insertion of generators from $\T L$ to
$\VT\bbL$ by $\nb$, and from $\T L$ to $\Fr\struc{   \T L, \Tb {\leq},
\cov_0^\bbL }$ by $\heartsuit$.
We will show that
\begin{enumerate}
\item
the function $\heartsuit \colon \T L \to \Fr \struc{\T L,
\Tb {\leq}, \cov_0^\bbL}$ is compatible with the relations $(\nabla 1)$,
$(\nabla 2')$ and $(\nabla 3)$, and
\item
that the function $\nabla \colon \T L \to \V_\T \bbL $ has the
following properties:
\begin{enumerate}
\item $\nabla$ is
order-preserving;
\item $1= \smallvee \{ \nabla \alpha \mid \alpha \in \T L\}$;
\item for all $\alpha,\beta \in \T L$, $\nabla \alpha \wedge \nabla \beta = \smallwedge \{ \nabla
\gamma \mid \delta \Tleq \alpha,\beta\}$;
\item for all $\alpha \cov_0^\bbL \Gamma$, $\nabla \alpha \leq
\smallvee \{ \nabla \beta \mid \beta \in \Gamma\}$.
\end{enumerate}
\end{enumerate}

(1). First consider $(\nabla 1)$. Suppose that $\alpha,\beta \in \T L$
such that $\alpha \rel{\Tb {\leq}} \beta$; we have to show that
$\heartsuit \alpha \leq \heartsuit \beta$. This follows immediately
from the fact that $\heartsuit\colon \T L \to \Fr
\struc{   \T L, \Tb {\leq}, \cov_0^\bbL
}$ is order-preserving. Secondly, consider $(\nabla 2')$. Let $\Gamma
\in \Pom \T L$, we then have to show that
\begin{equation}\label{Flatsite:VTFlatsite:C1}
\smallwedge_{\gamma \in \Gamma} \heartsuit \gamma \leq \smallvee \{
\heartsuit \delta \mid \forall \gamma\in\Gamma, \, \delta \Tleq \gamma
\}.\end{equation}
Recall from \S \ref{Subsect:FramePres} that since $\struc{   \T L, \Tb {\leq},
\cov_0^\bbL }$ is a flat site, we know that $1 = \smallvee \{
\heartsuit \alpha \mid \alpha \in \T L\}$ and that for all
$\alpha,\beta \in \T L$, $\heartsuit \alpha \wedge \heartsuit \beta = \smallwedge \{ \heartsuit
\gamma \mid \delta \Tleq \alpha,\beta\}$. It now follows by
induction on the size of $\Gamma$ that
\eqref{Flatsite:VTFlatsite:C1} holds.

Finally for $(\nabla 3)$, take $\Phi \in \T \P L$. We have to show
that $\heartsuit \T \smallvee (\Phi) \leq \smallvee \{ \heartsuit
\beta \mid \beta \in \nbsem (\Phi)\}$. This follows
immediately from the definition of $\cov_0^\bbL$, since $ \T \smallvee
(\Phi) \rel{\Tb {\leq}} \T \smallvee (\Phi) $. We conclude that
$\heartsuit\colon \T L \to \Fr \struc{ \T L, \Tb
{\leq}, \cov_0^\bbL }$ is compatible with the relations $(\nabla
1)$, $(\nabla 2)$ and $(\nabla 3)$ and thus there must be a unique frame
homomorphism $f\colon \V_{\T} \bbL \to \Fr \struc{ \T L, \Tb {\leq},
\cov_0^\bbL }$ which extends $\heartsuit$.

(2). We first have to show that $\nabla$ is
order-preserving, i.e.~that if $\alpha \rel{\Tb {\leq}} \beta$, then
$\nabla \alpha \leq \nabla \beta$. This follows immediately from
$(\nabla 1)$. Secondly, we must show that (2)(b) and (2)(c) are
satisfied, but this follows immediately from $(\nabla 2')$. Finally,
consider (2)(d), i.e.~suppose that $\alpha \cov_0^\bbL \Gamma$.
By definition of $\cov_0^\bbL$, there is some $\Phi \in \T\P L$ such
that $\alpha \Tleq \T \smallvee (\Phi)$ and $\nbsem(\Phi) =
\Gamma$. Now we need to show that
$\nabla \alpha \leq \smallvee \{ \nabla \beta \mid \beta \in
\nbsem (\Phi)\}$. This is easy to see, since
\begin{align*} \nabla \alpha & \leq \nabla \T \smallvee (\Phi)
&&\text{by $(\nabla 1)$,}\\
&\leq \smallvee \{ \nabla \beta \mid \beta \in \nbsem(\Phi) \}
&&\text{by $(\nabla 3)$.}
\end{align*}
It follows that (2)(d) holds; consequently, there exists  a unique frame homomorphism \[g\colon \Fr
\struc{ \T L, \Tb {\leq},\cov_0^\bbL } \to \V_{\T} \bbL,\] extending $\nabla$.

Finally, it is easy to see that \[gf = \id_{\VT \bbL} \text{ and } fg = \id_{\struc{{ \T L, \Tb {\leq},\cov_0^\bbL }}},\] so that indeed
$\V_\T \bbL \simeq \Fr \struc{   \T L, \Tb {\leq}, \cov_0^\bbL}$.
\end{proof}

In light of Theorem \ref{Flatsite:VTFlatsite} above, we denote the insertion of generators by $\nabla \colon \T
L \to \Fr \struc {\T L, \Tb {\leq}, \cov_0^\bbL}$. We now arrive at
the most important corollary of Theorem \ref{Flatsite:VTFlatsite},
which says that every element of $\VT \bbL$ has a disjunctive normal form.

\begin{corollary} \label{Flatsite:NormalForm}
Let $\T\colon \Set \to \Set$ be a standard, finitary, weak pullback-preserving functor
and let $\bbL$ be a frame. Then for all $x \in \VT \bbL$, there is a
$\Gamma \in \P\T L$ such that $x = \smallvee \{ \nabla \gamma\mid
\gamma\in \Gamma\}$.
\end{corollary}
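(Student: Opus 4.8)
The plan is to read the result off directly from the flat site presentation established in Theorem~\ref{Flatsite:VTFlatsite}. That theorem tells us that $\struc{\T L, \Tleq, \cov_0^\bbL}$ is a flat site presenting $\VT\bbL$, with the insertion of generators being precisely the map $\nabla\colon \T L \to \VT\bbL$. So all the substantive work — verifying that $\cov_0^\bbL$ is compatible with $\Tleq$ (Lemma~\ref{Flatsite:Well-defined}) and that the resulting flat site presents $\VT\bbL$ — has already been carried out, and the corollary is a bookkeeping consequence of the general theory of flat sites from \S\ref{Subsect:FramePres}.

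Concretely, the key fact I would invoke is one of the recorded consequences of the Flat site Coverage Theorem (Fact~\ref{Prelim:FlatSiteCoverage}): whenever a flat site $\struc{X, \lesssim, \cov_0}$ presents a frame $\bbL$ via some $f\colon X \to L$, every element of $\bbL$ has the shape $\smallvee \P f(A)$ for some $A \in \P X$. Equivalently, this says that the suplattice reduct of $\bbL$ is already generated (under arbitrary joins) by the image of $f$, which is why passing through the Coverage Theorem is the natural route. Instantiating with $X = \T L$, $\lesssim\ =\ \Tleq$, $\cov_0 = \cov_0^\bbL$ and $f = \nabla$, I obtain that an arbitrary $x \in \VT\bbL$ can be written as $x = \smallvee \P\nabla(\Gamma)$ for some $\Gamma \in \P\T L$.

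The final step is merely to unwind $\smallvee \P\nabla(\Gamma) = \smallvee\{ \nabla\gamma \mid \gamma \in \Gamma\}$ by the definition of the direct image of $\Gamma$ along $\nabla$, which is exactly the claimed disjunctive normal form. I do not anticipate any real obstacle: the only point requiring a moment's care is to make sure the isomorphism of Theorem~\ref{Flatsite:VTFlatsite} is used in the direction that identifies $\VT\bbL$ with $\Fr\struc{\T L, \Tleq, \cov_0^\bbL}$ and sends generators to generators, so that the normal form expressed in the flat-site frame transports back to one in $\VT\bbL$ with the same index set $\Gamma$.
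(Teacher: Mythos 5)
Your proposal is correct and follows exactly the paper's route: Theorem~\ref{Flatsite:VTFlatsite} gives the flat site presentation of $\VT\bbL$ with $\nabla$ as insertion of generators, and the disjunctive normal form is then the recorded consequence of the Flat Site Coverage Theorem (Fact~\ref{Prelim:FlatSiteCoverage}) that every element of a flat-site-presented frame is a join of images of generators. Nothing is missing.
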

\begin{proof}
By Theorem \ref{Flatsite:VTFlatsite} we know that $\VT \bbL \simeq \SupLat
\struc{\T L, \Tleq, \cov_0^\bbL}$. The statement now follows by
Fact~\ref{Prelim:FlatSiteCoverage}.
\end{proof}

\begin{remark}
It is not hard to show that \[\SupLat
\struc{\T L, \Tleq, \cov_0^\bbL} \simeq \SupLat \struc{ \T L \mid (\nabla 1), (\nabla 3)}.\]
Consequently, by Theorem \ref{Flatsite:VTFlatsite} and Fact
\ref{Prelim:FlatSiteCoverage}, the \emph{order} on $\VT \bbL$ is
uniquely determined by the relations $(\nabla 1)$ and $(\nabla 3)$.
\end{remark}

\section{Preservation results}
\label{Section:Preservation}
Now that we have established the $\T$-powerlocale construction, we can
set about to prove that it is well-behaved. One particular kind of
good behavior is to ask that it preserves algebraic
properties. In this section, we present several initial results in
this area. %
We start by briefly reviewing some of the preservation properties of $\V$, the usual Vietoris powerlocale, in \S \ref{sec:pres-prop-v}; in addition, we prove that $\V$ preserves compactness. In \S \ref{sec:reg-and-zer}, we show that $\VT$, the $\T$-powerlocale construction, preserves regularity and zero-dimensionality. Finally, in \S \ref{Subsect:Compactness} we show that if we assume that $\T$ maps finite sets to finite sets, then $\VT$ preserves the combination of compactness and zero-dimensionality.

\subsection{Preservation properties of $\V$}
\label{sec:pres-prop-v}

There are various relations between properties of $\mathbb{L}$ and of
$\V\mathbb{L}$.
For instance, \citep{VietLoc} shows that $\mathbb{L}$ is regular, completely
regular, zero-dimensional or compact regular iff $\V\mathbb{L}$ is, and also
that if $\mathbb{L}$ is locally compact then so is $\V\mathbb{L}$.
The same paper also mentions without proof that if $\mathbb{L}$ is compact then
so is $\V\mathbb{L}$, referring to a proof by transfinite induction similar to
that used
for the localic Tychonoff theorem in \citep{Johnstone1982}.
The paper leaves open the converse question, of whether $\V\mathbb{L}$ compact
implies so is $\mathbb{L}$.
We shall give here a
constructive (topos-valid) proof using preframe techniques that $\mathbb{L}$ is compact
iff $\V\mathbb{L}$ is.

\begin{definition}
A frame $\mathbb{L}$ (or, more properly, its locale) is \emph{compact}
if whenever $1\leq\bigvee_{i}^{\uparrow}a_{i}$ then $1\leq a_{i}$ for some $i$.
\end{definition}

The following constructive proof is a routine application of the techniques in
\citep{PrePrePre}.
\begin{theorem}
$\mathbb{L}$ is compact iff $\V\mathbb{L}$ is.
\end{theorem}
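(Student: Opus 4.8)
The plan is to establish the two implications separately, after reformulating compactness in a preframe-friendly way. Recall from the definition that a frame $\mathbb{M}$ is compact exactly when the map $c_{\mathbb{M}}\colon\mathbb{M}\to\two$ sending $x$ to $1_{\two}$ if $x = 1_{\mathbb{M}}$ and to $0_{\two}$ otherwise preserves directed joins; since $c_{\mathbb{M}}$ is automatically monotone and finite-meet preserving, this says precisely that $c_{\mathbb{M}}$ is a preframe homomorphism.

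For the implication ``$\V\mathbb{L}$ compact $\Rightarrow\mathbb{L}$ compact'' I would use the interplay between the generator map $\Box\colon\mathbb{L}\to\V\mathbb{L}$ and the counit $\counit_{\mathbb{L}}\colon\V\mathbb{L}\to\mathbb{L}$. By the defining relations, $\Box$ is a preframe homomorphism with $\Box 1 = 1$, and $\counit_{\mathbb{L}}\cof\Box = \id_{\mathbb{L}}$ (since $\counit_{\mathbb{L}}(\Box a) = a$); consequently $\Box a = 1_{\V\mathbb{L}}$ holds iff $a = 1_{\mathbb{L}}$, so $c_{\V\mathbb{L}}\cof\Box = c_{\mathbb{L}}$. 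If $\V\mathbb{L}$ is compact then $c_{\V\mathbb{L}}$ preserves directed joins; as $\Box$ does too, the composite $c_{\mathbb{L}}$ preserves directed joins, whence $\mathbb{L}$ is compact.

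For the converse, ``$\mathbb{L}$ compact $\Rightarrow\V\mathbb{L}$ compact'', the goal is to show that $c_{\V\mathbb{L}}$ is a preframe homomorphism, and here I would invoke the preframe machinery of \citep{PrePrePre}. The idea is to re-present $\V\mathbb{L}$ as the frame $\Fr\struc{P,\cov}$ associated with a \emph{preframe site}: take for $P$ the preframe of formal basic opens generated by the $\Box a$ and $\Diamond b$ under finite meets and directed joins, subject only to those defining relations that involve only these operations ($\Box 1 = 1$, $\Box(a\wedge b) = \Box a\wedge\Box b$, preservation of directed joins by $\Box$, and $\Box a\wedge\Diamond b\leq\Diamond(a\wedge b)$), and record the remaining relations — the ones mentioning genuine, non-directed joins, namely $\Diamond(\smallvee A) = \smallvee_{a\in A}\Diamond a$ and $\Box(a\vee b)\leq\Box a\vee\Diamond b$ — in a meet-stable coverage $\cov$ on $P$. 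By the preframe coverage theorem of \citep{PrePrePre} such a site presents $\V\mathbb{L}$, and a preframe homomorphism out of $\V\mathbb{L}$ corresponds to a cover-respecting preframe homomorphism out of $P$. It then suffices to check that the assignment determined on the $\Box$-generators by $\Box a\mapsto c_{\mathbb{L}}(a)$, and suitably defined on the $\Diamond$-generators, is compatible with all the relations in $P$ and with $\cov$; every one of these checks is purely formal except compatibility with ``$\Box$ preserves directed joins'', which unwinds to the statement that $c_{\mathbb{L}}$ preserves directed joins — i.e.\ to the hypothesis that $\mathbb{L}$ is compact. Identifying the resulting preframe homomorphism with $c_{\V\mathbb{L}}$ finishes the argument.

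The step I expect to be the main obstacle is setting up this preframe site correctly and verifying that the coverage $\cov$ is meet-stable. The difficulty is intrinsic: $1_{\V\mathbb{L}}$ is a genuine binary join (indeed $1 = \Box 0\vee\Diamond 1$, with $\Box 0$ and $\Diamond 1$ complementary clopens, since $\Box 0\wedge\Diamond 1 \leq \Diamond 0 = 0$ and $1 = \Box 1 \leq \Box 0\vee\Diamond 1$), so $c_{\V\mathbb{L}}$ is genuinely not a frame homomorphism and cannot be read off the $\langle\Box,\Diamond\rangle$-presentation directly; and the $\Diamond$-axioms involve arbitrary joins, so relocating them into a coverage and checking the coverage conditions is exactly the kind of technical bookkeeping that the apparatus of \citep{PrePrePre} is designed to streamline. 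Once this is in place, the compactness verification reduces to the single routine point above, in the same way that preservation of compactness by the upper powerlocale and the localic Tychonoff theorem are handled in \citep{PrePrePre}.
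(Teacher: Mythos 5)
Your easy direction is fine and is essentially the paper's: from $1=\bigvee^{\uparrow}_{i}a_{i}$ one gets $1=\Box 1=\bigvee^{\uparrow}_{i}\Box a_{i}$ in $\V\mathbb{L}$, and $\counit$ (or your observation that $c_{\V\mathbb{L}}\cof\Box=c_{\mathbb{L}}$ with both factors preserving directed joins) brings the witness back down.

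The hard direction, however, has a genuine gap in the way you set up the ``preframe site''. The preframe coverage theorem of \citep{PrePrePre} applies to presentations whose generators form a $\vee$-semilattice and whose relations have only a single generator or a \emph{directed} join of generators on the right-hand side (together with a $\vee$-stability condition). Your $P$ is closed under finite meets and directed joins but \emph{not} under finite joins, and the relations you push into the coverage --- $\Diamond(\smallvee A)=\smallvee_{a}\Diamond a$ and $\Box(a\vee b)\leq\Box a\vee\Diamond b$ --- have genuinely non-directed joins of generators on the right. So the theorem does not deliver the correspondence you invoke between preframe homomorphisms out of $\V\mathbb{L}$ and cover-respecting preframe homomorphisms out of $P$; and in this instance the correspondence is actually false. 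The target map $\phi=c_{\V\mathbb{L}}$ satisfies $\phi(\Box a)=(a=1_{\mathbb{L}})$ and $\phi(\Diamond b)=0$ for every $b$ (since $\Diamond b\leq\Diamond 1<1$ whenever $\Box 0\neq 0$), yet it must satisfy $\phi(\Box a\vee\Diamond b)=(a\vee b=1_{\mathbb{L}})$, because $\Box(a\vee b)\leq\Box a\vee\Diamond b$ while $\counit(\Box a\vee\Diamond b)=a\vee b$. Hence $\phi$ on the finite joins of subbasic generators is irreducibly extra data, not determined by its values on $\Box a$ and $\Diamond b$; your own remark that $1=\Box 0\vee\Diamond 1$ with $\phi(\Box 0)=\phi(\Diamond 1)=0$ but $\phi(1)=1$ is already a counterexample to the reduction you propose.

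This is precisely why the paper takes the generating $\vee$-semilattice to be $\Pom\mathbb{L}\times\mathbb{L}$, with $(\gamma,d)$ standing for $\bigvee_{c\in\gamma}\Box c\vee\Diamond d$: once the generators are closed under finite joins, every relation can be rewritten with only a generator or a directed join of generators on the right, $\vee$-stability can be checked, and the preframe homomorphism can be specified directly on the finite joins by $\phi(\gamma,d)=\exists c\in\gamma.\,c\vee d=1$. Compactness of $\mathbb{L}$ then enters in verifying the two directed-join relations, roughly as you anticipated, but not all remaining checks are purely formal (the relation $(\gamma\cup\{a\},d)\wedge(\gamma,b\vee d)\leq(\gamma,(a\wedge b)\vee d)$ needs a small case analysis). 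To repair your argument you would need to enlarge the generating set to include these finite joins and define $\phi$ on them explicitly --- at which point you have reconstructed the paper's proof.
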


\begin{proof}
$\Rightarrow$: $\mathbb{L}$ is compact iff the function $\mathbb{L}%
\rightarrow\Omega$ that maps $a\in\mathbb{L}$ to the truth value of $a=1$ is a
preframe homomorphism, i.e. preserves finite meets and directed joins. This
function is characterized by being right adjoint to the unique frame
homomorphism $!\colon \Omega\rightarrow\mathbb{L}$ and so to prove compactness it
suffices to define a preframe homomorphism $\mathbb{L}\rightarrow\Omega$ and
show that it is right adjoint to $!$. If $\mathbb{L}$ is presented -- as a
\emph{frame} -- by generators and relations, then the ``preframe coverage
theorem'' of \citep{PrePrePre} shows how to derive a presentation as preframe,
which can then be used for defining preframe homomorphisms from $\mathbb{L}$.
The strategy is to generate a $\vee$-semilattice from the generators, and add
relations to ensure a $\vee$\emph{-stability} condition analogous to the
$\wedge$-stability used in Johnstone's coverage theorem \citep{Johnstone1982}.

Our first step is to apply the preframe coverage theorem to derive a preframe
presentation of $\V\mathbb{L}$. We show
\[
\begin{array}{rll}
\V\mathbb{L}\cong\Fr\langle\Pom\mathbb{L}\times\mathbb{L}
& \text{(qua }\vee\text{-semilattice)}\mid\\
&  1\leq(\gamma\cup\{1\},d)\\
&  (\gamma\cup\{a\},d)\wedge(\gamma\cup\{b\},d)
\leq(\gamma\cup\{a\wedge b\},d)\\
&  (\gamma\cup\{\smallvee^{\uparrow} A\},d)\leq\smallvee^{\uparrow}_{a\in A}(\gamma\cup\{a\},d)
& \text{(}A\text{ directed)}\\
&  (\gamma,\smallvee^{\uparrow} A\vee d)\leq\smallvee^{\uparrow}_{a\in A}(\gamma,a\vee d)
& \text{(}A\text{ directed)}\\
&  (\gamma\cup\{a\},d)\wedge(\gamma,b\vee d)\leq(\gamma,(a\wedge b)\vee d)\\
&  (\gamma\cup\{a\vee b\},d)\leq(\gamma\cup\{a\},b\vee d)\\
& \rangle\text{.}%
\end{array}
\]
The $\vee$-semilattice structure on $\Pom\mathbb{L}\times\mathbb{L}$ is the
product structure from $\cup$ on $\Pom\mathbb{L}$ and $\vee$ on $\mathbb{L}$.
The homomorphisms between the frame presented above and $\V\mathbb{L}$ are given by%
\begin{align*}
\Box a  & \mapsto(\{a\},0)\text{, }\Diamond a\mapsto(\emptyset,a)\\
(\gamma,d)  & \mapsto\bigvee_{c\in\gamma}\Box c\vee\Diamond d\text{.}%
\end{align*}

The relations shown are $\vee$-stable, so the preframe coverage shows that%
\[
\V\mathbb{L}\cong\operatorname*{PreFr}\langle\Pom\mathbb{L}\times
\mathbb{L}\text{ (qua poset)}\mid\text{same relations as above }%
\rangle\text{.}%
\]
We can now define a preframe homomorphism $\phi\colon \V\mathbb{L}\rightarrow\Omega$
by%
\[
\phi(\gamma,d)=\exists c\in\gamma.\,c\vee d=1\text{.}%
\]
To motivate this, we want criteria for $\bigvee_{c\in\gamma}\Box c\vee\Diamond
d=1$, and intuitively this means that for every sublocale $K$ corresponding to
a point of $\V\mathbb{L}$ either $K$ is included in some $c\in\gamma$ or $K$
meets $d$. Taking $K$ to be the closed complement of $d$, we get the given
condition. This is not a rigorous argument, since that closed complement is
not necessarily a point of $\V\mathbb{L}$. However, the rest of our argument
validates the choice. The relations in the preframe presentation of
$\V\mathbb{L}$ are largely easy to check. We shall just mention the
penultimate one. Suppose $(\gamma\cup\{a\},d)$ and $(\gamma,b\vee d)$ are both
mapped to $1$. We have either some $c\in\gamma$ with $c\vee d=1$, in which
case $c\vee(a\wedge b)\vee d=1$, or we have $a\vee d=1$ and in addition some
$c^{\prime}\in\gamma$ with $c^{\prime}\vee b\vee d=1$. In this latter case
$c^{\prime}\vee(a\wedge b)\vee d=1$.

Next we show that $\phi$ is right adjoint to $!\colon \Omega\rightarrow\V\mathbb{L}%
$, the unique frame homomorphism defined by%
\[
!(p)=\bigvee\left\{  1\mid p\right\}  =\bigvee\nolimits^{\uparrow}\left(
\{0\}\cup\left\{  1\mid p\right\}  \right)  \text{.}%
\]
We must show $\phi(!(p))\geq p$ and $!(\phi(\gamma,d))\leq(\gamma,d)$.%
\begin{align*}
\phi(!(p))  & =\phi\left(  \bigvee\nolimits^{\uparrow}\left(  \{0\}\cup
\left\{  1\mid p\right\}  \right)  \right)  \\
& =\phi(\emptyset,0)\vee\bigvee\left\{  \phi(\{1\},0)\mid p\right\}  \geq p
\end{align*}
since if $p$ holds then the disjuncts include $\phi(\{1\},0)=1$. For the other
inequality, we must show that%
\[
\bigvee\left\{  1\mid\phi(\gamma,d)\right\}  \leq(\gamma,d)\text{.}%
\]
If $\phi(\gamma,d)$ holds true then $c\vee d=1$ for some $c\in\gamma$, so%
\[
1=(\{1\},0)=(\{c\vee d\},0)\leq(\{c\},d)\leq(\gamma,d)\text{.}%
\]

$\Leftarrow$: Suppose in $\mathbb{L}$ we have $1=\bigvee_{i}^{\uparrow}a_{i}$.
Then in $\V\mathbb{L}$ we have $1=\Box1=\bigvee_{i}^{\uparrow}\Box a_{i}$ and
so $1=\Box a_{i}$ for some $i$. Applying $\counit$ to both sides gives $1=a_{i}$.
\end{proof}

\newcommand{\downR}{\mathop{\Downarrow}}
\newcommand{\RL}{C_{\bbL}}

\subsection{Regularity and zero-dimensionality}
\label{sec:reg-and-zer}
The purpose of this subsection is to prove that the operation $\VT$ preserves
regularity and zero-dimensionality of frames. Both of these notions
are defined in terms of the well-inside relation $\wi$; accordingly, the main
technical result of this subsection states that if $\alpha \Twi
\beta$, then also $\nabla\alpha \wi_{\VT \bbL} \nabla \beta$.
We first recall some notions leading up to the definition of regularity.

\begin{definition}
Given two elements $a,b$ of a distributive lattice $\bbL$, we say that $a$
is \emph{well inside} $b$, notation: $a \wi b$, if there is some $c$ in $\bbL$
such that $a \land c = 0$ and $b \lor c = 1$.
If $a \wi a$ we say $a$ is \emph{clopen}. We denote the clopen
elements of $\bbL$ by $\RL$.
\end{definition}

In case $\bbL$ is a frame, in the definition of $\wi$, for the element $c$
witnessing that $a \wi b$ we may always take the Heyting complementation
$\neg a$ of $a$.
In other words, $a\wi b$ iff $b \lor \neg a = 1$. Consequently, if $a$
is clopen then $a \vee \neg a =1$.
In the sequel we will use not only this fact, but also the following properties
of $\wi$ without warning. For proofs, see \citep[\S
III-1.1]{Johnstone1982}.

\begin{fact}
\label{f:wi}
Let $\bbL$ be a frame.
\begin{enumerate}
\item ${\wi} \sse {\leq}$;
\item ${\leq}\cor{\wi}\cor{\leq} \sse {\wi}$;
\item for $X \in \Pom L$, if $\forall x\in X. x \wi y$ then $\bv X \wi y$;
\item for $X \in \Pom L$, if $\forall x\in X. y \wi x$ then $y \wi \bw
X$;
\item $a \wi a$ iff $a$ has a complement.
\end{enumerate}
\end{fact}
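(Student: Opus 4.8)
The plan is to dispatch the five items one at a time, in each case by an elementary computation in the distributive-lattice reduct of $\bbL$, working from the definition of the well-inside relation together with the reformulation recorded just above the statement: $a \wi b$ iff $b \vee \neg a = 1$.

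Items~(1), (2) and~(5) are essentially immediate. For~(1), if $c$ witnesses $a \wi b$ then $a = a \wedge (b \vee c) = (a \wedge b) \vee (a \wedge c) = a \wedge b$, whence $a \leq b$. For~(2), a witness $c$ for $a' \wi b'$ with $a \leq a'$ and $b' \leq b$ is at once a witness for $a \wi b$, since $a \wedge c \leq a' \wedge c = 0$ and $b \vee c \geq b' \vee c = 1$. And~(5) merely unwinds definitions: $a \wi a$ says exactly that some $c$ satisfies $a \wedge c = 0$ and $a \vee c = 1$, i.e.\ that $a$ has a complement.

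For~(3) and~(4) I would use that $X$ is finite, so that finite joins distribute over finite meets in $\bbL$. For~(3) one can either combine the finitely many witnesses $c_x$ for $x \wi y$ into $c := \bw_{x \in X} c_x$ and check directly that $c$ witnesses $\bv X \wi y$ (using $(\bv X)\wedge c = \bv_{x\in X}(x\wedge c) \le \bv_{x\in X}(x\wedge c_x) = 0$ and $y \vee c = \bw_{x\in X}(y\vee c_x) = 1$), or argue via complements: $y \vee \neg(\bv X) = y \vee \bw_{x \in X} \neg x = \bw_{x \in X}(y \vee \neg x) = 1$, using the frame identity $\neg(\bv X) = \bw_{x \in X} \neg x$. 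Symmetrically for~(4): $(\bw X) \vee \neg y = \bw_{x \in X}(x \vee \neg y) = 1$. Both computations also settle the degenerate case $X = \nada$ (empty meet $1$, empty join $0$) with no extra work.

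There is no genuine difficulty here; this is essentially the one lemma we simply import from \citep[\S III-1.1]{Johnstone1982}. The only thing to watch is that the arguments for~(3) and~(4) invoke only the distributivity of finite joins over finite meets and of finite meets over arbitrary joins (the latter for $\neg(\bv X) = \bw_{x \in X} \neg x$), and never the distributivity of a join over an infinite meet, which fails in a general frame.
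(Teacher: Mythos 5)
Your proofs are correct; the paper itself offers no argument for this fact, simply referring the reader to \citep[\S III-1.1]{Johnstone1982}, so there is no authorial proof to compare against. Your computations are the standard ones: items (1), (2) and (5) are the immediate unwindings you give, and for (3) and (4) you correctly confine the distributivity used to finite meets and joins (together with the frame identity $\neg(\bv X) = \bw_{x\in X}\neg x$ and the reformulation $a \wi b$ iff $b \vee \neg a = 1$), with the degenerate case $X = \nada$ covered by the same formulas.
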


\begin{definition}
A frame $\bbL$ is \emph{regular} if every $a \in \bbL$ satisfies
\begin{equation*}
\label{eq:dreg}
a = \smallvee \{ b \in L \mid b \wi a\}.
\end{equation*}
We say $\bbL$ is \emph{zero-dimensional} if for all $a \in \bbL$,
\[ a = \smallvee \{b \in \RL \mid b \leq a\}.\]
\end{definition}
We record the following useful property of $\RL$ \citep[see][\S
III-1.1]{Johnstone1982}:
\begin{fact} \label{Reg:RegOpenFacts}
Let $\bbL$ be a frame. Then
$\langle \RL, \wedge, \vee, 0, 1\rangle$ is a
sublattice of $\bbL$.
\end{fact}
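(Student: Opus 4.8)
The plan is to verify directly the four closure conditions defining a sublattice, leaning on the characterizations of the well-inside relation already recorded in Fact~\ref{f:wi}. First I would dispose of the extremal elements: $0 \wedge 1 = 0$ and $1 \vee 0 = 1$ exhibit $1$ as a complement of $0$ and $0$ as a complement of $1$, so by Fact~\ref{f:wi}(5) both $0$ and $1$ are clopen, i.e.\ $0, 1 \in \RL$.

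For closure under binary meet, suppose $a, b \in \RL$, so $a \wi a$ and $b \wi b$ by definition of $\RL$. Since $a\wedge b \leq a$ and $a \leq a$, Fact~\ref{f:wi}(2) (that ${\leq}\cor{\wi}\cor{\leq}\sse{\wi}$) yields $a\wedge b \wi a$, and symmetrically $a\wedge b \wi b$. Applying Fact~\ref{f:wi}(4) to the two-element set $\{a,b\}$ then gives $a\wedge b \wi a\wedge b$, so $a \wedge b \in \RL$. Dually, for closure under binary join: from $a \wi a \leq a\vee b$ and $b \wi b \leq a\vee b$ we obtain $a \wi a\vee b$ and $b \wi a\vee b$ by Fact~\ref{f:wi}(2) again, and Fact~\ref{f:wi}(3) applied to $\{a,b\}$ gives $a\vee b \wi a\vee b$, so $a \vee b \in \RL$. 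Since $\RL$ carries the restrictions of $\wedge$, $\vee$ and contains $0$ and $1$, this establishes that $\langle \RL, \wedge, \vee, 0, 1\rangle$ is a sublattice of $\bbL$.

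I do not expect a genuine obstacle here; the argument is essentially a two-line diagram chase through Fact~\ref{f:wi}. An equally short alternative, should one prefer an explicit witness rather than invoking Fact~\ref{f:wi}, is the purely algebraic route: check that $\neg a \vee \neg b$ complements $a\wedge b$ and $\neg a \wedge \neg b$ complements $a \vee b$, each via one distributivity computation — this uses that finite meets distribute over finite joins in a frame and hence, the underlying lattice being distributive, finite joins distribute over finite meets. The only point that warrants a word of care is that the relevant complement is the lattice complement, which in a frame coincides with the Heyting negation $\neg a$, as noted just above Fact~\ref{f:wi}; with that identification the computations are routine and no further hypotheses on $\bbL$ are needed.
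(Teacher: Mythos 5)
Your proof is correct. Note that the paper itself offers no proof of this fact, simply citing \citep[\S III-1.1]{Johnstone1982}, so there is no in-paper argument to compare against; your derivation of closure under $\wedge$ and $\vee$ from Fact~\ref{f:wi}(2)--(4) applied to the two-element set $\{a,b\}$, together with the observation that $0$ and $1$ are complemented, is a clean and complete justification, and the alternative explicit-witness computation (that $\neg a\vee\neg b$ and $\neg a\wedge\neg b$ complement $a\wedge b$ and $a\vee b$ respectively) is the standard textbook route and equally valid.
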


We define a function $\downR \colon \P L \to \P \RL$ which maps $A\in
\P L$ to $\downarr A \cap \RL$.

\begin{lemma} \label{Compactness:DownRProp}
Let $\T\colon \Set \to \Set$ be a standard, finitary, weak
pullback-preserving functor. If  $\bbL$ is a zero-dimensional frame, then
\begin{enumerate}
\item $\forall \alpha \in \T L$, $\nabla \alpha = \smallvee \{ \nabla
\beta \mid \beta\in \T \RL, \, \beta \Tleq \alpha\}$;
\item $\forall \Phi \in \T\P L$, $\T \smallvee (\Phi) = \T \smallvee
\cof \T \downR (\Phi)$;
\item $\forall \Phi \in \T \P L$, $\forall \alpha \in \T L$, $[ \alpha
\in \T \RL$ and $\alpha \Tleq \cor \rel{\Tb {\in}} \Phi
]$ iff $\alpha \in \nbsem \left(\T \downR (\Phi) \right)$.
\end{enumerate}
Similarly to (1), if $\bbL$ is regular then
$\forall \alpha \in \T L,\, \nabla \alpha = \smallvee \{ \nabla \beta
\mid \beta \in \T L,\, \beta \Twi \alpha\}$.
\end{lemma}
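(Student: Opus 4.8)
The four statements all reduce to identities of maps or relations at the level of the ground frame $\bbL$ which then lift to $\T L$ by functoriality and the calculus of relation lifting (Fact~\ref{f:rl}); the modal content is injected only through axiom $(\nabla 3)$ and through part~(3), which is precisely the tool for recognising sets of the form $\nbsem(\Phi)$. I would prove the parts in the order (2), (3), (1), and the regular addendum. For part~(2): zero-dimensionality says exactly that $\smallvee A = \smallvee(\downarr A\cap\RL) = \smallvee(\downR A)$ for every $A\in\P L$, since $\smallvee A = \smallvee_{a\in A}\smallvee\{b\in\RL\mid b\le a\}$; hence $\smallvee_{\bbL} = \smallvee_{\bbL}\cof\downR$ as maps $\P L\to L$ (reading the right-hand $\smallvee$ along the inclusion $\P\RL\sse\P L$, legitimate since $\T$ is standard). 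Applying $\T$ yields $\T\smallvee = \T\smallvee\cof\T\downR$, which is~(2).

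The core of the lemma is part~(3), and I would base it on the ground-level relational identity
\[
({\leq}\cor{\in})\rst{\RL\times\P L}\;=\;{\in_{\RL}}\cor\cv{(\Graph{\downR})},
\]
both sides being $\{(a,A)\mid a\in\RL,\ a\in\downarr A\} = \{(a,A)\mid a\in\downR A\}$. Applying relation lifting, the left-hand side becomes $({\Tleq}\cor{\Tin})\rst{\T\RL\times\T\P L}$ (by Fact~\ref{f:rl}(5,6), the restriction clause~(6) being where standardness is used), and the right-hand side becomes $\Tb({\in_{\RL}})\cor\cv{(\Graph{\T\downR})}$ (by Fact~\ref{f:rl}(5,3,1)). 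Reading off what it means for $\alpha$ to be related to $\Phi$ by each of these two equal lifted relations gives, on the left, ``$\alpha\in\T\RL$ and $\alpha\Tleq\cor\Tin\Phi$'', and on the right, ``$\alpha\rel{\Tb({\in_{\RL}})}\T\downR(\Phi)$'', i.e.\ $\alpha\in\nbsem(\T\downR(\Phi))$. This is~(3).

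For part~(1): applying zero-dimensionality to singletons gives $\id_L = \smallvee\cof\downR\cof\single_L$, hence $\id_{\T L} = \T\smallvee\cof\T\downR\cof\T\single_L$, so that $\alpha = \T\smallvee(\Phi)$ for $\Phi := \T\downR(\T\single_L(\alpha))\in\T\P\RL$. Axiom $(\nabla 3)$ then gives $\nabla\alpha \le \smallvee\{\nabla\beta\mid\beta\in\nbsem(\Phi)\}$, and I would identify this index set by part~(3) applied with $\T\single_L(\alpha)$ in place of $\Phi$, together with the ground identity $({\leq}\cor{\in})\cor\cv{(\Graph{\single_L})} = {\leq}$ (which lifts, by Fact~\ref{f:rl}(5,3,1,2), to $({\Tleq}\cor{\Tin})\cor\cv{(\Graph{\T\single_L})} = {\Tleq}$); this yields $\nbsem(\Phi) = \{\beta\in\T\RL\mid\beta\Tleq\alpha\}$, proving the inequality $\le$ in~(1), while the reverse inequality is immediate from $(\nabla 1)$ because $\T\RL\sse\T L$. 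The regular case is strictly parallel: put $g\colon L\to\P L$, $g(a) := \{b\in L\mid b\wi a\}$, so regularity gives $\id_L = \smallvee\cof g$ and $\alpha = \T\smallvee(\T g(\alpha))$; axiom $(\nabla 3)$ together with the ground identity ${\in}\cor\cv{(\Graph{g})} = {\wi}$ (lifting to $\Tin\cor\cv{(\Graph{\T g})} = {\Twi}$) shows $\nbsem(\T g(\alpha)) = \{\beta\in\T L\mid\beta\Twi\alpha\}$, and the reverse inequality uses ${\wi}\sse{\leq}$ (Fact~\ref{f:wi}(1)), hence ${\Twi}\sse{\Tleq}$ by Fact~\ref{f:rl}(4), with $(\nabla 1)$.

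The only genuinely delicate point I expect is the relational bookkeeping in part~(3): one must be scrupulous about the type of the membership relation ($\in_L$ versus $\in_{\RL}$), about the direction of $\Graph{\downR}$ (which forces the converse $\cv{(\Graph{\downR})}$ rather than $\Graph{\downR}$ itself), and about the appeal to Fact~\ref{f:rl}(6), valid only because $\T$ is standard. One should also keep in mind that joins of infinite subsets of $\RL$ need not lie in $\RL$, so throughout all joins are computed in $\bbL$ and $\T\downR(\Phi)$ is read inside $\T\P L$ via standardness. Once (2) and (3) are set up, (1) and the regular addendum are routine.
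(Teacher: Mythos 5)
Your proposal is correct and follows essentially the same route as the paper: zero-dimensionality (resp.\ regularity) is packaged as the ground-level identity $\id_L = \smallvee\cof\downR\cof\single$ (resp.\ $\id_L = \smallvee\cof w$ with $w(a)=\{b\mid b\wi a\}$), lifted through $\T$, and combined with $(\nabla 3)$ for the $\leq$ direction and $(\nabla 1)$ for the converse, while (2) and (3) are obtained by lifting the pointwise facts $\smallvee A=\smallvee\downR A$ and $a\in\downR A\Leftrightarrow a\in\RL\wedge a\rel{{\leq}\cor{\in}}A$. The only differences are organizational (you prove (3) first and reuse it inside (1), and you spell out the relation-lifting steps as explicit identities of composed relations where the paper lifts pointwise equivalences), which does not change the substance of the argument.
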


\begin{proof}
(1). First, observe that for all $a \in L$, we have
that
\begin{align*} a& =
\smallvee \{ b \in \RL \mid b \leq a\} &&\text{by zero-dimensionality,}\\
&= \smallvee \downR \{ a\} &&\text{by definition of $\downR$,}\\
&= \smallvee \downR \cof \single (a) &&\text{by def.~of $\eta \colon
\operatorname{Id}_{\Set} \to \P$.}
\end{align*}
By relation lifting, it follows that
\begin{equation} \label{Compactness:DownRProp:C1}
\forall \alpha \in \T L,
\, \alpha = \T \smallvee \cof \T \downR \cof \T \single (\alpha).
\end{equation}
Now observe that for all $a,b \in L$, we have $b \in \downR \eta(a)$ iff
$b \in \RL$ and $b \leq a$. By relation lifting, it follows that
\begin{equation}\label{Compactness:DownRProp:C2}
\forall \alpha,\beta \in \T L, \, \left[\beta \Tin \T \downR \cof \T
\single (\alpha) \text{ iff } \beta \in \T \RL \text{ and } \beta
\Tleq \alpha \right].
\end{equation}
Combining these two observations, we see that
\begin{align*}
\nabla \alpha &= \nabla \left( \T \smallvee \cof \T \downR \cof \T
\single (\alpha) \right) &&\text{by \eqref{Compactness:DownRProp:C1},}\\
&= \smallvee \{ \nabla \beta \mid \beta \Tin \T \downR \cof \T
\single (\alpha) \}&&\text{by $(\nabla 3)$,}\\
&= \smallvee \{ \nabla \beta \mid \beta \in \T \RL ,\, \beta \Tleq
\alpha \} &&\text{by \eqref{Compactness:DownRProp:C2}.}
\end{align*}

(2).
It follows by zero-dimensionality of $\bbL$ that for all $A \in \P L$, we
have $\smallvee A = \smallvee \downR A$.
Consequently, by relation lifting, (2) holds.

(3). Take $a \in L $ and $A \in \P L$. Then
\begin{align*}
a \in \downR A &\Leftrightarrow a \in \RL \text{ and } \exists b \in
A,\,a \leq b&&\text{by definition of $\downR$,}\\
&\Leftrightarrow a \in \RL \text{ and } a \rel{{\leq} \cor {\in }} A
&&\text{by def.~of relation composition.}
\end{align*}
It follows by relation lifting that
\begin{equation*}\label{Compactness:DownRProp:C3}
\forall \Phi \in \T \P L, \forall \alpha \in \T L,\, \alpha \Tin \T
\downR (\Phi) \text{ iff } \alpha \in \T \RL \text{ and } \alpha
\rel{\T {\leq} \cor \T {\in }} \Phi.
\end{equation*}
Now it follows by definition of $\nbsem (\Phi)$ that (3) holds.

For the last part of the proof, first observe that if $\bbL$ is
regular, then for all $a \in L$, $a = \smallvee w(a)$, where we
temporarily define $w  \colon L \to \P L$ as
\[w \colon a \mapsto \{ b\in L \mid b \wi a\}.
\]
By relation
lifting, it follows that
\begin{equation}\label{Compactness:DownRProp:C4}
\T \smallvee \cof \T w = \id_L.
\end{equation}
Moreover, it follows by definition of $w \colon L \to \P L$ that for
all $a,b \in L$, $ b \in w(a)$ iff $b \wi a$. Consequently,
\begin{equation}\label{Compactness:DownRProp:C5}
\forall \alpha,\beta \in \T L,\, \beta \Tin \T w (\alpha) \text{ iff }
\beta \Twi \alpha.
\end{equation}
Now we see that for any $\alpha \in \T L$,
\begin{align*}
\nabla\alpha &= \nabla \left( \T \smallvee \cof \T w (\alpha) \right) &&\text{by
\eqref{Compactness:DownRProp:C4},}\\
&= \smallvee \{ \nabla \beta \mid \beta \Tin \T w(\alpha)\} &&\text{by
$(\nabla 3)$,}\\
&= \smallvee \{ \nabla \beta \mid \beta \Twi \alpha \}&&\text{by \eqref{Compactness:DownRProp:C5}.}
\end{align*}
\end{proof}

The key technical lemma of this subsection states that relation
lifting preserves the $\wi$-relation.

\begin{lemma}
\label{p:rlwi}
Let $\T$ be a standard, finitary, weak pullback-preserving functor and let
$\bbL$ be a frame.
Then
\begin{equation}
\label{eq:rg1}
\text{for all } \al,\be \in \T L:\;  \al \Twi \be
\text{ implies } \nb\al \wi_{\VT\bbL} \nb\be.
\end{equation}
\end{lemma}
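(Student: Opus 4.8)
The plan is to unpack the definition of $\alpha \Twi \beta$, find an explicit witness $\gamma$ with $\alpha \wedge \gamma = 0$ and $\beta \vee \gamma = 1$ at the ``ground level'', push it through the functor $\T$, and then translate the resulting equations into statements about $\nabla$-generators using the axioms $(\nabla 1)$, $(\nabla 2)$ and $(\nabla 3)$. Concretely, $\alpha \Twi \beta$ means there is some $\rho \in \T({\wi})$ with $(\T\pi)\rho = \alpha$ and $(\T\pi')\rho = \beta$, where $\pi,\pi'\colon {\wi} \to L$ are the two projections of the well-inside relation viewed as a span. Since for $a \wi b$ in a frame we may canonically take the witness to be $\neg a$, I would factor $\pi,\pi'$ through the map $w\colon {\wi} \to L$, $(a,b) \mapsto \neg a$, so that from $\rho$ we obtain an element $\delta := (\T w)\rho \in \T L$ that serves as a ``lifted complement'' of $\alpha$.

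First I would establish the two ground-level relational facts and lift them. From $a \wi b$ we get $a \wedge \neg a = 0$ and $b \vee \neg a = 1$; phrased as relations between the span ${\wi}$ and $L$, these say that the map $(a,b) \mapsto \{a, \neg a\}$ lands in the set $\{A \in \Pom L \mid \bw A = 0\}$, and the map $(a,b)\mapsto \{b,\neg a\}$ lands in $\{A \in \Pom L \mid \bv A = 1\}$. Applying $\T$ and using Lemma~\ref{p:rl2} (or directly Fact~\ref{f:rl}), I would obtain: writing $\Gamma := \{\alpha,\delta\} \in \Pom\T L$, there is $\Phi \in \T\Pom L$ with $\T\bw(\Phi) = 0_{\T L}$ (in the sense that $0_{\bbL} \in \Base^\T$ of it) built from the ``meet'' side, and dually a $\Psi \in \T\Pom L$ with $\T\bv(\Psi) = 1$-ish from the ``join'' side; more precisely, $\alpha$ and $\delta$ have a common lifted structure witnessing both $(\T\wedge)$-collapse to $0$ and $(\T\vee)$-lift to $1$. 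The cleanest route is probably: use $(\nabla 2)$/Lemma~\ref{Flatsite:LowerBounds} to see $\nabla\alpha \wedge \nabla\delta \leq \bv\{\nabla(\T\bwsmall)\Xi \mid \Xi \in \SRD\{\alpha,\delta\}\}$, and show each such $(\T\bwsmall)\Xi$ has $0_{\bbL}$ in its base (because the support graph of $\rho$ forces each $\Xi$-block to mix an $a$ with its $\neg a$), hence $\nabla\alpha \wedge \nabla\delta = 0_{\VT\bbL}$ by Lemma~\ref{p:basicV}(1); and dually use $(\nabla 3)$ together with part~(4)/(5) of Lemma~\ref{p:basicV} applied to the join witness to get $1_{\VT\bbL} = \nabla\beta \vee \nabla\delta$. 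These two facts together say exactly $\nabla\alpha \wi_{\VT\bbL} \nabla\beta$, with witness $\nabla\delta$.

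The main obstacle I expect is the bookkeeping needed to verify that the slim redistributions of $\{\alpha,\delta\}$ really are ``degenerate'' in the right way — i.e. that for every $\Xi \in \SRD\{\alpha,\delta\}$ the set $\Base^\T(\Xi)$ contains a block meeting both a value $a$ and its complement $\neg a$ so that $\T\bwsmall$ of it hits $0$ — and symmetrically that the join witness $\Psi$ obtained from the ${\wi}$-span is genuinely of the form $\T\Pom$ of something whose $\T\bvsmall$ image is $1$. This is where one must use that $\rho \in \T({\wi})$ couples each coordinate $a$ of $\alpha$ with the \emph{same} coordinate $\neg a$ of $\delta$, via standardness and the behaviour of $\Base^\T$ under natural transformations; the argument is morally the lifted version of ``$a \wedge \neg a = 0$ and $b \vee \neg a = 1$'' but making the block structure precise requires care. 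Everything else — the reductions via $(\nabla 1)$, the appeal to Fact~\ref{f:wi}, and the two invocations of Lemma~\ref{p:basicV} — is routine. I would also note the regular case follows by the identical argument, or simply by observing that $\alpha \Twi \beta$ is what we lifted and regularity is not needed for this particular lemma.
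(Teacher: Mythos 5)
There is a genuine gap in the ``join half'' of your argument. The meet half is fine: that $\nb\al \land \nb\de = 0_{\VT\bbL}$ for your lifted negation $\de = (\T w)\rho$ can indeed be pushed through via $(\nb2)$/Lemma~\ref{Flatsite:LowerBounds} and Lemma~\ref{p:basicV}(1), since every common $\Tb{\leq}$-lower bound of $\al$ and $\de$ has $0_{\bbL}$ in its base. But the claim $1_{\VT\bbL} = \nb\be \lor \nb\de$ is false, and no single generator $\nb\de$ can serve as the witness. Already for $\T = \Pom$ with $\al = \{a\}$, $\be = \{b\}$, $a \wi b$, your witness is $\de = \{\neg a\}$, and one checks from $(\nb2')$ that $\nb\nada \land \nb\{b\} = \nb\nada \land \nb\{\neg a\} = 0$ while $\nb\nada \neq 0$ in general; hence $\nb\{b\} \lor \nb\{\neg a\} < 1_{\VT\bbL}$. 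A second, independent failure: Lemma~\ref{p:basicV}(5) applied to $A = \{b,\neg a\}$ produces the joinand $\nb\{b,\neg a\}$, which is not below $\nb\{b\} \lor \nb\{\neg a\}$ either (a compact set contained in $b \cup \neg a$ and meeting both need not lie inside either one). So the passage from ``$b \lor \neg a = 1$ at ground level'' to ``$\nb\be \lor \nb\de = 1$ after lifting'' simply does not go through: $(\nb3)$ and Lemma~\ref{p:basicV}(4,5) give you a join over \emph{all} of $\T A$ (respectively all lifted members), not just the two elements $\be$ and $\de$.

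The paper's proof is built precisely to absorb these extra joinands. After reducing (as you do, essentially) to the case $\be = (\T f)\al$ with $a \wi fa$ pointwise, it defines $h(A) := \bw(\{\neg a \mid a \in A\} \cup \{ fa \mid a \notin A\})$ for $A \sse \Base^\T(\al)$, so that distributivity gives $1_{\bbL} = \bv_{A} h(A)$, and hence by Lemma~\ref{p:basicV}(5) and surjectivity of $\T h$ that $1_{\VT\bbL} = \bv\{\nb(\T h)(\Phi) \mid \Phi \in \T\P\Base^\T(\al)\}$. The witness $c$ is then the join of those $\nb(\T h)(\Phi)$ with $(\al,\Phi) \notin \Tb{\not\in}$ --- a typically infinite family including the analogues of $\nb\nada$ and of the ``mixed'' terms like $\nb\{b,\neg a\}$ --- and the key claim is the dichotomy: $(\al,\Phi) \in \Tb{\not\in}$ forces $(\T h)(\Phi) \Tleq \be$, while $(\al,\Phi) \notin \Tb{\not\in}$ forces $\nb\al \land \nb(\T h)(\Phi) = 0_{\VT\bbL}$. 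Your single $\de$ corresponds to just one value of $\Phi$ (the one selecting every coordinate), so your argument would need to be reorganised around this family of interleaved meets to close the gap.
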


\begin{proof}
Let $\al,\be \in \T L$ be such that $\al \Twi \be$.
Our aim will be to show that $\nb\al \wi_{\VT\bbL} \nb\be$.

We may assume without loss of generality that
\begin{multline}
\label{eq:rg2}
\be = (\T f) \al \text{ for some } f\colon  \Base^\T (\al) \to
\Base^\T (\be) \\
\text{ such that } a \wi fa \text{ for all } a \in \Base^\T (\al).
\end{multline}

To justify this assumption, assume that we have a proof of \eqref{eq:rg1}
for all $\be$ satisfying \eqref{eq:rg2}.
To derive \eqref{eq:rg1} in the general case, consider arbitrary elements
$\al,\be'\in\T L$ such that $\al \Twi \be'$.
In order to show that $\nb\al \Twi \nb\be'$, consider the map $f\colon
\Base^\T (\al) \to L$ given by $f(a) := \bw \{ b \in \Base^\T (\be') \mid a \wi b \}$.
On the basis of Fact~\ref{f:wi} it is not difficult to see that
$\Graph{(f)} \sse {\wi}$ and so by the
properties of relation lifting we obtain $\Graph{(\T f)} \sse {\Twi}$.
In particular, we find that $\al \Twi (\T f)\al$; thus by our assumption
we may conclude that $\nb\al \wi \nb(\T f)\al$.
Also, observe that $a \wi b$ implies $fa \leq b$, for all $a\in\Base^\T
(\al)$ and $b \in \Base^\T (\be')$.
Hence by Lemma~\ref{p:rl2} we may conclude from $\al \Twi \be'$ that
$(\T f)\al \Tleq \be'$, which gives $\nb(\T f)\al \leq \nb\be'$.
Combining our observations thus far, by Fact~\ref{f:wi} it follows
from $\nb\al \wi \nb(\T f)\al$ and $\nb(\T f)\al \leq \nb\be'$ that
$\nb\al \wi \nb\be'$ indeed.
Thus our assumption \eqref{eq:rg2} is justified indeed.

Turning to the proof itself, consider the map $h\colon  \P\Base^\T (\al) \to L$ given by
\[
h(A)  := \bw \left(
\{ \neg a \mid a \in A \} \cup \{ fa \mid a \not\in A \}
\right).
\]
Our first observation is that, since by assumption $\neg a \lor f a =
1_{\bbL}$ for each $a \in \Base^\T (\al)$, we may infer that
\[
1_{\bbL} = \bw \{ \neg a \lor fa \mid a \in \Base^\T (\al) \},
\]
a straightforward application of the (finitary) distributive law yields
that
\begin{equation}
\label{eq:reg1}
1_{\bbL} = \bv \{ h(A) \mid A \in \P\Base^\T (\al) \}.
\end{equation}
Define $X \sse L$ to be the range of $h$, so that we may think of $h$ as a
surjection $h\colon  \P\Base^\T (\al) \to X$, and read \eqref{eq:reg1} as saying that
$1 = \bv X$.
Using Lemma~\ref{p:basicV}(5), from the latter observation we may infer
that
\begin{equation}
\label{eq:reg1b}
1_{\VT\bbL} = \bv \{ \nb\xi \mid \xi \in \T X \}.
\end{equation}
However, from $h\colon  \P\Base^\T (\al) \to X$ being surjective we may infer that
$\T h\colon  \T\P\Base^\T (\al) \to \T X$ is also surjective, so that we may read
\eqref{eq:reg1b} as
\begin{equation}
\label{eq:reg2a}
1_{\VT\bbL} = \bv \{ \nb\T h (\Phi) \mid
\Phi \in \T\P\Base^\T (\al) \}.
\end{equation}
This leads us to the key observation in our proof:
We may partition the set $\{ \T h (\Phi) \mid \Phi \in \T\P\Base^\T (\al) \}$
into elements $\ga$ such that $\nb\ga \leq \nb\be$, and elements $\ga$
satisfying $\nb\al\land\nb\ga = 0_{\VT\bbL}$.

\begin{claimfirst}
Let $\Phi \in  \T\P\Base^\T (\al)$.
\\(a) If $(\al,\Phi) \in \Tb{\not\in}$, then
$\T h (\Phi) \Tleq \be$;
\\(b) if $(\al,\Phi) \not\in \Tb{\not\in}$, then
$\nb\al \land \nb\T h(\Phi) = 0_{\VT\bbL}$.
\end{claimfirst}

\begin{pfclaim}
For part~(a), it is not hard to see that
\[
a \not\in A \Rightarrow h(A) \leq f(a),
\text{ for all } a \in \Base^\T (\al), A \in \P\Base^\T (\al).
\]
From this it follows by Lemma~\ref{p:rl2} that
\[
\al \rel{\Tb \not\in} \Phi \Rightarrow \T h(\Phi)
\Tleq (\T f)(\al) = \be.
\]

For part (b), assume that $\nb\al\land\nb Th (\Phi) > 0_{\VT\bbL}$.
It suffices to derive from this that $\al \rel{\Tb{\not\in}} \Phi$.

Let $\leq'$ be the restriction of $\leq$ to the non-zero part of $\bbL$,
that is, ${\leq'} := {\leq}\rst{L'\times L'}$, where $L' = L \setminus
\{0_{\bbL}\}$.
We claim that for all $\ga,\de \in \T L$:
\begin{equation}
\label{eq:reg01}
\nb\ga\land\nb\de > 0_{\VT\bbL}
\;\Rightarrow\; (\ga,\de) \in {\Tb{\geq'}}\cor{\Tb{\leq'}}.
\end{equation}
To see this, assume that $\nb\ga\land\nb\de > 0_{\VT\bbL}$, and observe that
Lemma~\ref{Flatsite:LowerBounds} yields the existence of a $\theta \in \T L$
such that $\nb\theta > 0_{\VT\bbL}$ and $\theta \Tleq \ga,\de$.
It follows from Lemma~\ref{p:basicV}(1) that $\ga,\de$ and $\theta$
all belong to $\T L'$, and so $\theta$ is witnesses to the fact that
$(\ga,\de) \in {\Tb{\geq'}}\cor{\Tb{\leq'}}$.

By \eqref{eq:reg01} and the assumption on $\al$ and $\Phi$ it follows that
$(\al,\Phi) \in {\Tb{\geq'}}\cor{\Tb{\leq'}}\cor\cv{(\Graph{Th})}$, and so by
Fact~\ref{f:rl} we obtain
\begin{equation}
\label{eq:reg02}
(\al,\Phi) \in {\Tb({\geq'}\cor{\leq'}\cor\cv{(\Graph{h})})}
\end{equation}

The crucial observation now is that
\begin{equation}
\label{eq:reg03}
{\geq'}\cor{\leq'}\cor\cv{(\Graph{h})} \;\sse\; {\not\in}.
\end{equation}
For a proof, take a pair $(a,A) \in L \times \P L$ in the LHS of
\eqref{eq:reg03}, and suppose for contradiction that $a \in A$.
Then by definition of $h$ we obtain $h(A) \leq \neg a$, so that $a \land
h(A) = 0_{\bbL}$.
But if $a \rel{{\geq'}\cor{\leq'}\cor\cv{(\Graph{h})}} A$, then there must
be some $b$ such that $b \leq' a, h(A)$, and by definition of $\leq'$ this
can only be the case if $b > 0_{\bbL}$.
This gives the desired contradiction.

Finally, by monotonicity of relation lifting, it is an immediate consequence
of \eqref{eq:reg02} and \eqref{eq:reg03} that $\al \rel{\Tb{\not\in}} \Phi$.
This finishes the proof of the Claim.
\end{pfclaim}

On the basis of the Claim it is straightforward to finish the proof.
Define
\[
c := \bv \left\{ \T h (\Phi) \mid
\Phi \in \T\P\Base^\T (\al) \text{ such that } (\al,\Phi) \not\in \Tb{\not\in}
\right\},
\]
then we may calculate that
\begin{align*}
&c \lor \nb\be\\
&\geq c \lor \bv \left\{
\T h (\Phi) \mid
\Phi \in \T\P\Base^\T (\al) \text{ such that } (\al,\Phi) \in \Tb{\not\in}
\right\}
&& \text{(Claim~1(a))}
\\&= \bv \left\{
\T h (\Phi) \mid
\Phi \in \T\P\Base^\T (\al)
\right\}
&& \text{(definition of $c$)}
\\&= 1_{\VT\bbL}
&& \text{(equation~\eqref{eq:reg2a})}
\end{align*}
and
\begin{align*}
&\nb\al \land c\\
&= \bv \left\{
\nb\al \land \T h(\Phi) \mid
\Phi \in \T\P\Base^\T (\al) \text{ such that } (\al,\Phi) \not\in \Tb{\not\in}
\right\}
&& \text{(distributivity)}
\\&= \bv \left\{ 0_{\VT\bbL} \mid
\Phi \in \T\P\Base^\T (\al) \text{ such that } (\al,\Phi) \not\in \Tb{\not\in}
\right\}
&& \text{(Claim~1(b))}
\\&= 0_{\VT\bbL}
\end{align*}
In other words, $c$ witnesses that $\nb\al \wi_{\VT\bbL} \nb\be$.
\end{proof}

We now arrive at the main result of this subsection, namely, that the
$\T$-powerlocale construction preserves regularity and zero-dimensionality.

\begin{theorem}
\label{t:reg}
Let $\bbL$ be a frame and let $\T$ be a standard, finitary, weak
pullback-preserving functor.
\begin{enumerate}
\item If $\bbL$ is regular then so is $\VT\bbL$.
\item If $\bbL$ is zero-dimensional then so is $\VT \bbL$.
\end{enumerate}
\end{theorem}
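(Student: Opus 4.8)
The two statements are proved in parallel, and both reduce to the key technical result of the subsection, Lemma~\ref{p:rlwi}, together with the disjunctive normal form from Corollary~\ref{Flatsite:NormalForm}. First I would show that $\VT\bbL$ is regular whenever $\bbL$ is. Fix $x \in \VT\bbL$. By Corollary~\ref{Flatsite:NormalForm} we may write $x = \bv \{ \nb\ga \mid \ga \in \Ga \}$ for some $\Ga \in \P\T L$, so it suffices to prove that each generator $\nb\ga$ satisfies $\nb\ga = \bv \{ u \in \VT\bbL \mid u \wi \nb\ga \}$; the general case then follows by Fact~\ref{f:wi}(3) and elementary suplattice manipulation (a join of regular-below approximations is again approximated from below). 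For a single generator $\nb\ga$, Lemma~\ref{Compactness:DownRProp} (its last clause, for regular $\bbL$) gives
\[
\nb\ga = \bv \{ \nb\be \mid \be \in \T L,\ \be \Twi \ga \},
\]
and by Lemma~\ref{p:rlwi}, $\be \Twi \ga$ implies $\nb\be \wi_{\VT\bbL} \nb\ga$. Hence each joinand $\nb\be$ on the right is well inside $\nb\ga$, so $\nb\ga \leq \bv \{ u \mid u \wi \nb\ga \}$; the reverse inequality is immediate from Fact~\ref{f:wi}(1). This establishes regularity of $\nb\ga$, and therefore of $x$.

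For zero-dimensionality the argument is the same shape but one step shorter, since we already have a ready-made statement about clopens in $\T$-powerlocales. Again fix $x = \bv\{\nb\ga \mid \ga \in \Ga\}$ by Corollary~\ref{Flatsite:NormalForm}; it suffices to handle a single generator $\nb\ga$. By Lemma~\ref{Compactness:DownRProp}(1), since $\bbL$ is zero-dimensional,
\[
\nb\ga = \bv \{ \nb\be \mid \be \in \T\RL,\ \be \Tleq \ga \},
\]
where $\RL$ denotes the clopens of $\bbL$. Now for $\be \in \T\RL$ each element of $\Base^\T(\be)$ is clopen, i.e.\ satisfies $b \wi b$; by the properties of relation lifting (Fact~\ref{f:rl}, as used in Lemma~\ref{p:rlwi}) this gives $\be \Twi \be$, and then Lemma~\ref{p:rlwi} yields $\nb\be \wi_{\VT\bbL} \nb\be$, i.e.\ $\nb\be$ is clopen in $\VT\bbL$. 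Together with $\nb\be \leq \nb\ga$ (axiom $(\nb1)$, from $\be \Tleq \ga$), the displayed equation exhibits $\nb\ga$ as a join of clopens below it, so $\nb\ga$ is a zero-dimensional element; and hence so is the arbitrary $x$.

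**Where the work sits.** Both proofs are short because all the real content has already been isolated: Lemma~\ref{p:rlwi} (relation lifting preserves $\wi$), Lemma~\ref{Compactness:DownRProp} (the normal-form expansions of $\nb\ga$ over $\Twi$- resp.\ $\T\RL$-approximants), and Corollary~\ref{Flatsite:NormalForm} (disjunctive normal form). The only mild obstacle is the bookkeeping in passing from ``each generator $\nb\ga$ is regular / zero-dimensional'' to ``every $x \in \VT\bbL$ is'': one must check that if $x = \bv_i y_i$ with each $y_i = \bv\{z \mid z \wi y_i\}$ (resp.\ a join of clopens below $y_i$), then $x = \bv\{z \mid z \wi x\}$ (resp.\ a join of clopens below $x$). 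For regularity this is exactly Fact~\ref{f:wi}(1) (to get $\{z \mid z\wi x\}$ above nothing new is needed) combined with the observation that $z \wi y_i$ and $y_i \leq x$ give $z \wi x$ via Fact~\ref{f:wi}(2); for zero-dimensionality it is even more direct since a clopen below $y_i$ is a clopen below $x$. I expect no genuine difficulty here, only care to state the reduction cleanly.
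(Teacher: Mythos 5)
Your proposal is correct and follows essentially the same route as the paper: reduce to generators via Corollary~\ref{Flatsite:NormalForm}, expand $\nb\ga$ over $\Twi$-approximants (resp.\ $\T\RL$-approximants) via Lemma~\ref{Compactness:DownRProp}, and conclude with Lemma~\ref{p:rlwi}. The paper merely leaves the reduction from arbitrary elements to generators implicit, whereas you spell it out via Fact~\ref{f:wi}(2); the mathematical content is identical.
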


\begin{proof}%
(1).
By Corollary \ref{Flatsite:NormalForm}, it suffices to show that for
all $\alpha \in \T L$,
\begin{equation}\label{t:reg:C1}
\nabla \alpha = \smallvee \{ \nabla \beta \in \VT \bbL \mid  \nabla
\beta \wi \nabla \alpha\}.
\end{equation}
Take
$\alpha \in \T L$; we
see that
\begin{align*}
\nabla \alpha &= \smallvee \{ \nabla
\beta \mid \beta \Twi \alpha\} &&\text{by Lemma
\ref{Compactness:DownRProp},}\\
&\leq \smallvee \{ \nabla \beta \mid \nabla\beta \wi_{\VT \bbL} \nabla
\alpha\} &&\text{by Lemma \ref{p:rlwi},}\\
&\leq \nabla \alpha &&\text{since ${\wi} \subseteq {\leq}$.}
\end{align*}
It follows that \eqref{t:reg:C1} holds, concluding the proof of part (1).

(2). Again by Corollary \ref{Flatsite:NormalForm}, it suffices to show that for
all $\alpha \in \T L$,
\begin{equation}\label{t:reg:C2}
\nabla \alpha = \smallvee \{ \nabla \beta \mid  \nabla
\beta \in C_{\VT \bbL},\, \nabla \beta \leq \nabla \alpha\}.
\end{equation}
The main observation here is that
\begin{equation}\label{t:reg:C3}
\forall \beta \in \T \RL,\, \nabla\beta \in C_{\VT \bbL} .
\end{equation}
To see why, recall that $\RL := \{ b\in L \mid b \wi b\}$,  so that
for all $b \in \RL$, $b=b$ implies $b \wi b$. Consequently, by
relation lifting,
\[ \forall \beta \in \T\RL,\, \beta \Twi \beta.\]
It follows by Lemma \ref{p:rlwi} that \eqref{t:reg:C3} holds. Now
\begin{align*}
\nabla \alpha &= \smallvee \{ \nabla \beta \mid \beta \in \T \RL,\,
\beta \Tleq \alpha\}&&\text{by Lemma Lemma
\ref{Compactness:DownRProp}(1),}\\
&\leq \smallvee \{ \nabla \beta \in C_{\VT \bbL} \mid \beta \Tleq
\alpha \} &&\text{by \eqref{t:reg:C3},}\\
&\leq \smallvee \{ \nabla \beta \in C_{\VT \bbL} \mid \nabla\beta \leq
\nabla \alpha \} &&\text{by $(\nabla 1)$,}\\
&=\nabla \alpha &&\text{by order theory.}
\end{align*}
It now follows that \eqref{t:reg:C2} holds; consequently we see that
(2) holds.
\end{proof}

\newcommand{\covR}{\cov^{C}}
\newcommand{\VTR}{\V_{\T}^C}

\subsection{Compactness + zero-dimensionality} \label{Subsect:Compactness}

In this subsection, we will show that if $\bbL$ is compact and
zero-dimensional, then so is $\VT \bbL$. Our proof strategy is as follows.
Given a compact zero-dimensional frame $\bbL$, we will define a new
construction $\VTR \bbL$ which is guaranteed to be compact, and then
we show that $\VT \bbL \simeq \VTR \bbL$.

We define a flat site presentation $\struc{ \T \RL, \Tb
{\leq}, \covR_0}$, where
\[ \covR_0 := \{(\alpha,\nbsem(\Phi)) \in \T\RL \times \P \T L
\mid \alpha \rel{\Tb {\leq}} \T \smallvee (\Phi),\, \Phi \in \T \Pom \RL \}.\]
Observe that we view $\T \RL$ as a substructure of $\T L$, which is
justified by the fact that $\RL$ is a sublattice of $\bbL$ (Fact
\ref{Reg:RegOpenFacts}): this fact tells us that $\smallvee \colon \P L \to
L$ restricts to a function from $\Pom \RL$ to $\RL$; consequently,
$\T \smallvee$ maps $\T\Pom \RL$ to $\T \RL$, by standardness of $\T$.
Below, we will need the following property of relation lifting with
respect to ordered sets.

\begin{lemma}\label{Properties:Top}
Let $\T\colon \Set \to \Set$ be a standard, finitary, weak
pullback-preserving functor and
let $\mathbb{P}$ be a poset with a top element $1$.
Then for every $\beta \in \T P$ there is some $\alpha \in \T \{ 1\}$ such that
$\beta \Tleq \alpha$;
\end{lemma}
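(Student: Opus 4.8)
The plan is to reduce the statement to a fact about relation lifting of the ``constant-to-top'' map. Concretely, let $\mathbb{P}$ be a poset with top element $1$, and consider the unique function $g\colon P \to \{1\}$. I first observe that for every $b \in P$ we have $b \leq 1 = g(b)$, so that $\Graph{g} \sse {\leq}$; equivalently, $\cv{(\Graph{g})}\cor\Graph{g} = \Id_{\{1\}} \sse {\leq}$ holds trivially and $\Graph{g} \sse {\leq}$ directly. The content is just that $\Graph{g} \sse {\leq}$, i.e.\ $b \rel{\leq} g(b)$ for all $b \in P$.

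Next I would lift this along $\T$. Since $\T$ is standard, finitary and weak pullback-preserving, Lemma~\ref{p:rl2}(2) (taking $f = \id_P$, the given map $g$ in the role of the second function, and $S = {\leq}$) gives that $b \rel{\leq} g(b)$ for all $b$ implies $\beta = \T\id_P(\beta) \rel{\Tb{\leq}} \T g(\beta)$ for all $\beta \in \T P$. Thus for every $\beta \in \T P$ we have $\beta \Tleq \T g(\beta)$. It remains to check that $\T g(\beta) \in \T\{1\}$: but $g\colon P \to \{1\}$, so $\T g$ has codomain $\T\{1\}$, and since $\T$ is standard we may regard $\T\{1\}$ as literally a subset of $\T P$ via the inclusion $\{1\}\hookrightarrow P$. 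Setting $\alpha := \T g(\beta) \in \T\{1\}$ then yields $\beta \Tleq \alpha$, as required.

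The only point needing a little care is the identification of $\T\{1\}$ as a subset of $\T P$ and the compatibility of $\Tleq$ with this identification: this is exactly what standardness buys us (Fact~\ref{f:rl}(6) and the preservation of inclusions), so there is no real obstacle here. I do not expect any hard step; the lemma is essentially immediate once one phrases it as ``$\Graph{g}\sse{\leq}$ lifts to $\Graph{\T g}\sse{\Tb{\leq}}$'' and invokes Lemma~\ref{p:rl2}(2). If one prefers to avoid citing Lemma~\ref{p:rl2} one can argue directly: $\Graph{g}\sse{\leq}$ gives, by monotonicity of $\Tb$ (Fact~\ref{f:rl}(4)) together with $\Tb(\Graph g) = \Graph{(\T g)}$ (Fact~\ref{f:rl}(1)), that $\Graph{(\T g)} \sse \Tb{\leq}$, which says precisely $\beta \Tleq \T g(\beta)$ for all $\beta \in \T P$.
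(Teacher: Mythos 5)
Your proposal is correct and follows essentially the same route as the paper: both define the constant map $P \to \{1\}$, note that $b \leq 1 = g(b)$ for all $b \in P$, and lift this via Lemma~\ref{p:rl2}(2) (equivalently Fact~\ref{f:rl}(1,4)) to obtain $\beta \Tleq \T g(\beta) \in \T\{1\}$. Your extra remark on standardness justifying the identification of $\T\{1\}$ as a subset of $\T P$ is a fair point the paper leaves implicit.
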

\begin{proof}
Consider the following function at the ground level: $f
\colon P \to \{1\}$, where $f$ is the constant function $f \colon b
\mapsto 1$. Then for all $b\in P$, we have $b \leq f(b)$ and $f(b)
\in \{ 1\}$. By relation lifting, we see that for all $\beta \in \T
P$, $\beta \Tleq \T f(\beta)$ and $\T f (\beta) \in \T \{ 1
\}$. The statement follows.
\end{proof}

\begin{lemma}\label{Compactness:VTRCompact}Let $\T\colon \Set \to
\Set$ be a standard, finitary, weak pullback-preserving functor and
let $\bbL$ be a frame. Then $\struc{ \T \RL, \Tb {\leq},
\covR_0}$ is a flat site. Moreover, if $\T$ maps finite sets to
finite sets then $\Fr \struc{ \T \RL, \Tb {\leq},
\covR_0}$ is a compact frame.
\end{lemma}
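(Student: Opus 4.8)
For the flat-site claim I would rerun the proof of Lemma~\ref{Flatsite:Well-defined}. That $\struc{\T\RL,\Tb{\leq}}$ is a pre-order is Lemma~\ref{p:rl2}(1), so it remains to verify that $\covR_{0}$ is compatible with $\Tb{\leq}$. Given $\beta\Tleq\alpha$ and $\alpha\covR_{0}\nbsem(\Phi)$ with $\Phi\in\T\Pom\RL$ and $\alpha\Tleq\T\smallvee(\Phi)$, transitivity of $\Tb{\leq}$ gives $\beta\Tleq\T\smallvee(\Phi)$. Since $\RL$ is a $\wedge$-subsemilattice of $\bbL$ (Fact~\ref{Reg:RegOpenFacts}), I may apply Lemma~\ref{Flatsite:Stability} with $\bbM:=\RL$ to obtain $\Phi'$ with $\beta\Tleq\T\smallvee(\Phi')$, $\Phi'\rel{\Tb{\subseteq}}\T\downarr_{L}\cof\T\single(\beta)$ and $\Phi'\rel{\Tb{\subseteq}}\T\downarr_{L}(\Phi)$; inspecting that proof, the auxiliary map $f\colon(a,A)\mapsto(a,a\wedge A)$ carries finite $A$ to finite $a\wedge A$, so the $\Phi'$ it produces again lies in $\T\Pom\RL$, whence $\beta\covR_{0}\nbsem(\Phi')$. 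Finally Lemma~\ref{RelationLifting:LiftedDownset}, applied inside the pre-order $\RL$ --- using Fact~\ref{f:rl}(6) to restrict the relevant relation liftings to $\RL$, and Fact~\ref{f:base} to see that every lifted member of an element of $\T\Pom\RL$ lies in $\T\RL$ --- gives $\nbsem(\Phi')\sse\downarr_{\T\RL}\{\beta\}\cap\downarr_{\T\RL}\nbsem(\Phi)$, which is the required compatibility.

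For compactness, assume $\T$ maps finite sets to finite sets, and write $\mathbb{K}:=\Fr\struc{\T\RL,\Tb{\leq},\covR_{0}}$ with insertion of generators $\nabla$. I would first record two finiteness facts. First, $\RL$ has top element $1_{\bbL}$ (with complement $0_{\bbL}$), so by Lemma~\ref{Properties:Top} every $\beta\in\T\RL$ satisfies $\beta\Tleq\alpha$ for some $\alpha\in\T\{1_{\bbL}\}$; together with $(\nabla1)$ and the flat-site relation $1=\smallvee\{\nabla\beta\mid\beta\in\T\RL\}$ (recall the frame presentation associated with a flat site, \S\ref{Subsect:FramePres}) this yields $1_{\mathbb{K}}=\smallvee\{\nabla\alpha\mid\alpha\in\T\{1_{\bbL}\}\}$, a \emph{finite} join of generators since $\T\{1_{\bbL}\}$ is finite. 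Second, every $\covR_{0}$-cover is finite: if $\alpha\covR_{0}\nbsem(\Phi)$ then $\Phi\in\T\Pom\RL$, so $|\nbsem(\Phi)|<\om$ by Lemma~\ref{f:rlnt}(3). It then suffices to show that each generator $\nabla\alpha$ ($\alpha\in\T\RL$) is a compact element of $\mathbb{K}$, for then $1_{\mathbb{K}}$, a finite join of compact elements, is compact.

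To show the generators are compact I would use Fact~\ref{Prelim:FlatSiteCoverage} to identify $\mathbb{K}$ with $\SupLat\struc{\T\RL,\Tb{\leq},\covR_{0}}$, realized concretely as the complete lattice of \emph{saturated} subsets of $\T\RL$ --- those $S\sse\T\RL$ that are $\Tb{\leq}$-downward closed and satisfy $\gamma\in S$ whenever $\gamma\covR_{0}B$ with $B\sse S$ --- ordered by inclusion, with joins the saturations of unions; under this identification $\nabla\alpha$ is the least saturated set containing $\alpha$, so $\nabla\alpha\leq S$ iff $\alpha\in S$ for saturated $S$. Now suppose $\nabla\alpha\leq\smallvee^{\uparrow}_{i}u_{i}$ with the $u_{i}$ directed; letting $S_{i}$ be the saturated set of $u_{i}$, the family $\{S_{i}\}$ is directed under $\sse$ and $\smallvee^{\uparrow}_{i}u_{i}$ corresponds to the saturation $\overline{\bigcup_{i}S_{i}}$. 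Thus $\alpha\in\overline{\bigcup_{i}S_{i}}$; since the saturated closure is generated inductively by the rules ``if $\gamma\Tleq\delta$ and $\delta$ is in, put $\gamma$ in'' and ``if $\gamma\covR_{0}B$ and $B$ is in, put $\gamma$ in'', this membership is witnessed by a well-founded derivation tree whose branching at each node is unary (a $\Tb{\leq}$-step) or of size $|B|$ (a cover step). By the second finiteness fact the cover steps are all finite, so the tree is finitely branching and well-founded, hence finite by K\"onig's lemma; it therefore involves only finitely many leaves, each in $\bigcup_{i}S_{i}$, and by directedness they all lie in a single $S_{j}$. The same derivation then witnesses $\alpha\in S_{j}$, i.e.\ $\nabla\alpha\leq u_{j}$, so $\nabla\alpha$ is compact and hence so is $\mathbb{K}$.

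The step I expect to be the main obstacle is this last argument: making precise, from the suplattice presentation, the inductive description of the saturated closure and the fact that membership in it admits a well-founded, finitely-branching --- hence finite --- derivation, so that K\"onig's lemma can be combined with directedness. Everything else is routine bookkeeping, including the two places where ``$\T$ preserves finiteness of sets'' is used (finiteness of the covers, via Lemma~\ref{f:rlnt}(3), and of $\T\{1_{\bbL}\}$) and the minor point that the $\Phi'$ produced by Lemma~\ref{Flatsite:Stability} may be taken finitary.
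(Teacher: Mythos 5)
Your proposal is correct and follows the paper's own route. For the flat-site half you rerun Lemma~\ref{Flatsite:Well-defined} using Lemma~\ref{Flatsite:Stability} with $\bbM=\RL$, which is exactly what the paper does; you additionally make explicit two small points the paper leaves implicit, namely that the $\Phi'$ produced by Lemma~\ref{Flatsite:Stability} can be taken in $\T\Pom\RL$ (so that $\beta\covR_0\nbsem(\Phi')$ really is an instance of $\covR_0$) and that $\nbsem(\Phi)\sse\T\RL$ for $\Phi\in\T\Pom\RL$. For compactness the paper records the same two finiteness facts you do --- every $\covR_0$-cover is finite by Lemma~\ref{f:rlnt}(3), and $\T\RL=\downarr_{\T\RL}\T\{1_{\bbL}\}$ with $\T\{1_{\bbL}\}$ finite --- and then simply cites a mild generalization of \citep[Proposition~11]{Vickers2006}. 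Your derivation-tree argument (compactness of each generator in the suplattice of saturated subsets, via well-founded, finitely branching, hence finite derivation trees, plus directedness) is essentially an unfolding of the proof of that cited proposition, so what you gain is self-containedness; the only trade-off is that your appeal to K\"onig's lemma makes the step classical, whereas the cited source works constructively with finite trees --- a distinction the paper itself does not insist on here.
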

\begin{proof}
Because $\RL$ is a meet-subsemilattice of $\bbL$, we can apply Lemma
\ref{Flatsite:Stability} to $\T \RL$. Now the proof that $\struc{ \T \RL, \Tb {\leq},
\covR_0}$ is a flat site is analogous to that of Lemma
\ref{Flatsite:Well-defined}.

Now suppose that $\T$ maps finite sets to finite sets. Then for all
$\Phi \in \T\Pom \RL$, it follows
by Fact
\ref{f:rlnt}(3) that $\nbsem(\Phi)$ is finite. Consequently,
\begin{equation*}\label{Compactness:VTRCompact:C1}
\text{$\forall \alpha\covR_0 \nbsem(\Phi)$, $\nbsem(\Phi)$ is
finite.}\end{equation*}
Moreover, by Lemma \ref{Properties:Top},
\begin{equation*}\label{Compactness:VTRCompact:C2}
\T\RL = \downarr_{\T\RL} \T\{ 1_\bbL\},
\end{equation*}
since $1_\bbL \in \RL$ as $\RL$ is a sublattice of $\bbL$.
Since we assumed that $\T$ maps finite sets to finite sets, the set
$\T\{ 1_\bbL\}$ must be finite.
It now follows from a straightforward generalization of
\citep[Proposition 11]{Vickers2006} that $\Fr \struc{ \T \RL, \Tb {\leq},
\covR_0}$ is a compact frame. (The only change we need to make to
\citep[Proposition 11]{Vickers2006} is to generalize from using single
finite trees to using disjoint unions of $|\T \{ 1_\bbL\} |$-many trees, so that one can
cover each element of $\T \{ 1_\bbL\}$.)
\end{proof}

We define $\VTR \bbL := \Fr \struc{ \T \RL, \Tb {\leq},
\covR_0}$, and for the time being we denote the insertion of
generators by $\heartsuit \colon \T \RL \to \VTR \bbL$.
Our goal is now to show that $\VT \bbL \simeq \VTR \bbL$. We will use
a shortcut, exploiting the fact that both $\VT \bbL$ and $\VTR \bbL$
have flat site presentations: we will define \emph{suplattice}
homomorphisms $f' \colon \VT \bbL \to \VTR \bbL$ and $g' \colon \VTR
\bbL \to \VT \bbL$. We then show that $g' \cof f' = \id$ and $f' \cof
g' = \id$, so that $\VT \bbL$ and $\VTR \bbL$ are isomorphic as
suplattices. It then follows from order theory that they are also
isomorphic as frames.
We start by defining a function $g\colon \T \RL \to \VT \bbL$, defined as
\[ g \colon \alpha \mapsto \nabla \alpha.\]
\begin{lemma} \label{Compactness:VTRtoVT}
Let $\T\colon \Set \to \Set$ be a standard, finitary, weak
pullback-preserving functor and let $\bbL$ be a frame. Then the function $g$ defined above extends to a \emph{suplattice}
homomorphism $g' \colon \VTR \bbL \to \VT \bbL$ such that $g' \circ
\heartsuit = g$.
\[\xymatrix{ \VTR \bbL \ar@{..>}[r]^{g'} & \VT \bbL \\ \T \RL
\ar[u]^{\heartsuit} \ar[ru]_g }\]
\end{lemma}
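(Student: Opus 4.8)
The plan is to use the flat site presentation of $\VTR\bbL=\Fr\struc{\T\RL,\Tb{\leq},\covR_0}$ and the fact (recorded in \S\ref{Subsect:FramePres}) that a flat site can be used to define \emph{suplattice} homomorphisms. So to produce $g'$, it suffices to check that the function $g\colon\alpha\mapsto\nabla\alpha$ (from $\T\RL$, viewed as a subset of $\T L$, into $\VT\bbL$) satisfies the two suplattice-presentation conditions: that $g$ is order-preserving with respect to $\Tb{\leq}$, and that for all $\alpha\covR_0\Gamma$ we have $g(\alpha)\leq\smallvee (\P g)(\Gamma)$.

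The first condition, order-preservation, is immediate from $(\nabla 1)$: if $\alpha\rel{\Tb{\leq}}\beta$ with $\alpha,\beta\in\T\RL$, then a fortiori $\alpha\rel{\Tb{\leq}}\beta$ holds in $\T L$ (using standardness of $\T$, which makes the lifted order on $\T\RL$ a restriction of that on $\T L$, cf.\ Fact~\ref{f:rl}(6)), so $\nabla\alpha\leq\nabla\beta$ in $\VT\bbL$. The second condition is where the real content lies. Suppose $\alpha\covR_0\nbsem(\Phi)$; by definition of $\covR_0$ this means $\Phi\in\T\Pom\RL$ and $\alpha\rel{\Tb{\leq}}\T\smallvee(\Phi)$. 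Again regarding everything inside $\T L$ and $\T\P L$, we have $\alpha\rel{\Tb{\leq}}(\T\smallvee)\Phi$ in $\VT\bbL$, so by $(\nabla 1)$ and then $(\nabla 3)$,
\[
g(\alpha)=\nabla\alpha\leq\nabla(\T\smallvee)(\Phi)\leq\smallvee\{\nabla\beta\mid\beta\in\nbsem(\Phi)\}=\smallvee(\P g)(\nbsem(\Phi)),
\]
which is exactly the required inequality. Hence $g$ respects the suplattice presentation of $\VTR\bbL$, and so extends uniquely to a suplattice homomorphism $g'\colon\VTR\bbL\to\VT\bbL$ with $g'\cof\heartsuit=g$.

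The only subtlety to be careful about — and the step I expect to need the most attention — is the bookkeeping around viewing $\T\RL\subseteq\T L$ and $\T\Pom\RL\subseteq\T\P L$, and checking that the relevant liftings ($\Tb{\leq}$, $\Tb{\in}$) and the map $\T\smallvee$ restrict correctly between these inclusions. This is guaranteed by standardness of $\T$ (it preserves inclusions, and its liftings commute with restriction by Fact~\ref{f:rl}(6)) together with Fact~\ref{Reg:RegOpenFacts} (so that $\smallvee$ maps $\Pom\RL$ into $\RL$), exactly as noted in the paragraph preceding Lemma~\ref{Properties:Top}; once this is spelled out, the argument above goes through verbatim. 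No new ideas beyond $(\nabla 1)$, $(\nabla 3)$ and the suplattice-presentation property of flat sites are needed.
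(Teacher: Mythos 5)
Your proof is correct and follows essentially the same route as the paper: the paper likewise verifies the two suplattice-presentation conditions for $g$, deducing them from the fact that $\struc{\T\RL,\Tb{\leq},\covR_0}$ is a substructure of the flat site $\struc{\T L,\Tb{\leq},\cov_0^{\bbL}}$, for which $\nabla$ was already shown (via $(\nabla 1)$ and $(\nabla 3)$) to satisfy exactly these conditions in Theorem~\ref{Flatsite:VTFlatsite}. Your explicit attention to the restriction bookkeeping via standardness and Fact~\ref{f:rl}(6) is precisely the content the paper compresses into the word ``substructure.''
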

\begin{proof}
We need to show that $g\colon \T \RL \to \VT \bbL$ preserves the order
on $\T \RL$ and preserves covers in to joins: if $\alpha \covR_0
\nbsem (\Phi)$, where $\alpha \in \T\RL$, $\Phi \in \T \P \RL$ and
$\alpha \Tleq \smallvee (\Phi)$, then $g(\alpha) \leq \smallvee \{
g(\beta) \mid \beta \in \nbsem(\Phi) \}$. Both of these properties
follow straightforwardly from the fact that $\struc{ \T \RL, \Tb {\leq},
\covR_0}$ is a substructure of $\struc{ \T L, \Tb {\leq},
\cov_0^\bbL}$.
\end{proof}

The next step is to define the suplattice homomorphism $f' \colon \VT
\bbL \to \VTR \bbL$. This requires a little more work than the
definition of $g'\colon \VTR \bbL \to \VT \bbL$, beginning with the
following lemma.

\begin{lemma} \label{Compactness:JoinsFinite}
Let $\T\colon \Set \to \Set$ be a standard, finitary, weak
pullback-preserving functor and
let $\bbL$ be a compact frame. If $\alpha \in \T \RL$ and $\Phi \in
\T \P \RL$ such that $\alpha \rel{\Tb {\leq}} \T \smallvee (\Phi)$,
then there exists $\Phi_\alpha \in \T \Pom \RL$ such that
$\Phi_\alpha \rel{\Tb {\subseteq_L}} \Phi$ and $\alpha \rel{\Tb
{\leq}} \T \smallvee (\Phi_\alpha)$.
\end{lemma}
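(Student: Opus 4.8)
The plan is to adapt the span technique from the proof of Lemma~\ref{Flatsite:Stability}; the one genuinely new ingredient is a ground-level ``finite subcover'' fact for clopen elements, set up so that it lifts cleanly along $\T$.

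First the ground-level input. In a compact frame every clopen element is compact: if $a$ is clopen then it has complement $\neg a$ (Fact~\ref{f:wi}(5)), and if $a\le\bvdir_i b_i$ then $1=a\vee\neg a\le\bvdir_i(b_i\vee\neg a)$, so by compactness $1=b_i\vee\neg a$ for some $i$, whence $a=a\wedge(b_i\vee\neg a)=a\wedge b_i\le b_i$. Applying this to the directed join $\smallvee A=\bvdir\{\smallvee A_0\mid A_0\in\Pom A\}$ gives: for all $a\in\RL$ and $A\in\P\RL$ with $a\le\smallvee A$ there is a finite $B\subseteq A$, necessarily in $\Pom\RL$, with $a\le\smallvee B$.

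Now set $R:=\{(a,A)\in\RL\times\P\RL\mid a\le\smallvee A\}$ with projections $\pi_1\colon R\to\RL$, $\pi_2\colon R\to\P\RL$, and $R_{\mathrm{fin}}:=\{(a,B)\in\RL\times\Pom\RL\mid a\le\smallvee B\}$ with projections $q_1,q_2$. By the same reduction as in the proof of Lemma~\ref{Flatsite:Stability} (using that $\T$ preserves weak pullbacks, is standard, and Fact~\ref{f:rl}), the hypothesis $\alpha\rel{\Tb{\leq}}\T\smallvee(\Phi)$ is equivalent to $\alpha\rel{\Tb R}\Phi$, so there is $\delta\in\T R$ with $\T\pi_1(\delta)=\alpha$ and $\T\pi_2(\delta)=\Phi$. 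Let $G\subseteq R\times R_{\mathrm{fin}}$ relate $(a,A)$ to every $(a,B)$ with $B\subseteq A$ and $a\le\smallvee B$. The ground-level fact above says $G$ is total on $R$, i.e.\ $\Id_R\subseteq G\cor\cv G$; applying $\Tb$ and the properties in Fact~\ref{f:rl} yields $\Id_{\T R}\subseteq\Tb G\cor\cv{\Tb G}$, so there is $\delta'\in\T R_{\mathrm{fin}}$ with $(\delta,\delta')\in\Tb G$. I claim $\Phi_\alpha:=\T q_2(\delta')\in\T\Pom\RL$ works. Three ground-level observations about $G$ and the projections do it, each lifted via Fact~\ref{f:rl} and Lemma~\ref{p:rl2}: (i) $((a,A),(a,B))\in G$ forces $\pi_1(a,A)=q_1(a,B)$, which lifts to $\alpha=\T q_1(\delta')$; (ii) every $(a,B)\in R_{\mathrm{fin}}$ satisfies $q_1(a,B)\le\smallvee q_2(a,B)$, which by Lemma~\ref{p:rl2}(2) lifts to $\T q_1(\delta')\rel{\Tb{\leq}}\T\smallvee(\Phi_\alpha)$, so that combining (i) and (ii) gives $\alpha\rel{\Tb{\leq}}\T\smallvee(\Phi_\alpha)$; and (iii) $((a,A),(a,B))\in G$ forces $q_2(a,B)\subseteq\pi_2(a,A)$, which lifts through $\cv{(\Graph{(\T\pi_2)})}\cor\Tb G\cor\Graph{(\T q_2)}\subseteq\cv{(\Tb{\subseteq})}$ to $\Phi_\alpha\rel{\Tb{\subseteq_L}}\Phi$. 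Standardness of $\T$ is used to view $\T\Pom\RL$ inside $\T\P L$, so that $\T\smallvee(\Phi_\alpha)$ is meaningful.

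The main obstacle is conceptual, not computational: one cannot just ``choose a finite subcover inside each coordinate of $\Phi$'' and expect a well-defined element of $\T\Pom\RL$, and the chosen $\Phi_\alpha$ must simultaneously satisfy $\Phi_\alpha\rel{\Tb{\subseteq_L}}\Phi$ and $\alpha\rel{\Tb{\leq}}\T\smallvee(\Phi_\alpha)$. Routing everything through the single span $R$ together with the totality argument for $\Tb G$ --- which also dispenses with any appeal to choice --- handles this; what is left is relational bookkeeping strictly parallel to Lemma~\ref{Flatsite:Stability}.
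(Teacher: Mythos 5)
Your proof is correct and follows essentially the same route as the paper's: both reduce the hypothesis to $\alpha \rel{\Tb S} \Phi$ for the span $S=\{(a,A)\in \RL\times\P\RL \mid a\le\smallvee A\}$, use compactness of clopen elements to pass to finite subcovers, and then lift the same three ground-level properties of the projections. The only divergence is that the paper packages the finite-subcover step as a function $h\colon S\to S$ (an implicit appeal to choice to pick one subcover per pair), whereas you keep it as a total relation $G$ and extract $\delta'$ from $\Id_{\T R}\subseteq \Tb G\cor\cv{(\Tb G)}$ --- a minor but slightly cleaner, choice-free variant of the same argument.
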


\begin{proof}
Since $\bbL$ is compact, we can show that
\begin{equation}\label{Compactness:JoinsFinite:C1}
\text{for all $a \in \RL$, $a$ is compact.}
\end{equation}
After all, if $a \in \RL$ and $A \in \P L$ such that $a \leq
\smallvee A$, then also $1 \leq a \vee \neg a \leq \smallvee A \cup \{\neg a\}$, so by
compactness of $\bbL$, there exists a finite $A' \subseteq A$ such
that $a\vee \neg a \leq \smallvee A' \cup \{ \neg
a\}$. Consequently, $a \leq \smallvee A'$. Since $A$ was arbitrary,
it follows that $a$ is compact.

We define \[S:= \left( \leq \cor \cv{\Graph(\smallvee)}
\right)\rst{\RL \times \P \RL};\] so that $(a,A) \in S$ iff $a\in
\RL$, $A \in \P \RL$ and $a \leq \smallvee A$. By
\eqref{Compactness:JoinsFinite:C1}, we can define a function
$h\colon S \to S$ where $h\colon (a,A) \mapsto (a',A')$ such that
$a=a'$, $A' \subseteq A$, $a' \leq \smallvee A'$ (otherwise $h$
would not be well-defined) and such that $A'$ is finite, i.e.~$A'
\in \Pom \RL$. In other words, $h\colon S \to S$ is a function which
assigns a finite subcover $A'$ to a set of zero-dimensional opens $A$
covering a zero-dimensional open element $a$ . If we denote the projection
functions of $S$ as \[ \xymatrix{ \RL & S \ar[l]_{p_1} \ar[r]^{p_2}
& \P \RL}\] then we can encode the above-mentioned properties of
$h$ as follows:
\begin{align*}
\forall x \in S,\, & p_1 \cof h(x) = p_1(x);\\
\forall x \in S,\, & p_2 \cof h (x ) \subseteq p_2(x);\\
\forall x \in S,\, & p_2 \cof h(x) \in \Pom \RL.
\end{align*}
By relation lifting, it follows that
\begin{align}
\forall x \in \T S,\, & \T p_1 \cof \T h(x) = \T
p_1(x); \label{Compactness:JoinsFinite:C2}\\
\forall x \in \T S,\, & \T p_2 \cof \T h (x ) \rel{\Tb{\subseteq}}
\T p_2(x); \label{Compactness:JoinsFinite:C3}\\
\forall x \in \T S,\, & \T p_2 \cof \T h(x) \in \T \Pom
\RL. \label{Compactness:JoinsFinite:C4}
\end{align}
Finally, observe that it follows by relation lifting that
\[ \forall \alpha \in \T \RL, \forall \Phi \in \T \P \RL,\, \alpha
\Tleq \smallvee (\Phi) \text{ iff } \alpha \rel{\Tb S} \Phi.\] Now
take $\alpha \in \T\RL$ and $\Phi \in \T \P \RL$ such that $\alpha
\Tleq \smallvee (\Phi)$. Then by the above, we have $\alpha \rel{\Tb
S} \Phi$, so by definition of $\Tb$ there must exist some $x
\in \T S$ such that $\T p_1 (x) = \alpha$ and $\T p_2(x) = \Phi$. We
define $\Phi_\alpha := \T p_2 \cof \T h(x)$; observe that $\T p_1
\cof \T h (x) = \T p_1 (x) = \alpha$ by
\eqref{Compactness:JoinsFinite:C2}.  Since $\T h$ is a function from
$\T S$ to $\T S$, we see that $\alpha \rel{\Tb S} \Phi_\alpha$, so
that $\alpha \Tleq \T \smallvee (\Phi_\alpha)$. Moreover by
\eqref{Compactness:JoinsFinite:C3} $\Phi_\alpha \rel{\Tb {\subseteq}}
\Phi$ and by \eqref{Compactness:JoinsFinite:C4}, $\Phi_\alpha \in \T
\Pom \RL$. This concludes the proof.
\end{proof}
We now define a
map $f\colon \T L \to \VTR \bbL$ by sending \[ f \colon \alpha \mapsto
\smallvee \{ \heartsuit \beta \mid \beta \in \T \RL,\, \beta \rel{\Tb
{\leq}} \alpha \}.\] This will give us our suplattice homomorphism
$f' \colon \VT \bbL \to \VTR \bbL$.

\begin{lemma} \label{Compactness:VTtoVTR}
Let $\T\colon \Set \to \Set$ be a standard, finitary, weak pullback-preserving functor.
If $\bbL$ is a compact zero-dimensional frame then $f\colon \T L \to \VTR
\bbL$ defined above extends to a \emph{suplattice} homomorphism $f'
\colon\V_\T \bbL \to \VTR \bbL$, where $f' \circ \nabla =
f$. \[\xymatrix{ \VT \bbL \ar@{..>}[r]^{f'} & \VTR \bbL \\ \T L
\ar[u]^{\nabla} \ar[ru]_f }\]
\end{lemma}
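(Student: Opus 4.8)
The plan is to construct $f'$ from the flat site presentation of $\VT\bbL$ established in Theorem~\ref{Flatsite:VTFlatsite}. That theorem together with Fact~\ref{Prelim:FlatSiteCoverage} gives $\VT\bbL \simeq \SupLat\struc{\T L,\Tleq,\cov_0^\bbL}$, with $\nb$ as insertion of generators. Since a flat site may be used to define suplattice homomorphisms, it therefore suffices to check that the map $f\colon \T L \to \VTR\bbL$ is order-preserving with respect to $\Tleq$ and turns $\cov_0^\bbL$-covers into joins, i.e.\ that $f(\al) \leq \bv(\P f)(\Ga)$ whenever $\al\cov_0^\bbL\Ga$; the universal property of the suplattice presentation then yields a unique suplattice homomorphism $f'\colon \VT\bbL \to \VTR\bbL$ with $f'\cof\nb = f$, as required.

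Order-preservation is immediate: if $\al\Tleq\be$ then, by transitivity of $\Tleq$, every $\ga\in\T\RL$ with $\ga\Tleq\al$ also satisfies $\ga\Tleq\be$, so the set of joinands defining $f(\al)$ is contained in the one defining $f(\be)$. For the cover condition, fix $\al\cov_0^\bbL\Ga$, say $\Ga = \nbsem(\Phi)$ with $\Phi\in\T\P L$ and $\al\Tleq\T\smallvee(\Phi)$. Because $f(\al) = \bv\{\heartsuit\ga \mid \ga\in\T\RL,\ \ga\Tleq\al\}$, it is enough to show $\heartsuit\ga \leq \bv\{f(\be)\mid\be\in\nbsem(\Phi)\}$ for each such $\ga$.

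So fix $\ga\in\T\RL$ with $\ga\Tleq\al$; then $\ga\Tleq\T\smallvee(\Phi)$ by transitivity, and this is where both hypotheses on $\bbL$ enter. Zero-dimensionality, via Lemma~\ref{Compactness:DownRProp}(2), gives $\T\smallvee(\Phi) = \T\smallvee\cof\T\downR(\Phi)$ with $\T\downR(\Phi)\in\T\P\RL$, so $\ga\Tleq\T\smallvee(\T\downR(\Phi))$. Compactness of $\bbL$ then lets us invoke Lemma~\ref{Compactness:JoinsFinite} (with $\ga$ in the role of $\al$ and $\T\downR(\Phi)$ in the role of $\Phi$) to produce $\Phi_\ga\in\T\Pom\RL$ with $\Phi_\ga\rel{\Tb{\subseteq}}\T\downR(\Phi)$ and $\ga\Tleq\T\smallvee(\Phi_\ga)$. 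By the definition of $\covR_0$ we then have $\ga\covR_0\nbsem(\Phi_\ga)$, so in $\VTR\bbL = \Fr\struc{\T\RL,\Tleq,\covR_0}$ we obtain $\heartsuit\ga \leq \bv\{\heartsuit\de\mid\de\in\nbsem(\Phi_\ga)\}$. To finish, note that $\Phi_\ga\rel{\Tb{\subseteq}}\T\downR(\Phi)$ gives $\nbsem(\Phi_\ga)\subseteq\nbsem(\T\downR(\Phi))$ by Lemma~\ref{RelationLifting:LiftedInclusion}(3), and by Lemma~\ref{Compactness:DownRProp}(3) every $\de\in\nbsem(\T\downR(\Phi))$ lies in $\T\RL$ and satisfies $\de\Tleq\cor\Tin\Phi$, i.e.\ $\de\Tleq\be$ for some $\be$ with $\be\Tin\Phi$, that is, $\be\in\nbsem(\Phi)$. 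Since $\de\in\T\RL$ and $\de\Tleq\be$, $\heartsuit\de$ is one of the joinands defining $f(\be)$, so $\heartsuit\de \leq f(\be) \leq \bv\{f(\be')\mid\be'\in\nbsem(\Phi)\}$. Chaining these inequalities yields $\heartsuit\ga \leq \bv\{f(\be)\mid\be\in\nbsem(\Phi)\}$, and as $\ga$ was arbitrary, $f(\al) \leq \bv(\P f)(\Ga)$.

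The genuinely content-bearing step is the cover condition, and in particular the combined appeal to Lemmas~\ref{Compactness:DownRProp} and~\ref{Compactness:JoinsFinite}: zero-dimensionality is used to replace an arbitrary $\cov_0^\bbL$-witness $\Phi\in\T\P L$ by one whose components are clopen, and compactness is then used to shrink it to a \emph{finite} witness in $\T\Pom\RL$, which is the only kind of cover occurring in $\VTR\bbL$. This is exactly why $\VTR\bbL$, built solely from finite covers of the clopen elements of $\bbL$, is rich enough to receive a suplattice map from all of $\VT\bbL$. Everything else — order-preservation, the bookkeeping relating $\nbsem(\Phi_\ga)$, $\nbsem(\T\downR(\Phi))$ and $\nbsem(\Phi)$, and the final invocation of the universal property of suplattice presentations — is routine given the machinery already in place.
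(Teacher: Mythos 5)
Your proposal is correct and follows essentially the same route as the paper's proof: order-preservation by transitivity of $\Tleq$, and the cover condition by passing from $\Phi$ to $\T\downR(\Phi)$ via zero-dimensionality (Lemma~\ref{Compactness:DownRProp}(2)), extracting a finite witness in $\T\Pom\RL$ via compactness (Lemma~\ref{Compactness:JoinsFinite}), and then relating $\nbsem(\Phi_\ga)$, $\nbsem(\T\downR(\Phi))$ and $\nbsem(\Phi)$ with Lemmas~\ref{RelationLifting:LiftedInclusion}(3) and~\ref{Compactness:DownRProp}(3). The only cosmetic difference is that the paper first establishes the identity $\bv\{f(\ga)\mid\ga\in\nbsem(\Phi)\}=\bv\{\heartsuit\be\mid\be\in\nbsem(\T\downR(\Phi))\}$ as a separate step, whereas you prove the needed inequality directly.
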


\begin{proof}

In order to show that $f\colon \T L \to \VTR \bbL$ extends to a
suplattice homomorphism, we need to show that $f$ preserves the
order on $\T L$ and $f$ transforms covers into joins, i.e.~that for all $( \alpha, \nbsem (\Phi) ) \in
\cov_0$, where $\alpha \Tleq \T \smallvee (\Phi)$, we have
$f(\alpha) \leq \smallvee \{ f(\gamma ) \mid \gamma \in \nbsem
(\Phi)\}$. To see why $f$ is order-preserving, suppose that
$\alpha_0,\alpha_1 \in \T L$ and that $\alpha_0 \Tleq
\alpha_1$. Then
\begin{align*}
f(\alpha_0) &= \smallvee \{ \heartsuit \beta \mid \beta \in \T
\RL,\, \beta \rel{\Tb
{\leq}} \alpha_0 \} &&\text{by definition of $f$,}\\
&\leq \smallvee \{ \heartsuit \beta \mid \beta \in \T \RL,\, \beta
\rel{\Tb {\leq}} \alpha_1 \}&&\text{since $\beta \Tleq \alpha_0
\Tleq
\alpha_1 \Rightarrow \beta \Tleq \alpha_1$,}\\
&= f(\alpha_1) &&\text{by definition of $f$.}
\end{align*}
Before we go ahead and show that $f$ transforms covers $\alpha \cov_0
\nbsem(\Phi)$ into joins, we show that the expression $\smallvee
\{ f(\gamma ) \mid \gamma \in \nbsem (\Phi)\}$ can be simplified:
\begin{equation}\label{Compactness:VTtoVTR:C2}
\forall \Phi \in \T \P L, \, \smallvee \{ f(\gamma) \mid \gamma
\in \nbsem (\Phi) \} = \smallvee \{ \heartsuit \beta \mid
\beta\in \nbsem \left(
\T \downR (\Phi) \right)\}.
\end{equation}
To see why, observe that
\begin{align*}
&\smallvee \{ f(\gamma) \mid \gamma
\in \nbsem (\Phi) \}\\
&= \smallvee \left\{ \smallvee \{ \heartsuit \beta \mid \beta \in
\T \RL,\, \beta \leq \gamma \} \mid \gamma \in \nbsem (\Phi)
\right\}&&\text{by definition of $f$,}\\
&= \smallvee \left\{ \smallvee \{ \heartsuit \beta \mid \beta \in
\T \RL,\, \beta \leq \gamma \} \mid \gamma \rel{\Tb {\in}} \Phi
\right\}&&\text{by definition of $\nbsem$,}\\
&= \smallvee \{ \heartsuit \beta \mid \beta \in \T \RL,\, \exists
\gamma \rel{\Tb {\in}} \Phi, \, \beta \leq \gamma
\}&&\text{by associativity of $\smallvee$,}\\
&= \smallvee \{ \heartsuit \beta \mid \beta \in \T
\RL,\, \beta \Tleq \cor \rel{\Tb {\in}} \Phi \}&&\text{by def.~of
relation composition,}\\
&= \smallvee \{ \heartsuit \beta \mid \beta \in \nbsem
\left(\T \downR (\Phi) \right) \}&&\text{by Lemma
\ref{Compactness:DownRProp}(3).}
\end{align*}

Let $\alpha \in \T L$ and $\Phi \in \T \P L$ such that $\alpha
\rel{\Tb {\leq}} \T \smallvee (\Phi)$; we need to show that
$f(\alpha) \leq \smallvee \{ f(\gamma) \mid \gamma
\in \nbsem (\Phi) \} $. By
\eqref{Compactness:VTtoVTR:C2} it suffices to show that
\begin{equation} \label{Compactness:VTtoVTR:C3}
f(\alpha) \leq \smallvee  \{ \heartsuit \gamma \mid
\gamma\in \nbsem \left(
\T \downR (\Phi) \right)\}.
\end{equation}
Recall that $f(\alpha) = \smallvee \{ \heartsuit \beta \mid \beta
\in \T \RL,\, \beta \leq \alpha \} $. We will show that
\begin{equation}
\label{Compactness:VTtoVTR:C4}\forall \beta \in \T
\RL, \, \beta \Tleq \alpha \Rightarrow \heartsuit \beta \leq
\smallvee \{ \heartsuit \gamma \mid \gamma\in \nbsem \left( \T
\downR (\Phi) \right)\} .
\end{equation}
Suppose that $\beta \in \T \RL$ and that $\beta \Tleq \alpha$. Then
since we assumed that $\alpha \Tleq \T \smallvee (\Phi)$, it follows
that $\beta \Tleq \T \smallvee (\Phi)$. By Lemma
\ref{Compactness:DownRProp}(2), we know that $T \smallvee (\Phi) =
\T \smallvee \cof \T \downR (\Phi)$, so we see that \[\beta \Tleq \T
\smallvee \cof \T \downR (\Phi).\] Now since $\T \downR (\Phi) \in
\T \P \RL$, we can now apply Lemma \ref{Compactness:JoinsFinite} to
conclude that there must be some $\Phi' \in \T \Pom \RL$ such that
$\Phi' \rel{\Tb {\subseteq}} \T\downR (\Phi)$ and $\beta \Tleq
\smallvee \Phi'$. Now it follows by definition of $\covR_0$ that $
\beta \covR_0 \nbsem(\Phi')$.  Now
\begin{align*}
\heartsuit \beta &\leq \smallvee \{ \heartsuit \gamma \mid \gamma
\in \nbsem (\Phi')\}&&\text{since $ \beta \covR_0 \nbsem(\Phi')$,}\\
&\leq \smallvee \{ \heartsuit \gamma \mid \gamma
\in \nbsem (\T \downR (\Phi)) \}&&\text{by
L.~\ref{RelationLifting:LiftedInclusion} since $\Phi' \rel{\Tb {\subseteq}} \T\downR (\Phi)$.}
\end{align*}
Since $\beta \in \T \RL$ was arbitrary it follows that
\eqref{Compactness:VTtoVTR:C4} holds; consequently,
\eqref{Compactness:VTtoVTR:C3} holds so that we may indeed conclude
that $f$ transforms covers into joins. We conclude that $f\colon \T L
\to \VTR \bbL$ extends to a suplattice homomorphism $f' \colon \VT
\bbL \to \VTR \bbL$.
\end{proof}

Now that we have established the existence of suplattice homomorphisms
$f' \colon \VT \bbL \to \VTR\bbL$ and $g' \colon \VTR \bbL \to \VT
\bbL$, we are ready to prove the theorem of this subsection.

\begin{theorem} \label{Compactness:CompRegPreserved}
Let $\T \colon \Set \to \Set$ be a standard, finitary, weak
pullback-preserving set functor which maps finite sets to finite sets
and let $\bbL$ be a frame. If $\bbL$ is compact and zero-dimensional then so
is $\VT \bbL$.
\end{theorem}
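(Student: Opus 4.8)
First I would dispose of zero-dimensionality, which requires no new work: by Theorem~\ref{t:reg}(2), if $\bbL$ is zero-dimensional then so is $\VT\bbL$. So the whole content of the theorem is the preservation of compactness, and here the plan is exactly the one announced before Lemma~\ref{Compactness:VTRtoVT}: exhibit a frame isomorphism $\VT\bbL\cong\VTR\bbL$ and then invoke Lemma~\ref{Compactness:VTRCompact}, which tells us that $\VTR\bbL=\Fr\struc{\T\RL,\Tb{\leq},\covR_0}$ is compact precisely when $\T$ maps finite sets to finite sets. The two suplattice homomorphisms $f'\colon\VT\bbL\to\VTR\bbL$ and $g'\colon\VTR\bbL\to\VT\bbL$ have already been built (Lemmas~\ref{Compactness:VTtoVTR} and~\ref{Compactness:VTRtoVT}), so the task reduces to checking that they are mutually inverse.

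For $g'\circ f'=\id_{\VT\bbL}$, since both sides are suplattice endomorphisms and, by Corollary~\ref{Flatsite:NormalForm}, the elements $\nb\al$ ($\al\in\T L$) join-generate $\VT\bbL$, it suffices to evaluate on a generator. Unwinding the definitions of $f$, $f'$, $g$, $g'$ and using that $g'$ is a suplattice homomorphism gives
\[
g'\bigl(f'(\nb\al)\bigr)=g'\bigl(\smallvee\{\heartsuit\be\mid\be\in\T\RL,\ \be\Tleq\al\}\bigr)
=\smallvee\{\nb\be\mid\be\in\T\RL,\ \be\Tleq\al\},
\]
and this last join equals $\nb\al$ by Lemma~\ref{Compactness:DownRProp}(1); note this is the one place where zero-dimensionality of $\bbL$ is genuinely used. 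For $f'\circ g'=\id_{\VTR\bbL}$, using that the $\heartsuit\be$ ($\be\in\T\RL$) join-generate $\VTR\bbL$ (flat site presentation), I would compute on a generator
\[
f'\bigl(g'(\heartsuit\be)\bigr)=f'(\nb\be)=\smallvee\{\heartsuit\ga\mid\ga\in\T\RL,\ \ga\Tleq\be\},
\]
which equals $\heartsuit\be$ because $\heartsuit$ is order-preserving (so each joinand is $\leq\heartsuit\be$) and $\be\Tleq\be$ by Lemma~\ref{p:rl2}(1) (so $\heartsuit\be$ itself occurs as a joinand).

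Having shown that $f'$ and $g'$ are mutually inverse suplattice homomorphisms, they are order isomorphisms between the frames $\VT\bbL$ and $\VTR\bbL$; by order theory an order isomorphism between frames is a frame isomorphism (all joins and finite meets being order-determined), exactly as in the discussion preceding Lemma~\ref{Compactness:VTRtoVT}. Since $\T$ maps finite sets to finite sets, Lemma~\ref{Compactness:VTRCompact} gives that $\VTR\bbL$ is compact, and hence so is $\VT\bbL$. Combined with the zero-dimensionality observed at the outset, this proves the theorem. The main obstacle here is not in this final argument, which is essentially bookkeeping once the lemmas are in place; the substance has been front-loaded into Lemmas~\ref{Flatsite:Stability}, \ref{Compactness:JoinsFinite}, \ref{Compactness:VTtoVTR} and~\ref{Compactness:VTRCompact}. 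The only subtle point to get right is that the composites act as the identity \emph{on generators} and that the generators really do join-generate on both sides, together with the (crucial) appeal to Lemma~\ref{Compactness:DownRProp}(1) that encapsulates the zero-dimensionality hypothesis.
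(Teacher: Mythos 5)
Your proposal is correct and follows essentially the same route as the paper: zero-dimensionality via Theorem~\ref{t:reg}(2), then compactness by checking on generators that $g'\circ f'=\id_{\VT\bbL}$ (via Lemma~\ref{Compactness:DownRProp}(1), the one place zero-dimensionality enters) and $f'\circ g'=\id_{\VTR\bbL}$ (via order-preservation of $\heartsuit$ and reflexivity of $\Tb{\leq}$), concluding with the suplattice-isomorphism-implies-frame-isomorphism observation and Lemma~\ref{Compactness:VTRCompact}. The computations match the paper's Claims \eqref{Compactness:CompRegPreserved:C1} and \eqref{Compactness:CompRegPreserved:C2} step for step.
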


\begin{proof}
It follows by Theorem \ref{t:reg} that $\VT \bbL$ is zero-dimensional. To
show that $\VT \bbL$ is compact, it suffices to show that $\VT \bbL
\simeq \VTR \bbL$ by Lemma \ref{Compactness:VTRCompact}. We will
establish that $\VT \bbL \simeq \VTR \bbL$ by showing that $g'
\colon \VTR \bbL \to \VT \bbL$ and $f' \colon \VT \bbL \to \VTR
\bbL$ are \emph{suplattice} isomorphisms, because $g' \cof f' =
\id_{\VT \bbL}$ and $f' \cof g' = \id_{\VTR \bbL}$. This is
sufficient since by order theory, any suplattice isomorphism is also
a frame isomorphism. We begin by making the following claim:
\begin{equation} \label{Compactness:CompRegPreserved:C1}
\forall \alpha \in \T L,\, g' \cof f(\alpha) = \nabla \alpha.
\end{equation}
After all, if $\alpha \in \T L$ then
\begin{align*}
g' \cof f(\alpha)
&= g' \left( \smallvee \{ \heartsuit \beta \mid  \beta \in \T \RL,\, \beta \rel{\Tb
{\leq}} \alpha \} \right) &&\text{by definition of $f$,}\\
&= \smallvee \{g'(\heartsuit \beta) \mid  \beta \in \T \RL,\, \beta \rel{\Tb
{\leq}} \alpha \}&&\text{since $g'$ preserves $\smallvee$,}\\
&=\smallvee \{ g(\beta) \mid \beta \in \T \RL,\, \beta \Tleq
\alpha\}&&\text{by Lemma \ref{Compactness:VTRtoVT},}\\
&= \smallvee \{ \nabla \beta \mid \beta \in \T \RL,\, \beta \Tleq
\alpha\}&&\text{by definition of $g$,}\\
&= \nabla \alpha &&\text{by Lemma \ref{Compactness:DownRProp}(1).}
\end{align*}
It follows that \eqref{Compactness:CompRegPreserved:C1} holds.
Conversely, we claim that
\begin{equation} \label{Compactness:CompRegPreserved:C2}
\forall \alpha \in \T \RL,\, f' \cof g(\alpha) = \heartsuit \alpha.
\end{equation}
This is also not hard to see. Take $\alpha \in \T \RL$, then
\begin{align*}
f' \cof g(\alpha)
&=f' \left( \nabla \alpha \right) &&\text{by definition of $g$,}\\
&= f(\alpha) &&\text{by Lemma \ref{Compactness:VTtoVTR},}\\
&=\smallvee \{ \heartsuit \beta \mid \beta \in \T \RL,\, \beta \rel{\Tb
{\leq}} \alpha \} &&\text{by definition of $f$,}\\
&= \heartsuit \alpha &&\text{since $\alpha \in \T \RL$ and
$\heartsuit$ is order-preserving.}
\end{align*}
It follows that \eqref{Compactness:CompRegPreserved:C2} holds.
Now we see that for all $\alpha \in \T L$,
\begin{align*}
g' \cof f' (\nabla \alpha) &= g' \cof f(\alpha) &&\text{since $f' \cof
\nabla = f$,}\\
&= \nabla \alpha &&\text{by
\eqref{Compactness:CompRegPreserved:C1},}\\
&= \id_{\VT \bbL} \left( \nabla \alpha \right).
\end{align*}
In other words, we see that $g'\cof f'$ and $\id_{\VT \bbL}$ agree on
the generators of $\VT \bbL$; it follows that $g' \cof f' = \id_{\VT
\bbL}$. An analogous argument shows that $f' \cof g' = \id_{\VTR
\bbL}$. We conclude that $\VT \bbL$ and $\VTR \bbL$ are isomorphic
as suplattices and consequently also as frames; it follows that $\VT
\bbL$ is compact.
\end{proof}

\section{Future work}
\label{s:fw}

To finish off the paper, we list some open problems and directions for
future work.

\subsection*{Preservation properties}
The main technical problems that we would like to solve concern further
preservation properties of our construction.
In particular, we are very eager to find out for which functors $\T$ the
$\T$-power construction preserves compactness, or the combination of
compactness and regularity.
Observe that any functor satisfying this property must map finite sets to
finite sets; if $\T A$ would be infinite for some finite $A$ subset of $\bbL$,
then we may have $1_{\VT\bbL} = \bv \{ \nb \al \mid \al \in A \}$, without
there being a finite subcover.
We conjecture that this condition (that is, of $\T$ restricting to finite
sets) is in fact not only necessary, but also sufficient to prove the
preservation of compactness.

\subsection*{Functorial properties}
In section~\ref{Section:VT:functorial} we saw that certain natural
transformations $\rho\colon  \T' \to \T$ induce natural transformations $\wh{\rho}\colon
\VT \to \V_{\T'}$, with the unit of the Vietoris comonad $\VP$ providing an
instance of this phenomenon.
There are some natural open questions related to this.
In particular, we are interested whether, in the case that $\T$ is actually
a \emph{monad}, it holds that $\VT$ is a co-monad.

Another question related to the natural transformation $\wh{\rho}$ is whether
$\wh{\rho}_{\bbL}\colon  \VT\bbL \to \V_{\T'}\bbL$ always has a right adjoint, see
Remark~\ref{Remark:RhoAdjoint}.

\subsection*{Spatiality and compact Hausdorff spaces}
Palmigiano \& Venema \citep{PV2007} introduce a lifting construction on Chu
spaces to prove that for Stone spaces, the Vietoris construction can be
generalized from the power set case to an arbitrary set functor $\T$
(meeting the same constraints as in the current paper).
Can we generalize to arbitrary topological spaces, or at least to compact
Hausdorff spaces?

Assume that, for any functor $\T$ mapping finite sets to finite sets, we
can prove that our $\T$-powerlocale construction $\VT$ preserves the
combination of compactness and regularity.
Then, using the well-known duality between compact regular locales and compact
Hausdorff spaces, we obtain a Vietoris-like functor on compact Hausdorff
spaces for free.
The question is then whether we can give a more direct, insightful
description of this functor.

\subsection*{Locales and constructivity}
In this paper, we have mostly adopted a frame- rather than a
locale-oriented perspective. Theorem \ref{th:nat-trans} suggests
however, that if one wants to understand the relationship between
coalgebra functors $\T\colon \Set \to \Set$ and the $\VT$
construction, one should think of $\VT$ as a functor on
\emph{locales}, since natural transformations $\T' \to \T$ satisfying
the conditions of Th.\  \ref{th:nat-trans}  correspond
to frame natural transformations $\VT \to \V_{\T'}$. It would be interesting to pursue this idea further,
especially in conjunction with the use of \emph{constructive
mathematics}. We have seen that certain constructive techniques, such as frame, flat site and preframe presentations, can
be brought over to the framework of coalgebraic logic. Making the
entire approach of this paper constructive would be a lot of work; we
believe however that this would be a promising line of further research.


\begin{thebibliography}{}

\bibitem[Aczel, 1988]{Aczel1988}
Aczel, P. (1988).
\newblock {\em Non-well-founded sets}, volume~14 of {\em CSLI Lecture Notes}.
\newblock Stanford, CA: Stanford University Center for the Study of Language
  and Information.
\newblock With a foreword by Jon Barwise [K. Jon Barwise].

\bibitem[Ad{\'a}mek \& Trnkov{\'a}, 1990]{adam:auto90}
Ad{\'a}mek, J. \& Trnkov{\'a}, V. (1990).
\newblock {\em Automata and Algebras in Categories}.
\newblock Kluwer Academic Publishers.

\bibitem[Areces \& Goldblatt, 2008]{AiML2008}
Areces, C. \& Goldblatt, R., Eds. (2008).
\newblock {\em Advances in Modal Logic 7, Nancy 2008}. College Publications.

\bibitem[Barwise \& Moss, 1996]{BM1996}
Barwise, J. \& Moss, L. (1996).
\newblock {\em Vicious circles}, volume~60 of {\em CSLI Lecture Notes}.
\newblock Stanford, CA: CSLI Publications.
\newblock On the mathematics of non-wellfounded phenomena.

\bibitem[B{\' i}lkov{\' a} et~al., 2008]{BPV2008}
B{\' i}lkov{\' a}, M., Palmigiano, A., \& Venema, Y. (2008).
\newblock Proof systems for the coalgebraic cover modality.
\newblock In \cite{AiML2008}, (pp.\ 1--21).

\bibitem[C\^{\i}rstea et~al., 2009]{cirs:moda09}
C\^{\i}rstea, C., Kurz, A., Pattinson, D., Schr{\"o}der, L., \& Venema, Y.
  (2009).
\newblock Modal logics are coalgebraic.
\newblock {\em The Computer Journal}.

\bibitem[Hermida \& Jacobs, 1998]{herm:stru98}
Hermida, C. \& Jacobs, B. (1998).
\newblock Structural induction and coinduction in a fibrational setting.
\newblock {\em Information and Computation}, 145(2), 107--152.

\bibitem[Janin \& Walukiewicz, 1995]{JW1995}
Janin, D. \& Walukiewicz, I. (1995).
\newblock Automata for the modal {$\mu$}-calculus and related results.
\newblock In {\em Mathematical foundations of computer science 1995
  ({P}rague)}, volume 969 of {\em Lecture Notes in Comput. Sci.}  (pp.\
  552--562). Berlin: Springer.

\bibitem[Johnstone, 1982]{Johnstone1982}
Johnstone, P. (1982).
\newblock {\em Stone Spaces}.
\newblock Number~3 in Cambridge Studies in Advanced Mathematics. Cambridge
  University Press.

\bibitem[Johnstone, 1985]{VietLoc}
Johnstone, P. (1985).
\newblock Vietoris locales and localic semi-lattices.
\newblock In R.-E. Hoffmann (Ed.), {\em Continuous Lattices and their
  Applications}, number 101 in Pure and Applied Mathematics  (pp.\ 155--18).:
  Marcel Dekker.

\bibitem[Johnstone \& Vickers, 1991]{PrePrePre}
Johnstone, P. \& Vickers, S.~J. (1991).
\newblock Preframe presentations present.
\newblock In A. Carboni, M. Pedicchio, \& G. Rosolini (Eds.), {\em Category
  Theory -- Proceedings, {C}omo 1990}, number 1488 in Lecture Notes in
  Mathematics  (pp.\ 193--212).: Springer-Verlag.

\bibitem[Jung et~al., 2008]{JMV2008}
Jung, A., Moshier, M.~A., \& Vickers, S.~J. (2008).
\newblock Presenting dcpos and dcpo algebras.
\newblock In {\em Proceedings of the 24th Conference on the Mathematical
  Foundations of Programming Semantics (MFPS XXIV)}, volume 218 of {\em
  Electronic Notes in Theoretical Computer Science}  (pp.\ 209--229).

\bibitem[Kissig \& Venema, 2009]{kiss:comp09}
Kissig, C. \& Venema, Y. (2009).
\newblock Complementation of coalgebra automata.
\newblock In A. Kurz, M. Lenisa, \& A. Tarlecki (Eds.), {\em Algebra and
  Coalgebra in Computer Science (CALCO 2009)}, volume 5728 of {\em LNCS}  (pp.\
  81--96).: Springer.

\bibitem[Kupke, 2006]{kupk:fini06}
Kupke, C. (2006).
\newblock {\em Finitary Coalgebraic Logics}.
\newblock PhD thesis, Institute for Logic, Language and Computation,
  Universiteit van Amsterdam.

\bibitem[Kupke et~al., 2008]{KKV2008}
Kupke, C., Kurz, A., \& Venema, Y. (2008).
\newblock Completeness of the finitary {M}oss logic.
\newblock In \cite{AiML2008}, (pp.\ 193--217).

\bibitem[Kupke et~al., 2010]{kupk:comp10}
Kupke, C., Kurz, A., \& Venema, Y. (2010).
\newblock Completeness for the coalgebraic cover modality.
\newblock Submitted.

\bibitem[{Mac Lane}, 1998]{macl:cate98}
{Mac Lane}, S. (1998).
\newblock {\em Categories for the working mathematician}, volume~2 of {\em
  Graduate Texts in Mathematics}.
\newblock Springer.

\bibitem[Moss, 1999]{moss:coal99}
Moss, L. (1999).
\newblock Coalgebraic logic.
\newblock {\em Annals of Pure and Applied Logic}, 96, 277--317.
\newblock (Erratum published \textit{Ann.P.Appl.Log.} 99:241--259, 1999).

\bibitem[Palmigiano \& Venema, 2007]{PV2007}
Palmigiano, A. \& Venema, Y. (2007).
\newblock {N}abla algebras and {C}hu spaces.
\newblock In T. Mossakowski, U. Montanari, \& M. Haveraaen (Eds.), {\em Algebra
  and Coalgebra in Computer Science. Second International Conference, CALCO
  2007, Bergen, Norway, August 20-24, 2007}, volume 4624 of {\em LNCS}  (pp.\
  394--408).: Springer Verlag.

\bibitem[Rutten, 2000]{Rutten2000}
Rutten, J. J. M.~M. (2000).
\newblock Universal coalgebra: a theory of systems.
\newblock {\em Theoret. Comput. Sci.}, 249(1), 3--80.
\newblock Modern algebra and its applications (Nashville, TN, 1996).

\bibitem[Street, 1972]{Street:Monads}
Street, R. (1972).
\newblock The formal theory of monads.
\newblock {\em Journal of Pure and Applied Algebra}, 2(2), 149--168.

\bibitem[Trnkov\'a, 1977]{trnk:rela77}
Trnkov\'a, V. (1977).
\newblock Relational automata in a category and theory of languages.
\newblock In M. Karpinski (Ed.), {\em Fundamentals of computation theory:
  proceedings of the 1977 international FCT-conference}, volume~56 of {\em
  Lecture Notes in Computer Science}  (pp.\ 340--355).: Springer.

\bibitem[Venema, 2006]{vene:auto06}
Venema, Y. (2006).
\newblock Automata and fixed point logic: a coalgebraic perspective.
\newblock {\em Information and Computation}, 204, 637--678.

\bibitem[Vickers, 1989]{Vickers1989}
Vickers, S.~J. (1989).
\newblock {\em Topology via logic}, volume~5 of {\em Cambridge Tracts in
  Theoretical Computer Science}.
\newblock Cambridge: Cambridge University Press.

\bibitem[Vickers, 1997]{PowerPt}
Vickers, S.~J. (1997).
\newblock Constructive points of powerlocales.
\newblock {\em Mathematical Proceedings of the Cambridge Philosophical
  Society}, 122, 207--222.

\bibitem[Vickers, 2006]{Vickers2006}
Vickers, S.~J. (2006).
\newblock Compactness in locales and in formal topology.
\newblock {\em Ann. Pure Appl. Logic}, 137(1-3), 413--438.

\bibitem[Vickers \& Townsend, 2004]{VT2004}
Vickers, S.~J. \& Townsend, C.~F. (2004).
\newblock A universal characterization of the double powerlocale.
\newblock {\em Theoret. Comput. Sci.}, 316(1-3), 297--321.

\bibitem[Vietoris, 1922]{Vietoris1922}
Vietoris, L. (1922).
\newblock Bereiche zweiter {O}rdnung.
\newblock {\em Monatsh. Math. Phys.}, 32(1), 258--280.

\end{thebibliography}
\end{document}